\documentclass[11pt]{article}
\usepackage{setspace} 
\usepackage{fancyhdr} 
\usepackage[font=small,labelfont=bf]{caption}
\usepackage{booktabs}
\usepackage{enumitem}
\usepackage{tabularx} 
\usepackage{amsmath}  
\usepackage{graphicx, subcaption} 
\usepackage{overpic}
\usepackage{algorithm}
\usepackage{algorithmicx}
\usepackage{algpseudocode}
\algnewcommand{\LeftComment}[1]{\Statex \(\triangleright\) #1}

\usepackage[margin=1in,letterpaper]{geometry} 
\usepackage{cite} 
\usepackage[final]{hyperref} 
\usepackage [english]{babel}
\usepackage [autostyle, english = american]{csquotes}
\usepackage{amsfonts}
\usepackage{amsmath, amssymb, amsthm}
\MakeOuterQuote{"}

\usepackage{libertine}\usepackage[libertine]{newtxmath}
\usepackage[scaled=0.96]{zi4}

\usepackage{float}
\usepackage{physics}
\usepackage{todonotes}
\usepackage{empheq}
 
\newlength\dlf  

\usepackage{gensymb}
\usepackage{tikz}
\usepackage{mathtools}
\usepackage{todonotes}

\newtheorem{theorem}{Theorem}[section]

\newtheorem{corollary}[theorem]{Corollary}

\newtheorem{lemma}[theorem]{Lemma}

\newtheorem{problem}[theorem]{Problem}

\theoremstyle{definition}

\newtheorem{defn}[theorem]{Definition}

\let\realbfseries=\bfseries
\def\bfseries{\realbfseries\boldmath}

\let\epsilon=\varepsilon
\def\defn#1{\textbf{\textit{\boldmath #1}}}

\newcommand{\OO}{\texorpdfstring{\,\vcenter{\hbox{\includegraphics[scale=0.2]{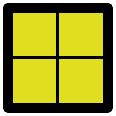}}}\,}O}
\newcommand{\TT}{\texorpdfstring{\,\vcenter{\hbox{\includegraphics[scale=0.2]{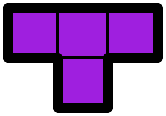}}}\,}T}
\newcommand{\LL}{\texorpdfstring{\,\vcenter{\hbox{\includegraphics[scale=0.2]{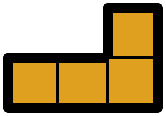}}}\,}L}
\newcommand{\JJ}{\texorpdfstring{\,\vcenter{\hbox{\includegraphics[scale=0.2]{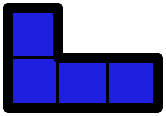}}}\,}J}
\renewcommand{\SS}{\texorpdfstring{\,\vcenter{\hbox{\includegraphics[scale=0.2]{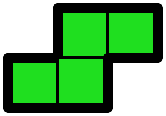}}}\,}S}
\newcommand{\ZZ}{\texorpdfstring{\,\vcenter{\hbox{\includegraphics[scale=0.2]{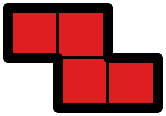}}}\,}Z}
\newcommand{\II}{\texorpdfstring{\,\vcenter{\hbox{\includegraphics[scale=0.2]{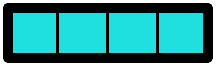}}}\,}I}
\newcommand{\ALL}{\II, \allowbreak \OO, \allowbreak \TT, \allowbreak \SS, \allowbreak \ZZ, \allowbreak \JJ, \allowbreak \LL}

\begin{document}
\title{Tetris is Hard with Just One Piece Type}
\author{%
  \hspace{2in}MIT Hardness Group\footnote{Artificial first author to highlight that the other authors (in alphabetical order) worked as an equal group. Please include all
authors (including this one) in your bibliography, and refer to the authors as “MIT Hardness Group” (without “et al.”).
}\hspace{2in}
\and
  Josh Brunner%
    \thanks{MIT Computer Science and Artificial Intelligence Laboratory,
      32 Vassar St., Cambridge, MA 02139, USA,
      \protect\url{{brunnerj,edemaine,della,jeli}@mit.edu}}
\and
  Erik D. Demaine\footnotemark[2]
\and
  Della Hendrickson\footnotemark[2]
\and
  Jeffery Li\footnotemark[2]
}
\date{}

\maketitle

\begin{abstract}
    We analyze the computational complexity of Tetris clearing (determining whether the player can clear an initial board using a given sequence of pieces) and survival (determining whether the player can avoid losing before placing all the given pieces in an initial board) when restricted to a single polyomino piece type.
    We prove, for any tetromino piece type $P$ except for $\OO$, the NP-hardness of Tetris clearing and survival under the standard Super Rotation System (SRS), even when the input sequence consists of only a specified number of $P$ pieces.
    These surprising results disprove a 23-year-old conjecture on the computational complexity of Tetris with only $\II$ pieces (although our result is only for a specific rotation system).
    As a corollary, we prove the NP-hardness of Tetris clearing when the sequence of pieces has to be able to be generated from a $7k$-bag randomizer for any positive integer $k\geq 1$.
    On the positive side, we give polynomial-time algorithms for Tetris clearing and survival when the input sequence consists of only
    dominoes,
    assuming a particular rotation model,
    solving a version of a 9-year-old open problem.
    Along the way, we give polynomial-time algorithms for Tetris clearing and survival with
    $1\times k$ pieces (for any fixed $k$), provided the top $k-1$ rows are initially empty,
    showing that our $\II$ NP-hardness result needs to have filled cells in the top three rows.

\end{abstract}

\section{Introduction}

Tetris is one of the oldest and most popular puzzle video games, originally created by Alexey Pajitnov in 1984.
Tetris has reached mainstream media many times,
most recently in the biopic \emph{Tetris} \cite{Tetris_movie}
and with the news of 16-year-old Michael Artiaga being the first person to go through all 255 levels of the NES version of Tetris \cite{Tetris_Beat}.

In the base version of Tetris, a player is given a rectangular grid of unit-square cells (\defn{squares}). In each round, a tetromino piece (one of $\ALL$)
spawns at the top of the grid and periodically moves down one unit, assuming the squares below the piece are empty.
The player can repeatedly move this piece one unit left, one unit right, one unit down, or rotate the piece by $\pm 90^\circ$.
When a piece attempts to move downwards while any part of the piece rests on top of a filled square, the piece ``locks'' in place, and stops moving.
If a piece ``locks'' in place above a certain height or where the next piece would spawn, the player loses; otherwise, the next piece spawns at the top of the grid, and play continues.
Completely filling a row causes the row to clear,
with all squares above that row moving downward by one unit.
For more detailed rules, see \cite{Tetris_Wikipedia}.

Tetris has many variations, both official and unofficial \cite{FanGames_TetrisWiki},
that change the piece types that spawn for the player.
For example, \emph{ntris} \cite{ntris} and \emph{Combinos} \cite{Combinos_TetrisWiki} generalize the piece types to \defn{$k$-ominoes} (connected shapes made from gluing $k$ unit squares edge-to-edge).

To study Tetris from a computational complexity perspective, we generally
assume that the player is given an initial board state
with some filled cells and either a finite sequence of pieces or an infinite number of pieces of a given type, making it a perfect-information game
(as introduced in \cite{Tetris_IJCGA}).
The two main objectives are ``clearing'' and ``survival''
(as introduced in \cite{TotalTetris_JIP}).
In \defn{Tetris clearing}, we want to determine whether we can clear the
entire board, either after placing all the given pieces in the finite piece sequence
or after placing some number of pieces of a given type.
In \defn{Tetris survival}, we want to determine whether
the player can avoid losing from the given initial board
while placing all the pieces in the given piece sequence.
Previous work shows that, if the player is given a finite sequence of pieces, these problems are NP-complete,
even to approximate various metrics within $n^{1-\epsilon}$ \cite{Tetris_IJCGA},
or with only $8$ columns or $4$ rows \cite{ThinTetris_JCDCGGG2019},
or with unrotatable dominoes or $k$-ominoes for $k \geq 3$ clearing or $k \geq 4$ survival
\cite{TotalTetris_JIP},
or when the pieces are limited to \emph{any two} of the seven piece types (assuming a specific rotation system)
\cite{mithg2024tetris},
also when restricted to dropping pieces or very high ``gravity''
\cite{mithg2024tetris}.

The \emph{online} version of Tetris, which hides most of the sequence of pieces from the player
(other than possibly giving the player a preview of some of the following pieces),
has also been studied before, both within the Tetris community \cite{playingforever_harddrop} and, most recently, in \cite{gehnen2024online},
which shows that no online algorithm can be competitive with any offline algorithm.

\subsection{Our Results}

In this paper, we analyze Tetris clearing and survival under the restriction that the input sequence contains exactly one polyomino piece type, resolving open problems posed in previous Tetris papers \cite{Tetris_IJCGA,TotalTetris_JIP,mithg2024tetris}.

On the positive side,
we give polynomial-time algorithms for Tetris clearing and survival when the input sequence consists of only dominoes.
These results assume a particularly simple \defn{falling rotation model}
for dominoes, where pieces move monotonically downward;
we leave open the complexity for other rotation models.
The clearing result also only applies to infinite (or sufficiently long) sequences of dominoes.
Other than these catches,
our results solve two of the three remaining open problems posed in \cite{TotalTetris_JIP} regarding $n$-tris for various~$n$.
(The only remaining open problem is surviving a sequence of unrotatable dominoes.)
Key subroutines we develop are strategies for both clearing and survival using only
$1\times k$ pieces, for any fixed $k$, provided the top $k-1$ rows of the board are empty.
(Intuitively, it is easy to survive with stick-shaped pieces provided you are not too close to losing.)

On the negative side,
we show that, for any tetromino piece type $P$ except for $\OO$,
Tetris clearing and survival are NP-hard even if the input sequence consists of only a specified number of $P$ pieces.
These results assume the \defn{Super Rotation System (SRS)} \cite{SRS_TetrisWiki},
defined by the Tetris Company's Tetris Guideline for how all modern (2001+) Tetris games should behave \cite{Guideline_TetrisWiki}.
These results address most of the open problems posed in \cite{mithg2024tetris} regarding singular tetromino piece types in $\{\ALL\}$, and disprove the 23-year-old conjecture that Tetris with only $\II$ pieces should be solvable in polynomial time \cite{Tetris_IJCGA}, in notable contrast to our positive result when the top three rows are empty.
However, our results rely crucially on the rotation system, in particular, the ability with SRS to climb up certain ``staircases'' using ``kicks''.

As a particular consequence, our NP-hardness results are the first to work
in modern Tetris games with a \defn{holding} feature,
where the player can put one piece aside for later use.
Because all the given pieces are of the same type, holding makes no difference on which pieces get placed in what order.

In addition, our NP-hardness result for Tetris clearing with only $\II$ pieces can be modified to prove NP-hardness for Tetris clearing if the sequence of pieces has to be able to be generated from a ``$7k$-bag randomizer'' for any positive integer $k\geq 1$. In a \defn{7-bag randomizer}, the sequence of pieces is divided into groups (or ``bags'') of 7 pieces, where each group contains exactly one of each tetromino in one of $7! = 5{,}040$ possible orderings. A \defn{$7k$-bag randomizer} is a natural extension of such a randomizer, where the sequence is now divided into groups of $7k$ pieces, and each group contains exactly $k$ of each tetromino in one of $\frac{(7k)!}{(k!)^7}$ possible orderings.

\subsection{Outline}\label{sec:outline}

The structure of the rest of the paper is as follows.
Section~\ref{sec:rs} details the rotation systems being considered in our paper, including
the Super Rotation System (SRS) for tetrominoes
and our falling rotation model for dominoes.
Section~\ref{sec:dominoes} discusses our positive results regarding Tetris survival and clearing with $1\times k$ pieces.
Section~\ref{sec:satgo} describes the 1-in-3SAT and Graph Orientation problems, which our hardness proofs reduce from.
Section~\ref{sec:itrisclearing} describes our hardness result for Tetris clearing with SRS with only $\II$ pieces, along with how the hardness result can be modified to get NP-hardness for Tetris clearing if the sequence of pieces has to be able to be generated from a 7-bag randomizer or a $7k$-bag randomizer for $k>1$.
Section~\ref{sec:jtrisclearing} describes our hardness result for Tetris clearing with SRS with only $\JJ$ pieces or with only $\LL$ pieces, Section~\ref{sec:ttrisclearing} describes our hardness result for Tetris clearing with SRS with only $\TT$ pieces, and
Section~\ref{sec:strisclearing} describes our hardness result for Tetris clearing with SRS with only $\SS$ pieces or with only $\ZZ$ pieces.
Section~\ref{sec:survival} describes the modifications to our constructions that yield corresponding one-piece-type hardness results for Tetris survival with SRS.

\section{Rotation Systems}\label{sec:rs}

\subsection{Super Rotation System (SRS)}

For our hardness results involving tetrominoes, we assume the standard \defn{Super Rotation System (SRS)} \cite{SRS_TetrisWiki}.

Each piece has a defined \defn{rotation center},
as indicated by dots in Figure \ref{allpieces},
except for $\II$ and $\OO$, whose rotation centers are the centers of the $4\times 4$ squares in Figure \ref{allpieces}. When unobstructed, all non-$\OO$ tetrominoes will rotate purely about the rotation center (note that $\OO$ pieces cannot rotate). The key feature about SRS is \defn{kicking}: if a tetromino is obstructed when a rotation is attempted, the game will attempt to ``kick'' the tetromino into one of four alternate positions, each tested sequentially; if all four positions do not work, then the rotation will fail. See Figure \ref{SRSrotate} for an example of this kicking process. The full data for wall kicks can be found in Tables \ref{tab:kickdataMain} and \ref{tab:kickdataI}, and at \cite{SRS_TetrisWiki}; a visualization of the locations of the tests in relation to the $\II$ piece can be found in Figure~\ref{fig:allrots}. Of note is that SRS wall kicks are vertically symmetric for all pieces or pairs of pieces (i.e., $\JJ\leftrightarrow \LL$ and $\SS\leftrightarrow \ZZ$) except for the $\II$ piece, so all rotations can be mirrored.

\begin{figure}[ht]
    \centering
    \includegraphics[width=160pt]{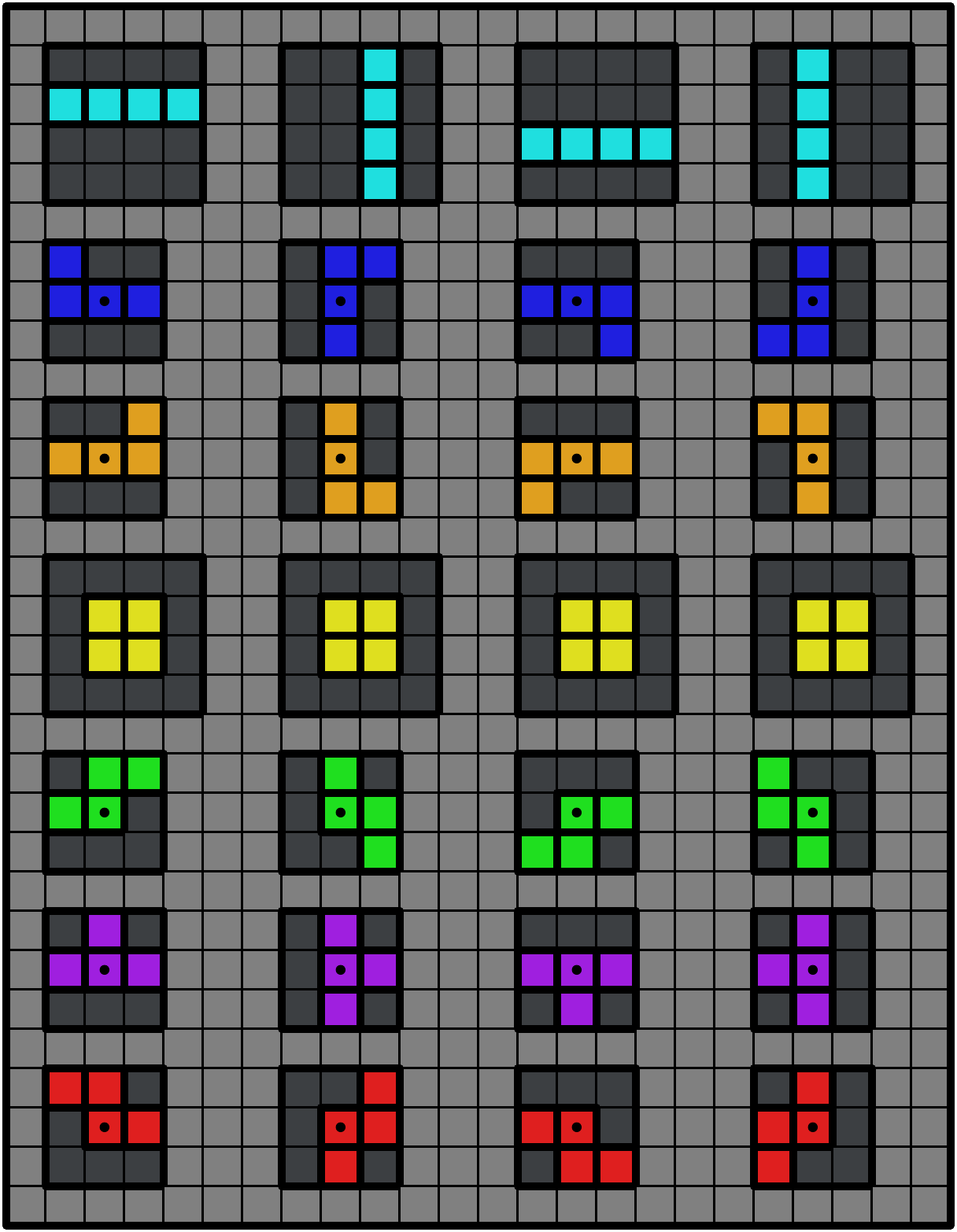}
    \caption{All tetromino pieces, in order from top to bottom: $\II$, $\JJ$, $\LL$, $\OO$, $\SS$, $\TT$, $\ZZ$. The first column is the default orientation of a piece upon spawning in; each column to the right indicates a $90^\circ$ rotation clockwise about the rotation center of the piece.}
    \label{allpieces}
\end{figure}

\begin{figure}[ht]
  \centering
  \begin{subfigure}[b]{0.3\textwidth}
    \centering
    \includegraphics[width=60pt]{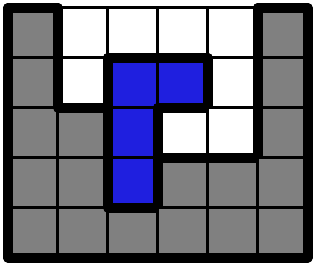}
    \caption{}
  \end{subfigure}
  \begin{subfigure}[b]{0.3\textwidth}
    \centering
    \includegraphics[width=60pt]{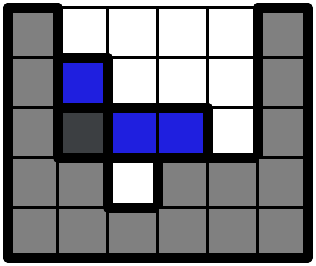}
    \caption{}
  \end{subfigure}
  \begin{subfigure}[b]{0.3\textwidth}
    \centering
    \includegraphics[width=60pt]{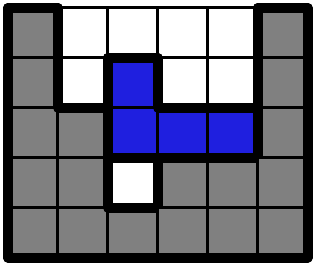}
    \caption{}
  \end{subfigure}
  \caption{An example of the SRS kick system. Suppose the $\JJ$ piece in (a) is being rotated $90^\circ$ counterclockwise. Test 1 (which is $(0, 0)$) would fail, due to the dark gray square shown in (b). Test 2 (which is $(+1, 0)$) would succeed, as shown in (c), and so the $\JJ$ piece would rotate to the position in (c).}
  \label{SRSrotate}
\end{figure}

\begin{table}[ht]
    \centering
    \footnotesize
    \begin{tabular}{c|c|c|c|c|c|}
         & Test 1 & Test 2 & Test 3 & Test 4 & Test 5 \\\hline
         $0\to R$ & $(0, 0)$ & $(-1, 0)$ & $(-1, +1)$ & $(0, -2)$ & $(-1, -2)$ \\\hline
         $R\to 0$ & $(0, 0)$ & $(+1, 0)$ & $(+1, -1)$ & $(0, +2)$ & $(+1, +2)$ \\\hline
         $R\to 2$ & $(0, 0)$ & $(+1, 0)$ & $(+1, -1)$ & $(0, +2)$ & $(+1, +2)$ \\\hline
         $2\to R$ & $(0, 0)$ & $(-1, 0)$ & $(-1, +1)$ & $(0, -2)$ & $(-1, -2)$ \\\hline
         $2\to L$ & $(0, 0)$ & $(+1, 0)$ & $(+1, +1)$ & $(0, -2)$ & $(+1, -2)$ \\\hline
         $L\to 2$ & $(0, 0)$ & $(-1, 0)$ & $(-1, -1)$ & $(0, +2)$ & $(-1, +2)$ \\\hline
         $L\to 0$ & $(0, 0)$ & $(-1, 0)$ & $(-1, -1)$ & $(0, +2)$ & $(-1, +2)$ \\\hline
         $0\to L$ & $(0, 0)$ & $(+1, 0)$ & $(+1, +1)$ & $(0, -2)$ & $(+1, -2)$ \\\hline
    \end{tabular}
    \caption{Kick data for $\JJ$, $\LL$, $\SS$, $\TT$, and $\ZZ$ pieces. $0$ indicates the default orientation, and $R$, $2$, and $L$ indicate the orientation reached from a $90^\circ$, $180^\circ$, and $270^\circ$ rotation clockwise (respectively) from the default orientation. An ordered pair $(a, b)$ denotes a translation of the center by $a$ units in the $x$ direction and $b$ units in the $y$ direction. Positive $x$ direction is rightwards, and positive $y$ direction is upward.}
    \label{tab:kickdataMain}
\end{table}

\begin{table}[ht]
    \centering
    \footnotesize
    \begin{tabular}{c|c|c|c|c|c|}
         & Test 1 & Test 2 & Test 3 & Test 4 & Test 5 \\\hline
         $0\to R$ & $(0, 0)$ & $(-2, 0)$ & $(+1, 0)$ & $(-2, -1)$ & $(+1, +2)$ \\\hline
         $R\to 0$ & $(0, 0)$ & $(+2, 0)$ & $(-1, 0)$ & $(+2, +1)$ & $(-1, -2)$ \\\hline
         $R\to 2$ & $(0, 0)$ & $(-1, 0)$ & $(+2, 0)$ & $(-1, +2)$ & $(+2, -1)$ \\\hline
         $2\to R$ & $(0, 0)$ & $(+1, 0)$ & $(-2, 0)$ & $(+1, -2)$ & $(-2, +1)$ \\\hline
         $2\to L$ & $(0, 0)$ & $(+2, 0)$ & $(-1, 0)$ & $(+2, +1)$ & $(-1, -2)$ \\\hline
         $L\to 2$ & $(0, 0)$ & $(-2, 0)$ & $(+1, 0)$ & $(-2, -1)$ & $(+1, +2)$ \\\hline
         $L\to 0$ & $(0, 0)$ & $(+1, 0)$ & $(-2, 0)$ & $(+1, -2)$ & $(-2, +1)$ \\\hline
         $0\to L$ & $(0, 0)$ & $(-1, 0)$ & $(+2, 0)$ & $(-1, +2)$ & $(+2, -1)$ \\\hline
    \end{tabular}
    \caption{Kick data for $\II$ pieces, with same notation as Table \ref{tab:kickdataMain}.}
    \label{tab:kickdataI}
\end{table}

This system of kicking tetrominoes during rotations allows for moves which are often called \defn{twists} or \defn{spins}. All the spins that we utilize are detailed in Appendices~\ref{sec:i_maneuvers}, \ref{sec:j_maneuvers}, \ref{sec:t_maneuvers}, and \ref{sec:s_maneuvers}.

\subsection{Falling Rotation Model}
\label{sec:falling}

For Tetris with dominoes, we assume a rotation model where pieces
get monotonically lower when they rotate.
Figure~\ref{fallrotate} defines this \defn{falling rotation model} precisely.
In general, whenever we rotate a domino,
one of its lowest pixels remains in place,
and the other pixel moves to a neighboring not-higher cell,
assuming it is unoccupied.
(If the cell is occupied, the rotation fails and the piece does not rotate.)
There are two choices in this definition,
which correspond to clockwise and counterclockwise rotations:
when the domino was horizontal, we can choose either the left or right pixel to remain in place;
when the domino was vertical, we can choose whether the neighboring cell is left or right.

\begin{figure}[ht]
    \centering
    \begin{overpic}[width=160pt]{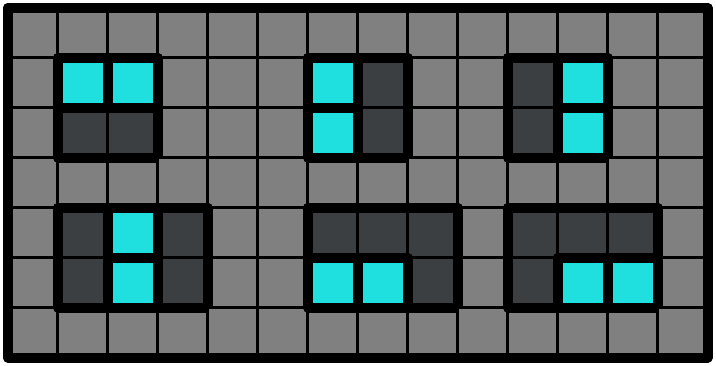}
      \put(36.5,39){\makebox(0,0){$\boldsymbol{\rightarrow}$}}
      \put(36.5,18){\makebox(0,0){$\boldsymbol{\rightarrow}$}}
    \end{overpic}
    \caption{Falling rotation model for dominoes: when the domino on the left attempts to rotate clockwise or counterclockwise, it becomes the respective domino on the right, if that position is unobstructed.}
    \label{fallrotate}
\end{figure}


\section{Tetris with Dominoes and $1 \times k$}\label{sec:dominoes}

Demaine et al.~\cite{TotalTetris_JIP} showed NP-hardness of Tetris clearing with unrotatable dominoes (where rotation is forbidden). They left open the complexity of survival with unrotatable dominoes, and both clearing and survival with rotatable dominoes. In this section, we give a polynomial-time algorithm for survival with rotatable dominoes, as well as a polynomial-time algorithm that determines whether a board is eventually clearable with infinitely/sufficiently many pieces.

We also give a strategy that always survives with only $1\times k$ pieces (for any fixed $k$) if the $k-1$ top rows start empty, and with this strategy, we can decide in polynomial time whether it is possible to eventually fully clear the board given enough such pieces.
We will use this result with $k=2$ as a subroutine for our domino algorithms,
but $k=4$ is also an interesting contrast to our $\II$ hardness result.

\subsection{Survival and Clearing with $1\times k$ Pieces}
\label{sec:1timesk}

\begin{theorem}
\label{thm:1timesk}
    If the board has height at least $2k-1$, and the top $k-1$ rows are initially empty, then it is possible to survive an infinite sequence of $1\times k$ pieces. Furthermore, there is a polynomial-time algorithm to determine whether the board can eventually be fully cleared.
\end{theorem}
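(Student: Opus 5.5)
Our plan is to exhibit an explicit survival strategy and then analyze its consequences for clearing. The strategy uses only vertical placements of the $1\times k$ piece, which we treat as a vertical stick of $k$ cells dropped straight down a column. A vertical piece spawned in the empty top $k-1$ rows (together with the spawn row) can be moved horizontally to any column, since those rows are empty, and then dropped. The key invariant is this: after each piece locks and any resulting line clears, at least one column has its topmost filled cell at or below row $h-(2k-1)$, where $h$ is the board height, counting rows from the bottom.

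We choose the target column as follows. Let $c$ be a column of minimum height $m$. We drop the vertical piece into $c$, so that it occupies rows $m+1,\dots,m+k$. We then argue that the top $k-1$ rows remain empty, so the player never loses. Among columns not yet containing a piece-top, the minimum height cannot exceed $h-2k+1$; otherwise every column would already reach into the top $2k-1$ rows. We make this precise with a potential argument. Cells placed by the strategy always sit on the current minimum column. Once every column reaches height at least $m+1$, row $m+1$ becomes full and clears. Hence the minimum height never increases beyond its initial value plus $k-1$ without a row clear lowering everything again. The height-$\ge 2k-1$ and empty-top-$(k-1)$ assumptions give exactly this slack: the initial maximum height is at most $h-k+1$, and each column's height stays below $\min + k$ after it is first filled by our strategy. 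So no cell ever enters the top $k-1$ rows.

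Formalizing this invariant is the main obstacle. We must show that after finitely many pieces the profile becomes ``flat up to $k-1$'' (all our placed stacks lie within a band of $k$ rows above the minimum), while the original tall columns never receive pieces. We would handle this by induction on pieces. Define the pieces as filling columns round-robin from the lowest, and show that every row is completed before any piece lands more than $k$ rows above it. This uses the facts that original columns exceed the minimum by at most $h-k-m$ and that each row clears when filled.

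For the clearing algorithm, we note that the board can be fully cleared only if the number of filled cells plus $k$ times the number of pieces is divisible by the width $w$. More precisely, every row's empty cells must eventually be filled. We proceed in two steps. First, we reduce to a condition on the initial rows: holes beneath overhangs that are inaccessible to any $1\times k$ piece (in either orientation, considering reachability from above) make clearing impossible. Second, we compute, via dynamic programming over rows from the top down, whether the accessible empty cells of the initial rows can be tiled by pieces in a way that clears them, combined with the flattening strategy above to fill the rest. For fixed $k$, the per-row state, namely which columns have pending partial vertical pieces (bounded by $k$ offsets), gives a polynomial-size DP. Deciding ``eventually'' then reduces to checking the divisibility condition $\gcd(k,w)$ together with feasibility of the DP.
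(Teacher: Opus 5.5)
Your minimum-column rule can be made to work, but the argument you give for it does not, and the clearing half rests on wrong claims.

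\textbf{Survival.} Your argument never deals with holes (empty cells lying below a filled cell in the same column), and that is exactly where it fails.
\begin{itemize}
\item The step ``once every column reaches height at least $m+1$, row $m+1$ becomes full'' is false if some column has a covered empty cell in row $m+1$. Vertical drops can never fill that cell, so the minimum height can climb far past $m+k-1$ with no clear. For example, take $k=2$, width $2$, column $1$ empty, and column $2$ containing only a filled cell in row $6$: the minimum reaches $4$ before anything clears.
\item Your invariant that some column top lies at or below row $h-(2k-1)$ is not implied by the hypotheses. Take $k=2$, $h=3$, width $2$, row $1$ equal to (filled, empty) and row $2$ equal to (empty, filled): no column is empty.
\item ``Otherwise every column would already reach into the top $2k-1$ rows'' is no contradiction, since only the top $k-1$ rows are guaranteed empty.
\end{itemize}
You also concede that the formalization is open.

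The missing idea is the paper's highest-hole argument. Suppose the highest hole is in row $H$.
\begin{itemize}
\item Initially the cell directly above it is filled, so $H+1\le h-k+1$.
\item This bound persists: vertical drops create no holes, and row clears never raise the highest hole.
\item Row $H+1$ is not full, and each of its empty cells is exposed. So that cell's column has top at most $H$, and a vertical drop there ends at row at most $H+k\le h$.
\item Repeatedly completing and clearing row $H+1$ uncovers the hole. Induction then removes all holes, after which survival is trivial.
\end{itemize}
This observation in fact rescues your minimum-column rule: while holes exist, the minimum column is always one of these exposed columns. But your proof has to contain this argument.

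\textbf{Clearing.} Here the problems are more serious.
\begin{itemize}
\item Covered holes do not make clearing impossible. Clearing the rows above uncovers them, as just described, so your first reduction step is wrong.
\item Divisibility of the filled-cell count by $\gcd(k,w)$ is necessary but far from sufficient, because the real invariant is per column. A vertical piece leaves each column's filled-cell count unchanged mod $k$, and a row clear lowers every count by $1$. Only a horizontal piece changes the differences, by adding $1$ to $k$ consecutive columns.
\item Concretely, take $k=w=3$ with row $1$ equal to (filled, filled, empty) and row $2$ equal to (empty, filled, empty). The filled-cell count is $3$, yet every horizontal piece fills an empty row and clears at once. So the column residues $(1,2,0)$ can never be equalized, and the board can never be cleared.
\item A row-by-row DP whose state stores a pending offset for every column has $k^w$ states. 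That is exponential in the width even for fixed $k$.
\end{itemize}

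The paper instead eliminates holes first, then asks whether horizontal pieces can equalize the column residues mod $k$. It guesses the number (mod $k$) of horizontal pieces whose leftmost cell is in the first column. Every other count is then forced, giving only $k$ candidates to check. It reruns hole elimination after each horizontal piece, which is safe because at least $k$ empty rows remain on top before that piece lands.
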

\begin{proof}
Our strategy proceeds in two phases. First, we eliminate all holes, where a \defn{hole} is an empty cell that is below a filled cell. Second, we place horizontal pieces to make all columns have the same number of pieces modulo $k$.

For the first phase,
we will show how to eliminate the highest hole without creating any new holes.
Then, by induction on the number of holes, we can remove all holes.
Refer to Figure~\ref{fig:hole-clear} for an example.

\begin{figure}[!ht]
  \centering
  \begin{subfigure}[b]{0.225\textwidth}
    \centering
    \includegraphics[width=80pt]{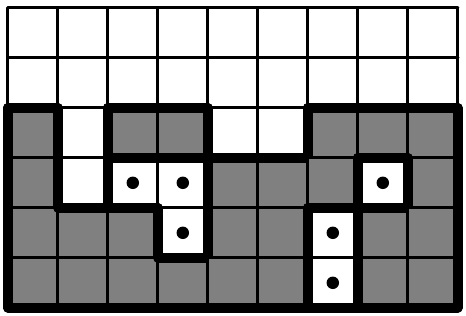}
    \caption{}
  \end{subfigure}
  \begin{subfigure}[b]{0.225\textwidth}
    \centering
    \includegraphics[width=80pt]{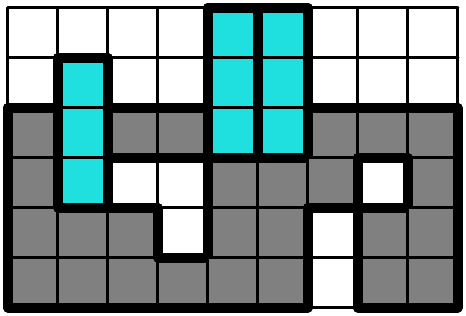}
    \caption{}
  \end{subfigure}
  \begin{subfigure}[b]{0.225\textwidth}
    \centering
    \includegraphics[width=80pt]{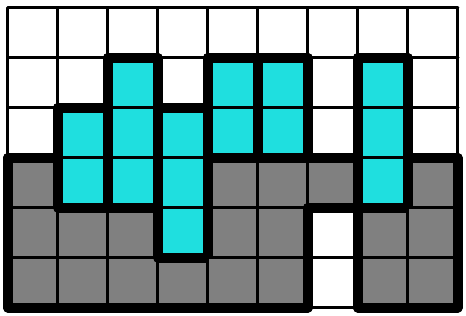}
    \caption{}
  \end{subfigure}
  \begin{subfigure}[b]{0.225\textwidth}
    \centering
    \includegraphics[width=80pt]{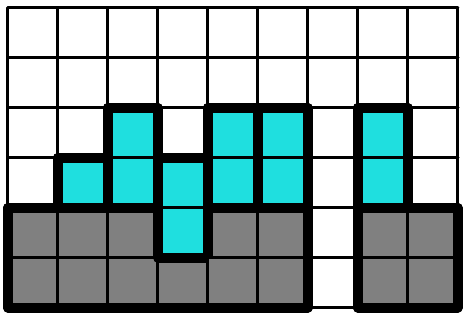}
    \caption{}
  \end{subfigure}
  \caption{Procedure for clearing holes for $k=3$. The dots in (a) show which empty squares are holes.}
  \label{fig:hole-clear}
\end{figure}

Suppose that the highest hole $h$ is in some row $r$ and some column $c$ (if there are multiple holes in row $r$, pick one arbitrarily). By the definition of a hole (and because this is the highest hole), there is a filled cell directly above the hole. Call the row of this cell $r'$. We will give a procedure to clear row $r'$, thus reducing the number of filled cells above our hole.

To clear row $r'$, we find each of the columns that have an empty cell in row $r'$. Because $r'$ is above the highest hole, all of these empty cells are not holes, and thus have only empty cells above them. In each of these columns, we repeatedly place vertical pieces until the cell in $r'$ in that column is filled. Now we will show that placing these pieces is always possible and will never cause you to lose by making too tall of a column. Initially, there were $k-1$ empty rows, so the highest possible hole must have at least $k$ rows above it (because there is at least one occupied cell above the hole, and then $k-1$ empty cells above that). Then row $r'$ must have at least $k-1$ rows above it. Because the final piece that covers row $r'$ has at most $k-1$ cells above $r'$, it will fit in the board without going over the top and making you lose.

By repeatedly applying this procedure, all of the filled cells in $c$ above the hole $h$ will be cleared, and then $h$ will no longer be a hole. This process never creates new holes because it only ever places vertical pieces.

Once there are no holes, it is easy to survive forever: there must be some empty cell on the bottom row (or it would have already cleared), and because it is not a hole, we can put a vertical piece in that column.
By repeatedly placing pieces in the lowest available location, we can also make the height of the highest filled cell be at most $k-1$, giving us at least $k$ empty rows at the top because the total height of the board is at least $2k-1$.

Now we give an algorithm that either clears the board from this position, or proves that the board was impossible to clear to begin with. In the cleared state, every column has the same number (namely, 0) of filled cells modulo $k$. Placing a vertical piece does not change the number of filled cells modulo $k$ in its column. Clearing a row changes all columns' counts by 1. Thus, if it is possible to clear the board, we must place horizontal pieces to make all of the columns' counts equal.

We determine whether this is possible with a greedy strategy. For this algorithm, we will say ``place a piece in column $c$'' to mean to place a horizontal piece with its leftmost cell in column $c$. First, we guess the number of horizontal pieces to place in the leftmost column. For each subsequent column, we are forced to place an amount of horizontal pieces in it so that the number of filled cells in that column is equal to the previous column (mod $k$). If this results in all of the columns having an equal number of cells mod $k$, we win. Because there are only $k$ possible choices for the amount of pieces to place in the first column, we can try all possible values, and if none result in all of the columns having an equal number of cells mod $k$, we conclude that clearing is impossible.

During this process, we may create some holes. After each horizontal piece, we run the hole eliminating algorithm (phase one) again. Because the hole eliminating algorithm ends with at least $k$ empty rows on top, after adding each horizontal piece there are at least $k-1$ empty rows on top, which allows the hole eliminating algorithm to run without losing.
\end{proof}

\subsection{Algorithm for Clearing with Dominoes}

In this subsection, we answer the open questions from \cite{TotalTetris_JIP} about surviving and clearing Tetris with only rotatable dominoes. In \cite{TotalTetris_JIP}, they note that clearing a single line suffices to survive forever. We give an algorithm to decide whether it is possible to ever clear a line with an infinite sequence of domino pieces, thus deciding survivability. After clearing a line, using the algorithm from Section~\ref{sec:1timesk} gives a solution to whether fully clearing the board is possible as well.

We will use the \defn{falling rotation model} from Section~\ref{sec:falling}. The allowed rotations are any rotation that overlaps the previous position in exactly one of the two cells, and the other cell is in a lower row than it previously was.

We give an algorithm to determine whether a particular row $r$ is ever clearable. To determine whether any row is clearable, we can simply run this algorithm separately for each row $r$.

We call a cell $(r,c)$ \defn{reachable} if it is possible from the initial configuration to move a domino to a position with one of the two cells of the domino in $(r,c)$. Similarly, a pair of empty adjacent cells is a reachable domino location if it is possible from the initial configuration to move a domino to that position. Note that we only require the domino to be able to pass through the location; it is still considered reachable even if the domino cannot stop there because there is nothing supporting it.

Call a configuration of dominoes \defn{supported} if every piece has a filled cell or domino directly below it. Thus, in a supported configuration, for each domino, if it was the last one to be placed it would not fall. First, we show that as long as every domino's location is individually reachable (ignoring other dominoes in the configuration), it is possible to sequence the placements of dominoes such that we can build the given configuration. Call a domino location \defn{blocking} another domino location if the second domino's position is not reachable after the first one is placed. Additionally, we say a domino location \defn{supports} another domino location if the second one contains at least one cell directly above a cell in the first domino.

First we construct an ordering on all of the empty reachable cells, which we will call the \defn{path order}.
Refer to Figure~\ref{fig:pathorder} for an example.
We will label each cell with its position in this list. We proceed row by row, with the cells in higher rows always before lower ones. Within each row, we start with all of the cells which are directly below a reachable cell in the above row. (For the first row, this is all of the empty cells). Then we proceed outward along this row from those cells, appending each reachable to the path order cell the first time we come to it.

\begin{figure}[!ht]
    \centering
    \begin{overpic}[width=0.6\textwidth]{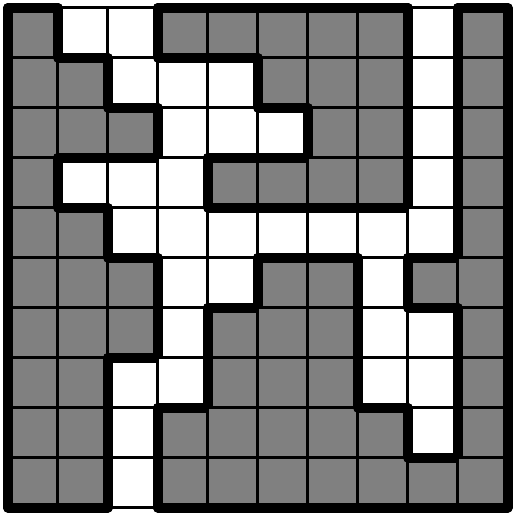}
      \put(15.5,92.5){1}
      \put(25,92.5){2}
      \put(83,92.5){3}
      \put(25,82.5){4}
      \put(34.5,82.5){5}
      \put(44,82.5){6}
      \put(83,82.5){7}
      \put(34.5,73){8}
      \put(44,73){9}
      \put(52.5,73){10}
      \put(82,73){11}
      \put(14.5,63){14}
      \put(24,63){13}
      \put(33.5,63){12}
      \put(82,63){15}
      \put(24,53.5){16}
      \put(33.5,53.5){17}
      \put(43,53.5){19}
      \put(53,53.5){20}
      \put(62.5,53.5){21}
      \put(72.5,53.5){22}
      \put(82,53.5){18}
      \put(33.5,44){23}
      \put(43,44){24}
      \put(72.5,44){25}
      \put(33.5,34){26}
      \put(72.5,34){27}
      \put(82,34){28}
      \put(24,24.5){32}
      \put(33.5,24.5){29}
      \put(72.5,24.5){30}
      \put(82,24.5){31}
      \put(24,15){33}
      \put(82,15){34}
      \put(24,5.5){35}
    \end{overpic}
    \caption{Path order of empty cells in an example board state. Here, a bigger number appears later in the path order.}
    \label{fig:pathorder}
\end{figure}

\begin{lemma}
\label{domino-paths}
    A domino can move along any path of empty cells that is monotonically decreasing in $y$ value that starts from an empty top position.
\end{lemma}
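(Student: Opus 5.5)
We read the statement as follows. A path $p_1,p_2,\dots,p_m$ of empty cells is given, with consecutive cells edge-adjacent, $y$-coordinates non-increasing, and $p_1$ in the top row. We show that some domino, moving under the falling rotation model (plus ordinary left/right/down translations), passes through every cell of the path in order. The proof is by induction along the path. The invariant is that after processing $p_i$, the domino occupies $p_i$ together with one other empty cell $q_i$, where $q_i$ is either $p_{i-1}$ or a cell not lower than $p_i$. We start by spawning the domino vertically, or horizontally, so that it covers $p_1$ using only top-row empty cells. This is where the hypothesis that the path starts from an empty top position is used; we will need to say exactly how spawning works in the model, e.g.\ that the domino may enter with just one cell in the board.

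For the inductive step, we split into cases according to the move from $p_i$ to $p_{i+1}$ (down, left, or right) and the current orientation of the domino.

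\emph{Vertical domino $\{p_i, p_i+(0,1)\}$, step down.} Translate down one unit; this is legal because $p_{i+1}$ is empty.

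\emph{Vertical domino, step sideways to $p_{i+1}=p_i\pm(1,0)$.} Use a rotation that keeps the lower pixel $p_i$ fixed and moves the upper pixel to the neighbouring not-higher cell $p_{i+1}$. This is allowed exactly because $p_{i+1}$ is empty. The result is the horizontal domino $\{p_i,p_{i+1}\}$.

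\emph{Horizontal domino $\{p_{i-1},p_i\}$, step continuing horizontally.} Translate by one unit in the direction of travel; the new cell $p_{i+1}$ is empty and $p_{i-1}$ is vacated.

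\emph{Horizontal domino, step down from $p_i$.} Rotate, keeping $p_i$ fixed and swinging the other pixel down onto $p_{i+1}=p_i-(0,1)$. This yields the vertical domino $\{p_{i+1},p_i\}$, which covers $p_{i+1}$ with the old cell above it.

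\emph{Horizontal domino, reversal.} If $p_{i+1}=p_{i-1}$, the domino already covers it, and we simply relabel which pixel is the leading one.

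In every case, the only new cell the domino enters is $p_{i+1}$, which is empty by hypothesis. Every other cell it occupies is a previously occupied path cell, so the domino never needs any cell outside the path. This gives the invariant at step $i+1$.

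The main obstacle is checking that the rotations used really match Figure~\ref{fallrotate} in every orientation. For a vertical domino, the rule requires the lowest pixel to stay fixed and the other pixel to go to a not-higher neighbour; for a horizontal domino, we may keep either pixel and the other moves down. We must also confirm that no rotation sweeps through an intermediate cell, since the model only tests the destination. If a case does not fit directly, the fallback is to compose a translation with a rotation, again using only path cells. A secondary subtlety is degenerate paths that wiggle horizontally back and forth; the relabeling case handles these.
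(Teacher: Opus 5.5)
Your proposal is correct and takes the same approach as the paper: keep the domino on the two most recent path cells, translate when the next cell lies in line with the domino, and otherwise use a falling rotation pivoting on the current cell, which uncovers the earlier cell. Your explicit case split (vertical/horizontal $\times$ down/sideways, plus the reversal case) and your remark that the spawn position must be an available domino placement spell out details that the paper's two-line proof leaves implicit; the paper likewise just assumes the path starts from an empty domino position at the top.
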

\begin{proof}
    Start the domino at the empty top position at the start of the path. The domino will occupy two cells at a time. To reach the next cell, if it is in a line with the current domino position, we can simply move the domino one space left, right, or down as appropriate. Otherwise, with the falling rotation system there is always a rotation that moves the domino to cover the next cell and uncover the earlier of the two cells it occupies.
\end{proof}

\begin{lemma}
\label{lem:domino-path-order}
    For every reachable domino location, there is a path a domino can take to reach it that is monotonically increasing in path order.
\end{lemma}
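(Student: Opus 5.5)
Given any reachable domino location, fix an actual movement sequence of the domino from its spawn at the top to that location. I will replace it by a path that is monotone in the path order, and then realize that path with Lemma~\ref{domino-paths} together with horizontal moves. The replacement is built from the cells covered along the way, which are all reachable by definition.

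First, I would reduce to a statement about cells. Under the falling rotation model, every move (left, right, down, or a rotation) keeps the domino's cells in non-increasing rows. So the sequence of cells it covers is a walk through reachable empty cells. Each step of this walk goes either sideways within a row or down to the cell directly below.

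Second, I would show the following claim: for every reachable cell $x$ there is a walk from a top-row cell to $x$ through reachable cells whose labels strictly increase in path order, where each step is a horizontal or downward adjacency. I would argue by induction along the path order itself, using how the order was built.
\begin{itemize}
\item If $x$ is directly below a reachable cell $y$ in the row above, then $y$ comes earlier, because higher rows precede lower ones. Appending the downward step $y \to x$ to the walk for $y$ works.
\item Otherwise, $x$ was appended while sweeping outward along its row. It is adjacent to a same-row reachable cell $z$ that was labeled earlier, namely the neighbor from which the sweep reached it. Appending $z \to x$ to the walk for $z$ works.
\end{itemize}
For this second case I must check that every reachable cell in a row really is connected, within that row, to some cell lying directly below a reachable cell of the row above. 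Take the original movement sequence to $x$ and look at where the domino last descended into $x$'s row. From that cell it moved only horizontally to $x$, through contiguous empty cells, each of them reachable. The first row is handled since every empty top cell is included. I also need that the sweep outward visits cells in increasing distance, so the predecessor $z$ has a smaller label.

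Third, I would convert the monotone cell walk into a domino motion. The target domino occupies cells $a$ and $b$. I would take the monotone walk to the one with the larger label, say $b$. If the last step of that walk comes from $a$, the walk already ends on the domino $\{a,b\}$; if not, I append $a$ before $b$ in the appropriate way. The domino then slides along the walk by translating or rotating as in Lemma~\ref{domino-paths}. That lemma covers monotone-decreasing-$y$ paths, and horizontal steps are simple translations. Throughout, the domino always occupies two consecutive walk cells, which are both empty and adjacent.

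I expect the main obstacle to be this final step. A domino needs two consecutive cells, so the walk must begin with two adjacent empty top cells. The rotation at a corner, going from horizontal to downward or downward to horizontal, must also be a legal falling rotation. Finally, the target domino's two cells may not appear consecutively in the monotone walk.

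To handle the corners, I would first check the falling-rotation cases directly. For the ending, I would choose the walk to $b$ to pass through $a$ last. This is possible because $a$ is adjacent to $b$ and, when $b$ is later in the order, $a$ is either above $b$ or a same-row neighbor processed earlier. In the case where $a$ is a same-row neighbor labeled later, I would instead end at $a$ via $b$, swapping the roles of the two cells.
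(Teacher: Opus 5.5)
Your proposal is correct and is essentially the paper's argument run forward instead of backward. The paper starts from the target domino with its two cells ordered by label, repeatedly prepends an adjacent cell that is earlier in the path order (such a cell always exists by how the order is built), and traverses the resulting monotone path via Lemma~\ref{domino-paths}; your extra check that every reachable cell in a row is horizontally connected to a cell directly below a reachable cell of the row above makes explicit what the paper's construction silently assumes, and your final ``swap the roles'' case is vacuous once $b$ is chosen as the later-labeled cell.
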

\begin{proof}
    We will compute this path in reverse order. Start with the domino's final location, ordering the two cells by their path order. Repeatedly prepend to our path any adjacent cell which is earlier in path order from the last cell we added. The way our path order was constructed, this will always result in building a monotone path from the top to the domino's location, which by Lemma~\ref{domino-paths} is traversable by a domino.
\end{proof}

\begin{lemma}
\label{lem:domino-placeable}
    Every supported configuration of dominoes is placeable.
\end{lemma}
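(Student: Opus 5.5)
We will show that a supported configuration of dominoes can be built by placing its dominoes in a suitable order. By hypothesis, every domino location in the configuration is individually reachable from the initial board. To each domino location $D$ we assign the key $\max(D)$, the later of its two cells in the path order. We then place the dominoes in \emph{decreasing} order of key: the domino whose cells come latest in path order (lowest, or farthest out along its row) goes first. Ties cannot occur, because distinct dominoes in a configuration are disjoint.

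Consider the moment we place a domino $D$. By Lemma~\ref{lem:domino-path-order}, on the original board there is a path from the top to $D$ along which $D$ moves and which is monotonically increasing in path order. Every cell this path visits precedes or equals $\max(D)$ in path order; the path's cells are exactly those the domino occupies during the motion. Every domino placed earlier has a larger key, so it occupies $\max(D')>\max(D)$. We must also rule out its other cell $\min(D')$ lying on $D$'s path. Ordering the dominoes by key alone does not immediately guarantee this, and this is the step I expect to be the main obstacle. My plan is to argue from the structure of the path order. The two cells of a domino are adjacent: either vertically adjacent, in which case the lower cell is later, or horizontally adjacent in the same row. A path in increasing path order that reaches a cell earlier than $\max(D')$ but passes through $\min(D')$ would then force an order inversion.

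If this argument fails, I would instead order the dominoes by a topological sort of the ``blocking'' relation and show that the relation is acyclic. Suppose $D_1$ blocks $D_2$ and $D_2$ blocks $D_1$. The increasing path of each would then pass through a cell of the other, and comparing maxima in path order gives a contradiction. So the blocking relation is acyclic, and we place the dominoes in reverse topological order: a domino is placed only after every domino that it blocks.

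Supportedness is what makes each domino actually stop where we want. When $D$ is placed, everything directly below it is either an originally filled cell or a domino $D''$ that supports $D$. I would show that such a $D''$ has a larger key, since a cell directly below another comes later in the path order (lower rows come later). Hence $D''$ has already been placed, so $D$ rests at its location and locks there rather than falling. It remains to check that no earlier-placed domino lies on $D$'s path. Combined with the key argument above, this shows the configuration can be placed in this order, so every supported configuration is placeable.
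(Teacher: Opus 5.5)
\textbf{There is a genuine gap: ordering by the later cell $\max(D)$ is wrong, not merely unproven.} The step you flagged as the main obstacle actually fails. Here is a counterexample.

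Take a row $a$ whose empty cells are $p,q,q'$ in columns 1--3. The row directly above row $a$ is filled except for the cell above $p$, and the rows higher up are open. The row directly below row $a$ is filled except for the cell $s$ below $p$, and $s$ rests on a filled cell. Put an enclosed, unreachable pocket at the right end so that no row ever becomes full.

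Now let $D'=\{p,s\}$ and $D=\{q,q'\}$. This is a supported configuration, and both locations are individually reachable: $D$ drops down column~1 and uses a falling rotation about $p$. The path order is $p<q<q'<s$. Since $\max(D')=s>q'=\max(D)$, your rule places $D'$ first. But $D'$ then occupies $p$, the only entrance to row $a$, so $D$ can never be placed. In particular, the increasing path to $D$ passes through $\min(D')=p$ with no order inversion, which is exactly the scenario you hoped to exclude.

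The fallback does not repair this as stated, for three reasons.
\begin{itemize}
\item Excluding 2-cycles does not make the blocking relation acyclic.
\item Comparing maxima cannot orient blocking: in the example, $D'$ blocks $D$ even though $\max(D')>\max(D)$.
\item A pairwise relation ignores a domino being cut off jointly by several earlier-placed dominoes, none of which blocks it alone.
\end{itemize}

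The missing idea is to key each domino by its \emph{earlier} cell $\min(D)$, as the paper does.
\begin{itemize}
\item The increasing path from Lemma~\ref{lem:domino-path-order} ends with $\min(D),\max(D)$, so every other cell on it precedes $\min(D)$.
\item Every earlier-placed $D'$ has $\min(D')>\min(D)$. So both cells of $D'$ come after $\min(D)$, and by disjointness neither equals $\max(D)$. Hence the path is clear.
\item This is also the correct form of your cycle argument: if $D_1$ blocks $D_2$, then $D_2$'s increasing path must meet $D_1$ in a cell preceding $\min(D_2)$, so $\min(D_1)<\min(D_2)$.
\item Support is unaffected: a supporting domino lies entirely in rows strictly below the row of $\min(D)$. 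So it has a larger key and is placed first.
\end{itemize}
In the example, this rule places $D$ first and then drops $D'$ into column~1.
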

\begin{proof}
Label each domino in our configuration with the position of the earlier of its two cells in the path order. We will place the dominoes in this order, with the largest labeled domino going first. When each domino is placed, every cell with a smaller path order than the label of that domino is still unoccupied. Thus, by Lemma~\ref{lem:domino-path-order}, there is a path this domino can take to reach its location. Furthermore, any domino which supported this domino must have already been placed, because labels of lower dominoes are always larger than labels of higher dominoes.
\end{proof}

Now that we know that any configuration is placeable, we describe how to find a configuration that will result in a line clear if one exists.
Call a column $c$ \defn{good} if the following are true:
\begin{itemize}
    \item $(r,c)$ is reachable; and
    \item either $(r+1,c)$ is empty, reachable, and $r$ is not the topmost row, or the nearest filled cell in column $c$ below $r$ is an even distance below $r$.
\end{itemize}

In other words, a good column is one where it is possible to stack vertical dominoes until $(r,c)$ is occupied. A \defn{bad} column is any column that is not good.

A horizontal domino \defn{fixes} a bad column $c$ if the following are true:
\begin{itemize}
    \item the domino's location is reachable;
    \item it is an even distance below $r$; and
    \item there are no filled cells between it and $r$ in column $c$.
\end{itemize}

In other words, a horizontal domino fixes a column $c$ if we can now place a (possibly empty) stack of vertical dominoes on top of it that results in $(r,c)$ being occupied. Note that it is often the case that a single horizontal domino will fix both columns it is in.

\begin{lemma}\label{lem:domino-condition}
    $r$ is clearable if and only if there is a non-overlapping set of horizontal dominoes that fix every bad column.
\end{lemma}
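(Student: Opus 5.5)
We prove the two directions separately. For the ``if'' direction, take the given non-overlapping set $H$ of horizontal dominoes fixing every bad column. We build a configuration consisting of $H$, plus stacks of vertical dominoes, so that every cell of row $r$ becomes occupied, and then invoke Lemma~\ref{lem:domino-placeable}.

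The construction goes column by column.
\begin{itemize}
\item \emph{Good column $c$.} If the nearest filled cell below $r$ is an even distance below, stack vertical dominoes on it until $(r,c)$ is covered, with the top domino occupying $(r,c)$ and $(r+1,c)$ if necessary. Otherwise use the alternative in the definition, so that the top domino occupies $(r,c),(r+1,c)$.
\item \emph{Bad column $c$.} Choose a fixing domino $h\in H$ and stack vertical dominoes on top of it. This is possible because the distance to $r$ is even and no filled cells intervene.
\end{itemize}

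Some care is needed at two points.
\begin{itemize}
\item Each vertical stack must lie in reachable cells, which holds because a column above a reachable cell, up to a reachable cell $(r,c)$, has reachable cells.
\item Horizontal dominoes in $H$ may sit in columns whose stacks would pass through them. To handle this, use in each bad column the \emph{highest} fixing domino of $H$ in that column. In good columns, if a domino of $H$ lies between the support and row $r$, build the stack on the highest such domino instead; this keeps parity correct only if that domino fixes the column.
\end{itemize}
I expect this interaction to be the main technical nuisance. A cleaner alternative is to first prune $H$ to dominoes each needed for some bad column, then argue that in any column any domino of $H$ below row $r$ with no filled cells in between is at even distance. That argument would take some case checking, or a modification: leave the gap unsupported and note that supported configurations only need support, not completeness, below. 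Dominoes of $H$ that are not used as supports must themselves be supported; these can be dropped if they fix nothing, or supported by vertical stacks beneath them when they are reachable.

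Once the configuration is supported and consists of individually reachable locations, Lemma~\ref{lem:domino-placeable} places it, and row $r$ fills and clears.

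For the ``only if'' direction, consider any play sequence that clears row $r$, and look at the moment just before the clear. Each bad column $c$ has $(r,c)$ filled by some domino. Follow column $c$ downward from $(r,c)$ through domino cells until reaching either an original filled cell or a horizontal domino. Vertical dominoes cover two cells each, so a run of vertical dominoes ending on the floor or on an original filled cell at odd distance would contradict badness. Therefore the first non-vertical object encountered is a horizontal domino at even distance below $r$, with no filled (original) cells between. Its location was reachable, since it was actually placed. Hence it fixes $c$.

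The horizontal dominoes obtained this way are distinct actual pieces, so they are non-overlapping. This gives the required set and completes the plan.
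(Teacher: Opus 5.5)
Your overall plan matches the paper's. For ``only if'' you follow a bad column down from $(r,c)$ to the first horizontal domino. For ``if'' you take $H$ plus vertical stacks up to row $r$, make the configuration supported, and invoke Lemma~\ref{lem:domino-placeable}. The genuine gap is how you support the dominoes of $H$. You propose vertical stacks beneath them. But a vertical stack ends exactly under a cell of $h$ only if the empty run below that cell has even length, and nothing guarantees this.

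For instance, let $h\in H$ occupy columns $a,a+1$ in row $y$ with $r-y$ even. Let two other dominoes of $H$ lie in row $y-2$, one in columns $a-1,a$ and one in columns $a+1,a+2$. All three can fix bad columns: take $r$ to be the top row and the nearest filled cells in columns $a,a+1$ at row $y-3$. Then $h$ fixes columns, so you do not drop it. Yet under each of its cells there is a single empty cell, and no vertical domino fits.

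The paper avoids parity altogether. Under each unsupported domino of $H$ it stacks horizontal copies straight down until one of them rests on something. Each copy is reachable by sliding $h$ down through initially empty cells. The copies lie below $h$, whereas every vertical stack in $h$'s columns lies above $h$. This is because in each column the stack rests on the highest $H$-domino above the nearest filled cell below $r$, or on that cell itself.

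Your vertical-stack idea can also be rescued, but only with an argument you do not give. Take $H$ inclusion-minimal, so each $h$ (in row $y$) is the unique fixer of some bad column $c^*$. Then no other $H$-domino lies in $c^*$ between $h$ and the nearest filled cell $f$, since it would also fix $c^*$. Because $r-f$ is odd and $r-y$ is even, the run of length $y-f-1$ is even and is filled exactly by vertical dominoes.

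Smaller points. The interference between stacks and $H$ that you worry about needs no case checking. After discarding dominoes of $H$ that fix no bad column, every remaining domino lies at even distance below $r$, since its two cells share a row. So in any column, the highest $H$-domino above the nearest filled cell below $r$ automatically fixes that column.

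In the ``only if'' direction, your parity step needs one more sentence. The vertical domino covering $(r,c)$ cannot also cover $(r+1,c)$, since then $(r+1,c)$ would be initially empty and reachable and $c$ would be good. Only then does the run of vertical dominoes force an even distance.

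Both you and the paper also tacitly need $(r,c)$ to be reachable in every bad column, so that the stack above a fixing domino is placeable. The formal definition of ``fixes'' omits this, though its informal gloss includes it.
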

\begin{proof}
    In order to clear row $r$, every bad column must have some horizontal domino in it, since the definition of bad prevents filling it entirely with vertical dominoes. Furthermore, that domino must fix the column, because otherwise the empty space above the horizontal domino would still make the column bad. Thus, $r$ is clearable only if there is a non-overlapping set of horizontal dominoes that fix every bad column.

    Now we show that such a set of horizontal dominoes suffices. If any are not supported, add a stack of horizontal dominoes below them until they become supported. Now add a stack of vertical dominoes in every column that has $r$ empty to fill $r$. By the definition of horizontal dominoes fixing columns, this will successfully fill every cell in row $r$ in the previously bad columns, and by the definition of good this will fill every cell in row $r$ in the previously good columns.

    This is a supported configuration, so by Lemma~\ref{lem:domino-placeable}, there is a sequence that they can be placed in.
\end{proof}

\begin{figure}[!ht]
    \centering
    \begin{subfigure}[b]{0.45\textwidth}
        \centering
        \includegraphics[width=160pt]{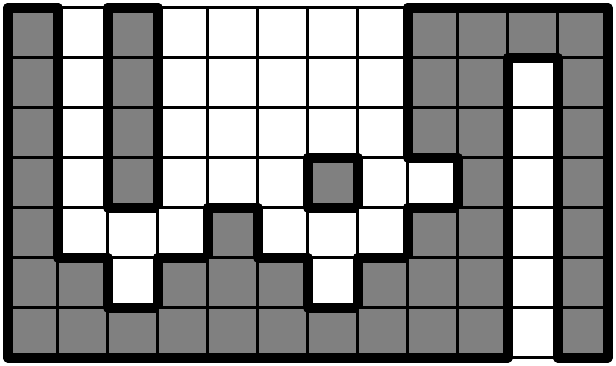}
        \caption{}
    \end{subfigure}
    \begin{subfigure}[b]{0.45\textwidth}
        \centering
        \includegraphics[width=160pt]{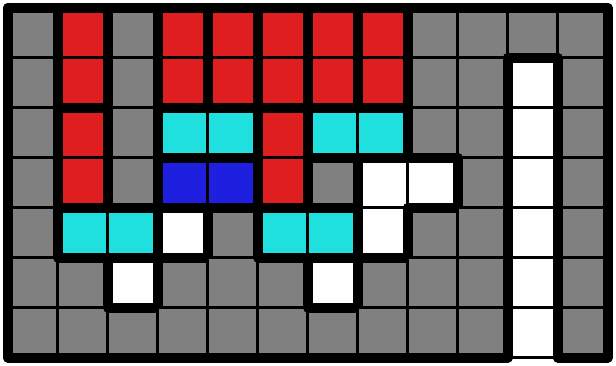}
        \caption{}
    \end{subfigure}
    \caption{A possible board state, along with a set of horizontal dominoes (cyan) which fixes all bad columns, additional horizontal dominoes (dark blue) so that the horizontal dominoes are supported, and vertical dominoes (red) that aid in clearing the top row.}
  \label{fig:horizontal-domino-fix}
\end{figure}

Now, we give an algorithm to find the set of horizontal dominoes required by Lemma~\ref{lem:domino-condition}. Since each domino can only overlap two columns, we can use a dynamic program with a 2 column wide window from left to right, where we maintain all of the ways we can place horizontal dominoes fixing the last two columns as we sweep from left to right. Since we only need at most 1 horizontal domino to fix each column, there are at most $O(n^2)$ possible ways to fix the previous 2 columns, so our dynamic program takes at most $O(n^3)$ time.

\begin{theorem}
    There is a polynomial-time algorithm that determines whether a Tetris position is survivable and whether it is eventually clearable with an infinite sequence of dominoes.
\end{theorem}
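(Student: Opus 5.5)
The plan is to assemble the theorem from Lemma~\ref{lem:domino-condition}, the dynamic program described just above, and Theorem~\ref{thm:1timesk} with $k=2$.

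\textbf{Survival.} As noted from \cite{TotalTetris_JIP}, clearing a single line suffices to survive forever. For the converse, if no row is ever clearable then every domino permanently adds two filled cells. The board has finitely many cells, so some domino eventually cannot be placed and the player loses. Hence a position is survivable if and only if some row $r$ is clearable.

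To decide this, I would iterate over all $O(n)$ rows $r$. For each row:
\begin{enumerate}
\item Compute reachability of cells and domino locations by a BFS over domino configurations, which are $O(n^2)$ placements under the falling rotation model. By Lemma~\ref{domino-paths} this is equivalent to searching for monotone paths.
\item Classify columns as good or bad.
\item Enumerate, for each bad column, its candidate fixing horizontal dominoes.
\item Run the left-to-right two-column-window dynamic program to test whether a non-overlapping family fixing all bad columns exists.
\end{enumerate}
Lemma~\ref{lem:domino-condition} says this test is exact, and each step is polynomial. I would also state the degenerate case explicitly: if row $r$ has no bad column, then vertical stacks alone clear it.

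\textbf{Clearing.} If no row is clearable, the board is not clearable. Otherwise I would construct the supported configuration from Lemma~\ref{lem:domino-condition}, which is placeable by Lemma~\ref{lem:domino-placeable}, and clear row $r$. Then I would keep clearing rows, keeping the stack low, until the hypothesis of Theorem~\ref{thm:1timesk} holds: the top $k-1=1$ row is empty and the height is at least $3$. From there I would apply its polynomial algorithm. The cited theorem's argument applies verbatim to dominoes under the falling rotation model, since it only uses vertical drops and horizontal placements onto the top surface.

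\textbf{Main obstacle.} The hardest step is justifying that the reduction to Theorem~\ref{thm:1timesk} loses no generality. The worry is that the choices made while clearing the first line might destroy clearability that a different initial strategy would have preserved. I would argue via the parity invariant of that proof. The final decision depends only on each column's filled-cell count modulo $2$, together with the freedom to add horizontal pieces. Placing any domino changes the column-count parities in a way that is reversible using more dominoes once holes can be eliminated. Clearing a row flips all column parities simultaneously, which does not affect whether all the parities can be made equal.

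Concretely, I would show the board is eventually clearable if and only if some row is clearable and the post-clear parity vector admits the greedy fix of Theorem~\ref{thm:1timesk}. I would also show this fix condition is invariant under the choice of which row to clear and how. The argument is a mod-$2$ accounting: the total number of filled cells changes by $2$ per domino and by the board width $w$ per line clear. So if $w$ is odd there is no parity obstruction, and if $w$ is even the parity of the total cell count is an invariant that must be $0$ at the end. If this invariant argument fails, the fallback is to reduce to the case that a line can be cleared from some reachable position and then argue via Theorem~\ref{thm:1timesk} that clearability is determined by such invariants alone.
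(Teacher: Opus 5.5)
Your proposal is correct and follows the paper's route. For survival, you apply Lemma~\ref{lem:domino-condition} to each row via the two-column dynamic program, with Lemma~\ref{lem:domino-placeable} for placeability, to decide whether a first line clear is possible; for clearing, you perform one line clear (which by itself empties the top row, so no further clears are needed) and then invoke Theorem~\ref{thm:1timesk} with $k=2$. Your mod-$2$ invariant (no obstruction for odd width; an even number of filled cells for even width) is the column-count-mod-$k$ invariant that the proof of Theorem~\ref{thm:1timesk} uses implicitly when it says the board ``was impossible to clear to begin with,'' so it correctly shows that the choice of first cleared row loses nothing.
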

\begin{proof}
    The above algorithm finds a set of dominoes that clear a row if one exists. By Lemma~\ref{lem:domino-placeable}, Lemma~\ref{lem:domino-condition}, and Theorem~\ref{thm:1timesk}, this gives a strategy which survives forever if possible, and eventually clears the board if possible.
\end{proof}
\begin{corollary}
\label{cor:survive-k-dominoes}
    There is a polynomial-time algorithm that determines whether a Tetris position is survivable with a sequence of $k$ dominoes.
\end{corollary}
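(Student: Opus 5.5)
Here $k$ is the number of dominoes given, and the input specifies it in binary. I would reduce survival with exactly $k$ dominoes to two cases: either some row is clearable (then survive forever), or no line is ever cleared, in which case survival is a pure packing question. If the algorithm of the preceding theorem finds a clearable row $r$, the strategy is: place the supported configuration from Lemma~\ref{lem:domino-condition} in the order of Lemma~\ref{lem:domino-placeable}, then continue with Theorem~\ref{thm:1timesk}. That strategy survives any number of dominoes, so we answer yes. One check is needed here: does the clearing strategy lose before its line clears? It does not, because every domino it places occupies cells that are reachable, hence legal positions that do not lock above the loss threshold.

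In the other case, no sequence of placements ever clears a line. The board then only accumulates cells, and surviving $k$ dominoes is equivalent to having a supported configuration of $k$ dominoes, all in reachable non-losing positions. Placeability again follows from Lemma~\ref{lem:domino-placeable}. I would make this maximization monotone. Call a domino location valid if it is reachable in the initial board and lies entirely below the loss threshold. Let $M$ be the largest number of dominoes in a supported configuration of valid locations. Then we survive $k$ dominoes if and only if $k\le M$: any survival run gives such a configuration, since no lines clear, and conversely we take a prefix in placement order of an optimal configuration. Taking the prefix keeps the supports intact, because supporting dominoes are placed first.

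The main obstacle is computing $M$. Packing arbitrary supported domino configurations looks hard in general, but reachability in the falling model has the path-order structure. I would argue that an optimal configuration can be taken to fill, column by column, the reachable empty cells as densely as parity allows. The key observation is that vertical stacks in a column with empty reachable cells above a floor cover all but at most one cell per maximal reachable vertical segment. The leftover parity cells can then be paired up with horizontal dominoes, and this is a matching-like DP over adjacent columns, like the two-column sweep used above. So I would compute $M$ as the number of valid reachable cells minus the minimum number of unfilled cells. I would find that minimum with a left-to-right DP over columns, or alternatively with a bipartite matching on the grid graph of valid reachable cells under a support constraint. Here support is automatic when filling from the bottom of each segment upward.

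The delicate point is showing that this greedy bottom-up filling never loses optimality. Each segment is filled from its floor, and a horizontal domino used to fix parity creates no unreachable holes, because holes above it are only problematic if they are left empty. If this claim fails, the fallback is an exponential-free bound. Whenever $k$ exceeds the number of valid cells divided by two, the answer is immediately no. Otherwise the DP over columns, with states recording occupancy parity at the current height, runs in time polynomial in the board size, which is independent of $k$'s encoding.
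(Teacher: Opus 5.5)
Your reduction matches the paper's: if some row is clearable you survive forever, and otherwise no line ever clears, so surviving $k$ dominoes is equivalent to $k\le M$. Here $M$ is the size of the largest supported configuration of valid reachable domino locations, and your prefix argument for the converse is fine. The gap is in computing $M$.

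Your structural claim is false: stacking each maximal reachable vertical segment with vertical dominoes and pairing the leftover cells of odd segments horizontally does not always give an optimum. An optimal packing may have to break an \emph{even} segment with horizontal dominoes. For example, suppose the valid reachable cells are exactly $(a,1)$, $(b,1)$, $(b,2)$ and $(c,2),\dots,(c,t)$ with $t$ even. Column $c$ leads up to the spawn area, and sealed holes elsewhere make every row unclearable.
\begin{itemize}
\item Your scheme stacks column $b$ vertically. It is then left with one parity cell in each of columns $a$ and $c$, which are not adjacent, so it wastes two cells.
\item The horizontal dominoes $(a,1)(b,1)$ and $(b,2)(c,2)$, plus vertical dominoes on $(c,3),\dots,(c,t)$, form a supported perfect packing.
\end{itemize}
So your method undercounts $M$ by one and would reject $k=M$. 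The fallback does not rescue this. A left-to-right column sweep for domino packing must remember which cells of the current column are already covered from the left, which is exponential in the board height. You give no reason why a parity-only state would suffice.

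The missing idea is that the support constraint costs nothing. Take any set of disjoint dominoes on valid reachable cells and let them fall one at a time, lowest first. In the falling model a domino can always keep moving down through empty cells, so each domino:
\begin{itemize}
\item ends at a location reachable in the initial board;
\item only gets lower, so it stays valid;
\item stops on a filled cell or an already-settled domino.
\end{itemize}
This gives a supported configuration of the same size. Moreover, falling-model moves never go up, so every reachable cell ends a monotone path of empty cells from the top. Extending that path by one step to an adjacent cell that is not higher, and applying Lemma~\ref{domino-paths}, shows that any two adjacent valid reachable cells form a reachable domino location.

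Hence $M$ is just the maximum matching in the grid graph on valid reachable cells. This graph is bipartite under the checkerboard coloring, and Lemma~\ref{lem:domino-placeable} places the resulting configuration. That is exactly the paper's proof, which leaves the settling step implicit. You mention bipartite matching in passing, but you attach an undischarged ``support constraint'' to it and then build on the flawed greedy instead.
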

\begin{proof}
    If the position is survivable with infinitely many dominoes, clearly it is also survivable for any finite $k$. The only case where we cannot survive forever is when no row is clearable, as Theorem~\ref{thm:1timesk} shows and \cite{TotalTetris_JIP} notes. In this case, we compute a packing of dominoes into reachable locations. This is easy since it is a maximum matching problem on a bipartite graph, and by Lemma~\ref{lem:domino-placeable} any such configuration is always possible to place. 
\end{proof}

\section{SAT and Graph Orientation}\label{sec:satgo}

Here, we discuss all of the SAT and Graph Orientation problems required for all of our reductions to Tetris clearing with one tetromino piece type.

Our reductions to $\II$-tris will be from or involve the following problems:

\begin{problem}[\textbf{1-in-3SAT}]
    Given a conjunctive normal form (CNF) formula $\phi$ consisting of $m$ clauses with $n$ variables such that each clause $C_j$ contains three literals (i.e., variables or their negations), determine whether there exists a satisfying truth assignment to the variables such that each clause in $\phi$ has exactly one true literal.
\end{problem}

\begin{problem}[\textbf{Graph Orientation}]
    Given an undirected 3-regular graph $G = (V, E)$ such that $V$ is partitioned into three (disjoint) node-subsets $V_\ell$ (\emph{literal nodes}), $V_C$ (\emph{clause nodes}), and $V_{\overline{C}}$ (\emph{negated-clause nodes}), determine whether there exists an orientation of $G$ (i.e., an assignment of directions to all edges in $G$) such that:
    \begin{enumerate}
        \item every literal node in $V_\ell$ has in-degree either zero or three,
        \item every clause node in $V_C$ has in-degree one, and
        \item every negated-clause node in $V_{\overline{C}}$ has in-degree two.
    \end{enumerate}
\end{problem}

Both problems are known to be NP-complete; 1-in-3SAT was shown to be NP-complete in \cite{schaefer1978complexity}, and Graph Orientation was shown to be NP-complete in \cite{horiyama2017complexity}.

\vspace{5mm}

Our reductions to $\JJ$-tris, $\LL$-tris, and $\TT$-tris will be from the following problems:

\begin{problem}[\textbf{Planar Biconnected $\{\{0, 4\}\text{-in-4},\text{1-in-4}\}$ Graph Orientation}]
    Given an undirected $4$-regular planar \emph{biconnected} graph $G = (V, E)$ such that $V$ is partitioned into two (disjoint) node-subsets $V_0$ and $V_1$, determine whether there exists an orientation of $G$ (i.e., an assignment of directions to all edges in $G$) such that every node $v\in V_0$ has indegree either $0$ or $4$ and every node $v\in V_1$ has indegree exactly $1$.
\end{problem}

\begin{problem}[\textbf{Planar Biconnected $\{\{0, 4\}\text{-in-4},\text{3-in-4}\}$ Graph Orientation}]
    Given an undirected $4$-regular \emph{biconnected} planar graph $G = (V, E)$ such that $V$ is partitioned into two (disjoint) node-subsets $V_0$ and $V_3$, determine whether there exists an orientation of $G$ (i.e., an assignment of directions to all edges in $G$) such that every node $v\in V_0$ has indegree either $0$ or $4$ and every node $v\in V_3$ has indegree exactly $3$.
\end{problem}

The versions of these problems without the biconnectivity condition were shown to be NP-complete in \cite{planar-graph-orientation}. We prove that adding the biconnectivity condition does not change the hardness of these problems:

\begin{theorem}
    Planar Biconnected $\{\{0, 4\}\text{-in-}4,1\text{-in-}4\}$ Graph Orientation and Planar Biconnected \allowbreak $\{\{0, 4\}\text{-in-4},\allowbreak 3\text{-in-}4\}$ Graph Orientation are NP-hard.
\end{theorem}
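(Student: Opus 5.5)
We reduce from the non-biconnected versions, which are NP-hard by \cite{planar-graph-orientation}. Given a planar $4$-regular instance $G$ with node partition $V_0 \cup V_1$ (resp.\ $V_0\cup V_3$), we build in polynomial time a planar $4$-regular biconnected graph $G'$ with a partition of the same type. The requirement is that $G'$ has a valid orientation if and only if $G$ does. We first handle connectivity: if $G$ is disconnected, its components are independent subinstances, so we will join them with gadgets that impose no constraint. Then we eliminate cut vertices.

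The central tool is an \emph{edge-replacement gadget}: a small planar piece with two dangling half-edges that acts as a ``wire''. In every valid orientation of the gadget, exactly one dangling edge points in and the other points out, and both directions are realizable. Replacing an edge $uv$ of $G$ by such a wire preserves the set of valid orientations at $u$ and $v$.

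We also need a \emph{bridging} version of this gadget. This is a wire that can be spliced into two edges $e_1,e_2$ lying on a common face, possibly in different blocks or components. It creates two vertex-disjoint connections between the two edges while still transmitting each edge's orientation independently. Concretely, I would use nodes of $V_0$ type (all-in or all-out), since they propagate direction rigidly. For the $1$-in-$4$ version I would add $V_1$ nodes to absorb parity; for the other version, $V_3$ nodes. I will check by a small case analysis that each internal node's constraint is satisfiable in exactly the intended ways. Because $V_0$ nodes force all incident edges to agree, a chain of them can carry a single edge's orientation. Pairing two such chains through a few $1$-in-$4$ nodes, inserted as a ladder between them, should make the chains mutually connected without coupling their directions.

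The construction then proceeds as follows. Compute the block-cut tree of $G$ in a fixed planar embedding. For every cut vertex $v$, choose two consecutive incident edges around $v$ that belong to different blocks; they share a face. Splice the bridging gadget across them, which merges those blocks. Likewise, for disconnected components, embed them in a common outer face and bridge an edge of one to an edge of another. Each splice strictly decreases the number of blocks and preserves planarity and $4$-regularity, so polynomially many splices yield a biconnected $G'$. Equivalence holds because the gadgets are transparent: any valid orientation of $G$ extends uniquely-up-to-gadget-choices, and any valid orientation of $G'$ restricts to one of $G$.

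The main obstacle is designing the bridging gadget itself. It must simultaneously be $4$-regular and planar, be internally biconnected, and use only the two allowed node types. Above all, it must not introduce spurious couplings, such as forcing $e_1$ and $e_2$ to have related orientations, and it must not be unsatisfiable. I would first try a small ladder of $V_0$ nodes closed with $V_1$ (or $V_3$) nodes, verified by exhaustive enumeration of orientations. If parity or degree counting forbids a direct design, I would fall back to attaching a separate satisfiable biconnected ``filler'' block via two disjoint wires.
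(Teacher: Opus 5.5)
\textbf{There is a genuine gap.} Your overall framework is the same as the paper's: reduce from the non-biconnected problem by splicing a transparent gadget into two edges on a common face, so that the number of components, and then of blocks, strictly decreases. But the proof rests entirely on that gadget, and you do not construct it; you only describe a search strategy and a fallback.

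Worse, the design you sketch cannot work as stated. Suppose the splice point on $e_1=v_1v_2$ is a $\{0,4\}$-in-4 node $u$. Then both halves of $e_1$ point into $u$ or both point out of it, so either neither or both of $v_1,v_2$ receive an in-edge. A genuine edge gives an in-edge to exactly one endpoint, so a chain of $\{0,4\}$ nodes cannot be what carries $e_1$'s orientation. The ladder also fails. A 1-in-4 rung attached to one node of each chain receives two in-edges whenever both of those nodes are all-out; dually, a 3-in-4 rung receives two out-edges when both are all-in. So if the chains really track $e_1$ and $e_2$ independently, some combination of their orientations is excluded. That is precisely the spurious coupling you want to avoid.

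The missing idea is to reverse the roles. Subdivide $e_1$ and $e_2$ by a 1-in-4 (resp.\ 3-in-4) node $a$ and $b$, and join each to a single $\{0,4\}$-in-4 hub $c$ by a double edge.
If $c$ were all-out (resp.\ all-in), $a$ would get two in-edges (resp.\ two out-edges), which is impossible. So $c$ is forced all-in (resp.\ all-out).
Then $a$ and $b$ each need exactly one in-edge among their two remaining edges. That makes them 1-in-2 subdivision points, so the gadget simulates the pair $e_1,e_2$ exactly, and it is 4-regular and planar.

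This gadget provides only one connection, through $c$. The paper therefore first uses it to connect components. It then applies it across two blocks sharing a cut vertex, where that cut vertex supplies the second disjoint connection. If you prefer your one-step ``two vertex-disjoint connections'' version, subdivide each edge twice and use two hubs joined by non-crossing chords across the face.
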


\begin{proof}
    We focus on Planar Biconnected $\{\{0, 4\}\text{-in-4},\text{1-in-4}\}$ Graph Orientation; the proof for the other problem is similar. We reduce from Planar $\{\{0, 4\}\text{-in-4},\text{1-in-4}\}$ Graph Orientation (without the biconnectivity condition) \cite{planar-graph-orientation}. Assume that there are no isolated vertices (as we can remove those vertices).
    Given an instance of Planar $\{\{0, 4\}\text{-in-4},\text{1-in-4}\}$ Graph Orientation, we choose a planar embedding such that all biconnected components share a common ``outside'' face (i.e., no biconnected component is nested inside another biconnected component).
    Then, we do the following:

    \begin{itemize}
        \item First, for some pair of connected components, we choose an edge from each connected component that lies on the ``outside'' face and replace the pair of edges with the gadget shown in Figure~\ref{subfig:biconn-gadget}. The gadget consists of three new vertices, one splitting one of the edges into two and another one splitting the other edge into two, and four new edges connecting the three vertices. We repeatedly do this until the graph is connected.
        \item Next, we do the same for some pair of \emph{adjacent} biconnected components (i.e., biconnected components sharing a vertex): choose an edge from each connected component that lies on the ``outside'' face and replace the pair of edges with the gadget shown in Figure~\ref{subfig:biconn-gadget}. We repeatedly do this until the graph is biconnected.
    \end{itemize}

    We show that this reduction works. First, because the edges the gadget replaces are on the ``outside'' face, the additional edges and vertices can be embedded on the plane while preserving planarity. In addition, each time a gadget replaces a pair of edges in the first or second phase, the number of connected components or biconnected components decreases, respectively (as it connects two separate components or introduces an alternate connection between biconnected components), so the process is finite and will result in a biconnected graph after a finite number of iterations. Thus, the resulting Graph Orientation instance is planar and biconnected.

    It thus suffices to show that the gadget in Figure~\ref{subfig:biconn-gadget} simulates a pair of distinct edges. In particular, the $\{0, 4\}$-in-4 vertex must be oriented as in Figure~\ref{subfig:biconn-gadget-forced}, as the other orientation violates both 1-in-4 vertices. Then, with the fixed orientation in Figure~\ref{subfig:biconn-gadget-forced}, both 1-in-4 vertices require one more incoming edge and one more outgoing edge, thus becoming a 1-in-2 vertex, which simulates an edge between the two (other) vertices it is connected to. Thus, the gadget simulates a pair of distinct edges, so the gadget can replace a pair of distinct edges without affecting whether or not the Graph Orientation instance has a solution.

    Thus, we have shown a way to reduce Planar $\{\{0, 4\}\text{-in-4},\text{1-in-4}\}$ Graph Orientation to Planar Biconnected $\{\{0, 4\}\text{-in-4},\text{1-in-4}\}$ Graph Orientation, so the latter is NP-hard, as desired. Similarly, Planar Biconnected $\{\{0, 4\}\text{-in-4},\text{3-in-4}\}$ Graph Orientation is also NP-hard.
\end{proof}

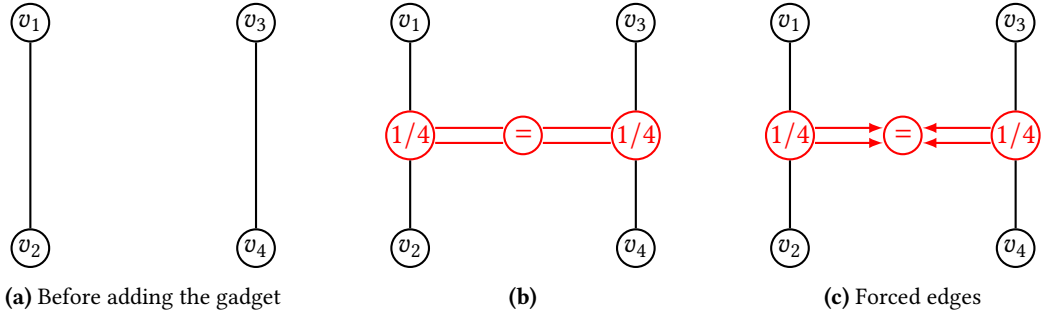
\begin{figure}[!ht]
\centering
\begin{subfigure}[b]{0.3\textwidth}
    \centering
    \begin{tikzpicture}[
node/.style={circle, draw, thick, circle, draw, minimum size=5mm, inner sep=0pt},
edge/.style={-latex, thick}
]
\begin{scope}
    \node[node] (A) at (-1.5,1.5) {$v_1$};
    \node[node] (C) at (-1.5,-1.5) {$v_2$};
    \node[node] (E) at (1.5,1.5) {$v_3$};
    \node[node] (G) at (1.5,-1.5) {$v_4$};
    \draw[thick] (A) -- (C);
    \draw[thick] (E) -- (G);
\end{scope}

  \end{tikzpicture}
  \subcaption{Before adding the gadget}
  \label{subfig:biconn-gadget-pre}
  \end{subfigure}
\begin{subfigure}[b]{0.3\textwidth}
    \centering
    \begin{tikzpicture}[
node/.style={circle, draw, thick, circle, draw, minimum size=5mm, inner sep=0pt},
edge/.style={-latex, thick}
]
\begin{scope}
    \node[node] (A) at (-1.5,1.5) {$v_1$};
    \node[node, red] (B) at (-1.5,0) {$1/4$};
    \node[node] (C) at (-1.5,-1.5) {$v_2$};
    \node[node, red] (D) at (0,0) {$=$};
    \node[node] (E) at (1.5,1.5) {$v_3$};
    \node[node, red] (F) at (1.5,0) {$1/4$};
    \node[node] (G) at (1.5,-1.5) {$v_4$};
    \draw[red, thick] ([yshift=0.1cm]B.east) -- ([yshift=0.1cm]D.west);
    \draw[red, thick] ([yshift=-0.1cm]B.east) -- ([yshift=-0.1cm]D.west);
    \draw[red, thick] ([yshift=0.1cm]F.west) -- ([yshift=0.1cm]D.east);
    \draw[red, thick] ([yshift=-0.1cm]F.west) -- ([yshift=-0.1cm]D.east);
    \draw[thick] (A) -- (B);
    \draw[thick] (B) -- (C);
    \draw[thick] (E) -- (F);
    \draw[thick] (F) -- (G);
\end{scope}

  \end{tikzpicture}
  \subcaption{}
  \label{subfig:biconn-gadget}
  \end{subfigure}
  \begin{subfigure}[b]{0.3\textwidth}
    \centering
\begin{tikzpicture}[
node/.style={circle, draw, thick, circle, draw, minimum size=5mm, inner sep=0pt},
edge/.style={-latex, thick}
]
\begin{scope}
    \node[node] (A) at (-1.5,1.5) {$v_1$};
    \node[node, red] (B) at (-1.5,0) {$1/4$};
    \node[node] (C) at (-1.5,-1.5) {$v_2$};
    \node[node, red] (D) at (0,0) {$=$};
    \node[node] (E) at (1.5,1.5) {$v_3$};
    \node[node, red] (F) at (1.5,0) {$1/4$};
    \node[node] (G) at (1.5,-1.5) {$v_4$};
    \draw[edge, red] ([yshift=0.1cm]B.east) -- ([yshift=0.1cm]D.west);
    \draw[edge, red] ([yshift=-0.1cm]B.east) -- ([yshift=-0.1cm]D.west);
    \draw[edge, red] ([yshift=0.1cm]F.west) -- ([yshift=0.1cm]D.east);
    \draw[edge, red] ([yshift=-0.1cm]F.west) -- ([yshift=-0.1cm]D.east);
    \draw[thick] (A) -- (B);
    \draw[thick] (B) -- (C);
    \draw[thick] (E) -- (F);
    \draw[thick] (F) -- (G);
\end{scope}
\end{tikzpicture}
  \subcaption{Forced edges}
  \label{subfig:biconn-gadget-forced}
  \end{subfigure}
\caption{Gadget for connecting biconnected components in $\{\{0, 4\}\text{-in-4},\text{1-in-4}\}$ Graph Orientation. The $1/4$ denote 1-in-4 vertices, and the $=$ denotes a $\{0, 4\}\text{-in-4}$ vertex. The gadget replaces two edges $(v_1, v_2)$ and $(v_3, v_4)$ in two different biconnected components (where the $v_i$ may not necessarily be distinct).}
\label{fig:biconn-connect}
\end{figure}

For $\JJ$-tris and $\LL$-tris, we will use Planar Biconnected $\{\{0, 4\}\text{-in-4},\text{3-in-4}\}$ Graph Orientation, and for $\TT$-tris, we will use Planar Biconnected $\{\{0, 4\}\text{-in-4},\text{1-in-4}\}$ Graph Orientation.

\vspace{5mm}

Lastly, our reductions to $\SS$-tris and $\ZZ$-tris will be from the following problem:

\begin{problem}[\textbf{Planar Monotone Rectilinear 3SAT}]
    Given a conjunctive normal form (CNF) formula $\phi$ consisting of $m$ clauses with $n$ variables that can be arranged as follows on the plane:
    \begin{enumerate}
        \item Every variable is a horizontal segment on the $x$-axis,
        \item Every clause contains either all positive or all negative literals and is a horizontal segment off the $x$-axis with 3 vertical connections to the variables,
        \item Clauses with all positive literals are above the $x$-axis,
        \item Clauses with all negative literals are below the $x$-axis,
        \item Segments do not cross besides at connections.
    \end{enumerate}
    Determine whether there exists a satisfying truth assignment to the variables such that each clause in $\phi$ has at least one true literal.
\end{problem}

This problem was shown to be NP-complete in \cite{de2010optimal}.

\section{$\II$-tris Clearing}\label{sec:itrisclearing}

\begin{theorem}\label{thm:itrisclear}
    Tetris clearing with SRS is NP-hard even if the type of pieces in the sequence given to the player is restricted to just $\II$ pieces.
\end{theorem}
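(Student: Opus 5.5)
We reduce from Graph Orientation (Section~\ref{sec:satgo}), with literal nodes as $\{0,3\}$-in-3 vertices, clause nodes as 1-in-3 vertices, and negated-clause nodes as 2-in-3 vertices; 1-in-3SAT would be an equally natural source. Given the 3-regular graph $G$, we build a tall, wide board plus a count $N$ of $\II$ pieces. $N$ equals exactly the number of empty cells divided by $4$, so clearing forces every empty cell to be filled with no waste. The board is mostly filled. Its top region is a nearly full ``entry corridor'' whose top three rows contain filled cells, unlike the hypothesis of Theorem~\ref{thm:1timesk}. The only way an $\II$ piece can travel is along narrow ``staircases'' that it climbs or descends using specific SRS kicks (the test offsets such as $(+1,+2)$ and $(-2,+1)$ in Table~\ref{tab:kickdataI}). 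These kicks let a piece move upward and sideways through one-cell-wide gaps, which is exactly the extra power that makes the problem hard.

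The gadgets, in order, are as follows.
\begin{enumerate}
\item \emph{Edge/wire gadgets}: a channel containing a $1\times 4$ pocket that must be filled by an $\II$ piece entering from one of its two ends. The entry side encodes the edge's orientation, and filling it from one side seals off access from the other side.
\item \emph{Vertex gadgets}: a chamber where the three incident channels meet. A chamber has a fixed number of residual cavities that must be filled, and the cavities can be filled only if exactly the prescribed number of incident channels ``deliver'' access (in-degree $0$ or $3$, $1$, or $2$). For the $\{0,3\}$ literal gadget, all three edges are tied together: one piece placement either opens all three or none.
\item \emph{Routing}: since $G$ need not be planar, crossings are handled by the 2D layout. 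We place vertex gadgets in a row and wires in separate columns or levels, so the crossing is realized by wires passing at different heights, or by a crossover built from staircases that cannot cross-talk.
\item \emph{Row-clearing bookkeeping}: the gadgets are arranged so rows clear only after all gadget cavities are filled, and cleared rows never let a piece escape a staircase constraint prematurely.
\end{enumerate}

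Correctness proceeds in two directions. For the forward direction, given a valid orientation we exhibit an explicit move sequence for each piece (spawn, shift, the kick chain, drop), in an order that fills every cell. For the backward direction, we argue by counting and local case analysis that any clearing placement must fill each edge pocket from exactly one side, and that each vertex's cavities force its in-degree constraint. The resulting orientation is then valid.

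The main obstacle is the backward direction's completeness over all SRS move sequences. We must show that no unintended kick sequence lets a piece reach a cell outside the intended staircase, wedge itself sideways, or partially fill a pocket in a way that circumvents the constraint. This requires exhaustively enumerating, for each gadget boundary shape, every orientation and all five kick tests of each rotation from every reachable position. It is a finite but delicate reachability check. I would first design all staircases from one small set of local patterns, so the verification reduces to a few local configurations that can be checked by hand or by computer search. I would also rely on the exact-count argument to rule out any wasted cells.
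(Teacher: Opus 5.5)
There is a genuine gap. Your intermediate problem and the exact piece count match the paper. However, item~4 only asserts that rows clear after all cavities are filled, and that is precisely the mechanism the reduction hinges on.

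In the paper, the main region is reachable only through one long tunnel that keeps an empty cell in every row of the main region. Locking any piece in the tunnel seals the main region off, so the tunnel must be filled last, and until then no row of the main region can clear. Four isolated cells in the second-to-rightmost column (two in the top two rows, two in the bottom two) must likewise be filled last, after the middle rows clear, so the outer rows cannot clear early either.

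Hence any clearing sequence leaves an exact $1\times 4$ tiling of the main region, and soundness becomes a static statement. Every tiling of each gadget corresponds to a locally valid orientation, read off from which port cells are covered. The gadgets are adapted from the 2BAR-Tiling gadgets of Horiyama et al. and joined by wires of length $\equiv 3 \pmod{4}$, so exactly one end port of each wire is covered. This is why the paper never needs the exhaustive enumeration of SRS move sequences that you identify as the main obstacle. The only reachability fact used for soundness is that a piece locked in the tunnel blocks it. SRS enters only the completeness direction, chiefly through the ``corner-turning'' kicks (tests 4 and 5, which pivot the piece about an end cell).

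Several steps you do describe would also fail as stated.
\begin{itemize}
\item Your encoding is dynamic: the orientation is the side from which a pocket was entered, and a vertex requires that ``exactly $k$'' channels deliver access. The entry side is not recorded in the final board. Access is also monotone: opening more channels never prevents a cavity from being filled. So access alone cannot enforce the upper half of ``exactly $k$''; that has to come from a tiling or counting constraint, which you do not specify.
\item Crossings cannot be realized by ``wires passing at different heights'' on a 2D board; a genuine crossover gadget is required. In the paper it is the most delicate gadget: it needs preliminary placements on one side so that earlier kick tests fail and the down port becomes reachable.
\item Pieces travel through the same network that is being filled, so completeness needs a global fill order under which no unfilled gadget is ever cut off. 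The paper derives one from the specific layout of Horiyama et al.'s instance produced from 1-in-3SAT. It fills each clause/negated-clause block first, in a prescribed order, then the upper part bottom-up and right-to-left, filling each connecting wire just before the gadget it attaches to. Your ``an order that fills every cell'' leaves this unargued, and reducing from an arbitrary Graph Orientation instance loses the layout that makes such an order exist.
\end{itemize}
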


Here, we will reduce from 1-in-3SAT, passing through Graph Orientation. Given a 1-in-3SAT instance with $n$ variables and $m$ clauses, we first construct a corresponding Graph Orientation instance on a grid using the reduction given by Horiyama et al.\ in \cite[Theorem~2.3]{horiyama2017complexity}; an example is given in Figure~\ref{fig:1in3go}. Notice that the instance consists of two parts:
\begin{itemize}
    \item \textbf{The upper part} – consisting of $n$ variable cycles, vertical wires emanating from the literals, and crossovers created from the intersections of the variable cycles and vertical wires,
    \item \textbf{The lower part} – consisting of $m$ copies of the clause/negated-clause construction in \cite[Figure~5]{horiyama2017complexity} (see Figure~\ref{fig:cnc}), each copy taking up 8 columns horizontally and no two copies using a common column.
\end{itemize}
In addition, we place a red dot at every lattice point that is on a vertex or edge of the graph. The red dots will indicate where we will put our main gadgets in the reduction to Tetris clearing.

\begin{figure}[!ht]
    \centering
    \includegraphics[width=240pt]{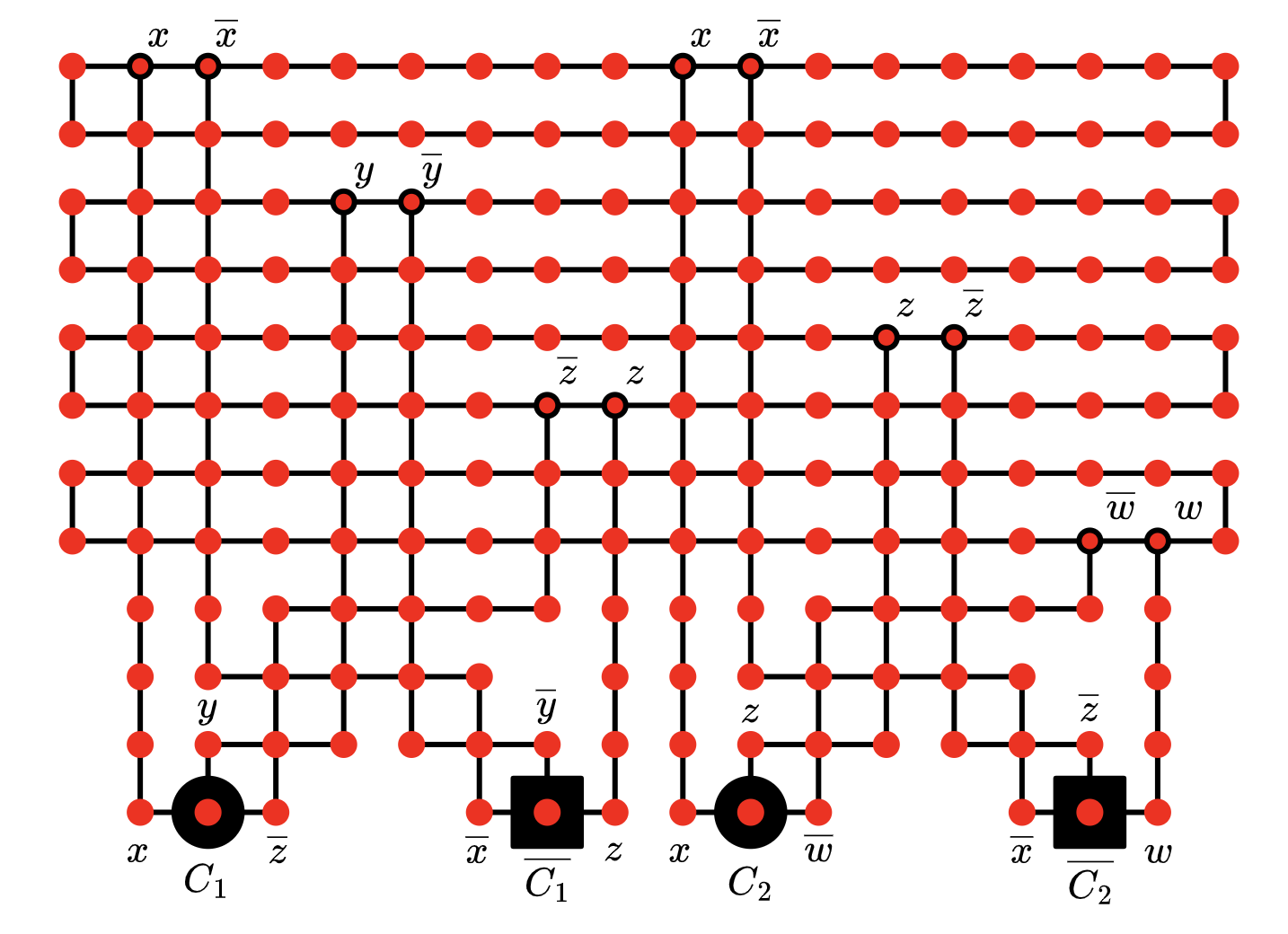}
    \caption{An example of the Graph Orientation instance, using the 1-in-3SAT instance $\phi = (x\vee y\vee \overline{z})\wedge (x\vee z\vee \overline{w})$, including red dots at every lattice point on the graph.}
    \label{fig:1in3go}
\end{figure}

\begin{figure}[!ht]
    \centering
    \includegraphics[width=120pt]{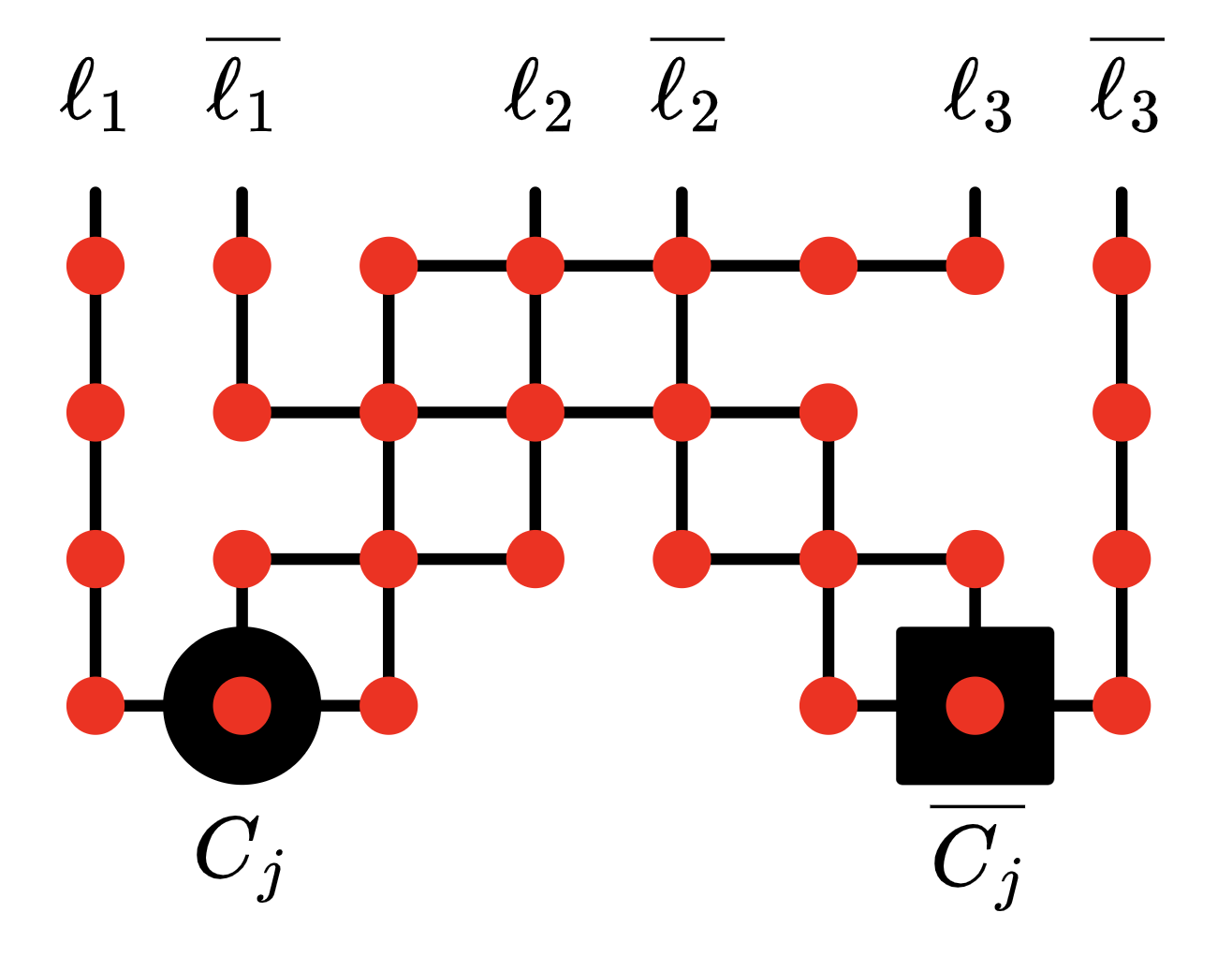}
    \caption{The clause/negated-clause construction.}
    \label{fig:cnc}
\end{figure}

\subsection{General Structure}\label{subsec:i_genstructure}
    
Now we construct our Tetris clearing instance, using the constructed Graph Orientation instance. Suppose the Graph Orientation instance requires a grid with $R$ rows and $C$ columns. We first describe the general structure of the construction, as shown in Figure~\ref{fig:i_structure}. The board has $60R+10$ rows and $48C+16$ columns. The main part of the construction, corresponding to the Graph Orientation instance, is indicated in green and lies between rows $9$ and $60R+8$, inclusive, and between columns $14$ and $48C+13$, inclusive. There is a path of empty squares to the main part of the construction, which is located mostly within the leftmost 12 columns or the topmost 8 rows of the board, and which we will refer to as the ``long tunnel''. There are 4 additional empty squares on the second to rightmost column of the board, two at the two topmost rows of the board and two at the two bottom-most rows of the board. We give the player just enough pieces so that clearing the board requires use of all pieces; in other words, if there are $M$ empty squares in the construction, we give the player $M/4$ $\II$ pieces.

\begin{figure}[!ht]
    \centering
    \includegraphics[width=320pt]{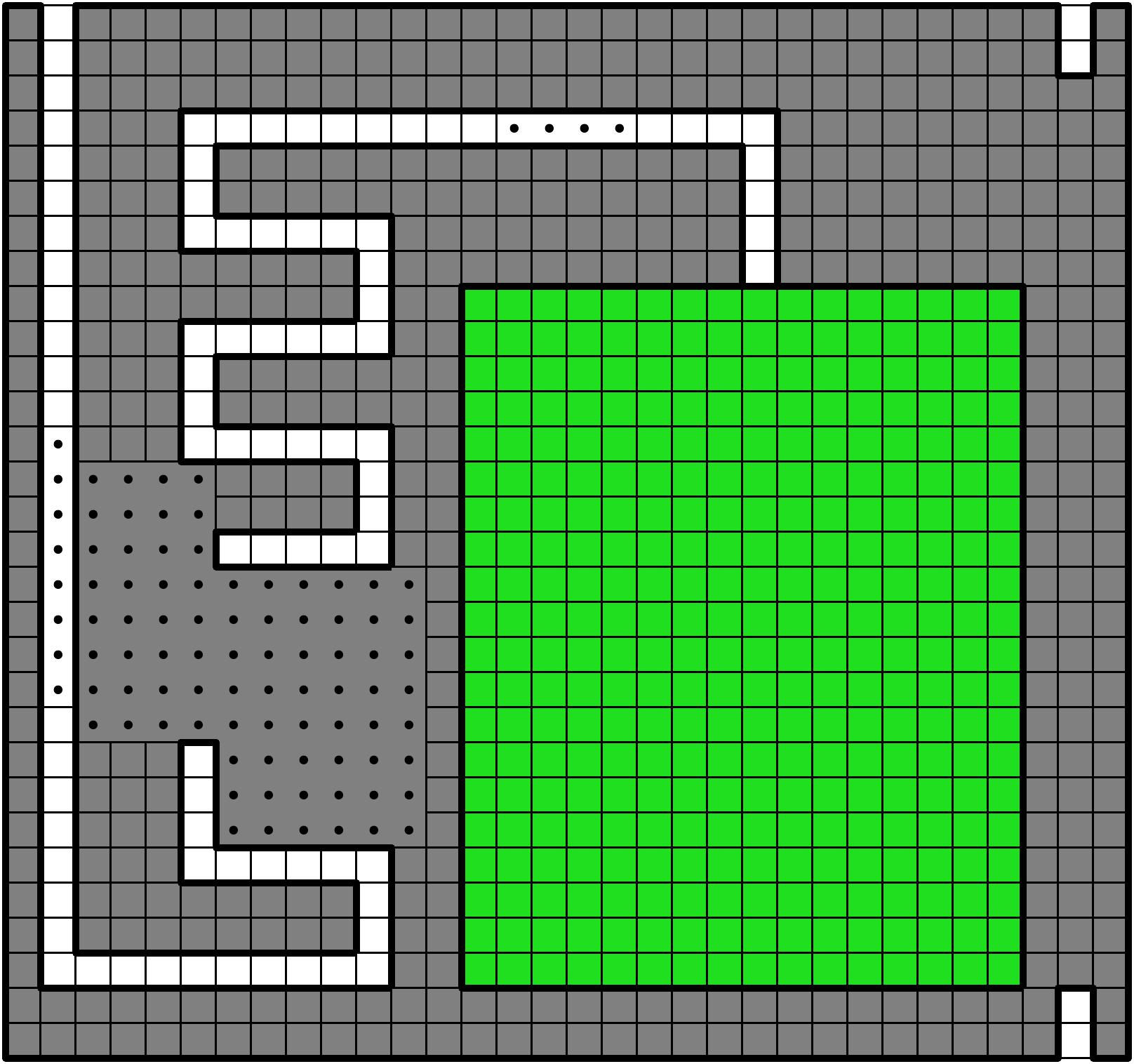}
    \caption{The general structure of the construction for Tetris clearing using only $\II$ pieces. The main part of the construction is indicated in green.}
    \label{fig:i_structure}
\end{figure}

We make the following observations:
    
\begin{itemize}
    \item The 4 empty squares in the second-to-rightmost column must be the last squares to be filled in, after the middle rows are cleared.
    \item There is exactly one way to fill in/tile the long tunnel with $\II$ pieces, as shown in Figure~\ref{fig:i_structure_fill}, due to the turns in the long tunnel, and it is possible (under SRS) to get $\II$ pieces through the long hallway into the main part of the construction (see the maneuver in Appendix~\ref{subsec:i_maneuver1}).
    \item If any $\II$ piece locks in place in the long tunnel before the main part of the construction is completely filled, then the long tunnel becomes blocked, and we cannot clear the board anymore – no more $\II$ pieces can be rotated into the main part of the construction even after some lines clear due to the two empty squares in the top-right corner and the $\II$ piece blocking the long tunnel. Thus, no $\II$ piece can be placed in the long tunnel before we are done filling in the main part of the construction.
    \item Before we place any $\II$ pieces in the long tunnel, the long tunnel also prevents line clears in the rows corresponding to the main part of the construction.
\end{itemize}

\begin{figure}[!ht]
    \centering
    \includegraphics[width=320pt]{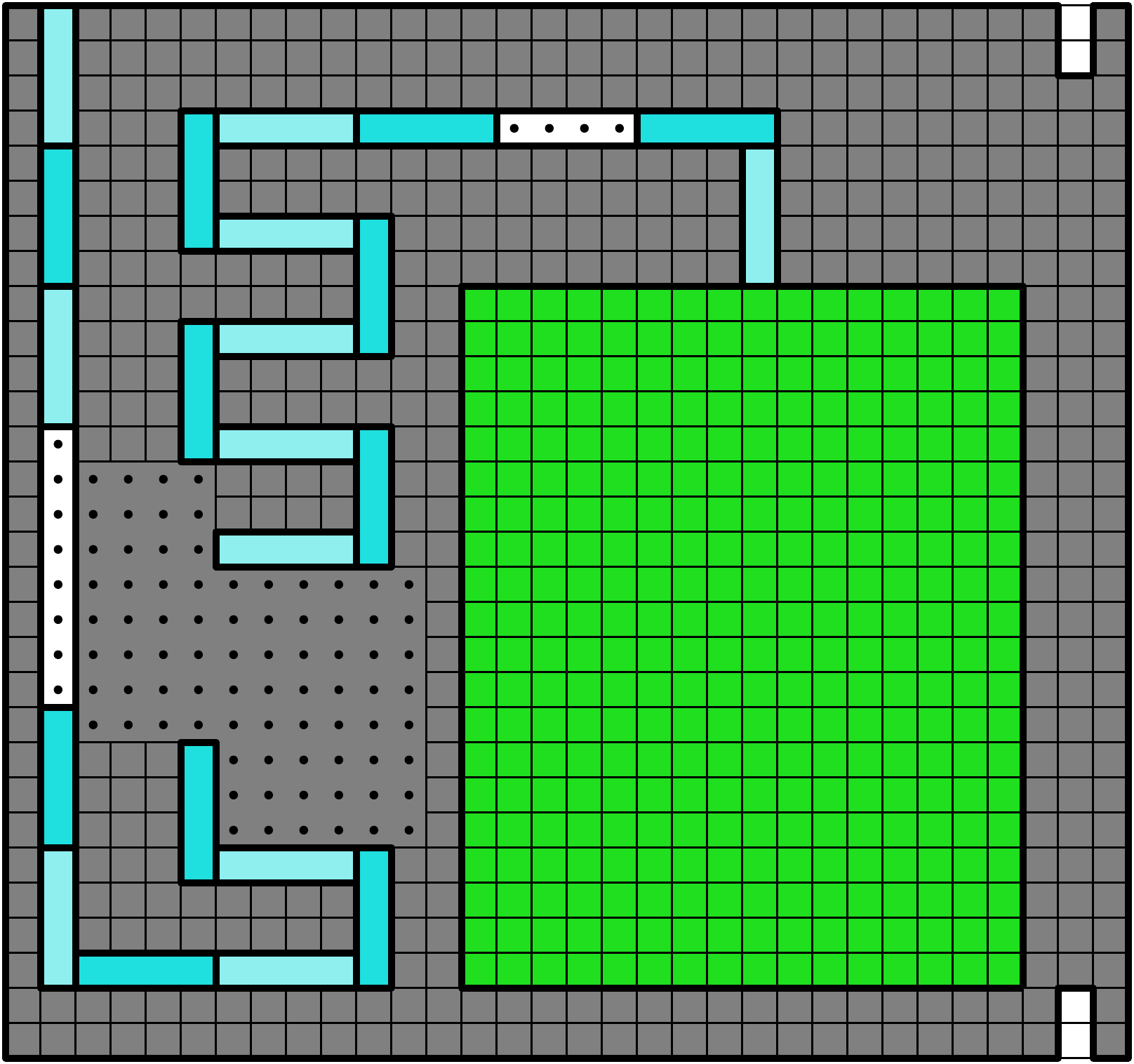}
    \caption{The only possible tiling of the long tunnel with $\II$ pieces.}
    \label{fig:i_structure_fill}
\end{figure}

Thus, we need to be able to tile the main part of the construction with $\II$ pieces exactly, with the additional restriction that we need to be able to maneuver all the pieces into place through Tetris and SRS rules.

\subsection{Gadgets}

Now, we introduce the gadgets used in the main part of the construction. Roughly speaking, the gadgets are similar to the gadgets used in the 2BAR-Tiling reduction in \cite{horiyama2017complexity}, with slight modifications to allow for the player to be able to maneuver $\II$ pieces through the gadgets. This means that we have corner, line, duplicator, clause, negated-clause, and crossover gadgets.

Here, we will call the entrances into the gadgets \textbf{ports} (so for example, the horizontal line gadget has a port on the left and a port on the right); we will call the ports \emph{up, left, right, down} (in the case of the duplicator, both of the lower ports are down ports). The ports will be indicated as an empty (white) square with a dot in the center. Note that these ports also serve as ``extra'' squares which help denote the ``orientation'' around the gadget.

We will also indicate specific squares in the diagrams for each of the gadgets as ``centers'', to help with the overall description of the main part of the construction; each gadget will have one center, except for the duplicator gadget, which will have two. Of note is that all left and right ports are on the same row as a center and all up and down ports are on the same column as a center.

In general, we want the gadgets to have the following properties:

\begin{itemize}
    \item Any traversal of the following form is possible, assuming both ports exist: up $\to$ $\{$left, right, down$\}$, left $\leftrightarrow$ right (i.e., both left to right and right to left), $\{$left, right$\}$ $\to$ down.
    \item The possible ways to tile the gadgets roughly match the possible ways to tile the corresponding gadgets used in the 2BAR-Tiling reduction in \cite{horiyama2017complexity}.
    \item For each possible tiling, each individual gadget can be filled with $\II$ pieces under SRS, where if a gadget has an up port, the $\II$ pieces come from the up port, and otherwise the $\II$ pieces come from a left or right port. (In other words, the $\II$ pieces are able to move to where they need to go in the tiling, particularly as the gadget is partially filled with $\II$ pieces.)
\end{itemize}

The main trick we use is the fact that $\II$ pieces can \emph{turn corners} – the fourth and fifth tests for rotation of $\II$ pieces (as shown in Figure~\ref{fig:allrots}) can be thought of as rotations about one of the two end squares of an $\II$ piece, so assuming the squares around an active $\II$ piece are filled such that the first three tests fail, an active $\II$ piece can rotate $90^\circ$ about either of its two end squares (like multiple rotations in the maneuver in Appendix~\ref{subsec:i_maneuver1}).

\subsubsection{Corner and Line gadgets}

The corner gadget is fairly straightforward; two orientations are shown, along with center squares, in Figure~\ref{fig:corner} (the other two orientations can be found via reflection across a vertical line). Call a corner with a down port a \emph{lower corner (LC)} and a corner with an up port an \emph{upper corner (UC)}.

\begin{figure}[!ht]
  \centering
  \begin{subfigure}[b]{0.49\textwidth}
    \centering
    \includegraphics[width=160pt]{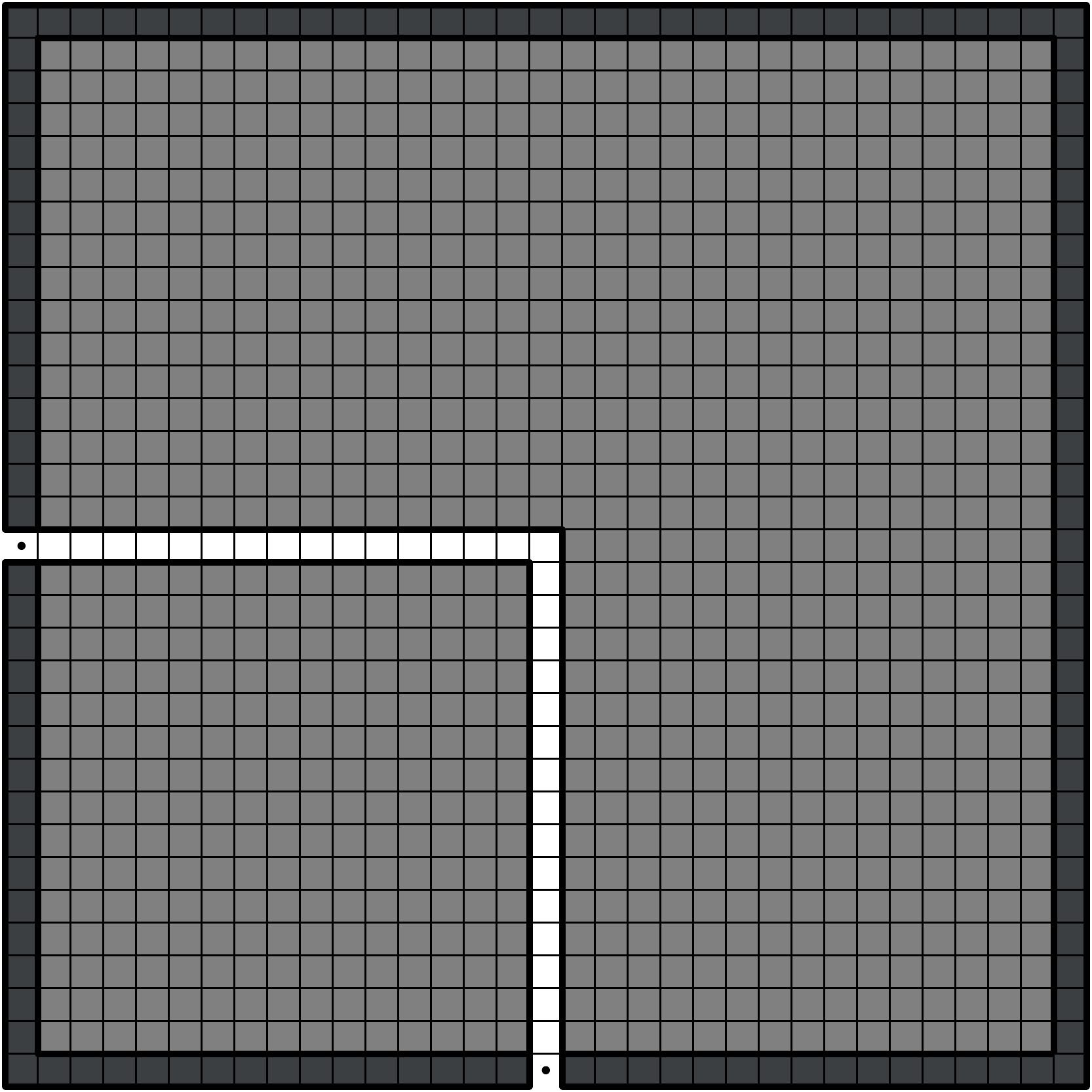}
    \caption{Corner with down port (LC)}
  \end{subfigure}
  \begin{subfigure}[b]{0.49\textwidth}
    \centering
    \includegraphics[width=160pt]{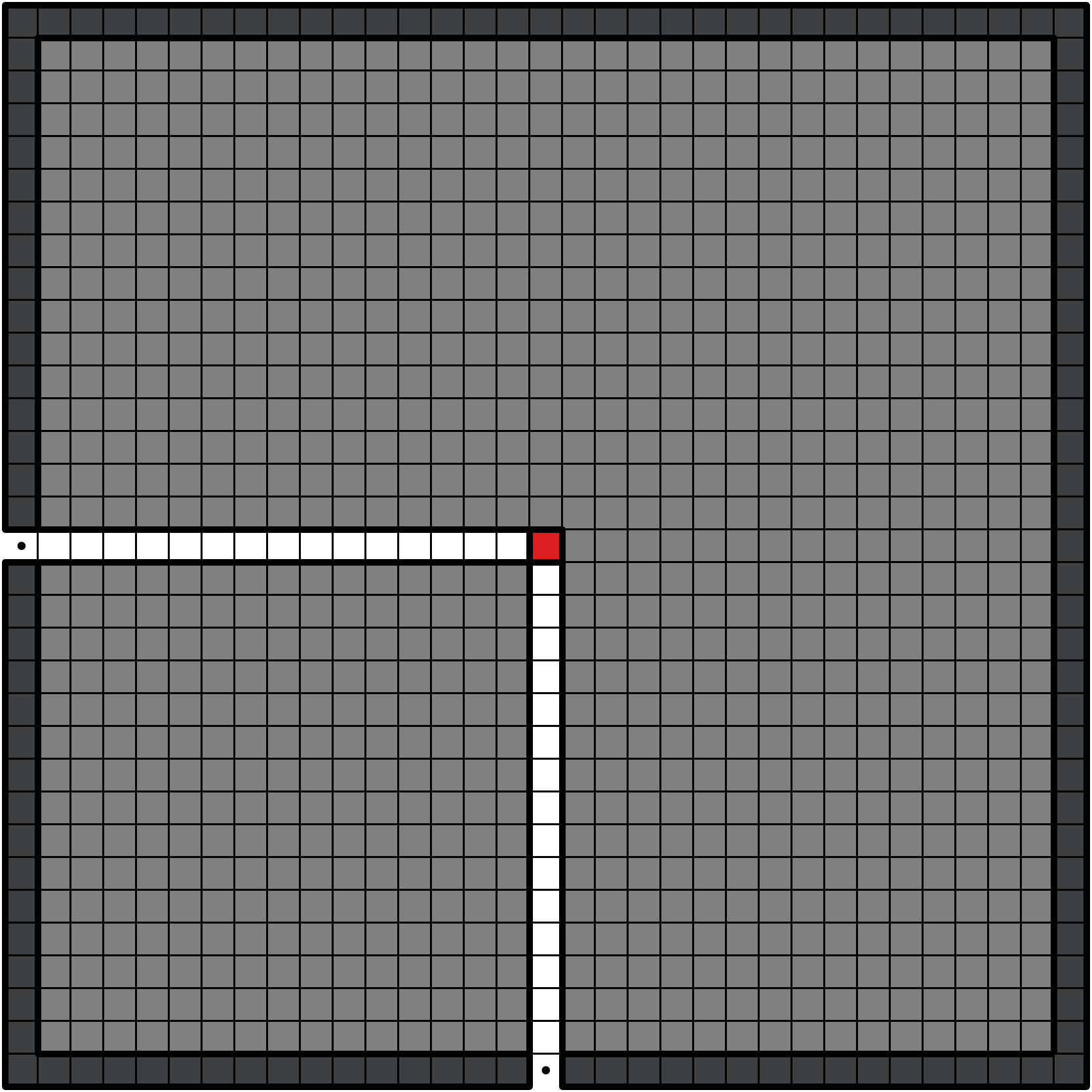}
    \caption{Center square of LC gadget}
  \end{subfigure}
  \begin{subfigure}[b]{0.49\textwidth}
    \centering
    \includegraphics[width=160pt]{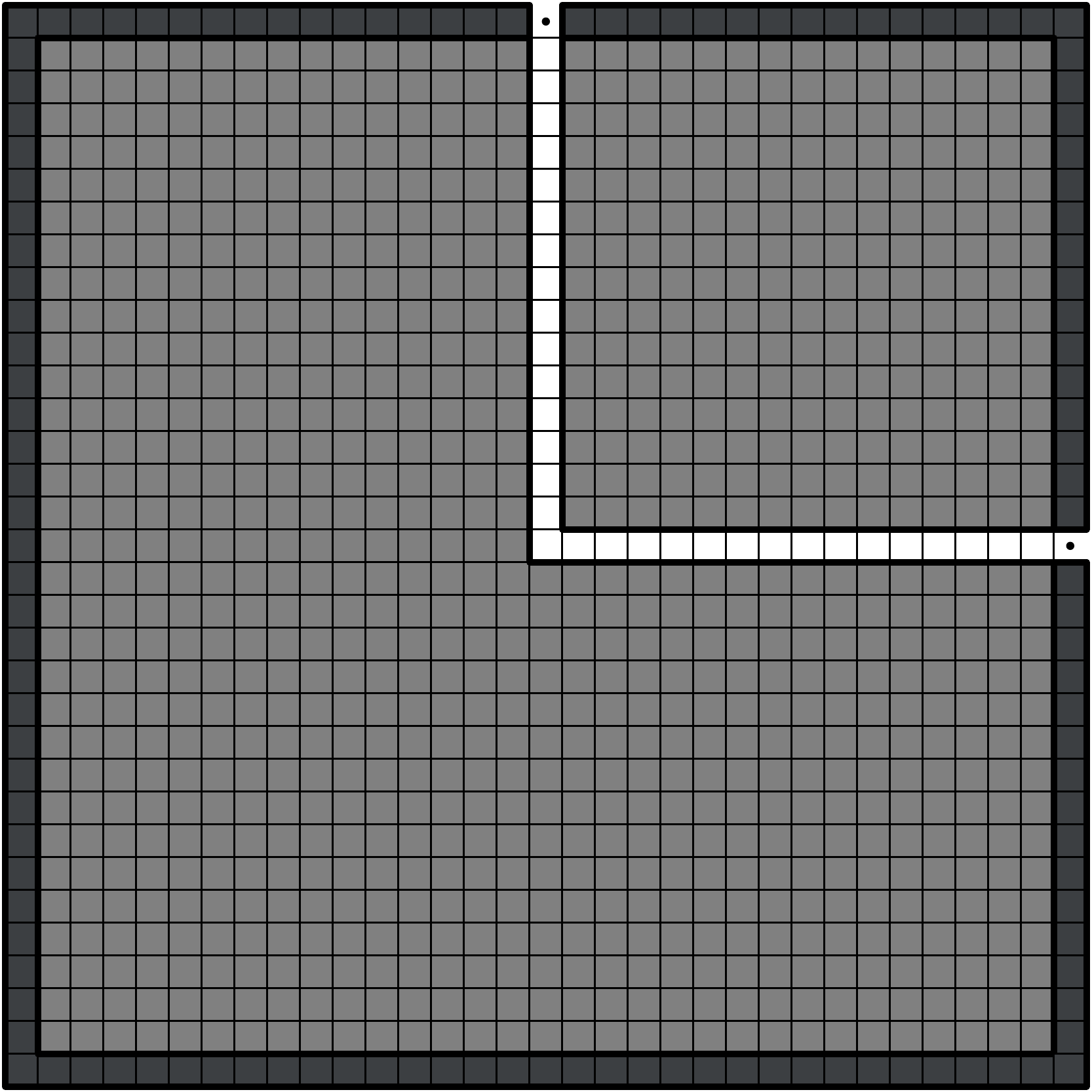}
    \caption{Corner with up port (UC)}
  \end{subfigure}
  \begin{subfigure}[b]{0.49\textwidth}
    \centering
    \includegraphics[width=160pt]{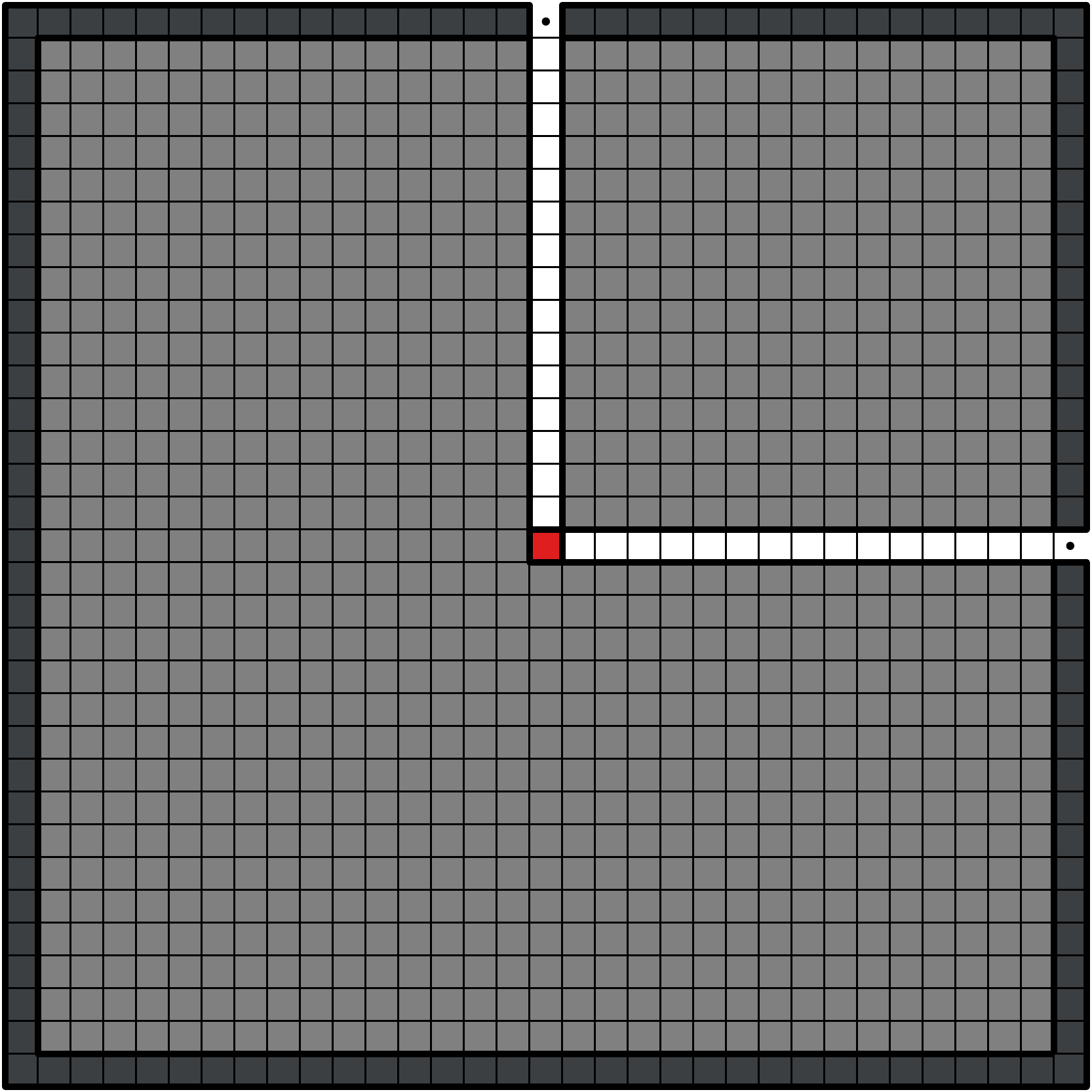}
    \caption{Center square of UC gadget}
  \end{subfigure}
  \caption{Corner gadgets.}
  \label{fig:corner}
\end{figure}

As $\II$ pieces can turn corners, the traversal from the non-down port to the down port is possible in an LC gadget, and the traversal from the up port to the non-up port is possible in a UC gadget. The only possible tilings, along with an order in which the $\II$ pieces can be placed to ensure that the $\II$ pieces tile the gadget correctly, are shown in Figure~\ref{fig:i_corner_tiling}.


Because $\II$ pieces need to enter the main part of the gadget, we create a special \emph{entry corner (EC)} gadget, as shown in Figure~\ref{fig:entrycorner}. One can verify that the upper part of the gadget is forced to be tiled a specific way as shown in Figure~\ref{fig:ec_forced}, while the rest of the gadget functions as a normal LC gadget. As $\II$ pieces can turn corners, the traversal from the upper part of the gadget to either of the two ports is possible. The only possible tilings, along with an order in which the $\II$ pieces can be placed to ensure that the $\II$ pieces tile the gadget correctly, are shown in Figure~\ref{fig:i_entrycorner_tilings}.

\begin{figure}[!ht]
  \centering
  \begin{subfigure}[b]{0.325\textwidth}
    \centering
    \includegraphics[width=140pt]{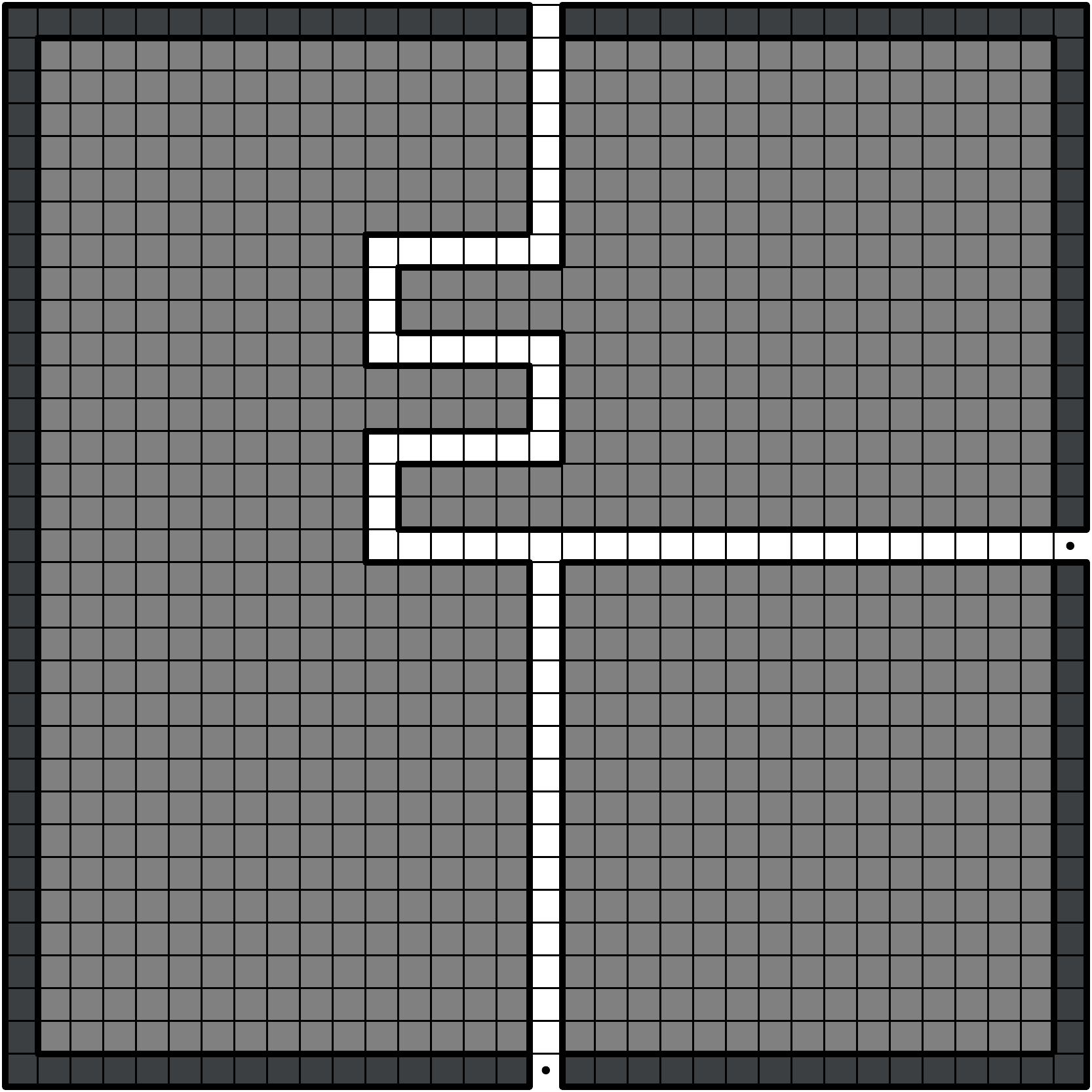}
    \caption{}
  \end{subfigure}
  \begin{subfigure}[b]{0.325\textwidth}
    \centering
    \includegraphics[width=140pt]{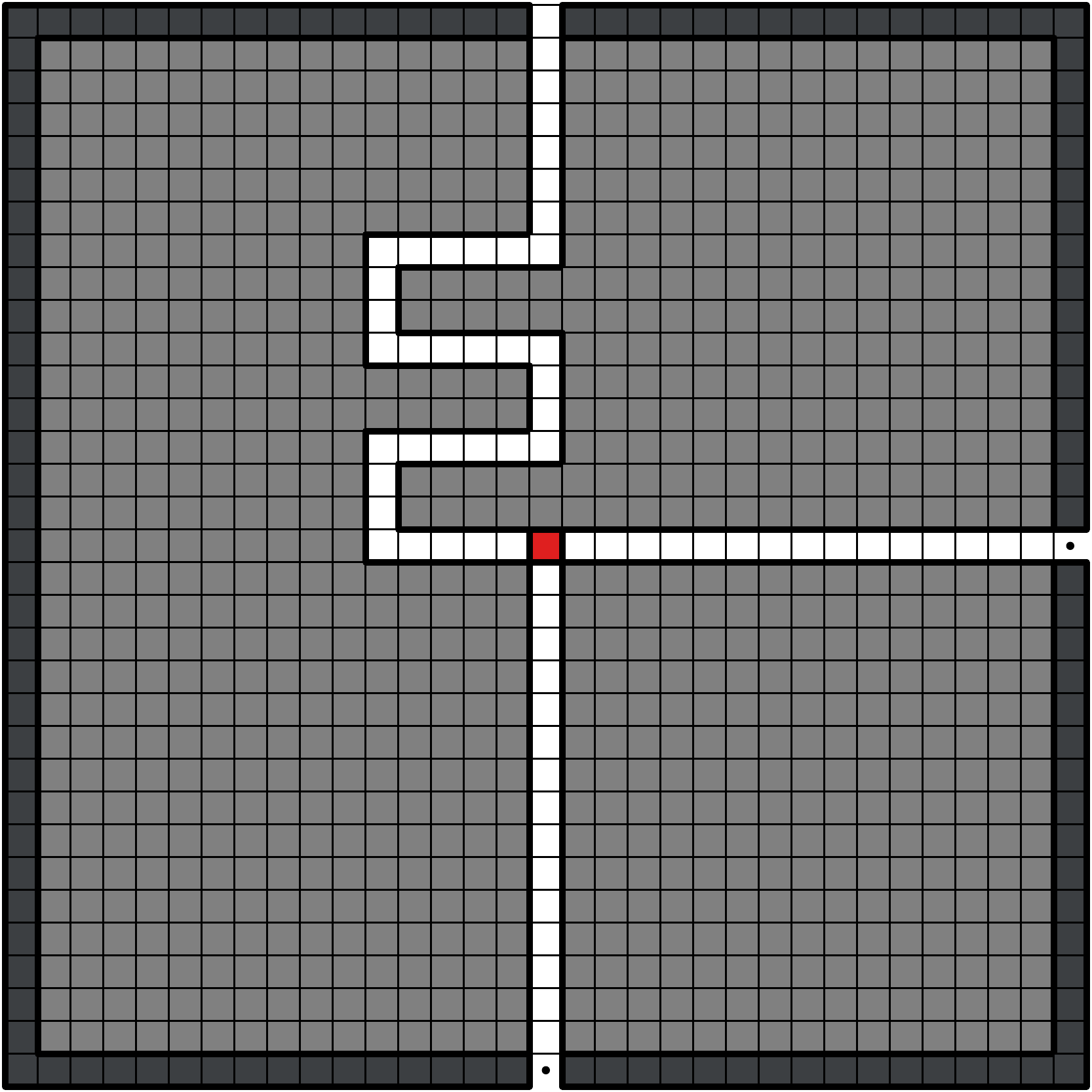}
    \caption{Center square of gadget}
  \end{subfigure}
  \begin{subfigure}[b]{0.325\textwidth}
    \centering
    \includegraphics[width=140pt]{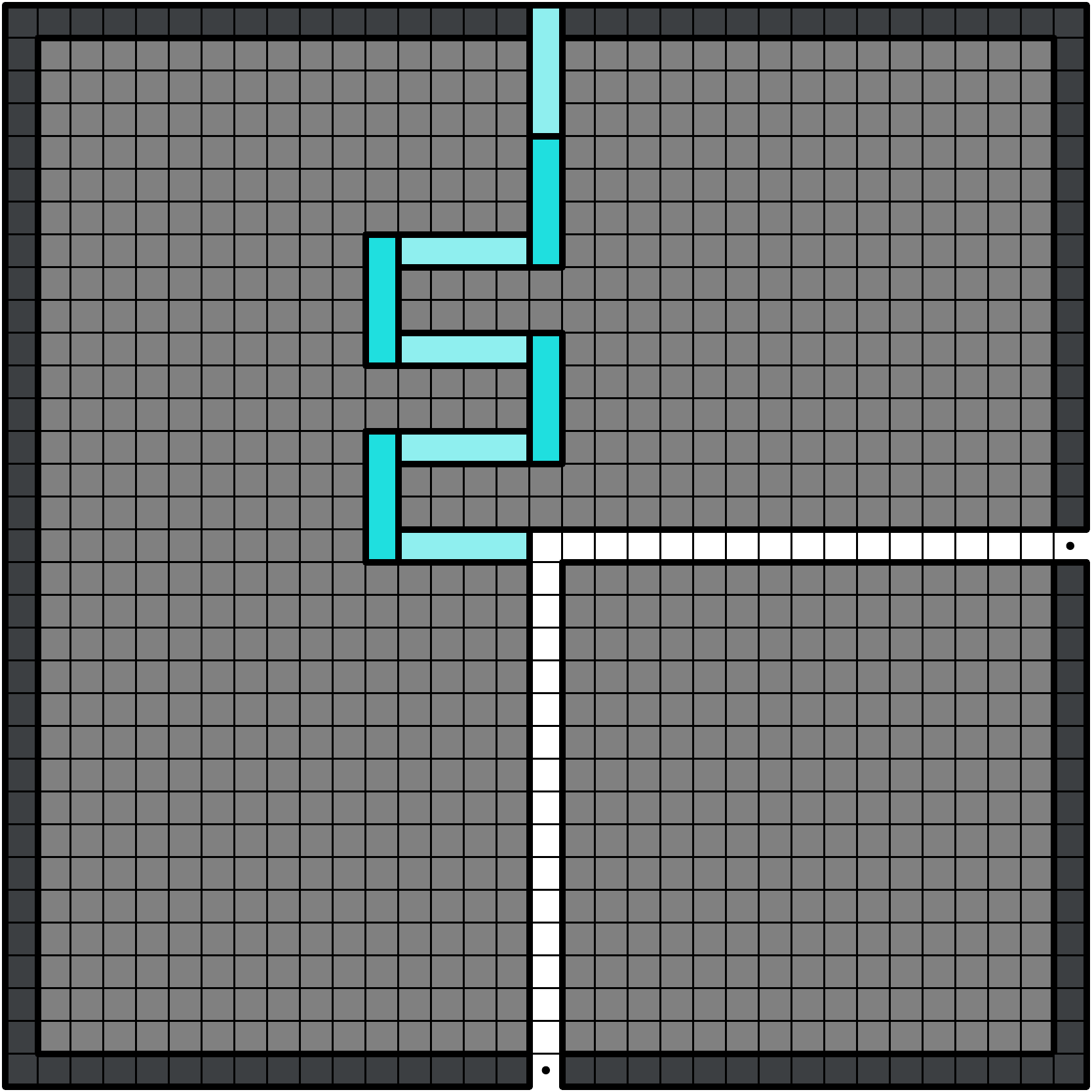}
    \caption{Forced placements}
    \label{fig:ec_forced}
  \end{subfigure}
  \caption{The entry corner (EC) gadget.}
  \label{fig:entrycorner}
\end{figure}


Like the normal corner gadgets, the vertical line (VL) gadget is exactly the vertical line gadget in \cite{horiyama2017complexity}, as shown in Figure~\ref{fig:vertline}. As $\II$ pieces can turn corners, the traversal from the up port to the down port is possible. The only possible tilings, along with an order in which the $\II$ pieces can be placed to ensure that the $\II$ pieces tile the gadget correctly, are shown in Figure~\ref{fig:i_vertline_tilings}.

\begin{figure}[!ht]
  \centering
  \begin{subfigure}[b]{0.49\textwidth}
    \centering
    \includegraphics[width=160pt]{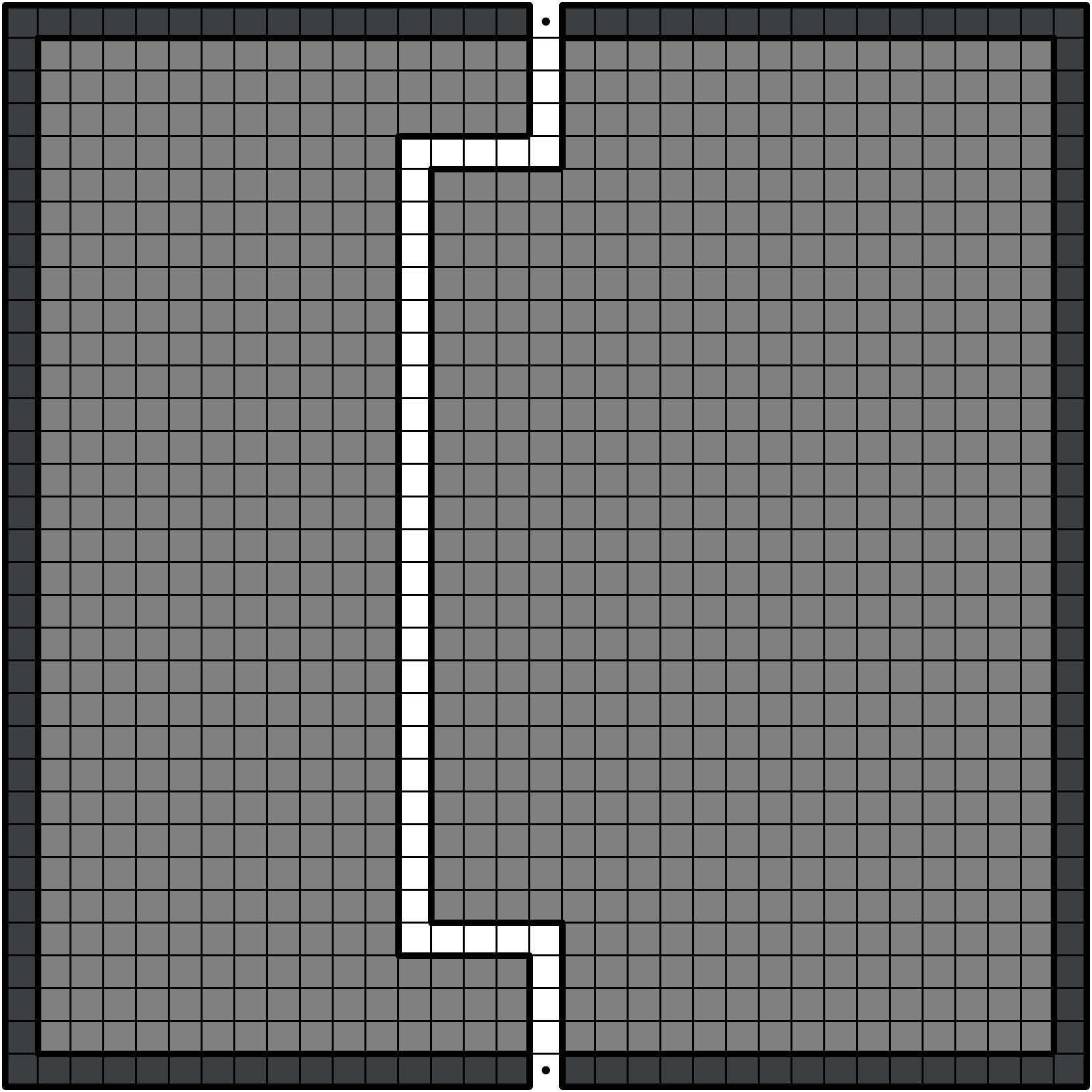}
    \caption{}
  \end{subfigure}
  \begin{subfigure}[b]{0.49\textwidth}
    \centering
    \includegraphics[width=160pt]{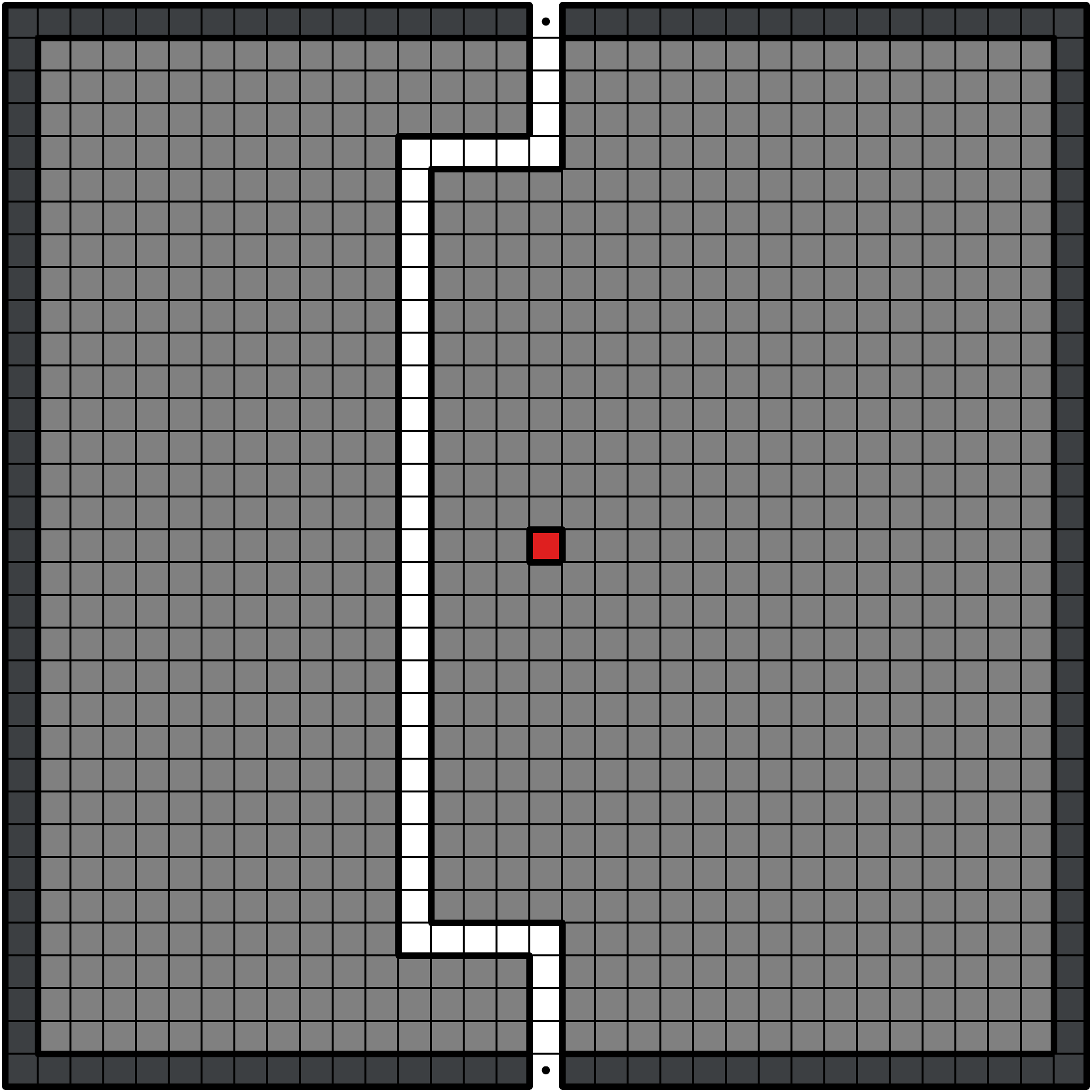}
    \caption{Center square of gadget}
  \end{subfigure}
  \caption{The vertical line (VL) gadget.}
  \label{fig:vertline}
\end{figure}


The horizontal line (HL) gadget, on the other hand, is similar to the horizontal line gadget in \cite{horiyama2017complexity}, but with some additional squares to allow for an easier traversal from the left port to the right port and vice versa (as the player just needs to slide the horizontally-oriented $\II$-piece from left to right or vice versa). The HL gadget is shown in Figure~\ref{fig:horline}. One can verify that the upper part of the center rectangle is forced to be tiled a specific way as shown in Figure~\ref{fig:hl_forced}. The only possible tilings, along with orders in which the $\II$ pieces can be placed to ensure that the $\II$ pieces tile the gadget correctly, are shown in Figure~\ref{fig:i_horline_tilings}. Note that each tiling has two orderings, one for if the pieces come from the left port and one for if the pieces come from the right port. The tilings are also possible because $\II$ pieces can turn corners as necessary.

\begin{figure}[!ht]
  \centering
  \begin{subfigure}[b]{0.325\textwidth}
    \centering
    \includegraphics[width=140pt]{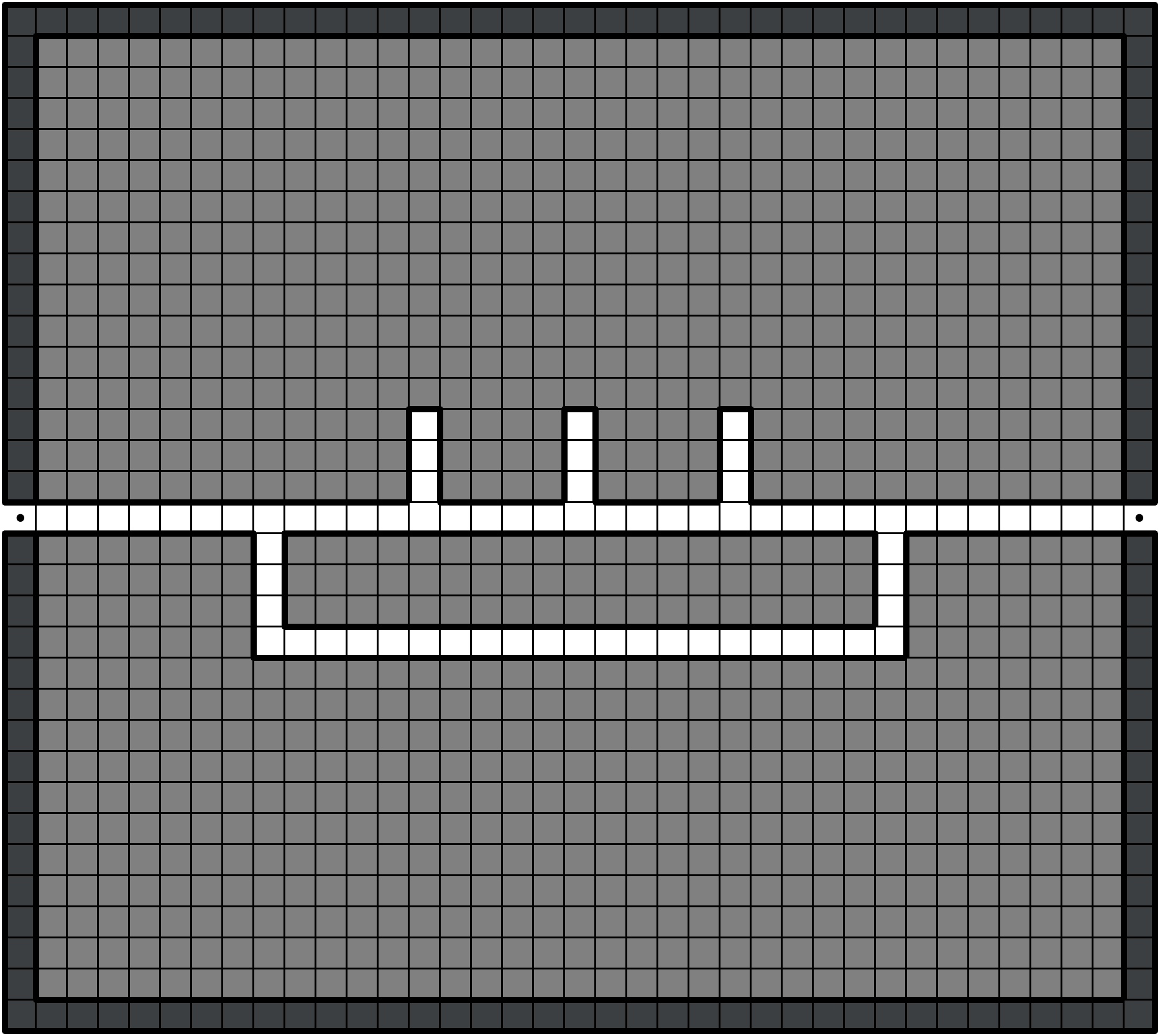}
    \caption{}
  \end{subfigure}
  \begin{subfigure}[b]{0.325\textwidth}
    \centering
    \includegraphics[width=140pt]{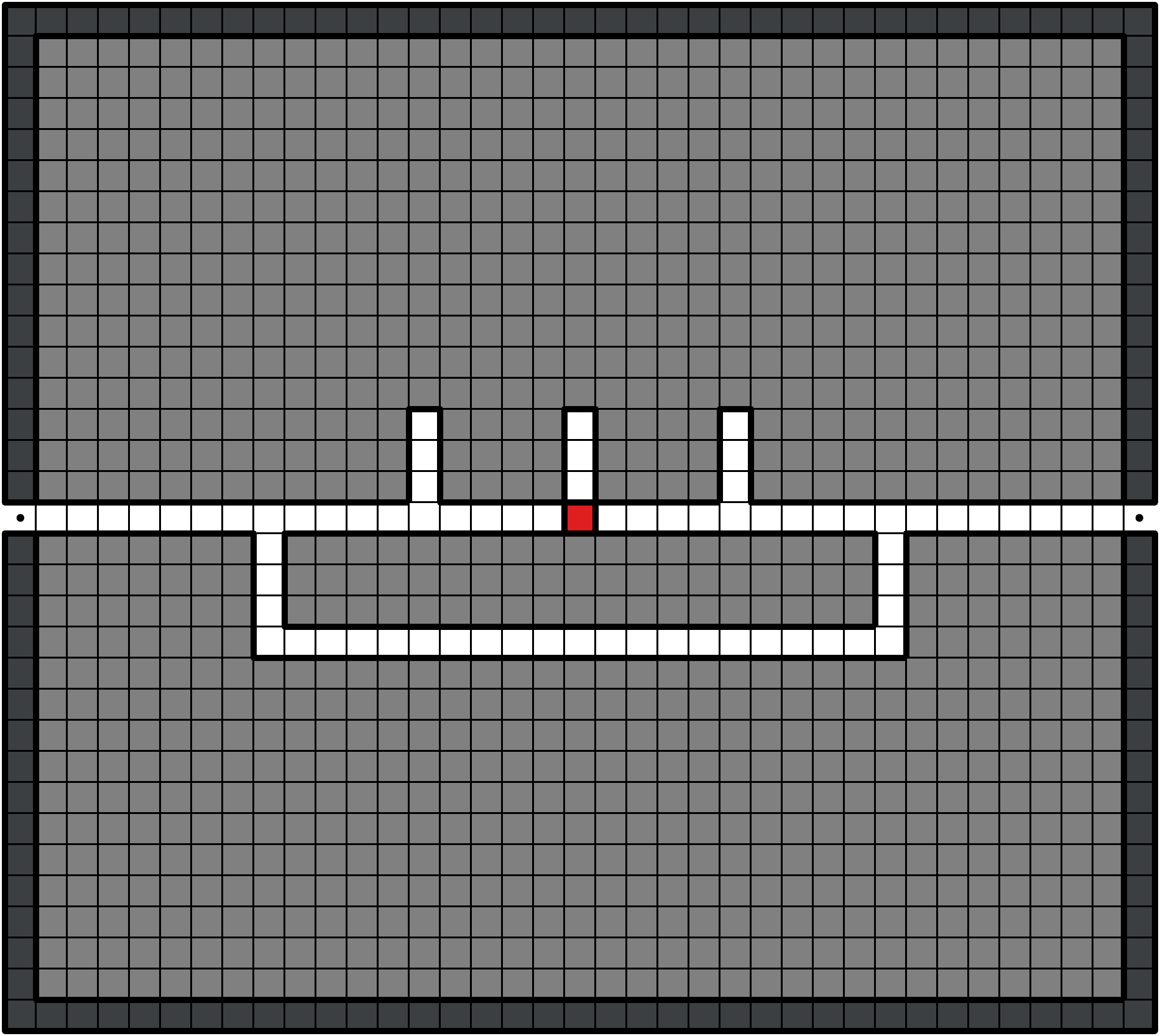}
    \caption{Center square of gadget}
  \end{subfigure}
  \begin{subfigure}[b]{0.325\textwidth}
    \centering
    \includegraphics[width=140pt]{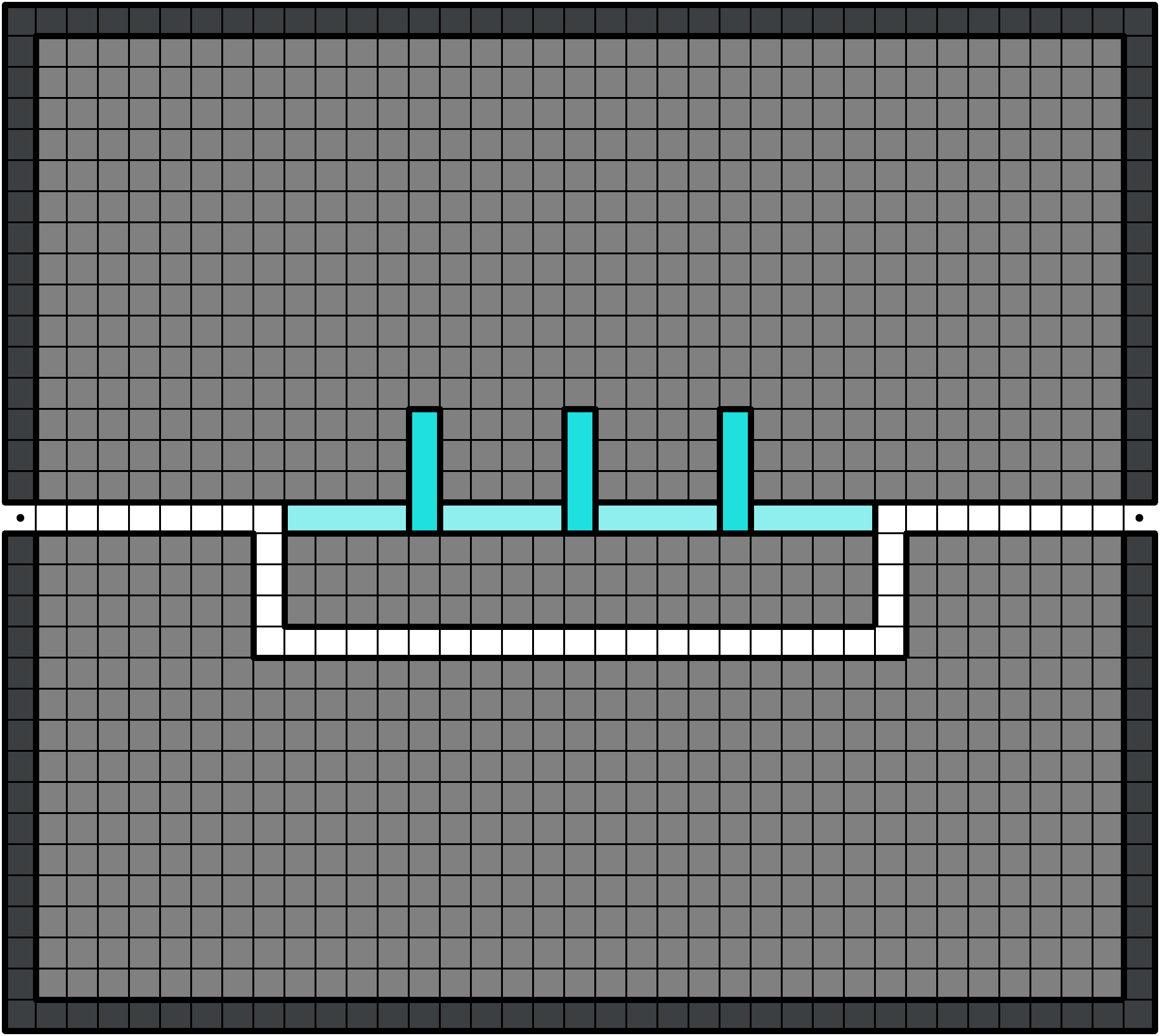}
    \caption{Forced placements}
    \label{fig:hl_forced}
  \end{subfigure}
  \caption{The horizontal line (HL) gadget.}
  \label{fig:horline}
\end{figure}


\subsubsection{Duplicator gadget}

The duplicator gadget, like the HL gadget, is similar to the duplicator gadget in \cite{horiyama2017complexity}, but with some slight alterations near the left and right ports and additional squares to allow for an easier traversal from the left port to the right port and vice versa (as the player just needs to slide the horizontally-oriented $\II$-piece from left to right or vice versa). The duplicator gadget is shown in Figure~\ref{fig:duplicator}. One can verify that the ``bridge'' between the left port and right port areas is forced to be tiled a specific way as shown in Figure~\ref{fig:duplicator_forced}.

As $\II$ pieces can turn corners, the traversals from either the left port or the right port to either of the down ports are possible, as are the traversals between the left and right ports. The only possible tilings, along with orders in which the $\II$ pieces can be placed to ensure that the $\II$ pieces tile the gadget correctly, are shown in Figure~\ref{fig:i_duplicator_tilings}. Note that each tiling has two orderings, one for if the pieces come from the left port and one for if the pieces come from the right port. The tilings are possible because $\II$ pieces can turn corners as necessary. The rotations into piece numbers $48$ and $60$ in all tilings require that the $\II$ piece be in the default orientation before the rotation; however, if the $\II$ piece were in the opposite ($180^\circ$) orientation, the player can still rotate the $\II$ piece into the default orientation in the area slightly to the right of the left port or the area slightly to the left of the right port via two counterclockwise rotations that pivot about the same end square of the $\II$ piece (see Appendix~\ref{subsec:i_maneuver2} for figures detailing this maneuver).

\begin{figure}[!ht]
  \centering
  \begin{subfigure}[b]{0.49\textwidth}
    \centering
    \includegraphics[width=200pt]{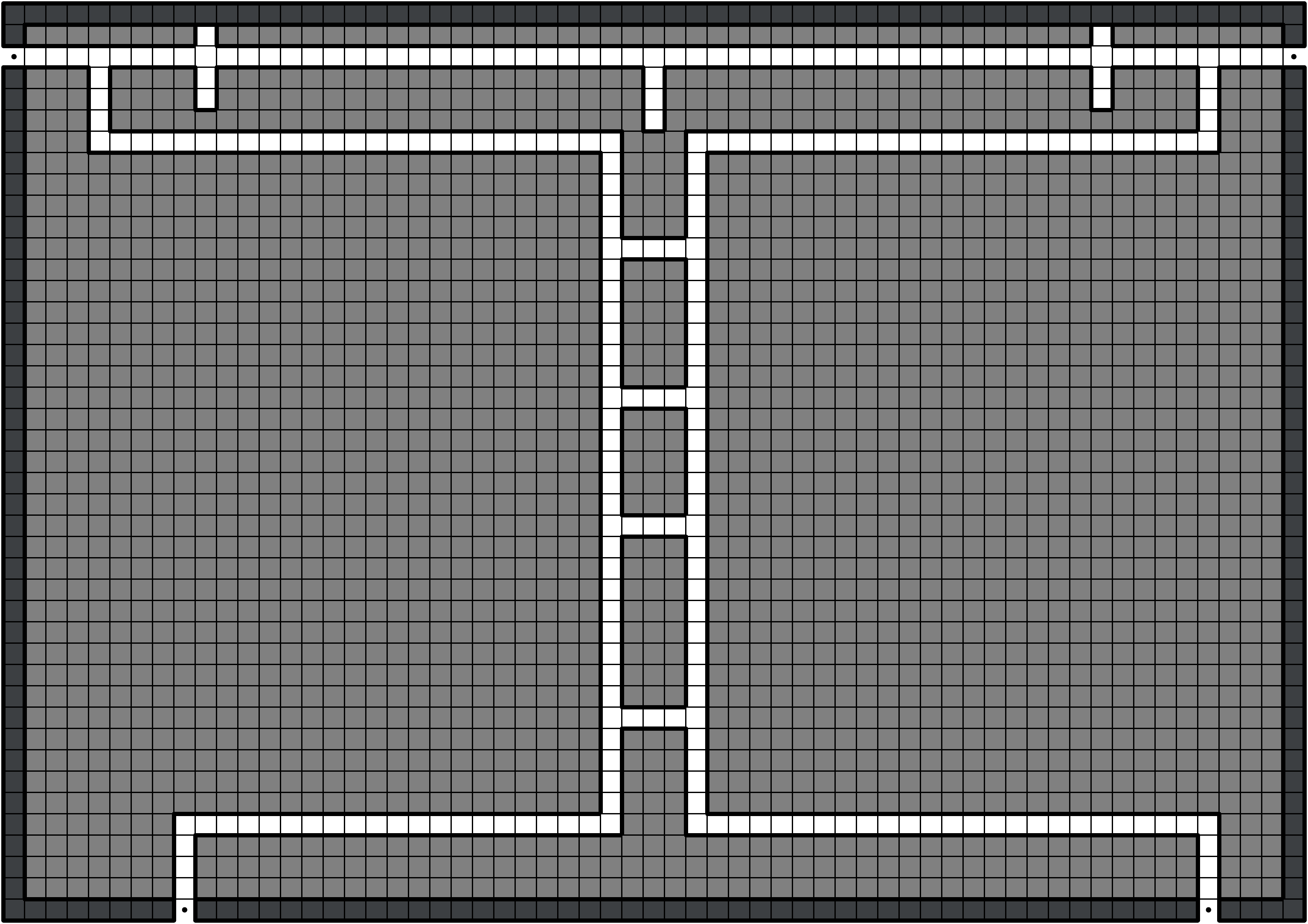}
    \caption{}
  \end{subfigure}
  \begin{subfigure}[b]{0.49\textwidth}
    \centering
    \includegraphics[width=200pt]{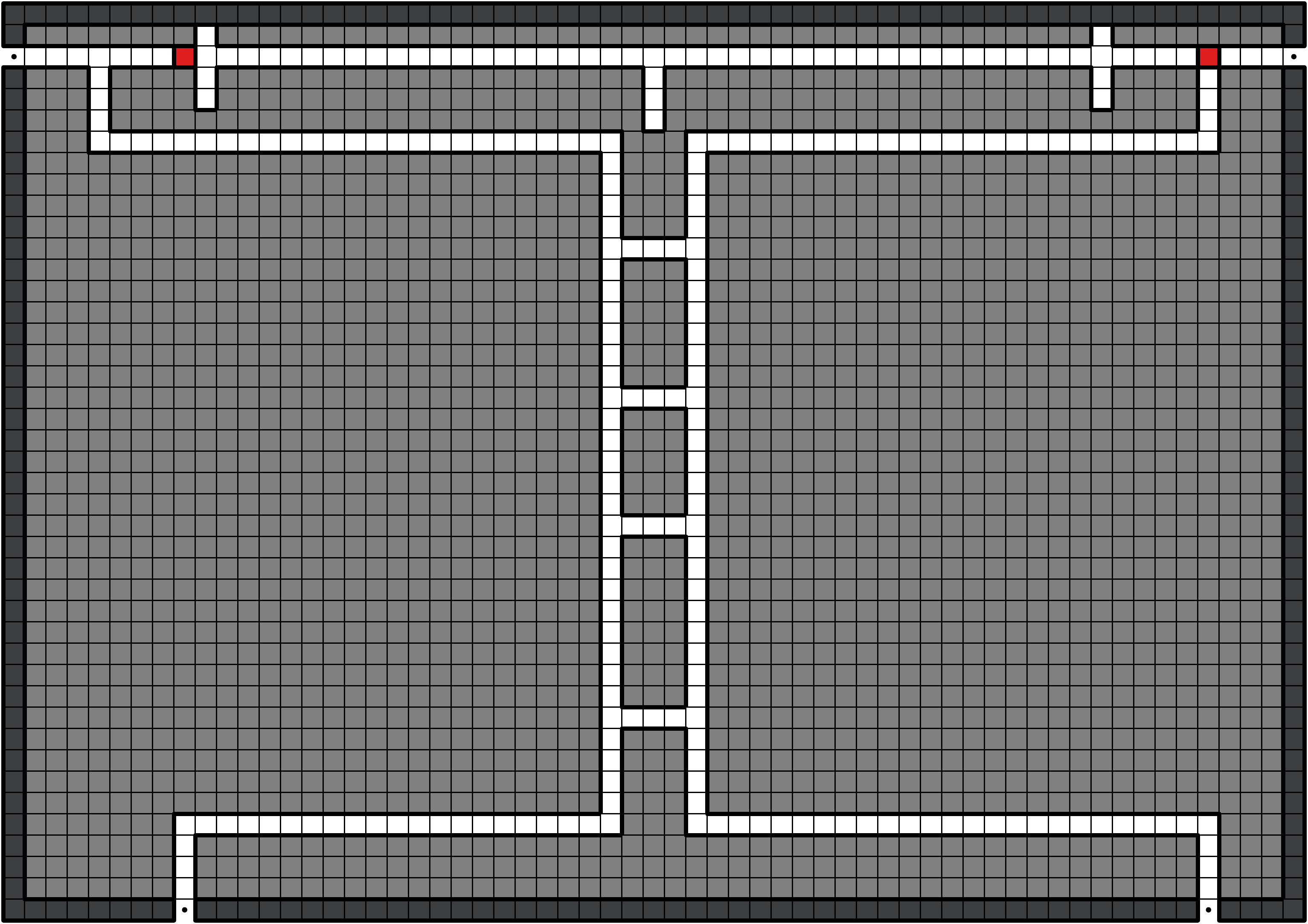}
    \caption{Center squares of gadget}
  \end{subfigure}
  \begin{subfigure}[b]{0.6\textwidth}
    \centering
    \includegraphics[width=250pt]{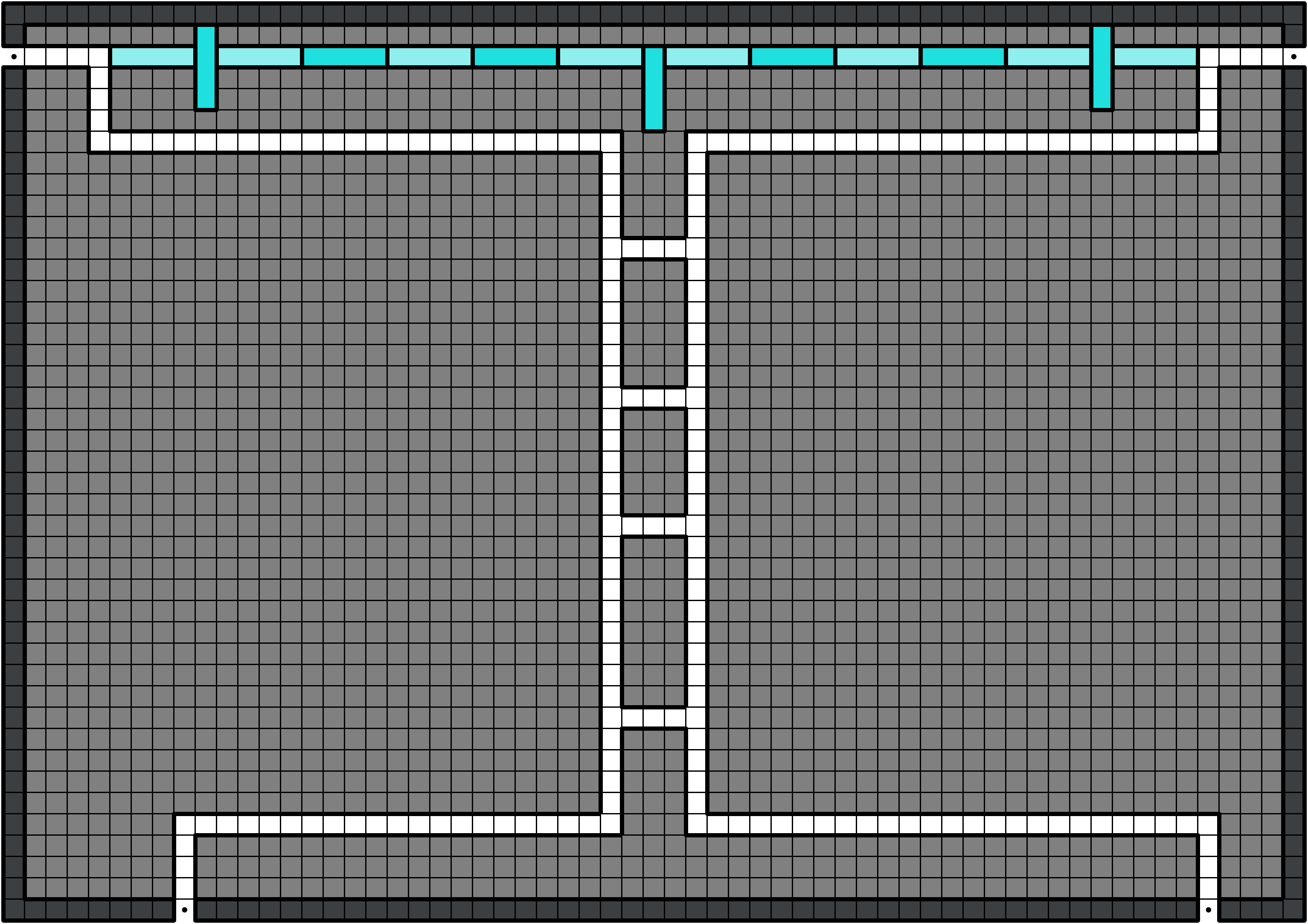}
    \caption{Forced placements}
    \label{fig:duplicator_forced}
  \end{subfigure}
  \caption{The duplicator gadget.}
  \label{fig:duplicator}
\end{figure}


\subsubsection{Clause and Negated-Clause gadgets}

The clause gadget, with similar alterations and additional squares as the other gadgets, is shown in Figure~\ref{fig:clause}. One can verify that the ``bridge'' between the left port and right port areas is forced to be tiled a specific way as shown in Figure~\ref{fig:clause_forced}.

As $\II$ pieces can turn corners, the traversals from the up port to either of the other ports are possible, as are the traversals between the left and right ports. The only possible tilings, along with orders in which the $\II$ pieces can be placed to ensure that the $\II$ pieces tile the gadget correctly, are shown in Figure~\ref{fig:i_clause_tilings}.

\begin{figure}[!ht]
  \centering
  \begin{subfigure}[b]{0.325\textwidth}
    \centering
    \includegraphics[width=140pt]{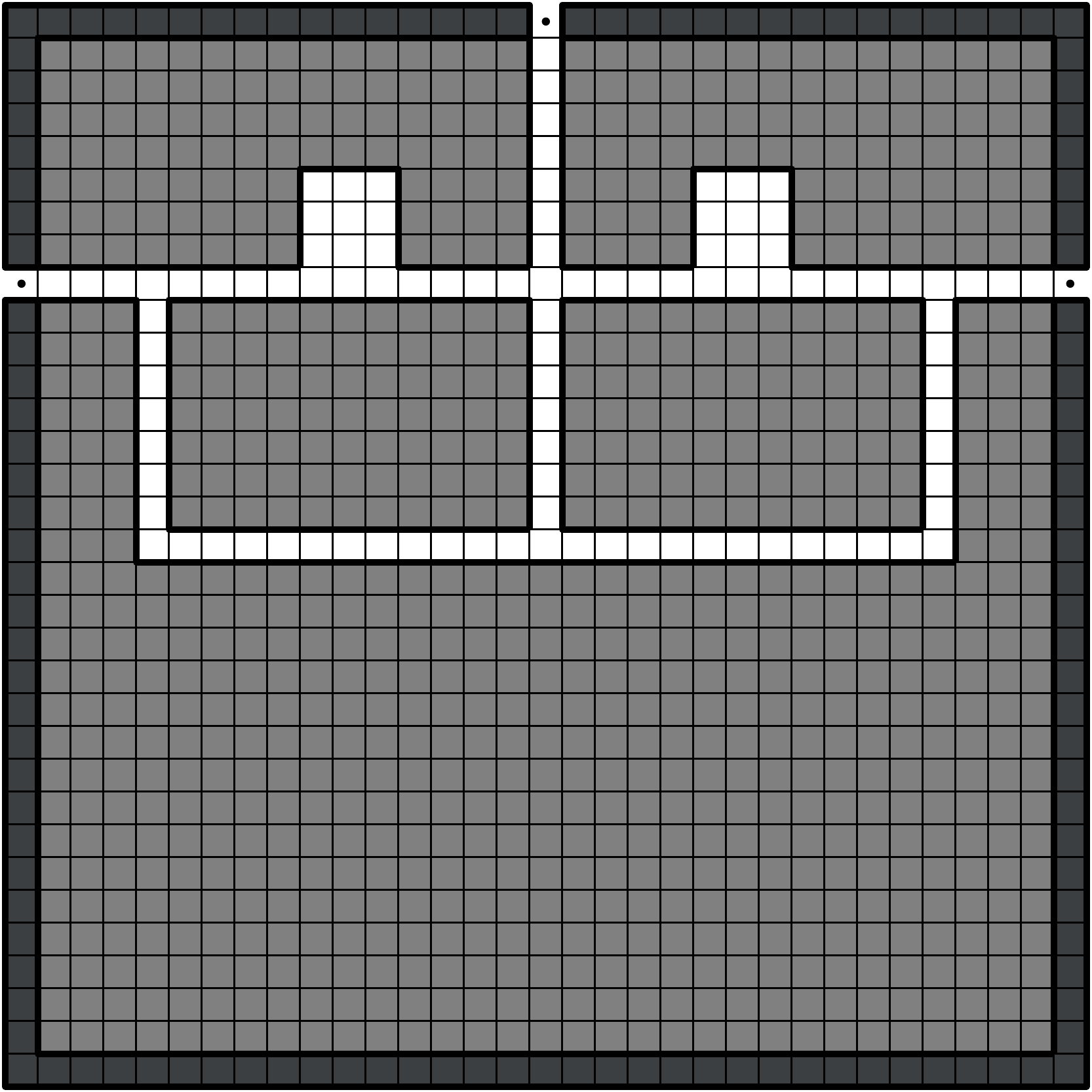}
    \caption{}
  \end{subfigure}
  \begin{subfigure}[b]{0.325\textwidth}
    \centering
    \includegraphics[width=140pt]{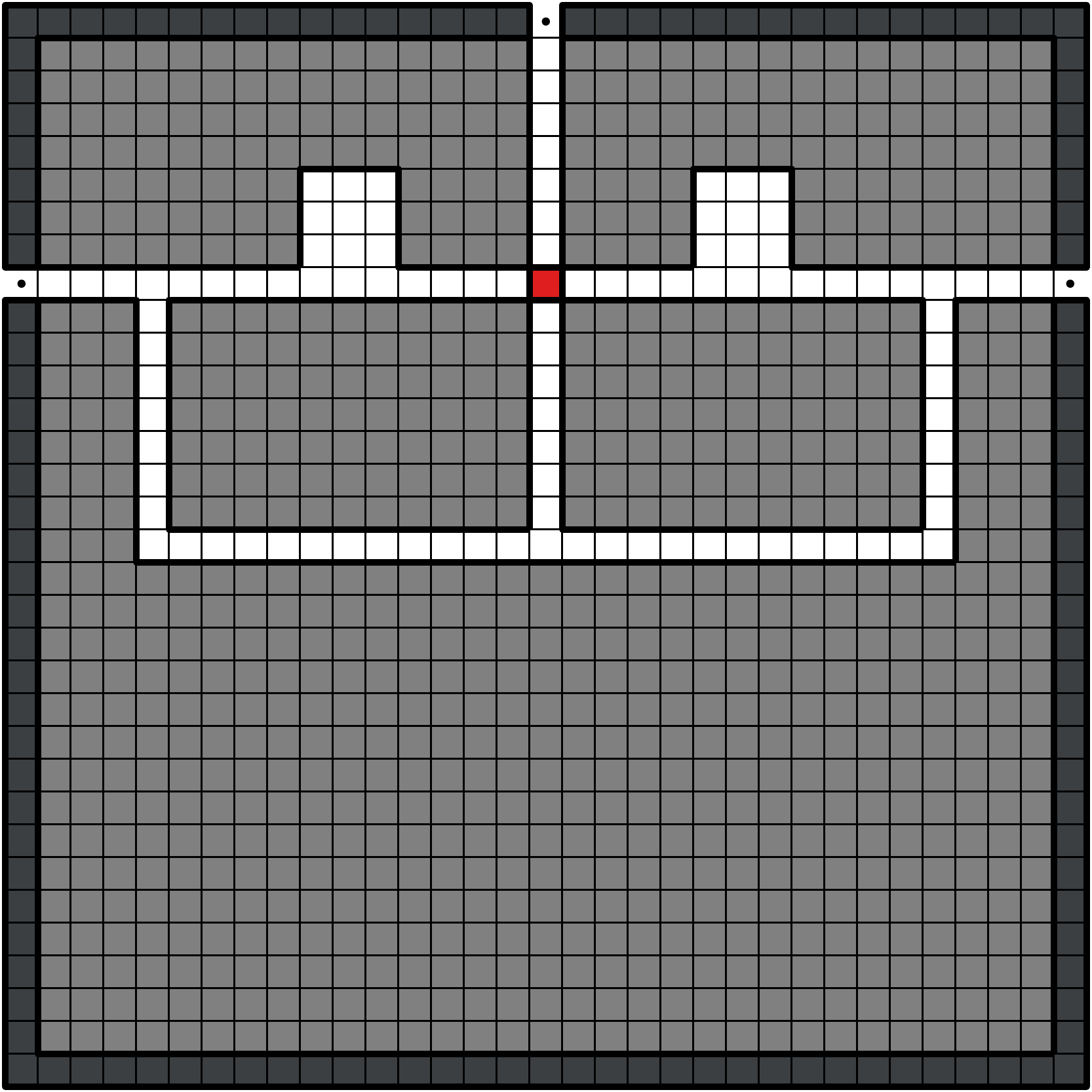}
    \caption{Center square of gadget}
  \end{subfigure}
  \begin{subfigure}[b]{0.325\textwidth}
    \centering
    \includegraphics[width=140pt]{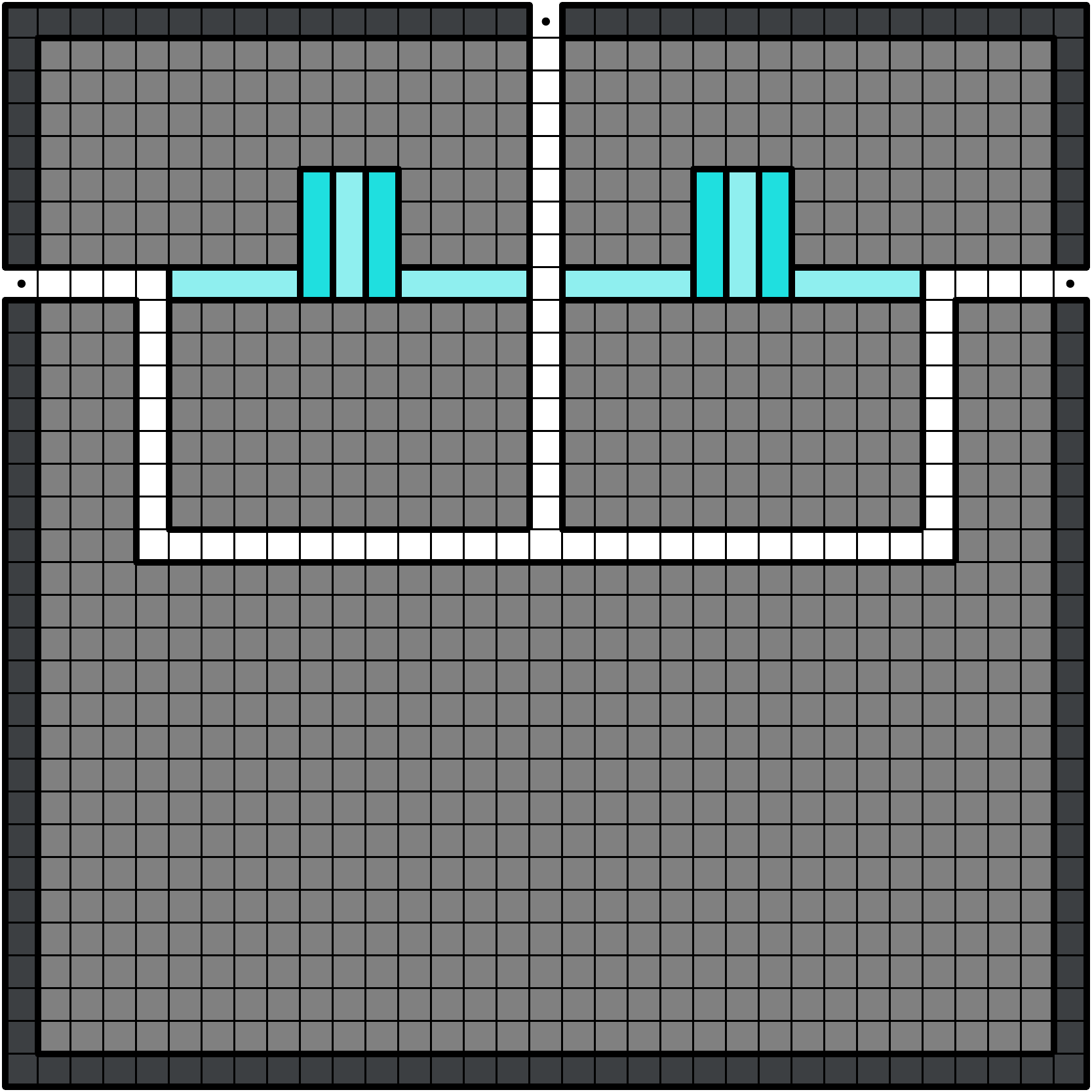}
    \caption{Forced placements}
    \label{fig:clause_forced}
  \end{subfigure}
  \caption{The clause gadget.}
  \label{fig:clause}
\end{figure}


The negated-clause gadget has an additional alteration in the middle of the gadget to allow for $\II$ pieces to rotate to their desired locations when the gadget is being filled. The negated-clause gadget is shown in Figure~\ref{fig:negclause}. One can verify that the ``bridge'' between the left port and right port areas, along with the additional structure in the middle of the gadget, is forced to be tiled a specific way as shown in Figure~\ref{fig:negclause_forced}.

As $\II$ pieces can turn corners, the traversals from the up port to either of the other ports are possible, as are the traversals between the left and right ports. The only possible tilings, along with orders in which the $\II$ pieces can be placed to ensure that the $\II$ pieces tile the gadget correctly, are shown in Figure~\ref{fig:i_negclause_tilings}. Of note is piece number 15 in the first and second tilings – the maneuver to get the $\II$ piece to that location requires careful attention to the orientation of the piece as it traverses through the middle structure, but it is possible and detailed in Appendix~\ref{subsec:i_maneuver3}.

\begin{figure}[!ht]
  \centering
  \begin{subfigure}[b]{0.325\textwidth}
    \centering
    \includegraphics[width=140pt]{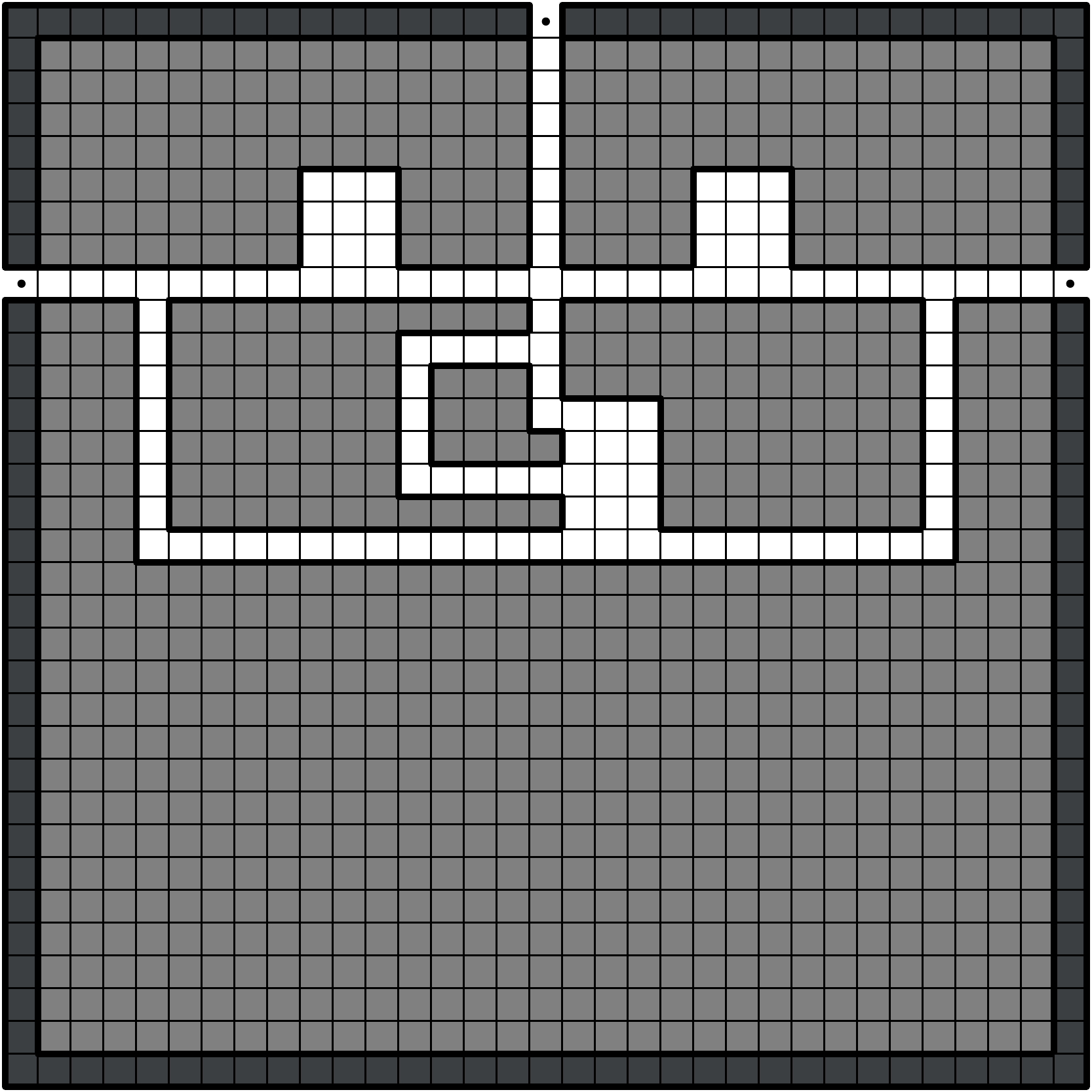}
    \caption{}
  \end{subfigure}
  \begin{subfigure}[b]{0.325\textwidth}
    \centering
    \includegraphics[width=140pt]{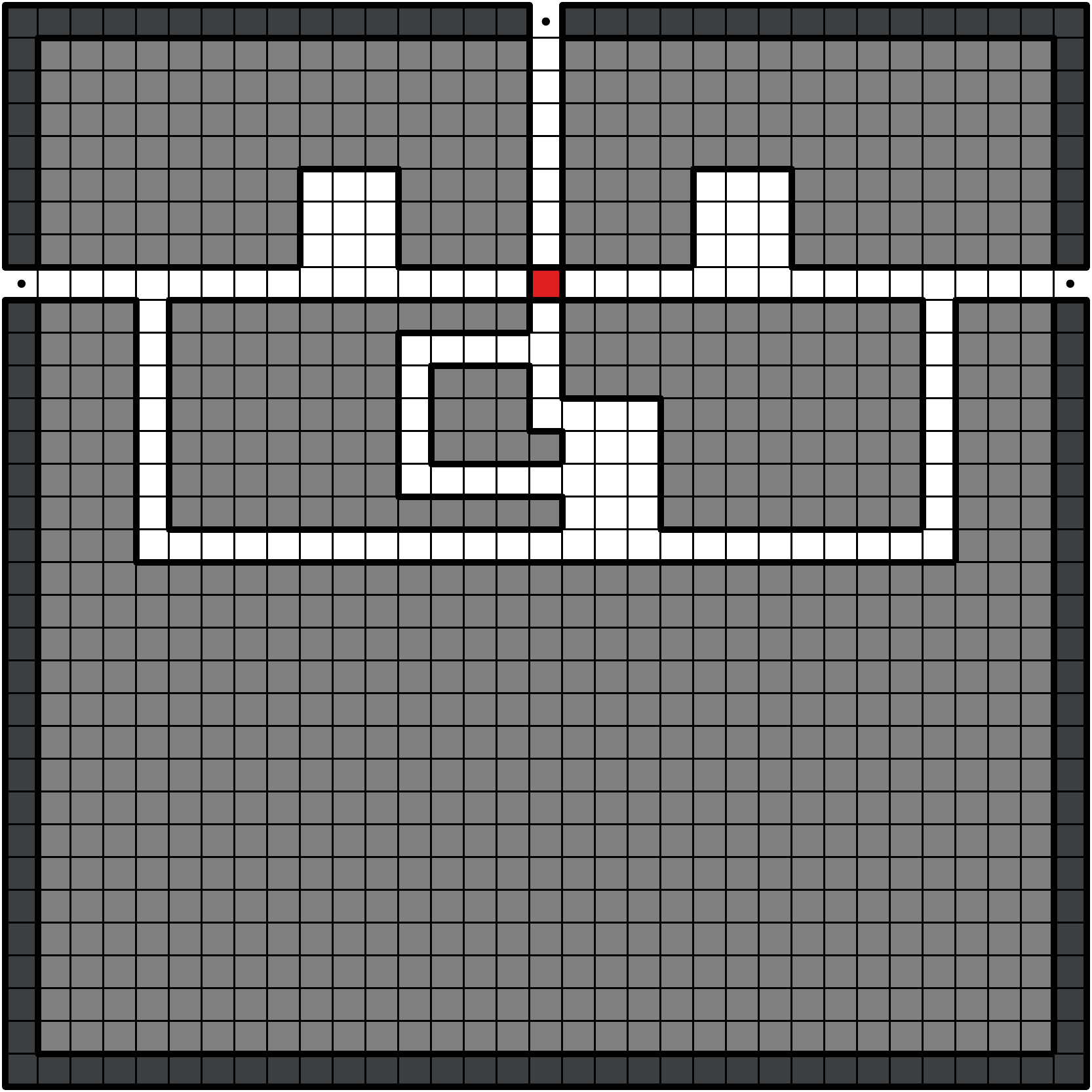}
    \caption{Center square of gadget}
  \end{subfigure}
  \begin{subfigure}[b]{0.325\textwidth}
    \centering
    \includegraphics[width=140pt]{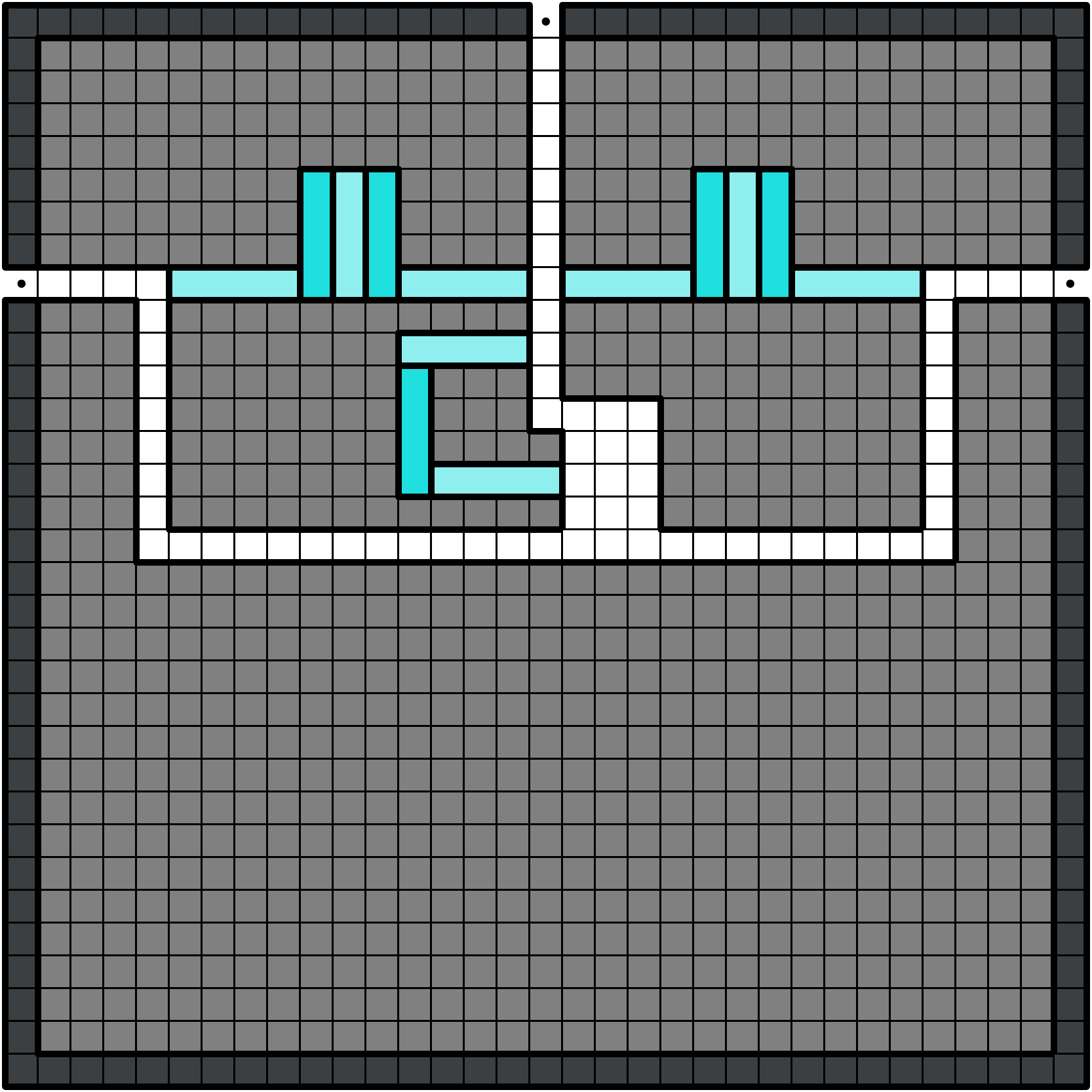}
    \caption{Forced placements}
    \label{fig:negclause_forced}
  \end{subfigure}
  \caption{The negated-clause gadget.}
  \label{fig:negclause}
\end{figure}


\subsubsection{Crossover gadget}

The crossover gadget, shown in Figure~\ref{fig:cross} is the most complex gadget and requires the most detail. Compared to the crossover gadget in \cite{horiyama2017complexity}, there are some slight alterations near the left and right ports and additional squares to allow for an easier traversal from the left port to the right port and vice versa. We will use the word \emph{chambers} to denote the three middle $9\times 5$ rectangles with only a $1\times 7$ rectangle filled in. One can verify that the ``bridge'' between the left port and right port areas is forced to be tiled a specific way as shown in Figure~\ref{fig:cross_forced}.

In terms of traversals, the traversals from the up port to the left port, from the up port to the right port, and between the left and right ports are easily seen to be possible. However, the crossover gadget is special in that we want to place some initial $\II$ pieces either on the left or on the right, depending on the desired tiling and as shown in Figure~\ref{fig:cross_init}. This is to help facilitate any traversal to the down port by ensuring that all wall kick tests before the desired test fail when we perform the necessary rotations. With the piece placements in Figure~\ref{fig:cross_init}, all traversals from any non-down port to the down port are possible (see Appendix~\ref{subsec:i_maneuver4} for an example maneuver to get $\II$ pieces through the chambers in the middle of the gadget); furthermore, the piece placements and ordering in Figure~\ref{fig:cross_init} are possible in the original state of the gadget.

The only possible tilings, along with orders in which the $\II$ pieces can be placed to ensure that the $\II$ pieces tile the gadget correctly, are shown in Figure~\ref{fig:i_cross_tilings}.

\begin{figure}[!ht]
  \centering
  \begin{subfigure}[b]{0.325\textwidth}
    \centering
    \includegraphics[width=140pt]{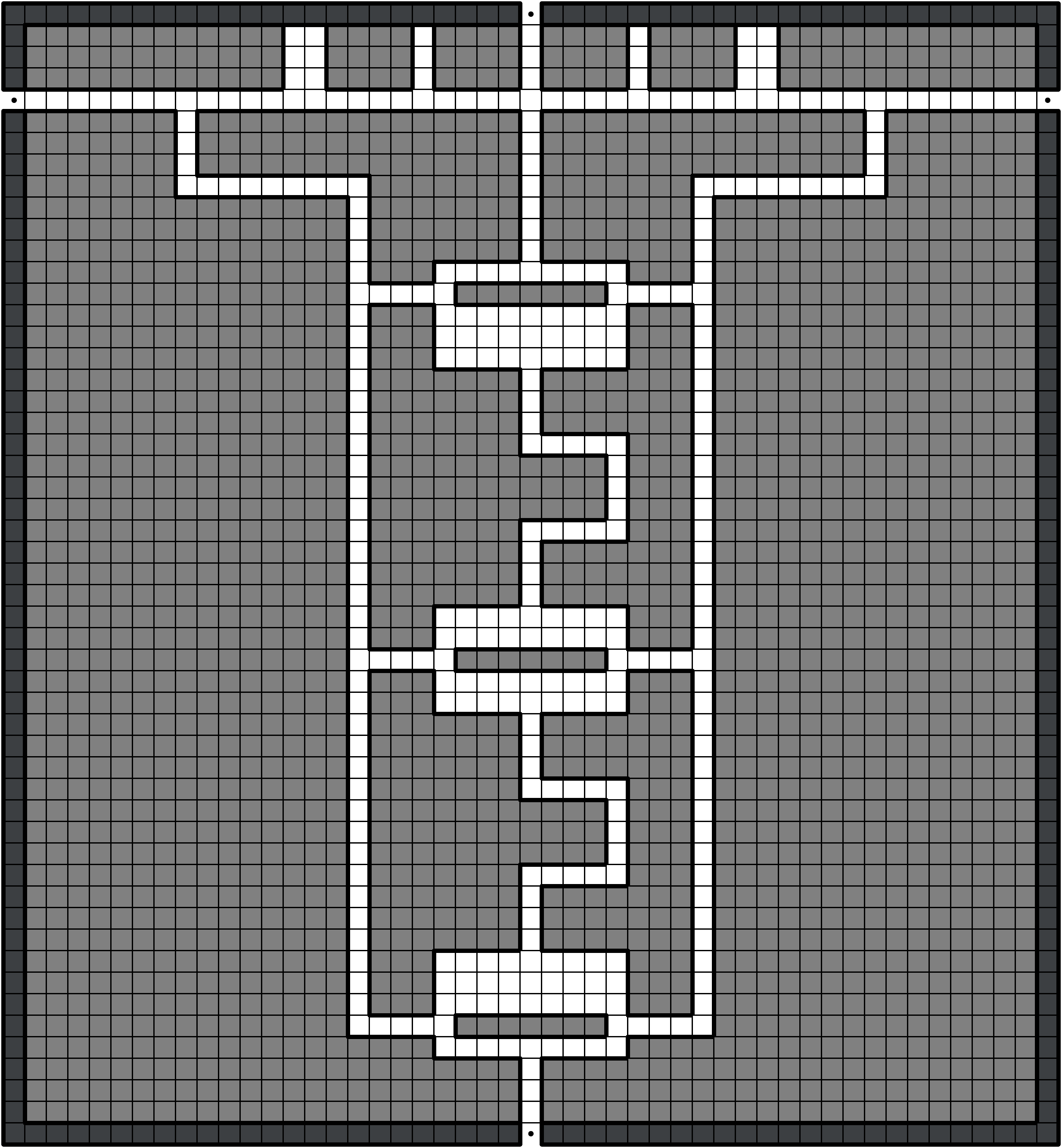}
    \caption{}
  \end{subfigure}
  \begin{subfigure}[b]{0.325\textwidth}
    \centering
    \includegraphics[width=140pt]{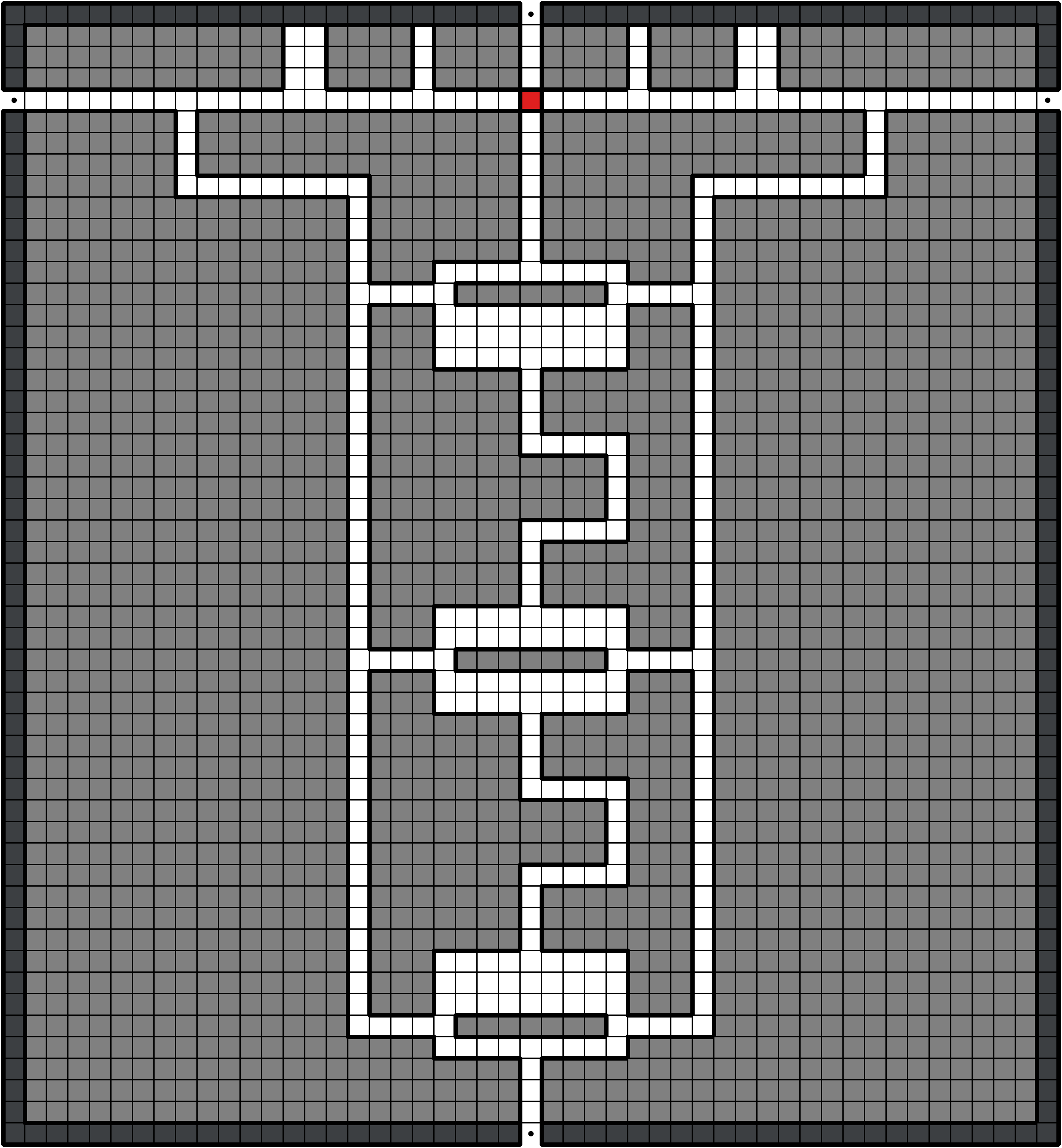}
    \caption{Center square of gadget}
  \end{subfigure}
  \begin{subfigure}[b]{0.325\textwidth}
    \centering
    \includegraphics[width=140pt]{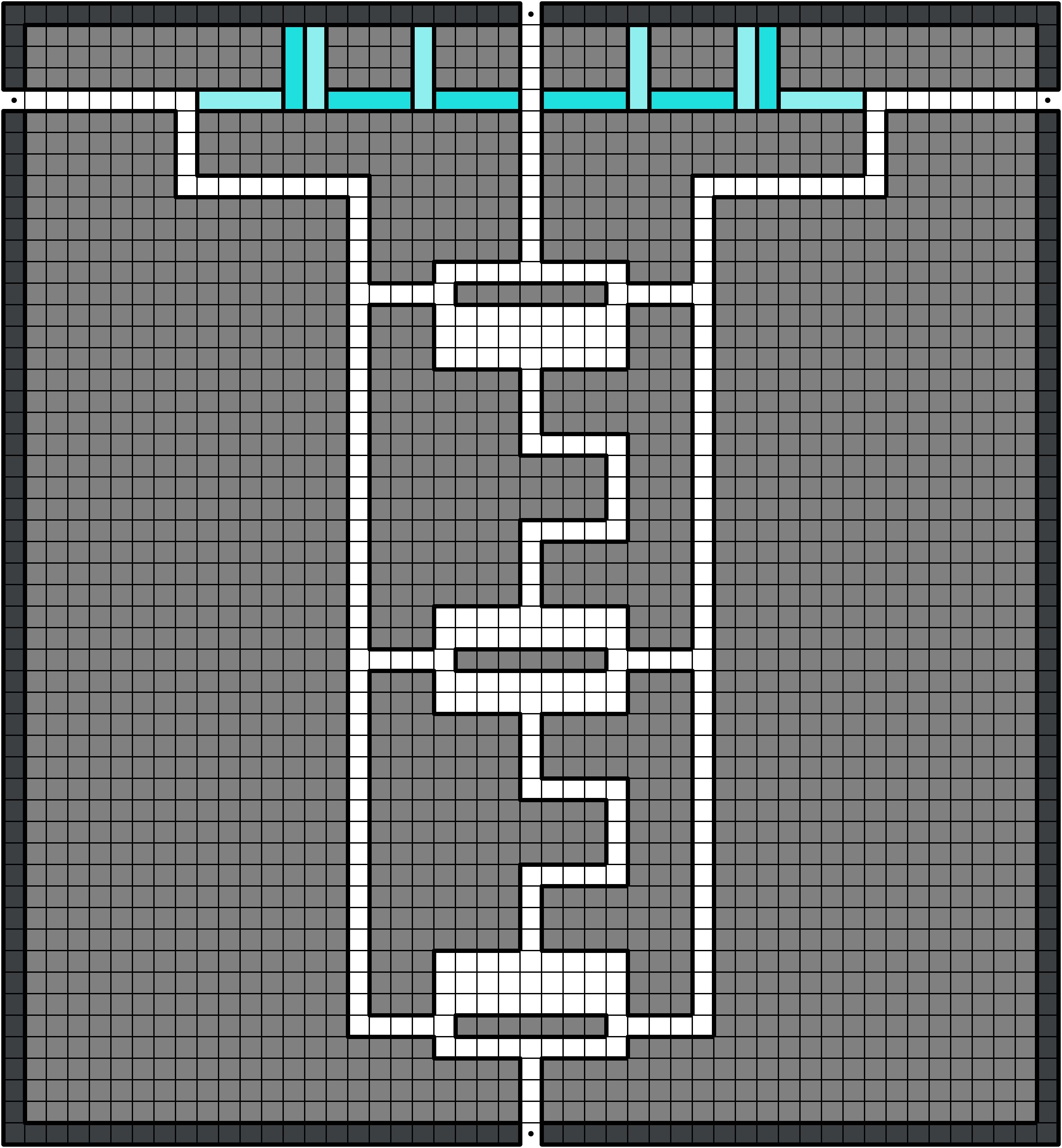}
    \caption{Forced placements}
    \label{fig:cross_forced}
  \end{subfigure}
  \caption{The crossover gadget.}
  \label{fig:cross}
\end{figure}

\begin{figure}[!ht]
  \centering
  \begin{subfigure}[b]{0.49\textwidth}
    \centering
    \includegraphics[width=220pt]{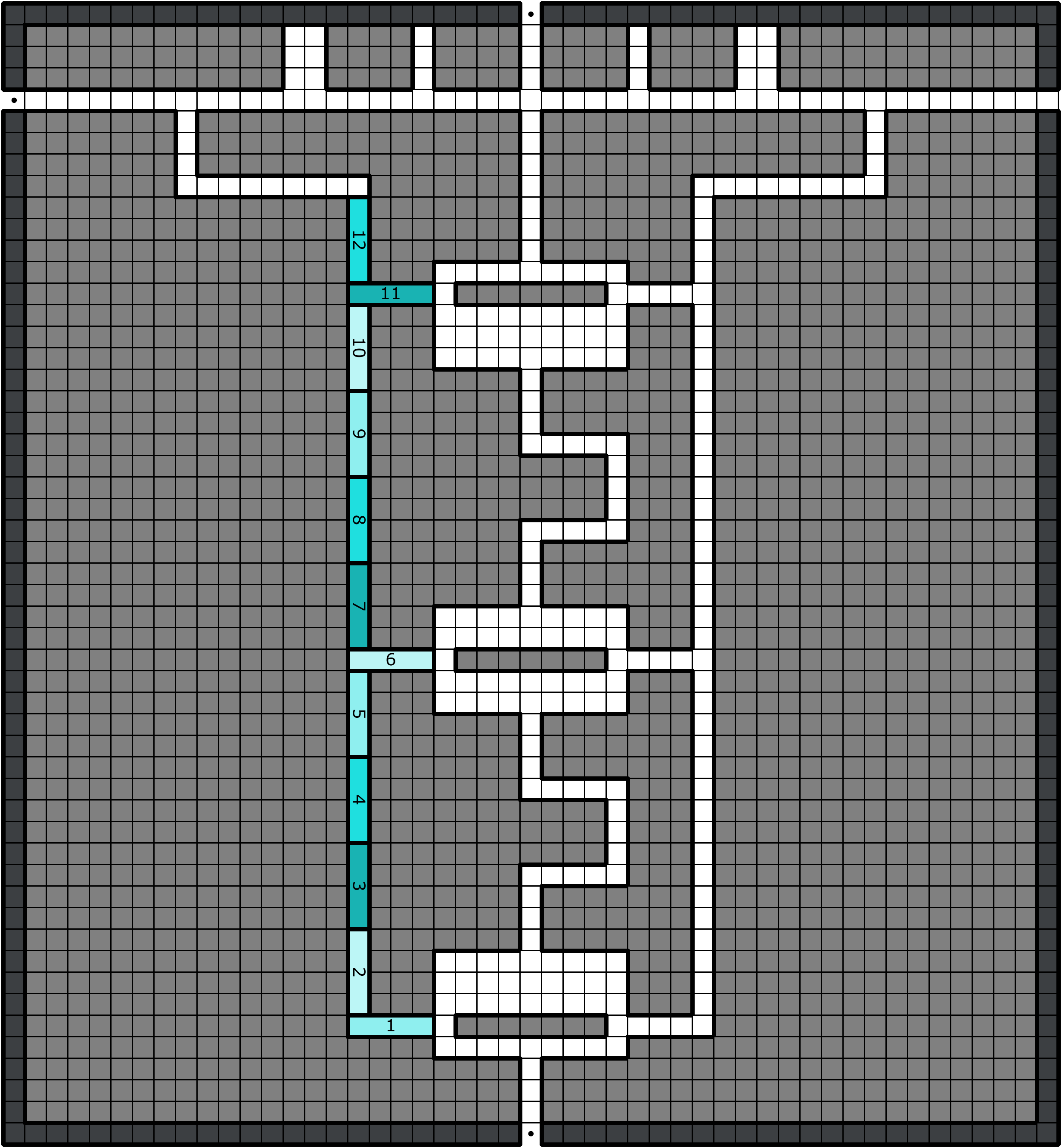}
    \caption{Left version}
  \end{subfigure}
  \begin{subfigure}[b]{0.49\textwidth}
    \centering
    \includegraphics[width=220pt]{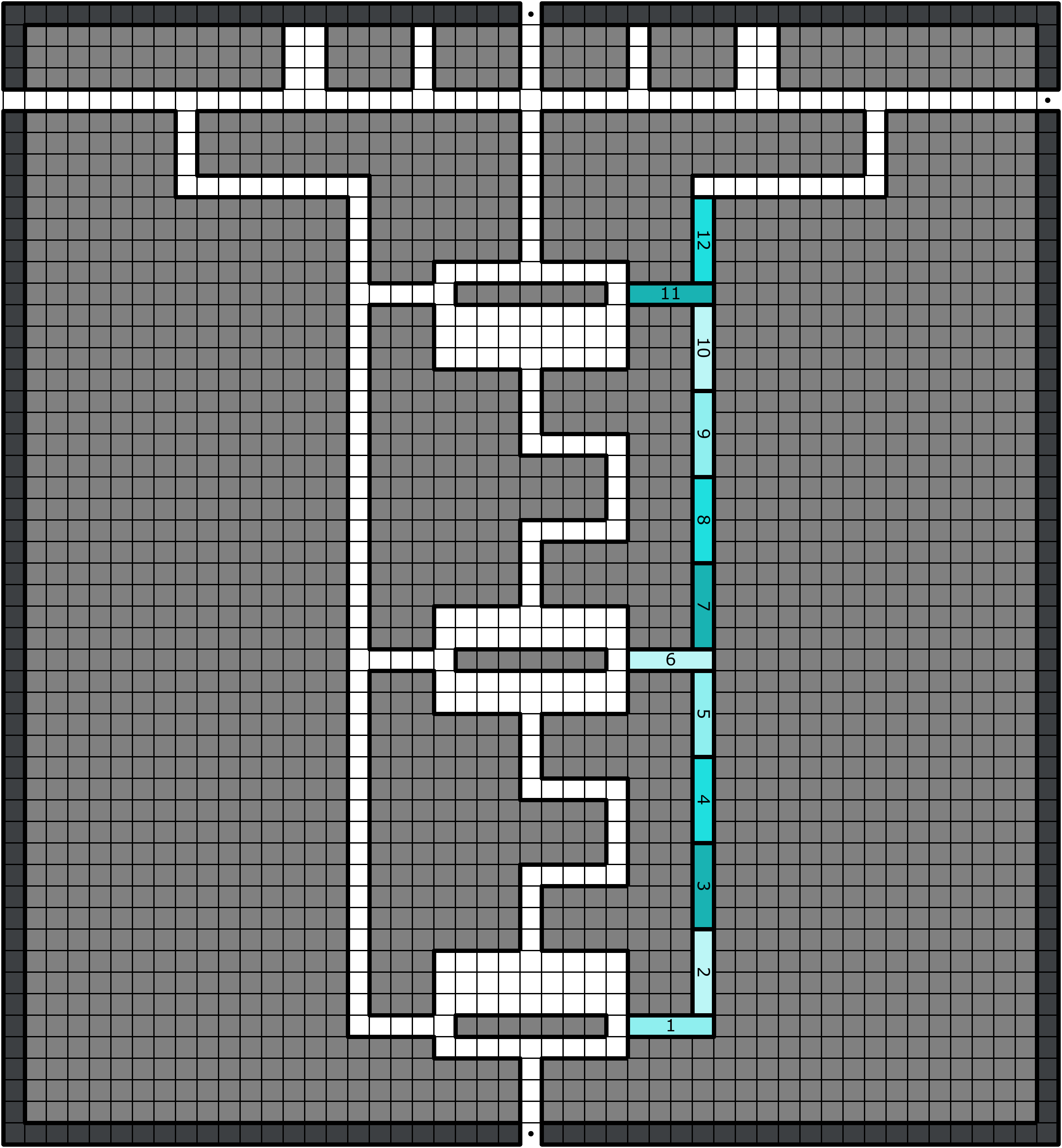}
    \caption{Right version}
  \end{subfigure}
  \caption{Initial piece placements for the crossover gadget to make the down port accessible.}
  \label{fig:cross_init}
\end{figure}


\subsection{Putting it all together}

Now we describe the main part of the construction. Take the Graph Orientation instance and embed it on the Tetris board such that horizontally adjacent red dots are on the same row, vertically adjacent red dots are on the same column, and each red dot has coordinates of the form $(60i+38, 48j+37)$ for integers $i$ and $j$ (here, $(1, 1)$ denotes the top-left square of the board). At each pair of horizontally adjacent red dots corresponding to a pair of literals, place a duplicator gadget such that the center squares of the duplicator gadget lie exactly on the red dots. At the topmost leftmost red dot, place an entry corner gadget such that the center square lies exactly on the red dot. Otherwise, at each red dot, place the corresponding gadget (corner, vertical line, horizontal line, clause, negated-clause, crossover) such that the center square lies exactly on the red dot.

In addition, use horizontal wires (i.e., a $1\times n$ rectangle) or vertical wires (i.e., an $n\times 1$ rectangle) to hook up adjacent gadgets. Due to mod $4$ constraints, all of the wires will have length equal to one less than a multiple of $4$, so that for each wire, exactly one of the two ports at the ends of the wire must be covered by an $\II$ piece used in the tiling of the wire, which helps preserve/transmit information about the orientations of the gadgets.

\subsection{Correctness}

As we have now described all of the parts of the construction, we now prove correctness of the construction; in other words, the whole construction can be cleared with $\II$ pieces under SRS if and only if the 1-in-3SAT instance has a solution. From the discussion in Subsection~\ref{subsec:i_genstructure}, the whole construction can be cleared with $\II$ pieces under SRS if and only if the main part of the construction can be filled with $\II$ pieces under SRS.

First, if the main part of the construction can be filled with $\II$ pieces under SRS, then we necessarily get a tiling of the main part of the construction with $\II$ pieces. From there, we can read off the orientations of the gadgets to get the orientation of the edges in the Graph Orientation instance, which gives us a solution to the 1-in-3SAT instance.

Now, suppose we have a solution to the 1-in-3SAT instance. Using the solution, we can quickly (i.e., in polynomial time) get an orientation of the edges in the Graph Orientation instance, and thus the orientations of the gadgets and wires connecting the gadgets. From here, we want to show that we can maneuver and place the $\II$ pieces into their desired locations in the main part of the construction.

The key idea is to use the structure of the Graph Orientation instance in \cite[Theorem 2.3]{horiyama2017complexity}. We start with filling the gadgets corresponding to the \emph{lower part} of the construction, as defined at the beginning of Section~\ref{sec:itrisclearing}. Because the lower part consists of $m$ copies of the clause/negated-clause construction as indicated in Figure~\ref{fig:cnc}, we can fill the gadgets here in the same manner, as shown in Figure~\ref{fig:cncplusorder}:

\begin{enumerate}
    \item Fill in the gadgets along the path from the leftmost literal $\ell_1$ to the clause gadget $C_j$, along with the corner gadget to the right of the clause gadget $C_j$, from the bottom-most rightmost gadget to the topmost leftmost gadget on the path.
    \item Fill in the gadgets along the path from the rightmost literal $\overline{\ell_3}$ to the negated-clause gadget $\overline{C_j}$, along with the corner gadget to the left of the negated-clause gadget $\overline{C_j}$, from the bottom-most leftmost gadget to the topmost rightmost gadget on the path.
    \item Fill in the remaining gadgets along the second-to-bottom row, starting from the leftmost and rightmost remaining gadgets and moving towards the middle columns.
    \item Fill in the remaining gadgets along the third-to-bottom row from right to left.
    \item Fill in the remaining gadgets from left to right.
\end{enumerate}

\begin{figure}[!ht]
    \centering
    \includegraphics[width=280pt]{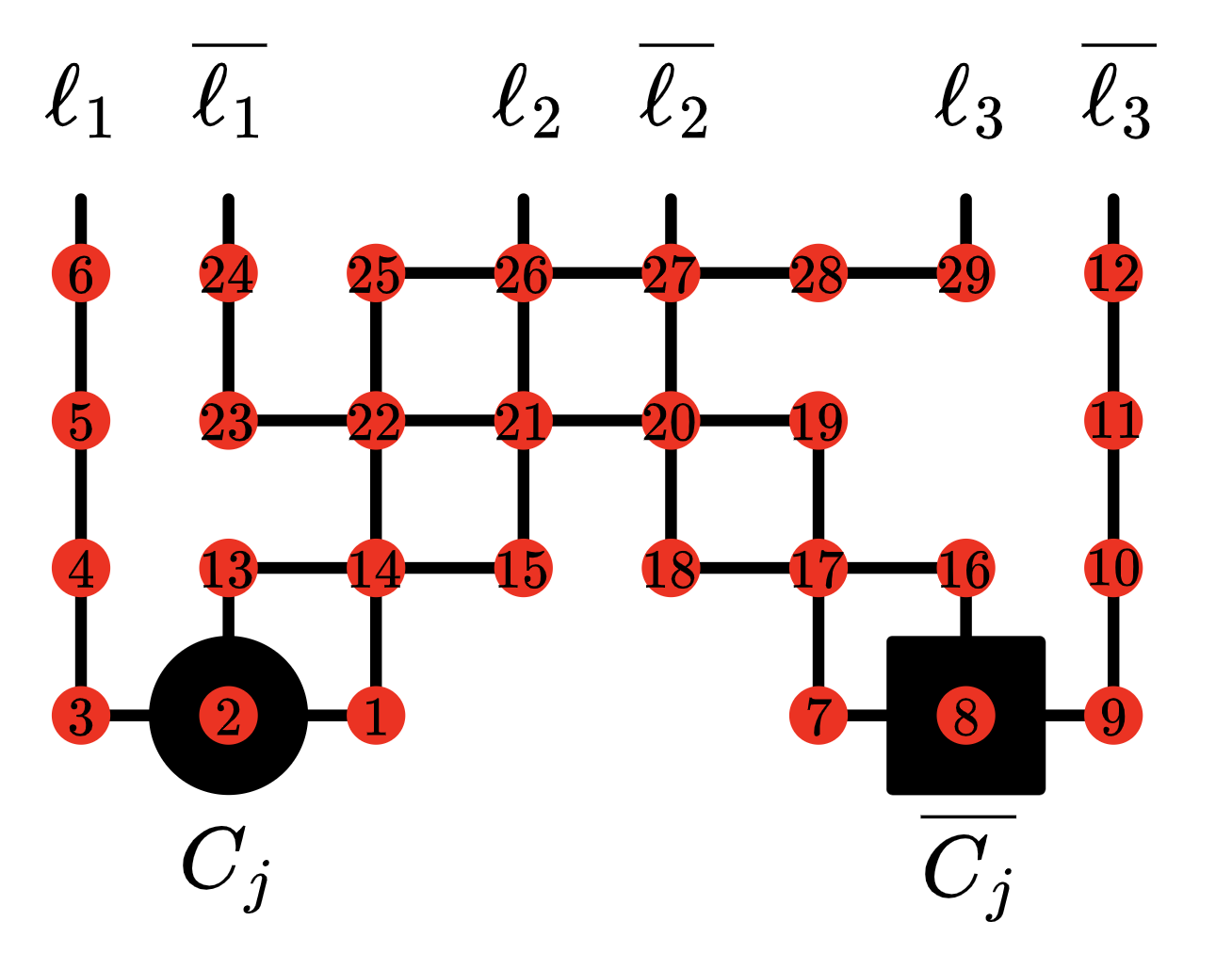}
    \caption{The order in which to fill the gadgets in each clause/negated-clause construction. Smaller numbers are filled first (so $1$ is filled first and $29$ is filled last).}
    \label{fig:cncplusorder}
\end{figure}

All gadgets with an up port will be filled with pieces moving through the up port, and all other gadgets (without an up port) will be filled with pieces moving through the left or right port that remains unblocked (i.e., still has an unblocked path from a literal to that gadget).

Once the gadgets corresponding to the lower part of the construction have been filled, we can then fill the gadgets corresponding to the upper part of the construction. To do so, we first fill all the gadgets to the left of the leftmost vertical wire containing a literal \emph{except for} the EC gadget, and then fill the rest of the gadgets, starting from the bottom-most row, and going from right to left within each row of gadgets. Note that the only gadgets with up ports in the upper part of the construction are crossovers and UC gadgets, and it is easy to maneuver the relevant $\II$ pieces to the up ports of these gadgets in the overall construction; otherwise, no other gadget in the upper part of the construction has any restrictions on where the pieces need to come from while the gadget is being filled.

While we are filling the gadgets, we also fill the wires between gadgets – whenever we are about to fill a gadget, we check all adjacent gadgets, and if any of them have been filled, we fill the wire connecting the current gadget to the already-filled gadget right before filling the current gadget.

In particular, this method of placing the $\II$ pieces in the overall construction ensures that none of the gadgets or wires become prematurely blocked or inaccessible, and that all of the gadgets and wires will be properly filled, if we are given a valid orientation of the gadgets. Therefore, given a valid solution to the 1-in-3SAT instance, we are able to determine a way to fill the main part of the construction with $\II$ pieces under SRS.

Therefore, the main part of the construction can be filled with $\II$ pieces under SRS if and only if the 1-in-3SAT instance has a solution, meaning that our construction works as intended. Thus, we have shown that Tetris clearing with SRS is NP-hard even if the player is only given $\II$ pieces, as desired.

\subsection{Aside: Bags}

Now that we have NP-hardness for Tetris clearing even if the player is only given $\II$ pieces, we can actually prove NP-hardness for Tetris clearing under the 7-bag randomizer:

\begin{theorem}
    Tetris clearing with SRS is NP-hard even if the piece sequence must be able to be generated from the 7-bag randomizer.
\end{theorem}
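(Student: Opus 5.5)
We reduce from the $\II$-only clearing instance of Theorem~\ref{thm:itrisclear}, adding a \emph{disposal region} that absorbs the six non-$\II$ pieces of every bag without affecting what happens in the main construction. Let $N = M/4$ be the number of $\II$ pieces in that instance. We use a sequence of $N$ bags. In each bag the $\II$ piece comes first, followed by $\OO, \TT, \SS, \ZZ, \JJ, \LL$ in a fixed order. The board is widened by attaching, to the right of the $\II$ construction, a separate region with its own empty cells. Each row of this region, together with the corresponding row of the $\II$ construction, is filled only when both parts are filled. The region is designed so that it can be tiled exactly by $N$ copies of each of $\OO, \TT, \SS, \ZZ, \JJ, \LL$, and so that $\II$ pieces cannot usefully enter it.

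The simplest version makes the disposal region a set of $N$ disjoint \emph{pockets}, each reachable from the top. Each pocket is a fixed shape that is exactly the union of one $\OO, \TT, \SS, \ZZ, \JJ, \LL$, placed in the forced order. The pockets are stacked so that each lies in rows whose remaining cells belong to the $\II$ part, or are otherwise non-clearable until the end. No row may clear early: the long tunnel already blocks line clears in the main rows. We extend that structure, or add a column of pre-placed gaps as in the top-right/bottom-right trick of Subsection~\ref{subsec:i_genstructure}, so that the disposal rows also cannot clear until everything is filled. For the forward direction, a 1-in-3SAT solution gives an $\II$ placement order (Correctness subsection). We interleave it with placing the six other pieces of each bag into the next pocket. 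Access to the main construction through the long tunnel and to the pockets are independent, since the pockets sit on a separate side reached from the spawn area.

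The converse is the main obstacle. We must show that the non-$\II$ pieces can never help fill the main construction or the long tunnel. We must also show that $\II$ pieces cannot be diverted into pockets in a way that breaks the counting. For the first point, I would make the tunnel entrance accessible only to an $\II$ piece, using the SRS kick maneuver of Appendix~\ref{subsec:i_maneuver1}. That means checking that no other tetromino can pass the first turn of the tunnel, using narrow width-1 corridors that only a straight piece fits through, with kicks not helping. For the second point, each pocket should have a shape (e.g.\ containing no $1\times 4$ straight run and with a narrow opening) that no $\II$ piece can enter. A counting argument then works: all $6N$ non-$\II$ pieces must fill exactly the pockets, so all $N$ $\II$ pieces fill exactly the tunnel and main part. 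Theorem~\ref{thm:itrisclear}'s analysis then gives a valid tiling, and hence a solution.

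Finally, I would check that the final cells, meaning the four reserved squares and the tunnel, are still filled last by $\II$ pieces. Putting $\II$ first in each bag, and reserving the last bag's $\II$ appropriately, together with the end-game clearing of the rightmost column squares, should preserve that all rows clear only at the very end.
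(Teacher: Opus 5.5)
Your proposal takes a genuinely different route from the paper, and it is workable in outline, with one step that needs correcting.

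\textbf{Comparison with the paper.} Both arguments rest on the same fact: the entrance to the long tunnel is too narrow for any non-$\II$ piece, so only $\II$ pieces ever reach the main construction. The difference is where the six other pieces of each bag are disposed of.

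The paper stacks $2N+4$ extra rows above the $\II$ board, built from a repeated two-row ``bottle''. In each of the first $N$ bags, the non-$\II$ pieces are forced, starting from the forced $\OO$ placement, to complete and clear the current top two rows. Meanwhile the bag's $\II$ slips through whichever top tunnel is still open, and five extra bags clean up at the end. That design works for every order within a bag and with hold, leaves the width essentially unchanged, and extends to $7k$-bags by running $k$ bottles in parallel.

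You instead choose the in-bag order yourself. That is legitimate, since the theorem only needs some 7-bag-generatable sequence. You then use a static disposal region with an exact cell count and no line clears until the end.

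Your converse is simpler than you make it. Every original row contains a wall cell and must eventually clear, so all $4N$ empty cells of the $\II$ part must be filled. Only $\II$ pieces can reach those cells, and there are exactly $N$ of them, so every $\II$ is used there. You therefore do not need pockets that exclude $\II$ at all.

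The price is on the forward side:
\begin{itemize}
\item You must actually exhibit a pocket that $\OO,\TT,\SS,\ZZ,\JJ,\LL$ can fill in your chosen order under SRS. The overhangs of $\SS$ and $\ZZ$ force spins or careful stacking.
\item ``Stacked'' pockets only work if they form a single well filled bottom-up. Otherwise they must sit side by side along the top, at a cost of $\Theta(N)$ extra width. Any separate access corridor is additional empty space that breaks the exact count.
\end{itemize}

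\textbf{The step to correct.} The four reserved squares only line up into a vertical $1\times 4$ slot after every row other than the top two and bottom two has cleared. So the rows do not all clear at the very end, contrary to your last paragraph.

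With $\II$ first in each bag, the last $\II$ must go into that slot before the last bag's six pieces arrive. Hence the pocket still unfilled at that moment must have no cells in those middle rows, or the slot never forms. The simplest fix is to put the $\II$ last in every bag.
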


Here, we combine the reduction in Theorem~\ref{thm:itrisclear} with the bottle blocking mechanisms in the reductions to two piece types in \cite{mithg2024tetris}. More specifically, let $N+1$ be the number of $\II$ pieces required to clear the reduction structure in Theorem~\ref{thm:itrisclear} (including the four empty squares in the second-to-rightmost column). Then, we modify the top two rows of the structure and add an additional $2N+4$ lines on top of the structure as shown in Figure~\ref{fig:bag_addition} (the figure shows the case where $N$ is even; if $N$ is odd, then the left tunnel has slightly different offsets in the top lines). The top $2N$ lines consist of roughly the same $2$-line structure repeated $N$ times. The input sequence consists of $N+5$ bags, or a total of $7N+35$ pieces. The order of pieces within each bag does not matter here.

\begin{figure}[!ht]
    \centering
    \includegraphics[width=320pt]{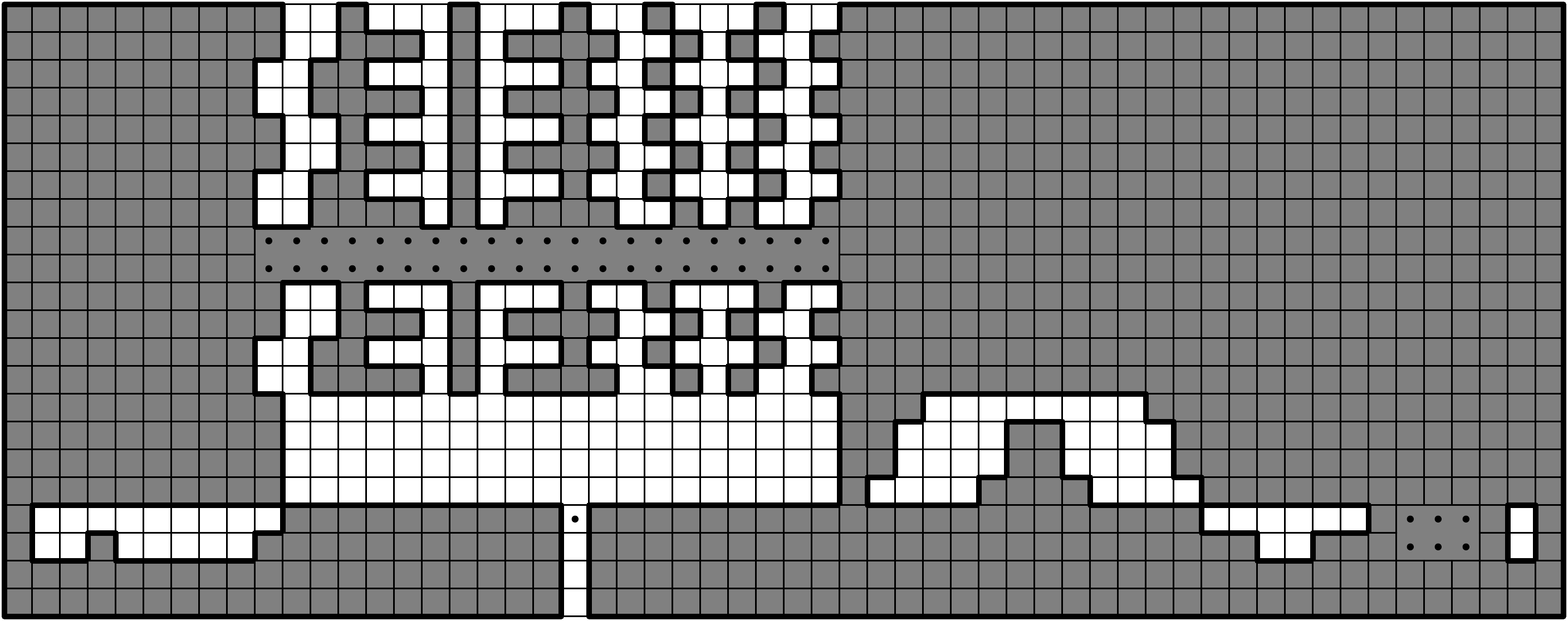}
    \caption{The additional structure for the reduction to Tetris clearing under the 7-bag randomizer.}
    \label{fig:bag_addition}
\end{figure}

For the first $N$ bags, the non-$\II$ pieces must fill in their respective tunnels in the topmost two rows. In particular, because the $\OO$ piece placement is forced (otherwise part of the $\OO$ piece will stick out over the topmost line), the other non-$\II$ pieces must be placed such that the topmost two rows are cleared out so that the next $\OO$ piece can be properly placed. This forces the placements shown in Figure~\ref{fig:bag_addition_fill1}.

\begin{figure}[!ht]
    \centering
    \includegraphics[width=320pt]{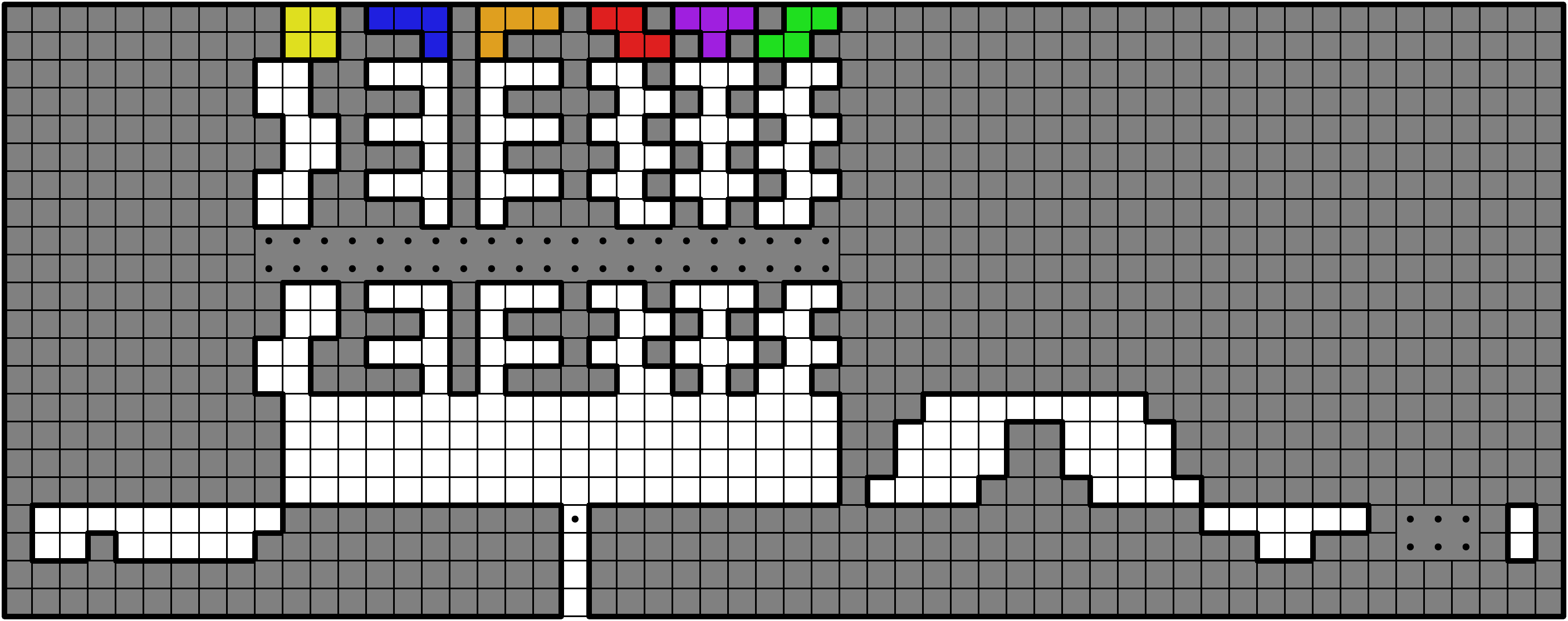}
    \caption{Placements of non-$\II$ pieces in the first $N$ bags.}
    \label{fig:bag_addition_fill1}
\end{figure}

Now, for the first $N$ bags, if the $\II$ piece is first or last in the bag, then all the tunnels will be open (as the non-$\II$ pieces will either have not been placed or have cleared out the topmost lines), so we can move the $\II$ piece through any of the tunnels and into the entrance of the main reduction structure. Otherwise, between $1$ and $5$ distinct non-$\II$ pieces will have been placed in the topmost two rows, meaning that at least one of the tunnels is still open, so we can move the $\II$ piece through that tunnel and into the entrance of the main reduction structure. In any case, after each bag and regardless of the ordering of the pieces within the bag, the topmost $2$ lines get cleared and exactly one additional $\II$ piece gets put into the main reduction structure.

The last $5$ bags are used for cleanup of the remaining lines; one possible way to clean up the remaining lines is shown in Figure~\ref{fig:bag_addition_fill2}.

\begin{figure}[!ht]
    \centering
    \begin{subfigure}[b]{0.98\textwidth}
        \centering
        \includegraphics[width=320pt]{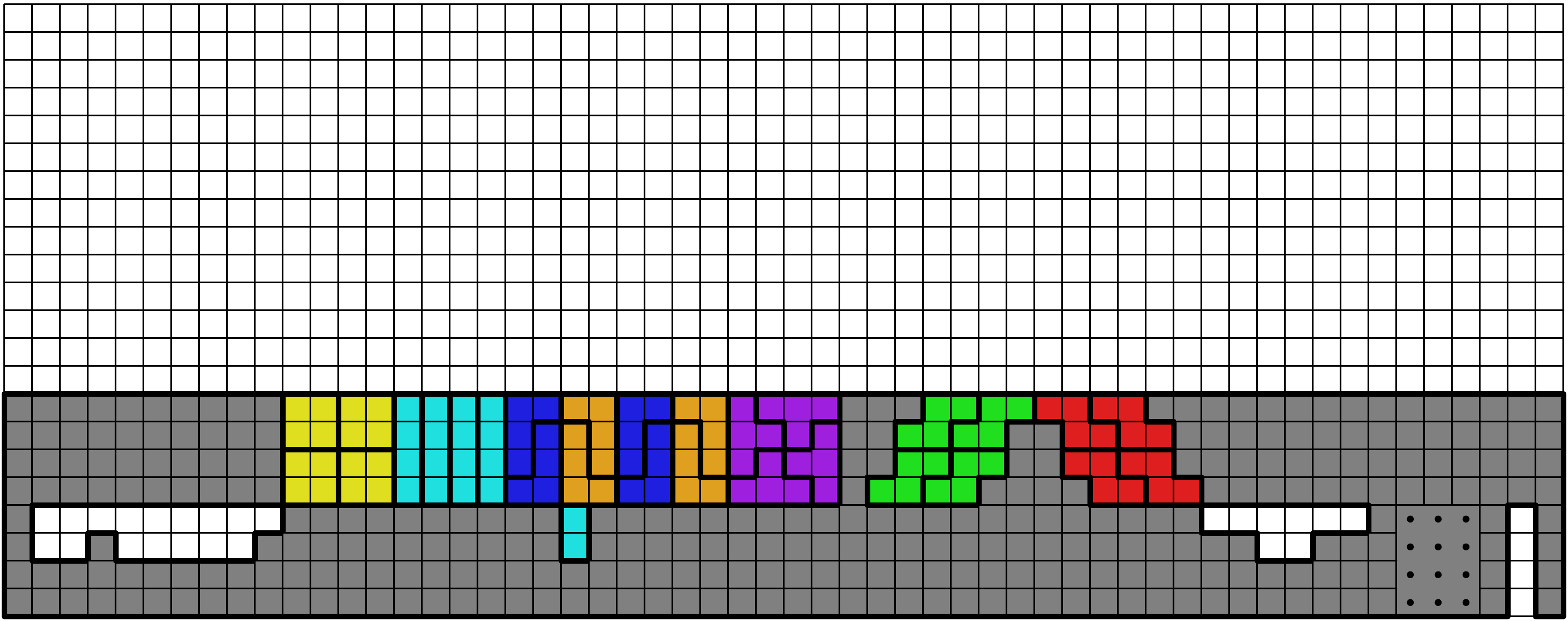}
        \caption{First 4 bags}
    \end{subfigure}
    \begin{subfigure}[b]{0.98\textwidth}
        \centering
        \includegraphics[width=320pt]{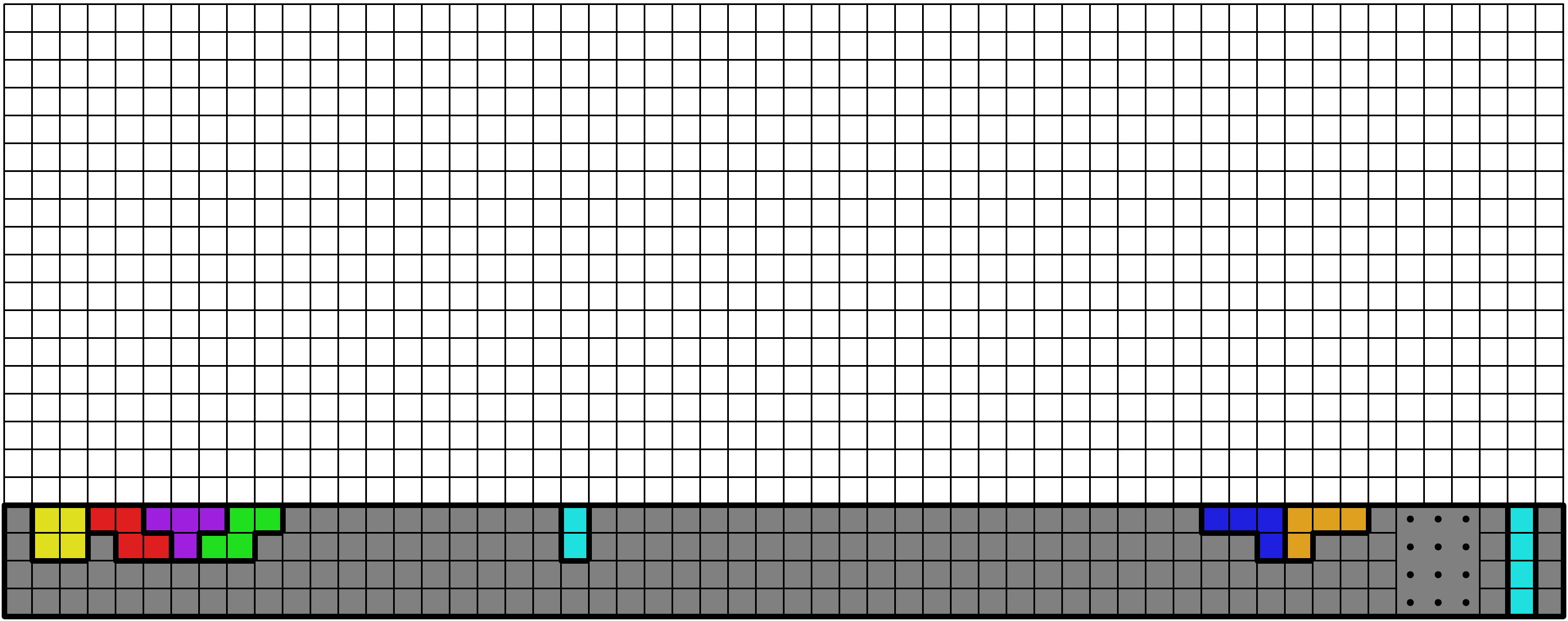}
        \caption{Last bag}
    \end{subfigure}
    \caption{Cleaning up the remaining lines.}
    \label{fig:bag_addition_fill2}
\end{figure}

The additional structure does not significantly change the argument for correctness. The clearing strategy is still required to be of the form ``Fill the embedded Graph Orientation structure, then fill in the long tunnel, then clear the top two and bottom two rows around the main reduction structure'', but the additional structure now requires regular line-clearing at the top of the overall structure. Failing to fill the embedded Graph Orientation structure still causes the last step (of clearing the top two and bottom two rows around the main reduction structure) to fail due to intermediate lines in the main reduction structure not being cleared. In addition, due to the width of the entrance into the long tunnel, only $\II$ pieces can go into the main reduction structure; all other pieces must be placed within the top $2N+4$ lines. Thus, we get NP-hardness of Tetris clearing even if the piece sequence must be able to be generated from the 7-bag randomizer.

We make some additional remarks regarding this reduction:

\begin{itemize}
    \item This works if the player is allowed to \emph{hold} pieces, or even \emph{permute} the order of the pieces in the input sequence to their liking, even if the final ordering is not one that can be generated by the 7-bag randomizer. Having the ability to delay specific pieces (or even choose the order in which pieces are placed) does not affect the clearability of the board as only $\II$ pieces can go into the main reduction structure, which is what matters for clearing the whole board.
    \item This structure can be easily modified to prove NP-hardness for \defn{$7k$-bag randomizers} for any positive integer $k\geq 1$. First, we can make small adjustments to the main reduction structure such that $N/k$ is an integer, and give the player $N/k+5$ bags. Then, we make $k$ copies of the additional structure, which now takes up $2N/k+4$ lines, and place them in parallel in the top $2N/k+4$ lines, but only have one copy leading to the main reduction structure. One can still argue that, for each of the first $N/k$ bags, exactly $k$ additional $\II$ pieces get put into the main reduction structure, though the argument is slightly trickier (in particular, the general idea is to prioritize placing non-$\II$ pieces into the copies of the additional structure not leading into the main reduction structure).
\end{itemize}

\section{$\JJ$-tris/$\LL$-tris Clearing}\label{sec:jtrisclearing}

\begin{theorem}
    Tetris clearing with SRS is NP-hard even if the type of pieces in the sequence given to the player is restricted to just $\JJ$ pieces, or the type of pieces in the sequence given to the player is restricted to just $\LL$ pieces.
\end{theorem}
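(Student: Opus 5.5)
We reduce from Planar Biconnected $\{\{0,4\}\text{-in-4},\text{3-in-4}\}$ Graph Orientation, which the paper has shown NP-hard. We prove the $\JJ$ case. The $\LL$ case then follows by mirroring the whole board across a vertical line, because the SRS kick tables for $\JJ$ and $\LL$ are mirror images of each other (Table~\ref{tab:kickdataMain}). The overall template is the one used for $\II$-tris in Section~\ref{sec:itrisclearing}.

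\textbf{Board frame.} The board contains an entrance tunnel that admits $\JJ$ pieces only by a specific sequence of SRS kicks. Any piece that locks inside this tunnel blocks all later access. The tunnel also prevents every row of the main region from clearing until that region is completely filled. A few cells in a side column are filled last. We give exactly $M/4$ pieces, where $M$ is the number of empty cells. Under these constraints, clearing the board is equivalent to tiling the main region with $\JJ$ tetrominoes, and the tiling must be realizable by a sequence of SRS moves.

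\textbf{Embedding the graph.} We first embed the planar 4-regular graph rectilinearly on a grid with large spacing. Biconnectivity is what allows this: using an $st$-numbering or a bipolar orientation, each vertex has its edges partitioned into ``upper'' and ``lower'' sets. Every vertex can therefore be entered from above, and a global filling order exists in which each gadget still has an open path to the entrance when it is filled.

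\textbf{Gadgets.} We build the following:
\begin{itemize}
\item wires whose tilings by $\JJ$ pieces come in exactly two parity phases, which encode the two orientations of an edge;
\item corners and turns;
\item a $\{0,4\}$-in-4 vertex gadget admitting exactly two tilings, forcing all four incident wires into the same phase;
\item a 3-in-4 vertex gadget whose area and local structure force exactly one incident wire into the ``out'' phase.
\end{itemize}
Each gadget comes with a forced-placement analysis and an explicit piece ordering, given in an appendix in the style of the $\II$ maneuvers. No crossover gadget is needed because the graph is planar.

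\textbf{Correctness.} The forward direction is to read the orientation off a tiling. For the converse, given a valid orientation, we fill the gadgets in reverse of the bipolar order, farthest from the entrance first. Wires are filled immediately before their later endpoint, as in Section~\ref{sec:itrisclearing}.

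\textbf{Main obstacle.} The hardest step is designing the wire and vertex gadgets so that two requirements hold at once. First, the tilings must be rigid: only the intended ones exist, which can be checked by forced-placement case analysis from a corner cell. Second, every intended tiling must be reachable under SRS while the gadget is partially filled. $\JJ$ pieces do not turn corners as cleanly as $\II$ pieces do. I would start by finding a ``staircase'' corridor in which repeated $\JJ$ spins (kick tests 4/5 of Table~\ref{tab:kickdataMain}) carry a piece both downward and upward. Each gadget would then be shaped so that, in every partial state, the earlier kick tests fail exactly as needed. I would verify this by exhaustive simulation of the small gadgets.
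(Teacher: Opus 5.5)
Your outline follows the paper's route closely. The shared elements are:
\begin{itemize}
\item the same source problem;
\item a long entrance tunnel with a climbable vertical part that blocks line clears until the main region is full;
\item exactly $M/4$ pieces;
\item planar gadgets with no crossover;
\item a drawing based on an st-ordering, filled bottom-up;
\item mirroring for $\LL$.
\end{itemize}
As written, though, it has two concrete holes.

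\textbf{The top vertex.} Your claim that every vertex ``can therefore be entered from above'' fails for the extreme vertex of the bipolar order. The topmost vertex has all of its neighbors below it. You also cannot simply attach the entrance to its gadget: that adds a fifth port and breaks the vertex's in-degree constraint. The paper takes its grid drawing from the BICONN algorithm of Biedl and Kant, adds an artificial root at the top-left bend of the topmost edge, and places the entry-corner gadget there. That gadget has a forced upper part and otherwise behaves as an ordinary U-turn wire, so no vertex gadget changes degree. From the root, every vertex and every point of every edge is reachable by a $y$-monotone downward path. This is exactly the property your reverse filling order needs, and a bare st-numbering does not supply it without such a drawing and such an entry point.

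\textbf{Alignment modulo $4$.} You never address it. In the paper, the $\JJ$ wires (horizontal lines, U-turns) are lengthened in multiples of $4$ and preserve port coordinates modulo $4$. So whether two ports can be joined at all depends on their relative offset modulo $4$ in both coordinates. The paper's $\{0,4\}$-in-$4$ and $3$-in-$4$ gadgets have ports that do not agree modulo $4$, and a large grid spacing does not change this. The paper therefore adds two parity-fixer gadgets, shifting domino ports by $(+1,+2)$ and $(-1,+1)$ modulo $4$; together these reach every residue pair. Each has exactly two tilings, corresponding to the two edge orientations. You need either such gadgets or vertex gadgets whose ports all agree modulo $4$, and your gadget list provides neither.

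\textbf{Deferred gadgets.} The gadgets you leave to later design and simulation are the substance of the paper's proof. That means explicit shapes, forced-placement arguments, and SRS placement orders, plus the unique tiling and fill order of the climbable tunnel. Until those are supplied, the proposal is a plan rather than a proof.
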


We focus on $\JJ$ pieces; the $\LL$ pieces case is symmetric.

Here, we will reduce from Planar Biconnected $\{\{0, 4\}\text{-in-4},\text{3-in-4}\}$ Graph Orientation. We first describe the general structure of the construction, which is similar to the general structure for the $\II$-tris reduction and is shown in Figure~\ref{fig:j_structure}. The main part of the construction, corresponding to the Planar Biconnected $\{\{0, 4\}\text{-in-4},\text{3-in-4}\}$ Graph Orientation instance, is indicated in green. There is a path of empty squares to the main part of the construction, which is located mostly within the leftmost 12 columns or the topmost 8 rows of the board, and which we will refer to as the ``long tunnel''; we will also refer to the vertical structure between columns $8$ and $11$ (from the left) as the ``climbable vertical structure''. There are 4 additional empty squares in the rightmost three columns of the board. We give the player just enough pieces so that clearing the board requires use of all pieces; in other words, if there are $M$ empty squares in the construction, we give the player $M/4$ $\JJ$ pieces.

\begin{figure}[!ht]
    \centering
    \includegraphics[width=320pt]{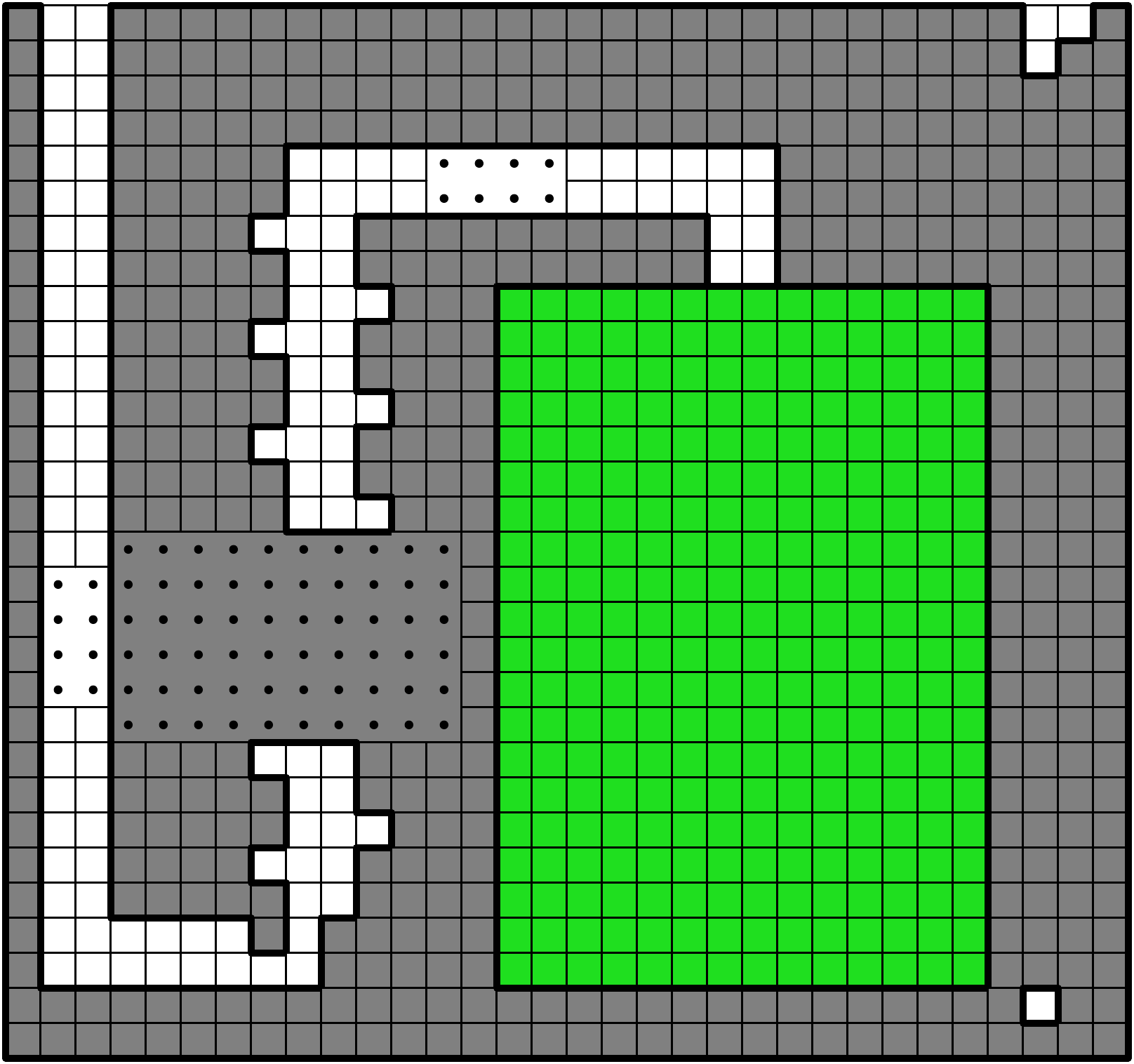}
    \caption{The general structure of the construction for Tetris clearing using only $\JJ$ pieces. The main part of the construction is indicated in green.}
    \label{fig:j_structure}
\end{figure}

Similar to the $\II$-tris reduction, we note that the 4 empty squares in the three rightmost columns must be the last squares to be filled in, after the middle rows are cleared. In addition, there is exactly one way to fill in/tile the long tunnel with $\JJ$ pieces, as shown in Figures~\ref{fig:j_structure_fill} and \ref{fig:j_cvs_fill}, and it is possible (under SRS) to get $\JJ$ pieces through the long tunnel (even the climbable vertical structure) into the main part of the construction (see the maneuver in Appendix~\ref{subsec:j_maneuver1}). Locking any $\JJ$ piece in place in the long tunnel before the main part of the construction is completely filled blocks off the main part of the construction, causing the board to become unclearable. Thus, no $\JJ$ piece can be placed in the long tunnel before we are done filling in the main part of the construction. Lastly, before we place any $\JJ$ pieces in the long tunnel, the long tunnel also prevents line clears in the rows corresponding to the main part of the construction.

\begin{figure}[!ht]
    \centering
    \includegraphics[width=320pt]{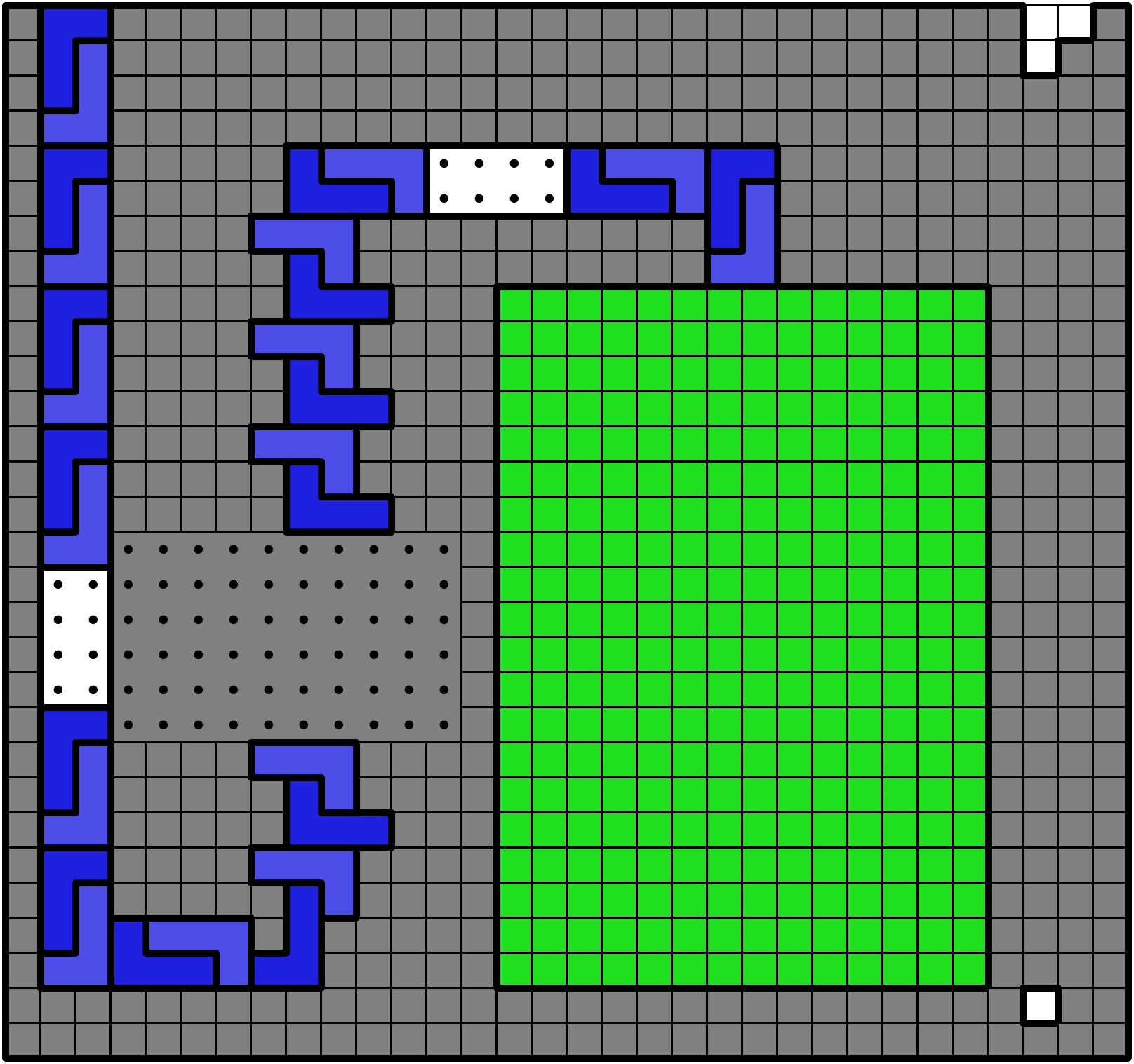}
    \caption{The only possible tiling of the climbable vertical structure with $\JJ$ pieces.}
    \label{fig:j_structure_fill}
\end{figure}

\begin{figure}[!ht]
    \centering
    \begin{subfigure}[b]{0.45\textwidth}
        \centering
        \includegraphics[width=150pt]{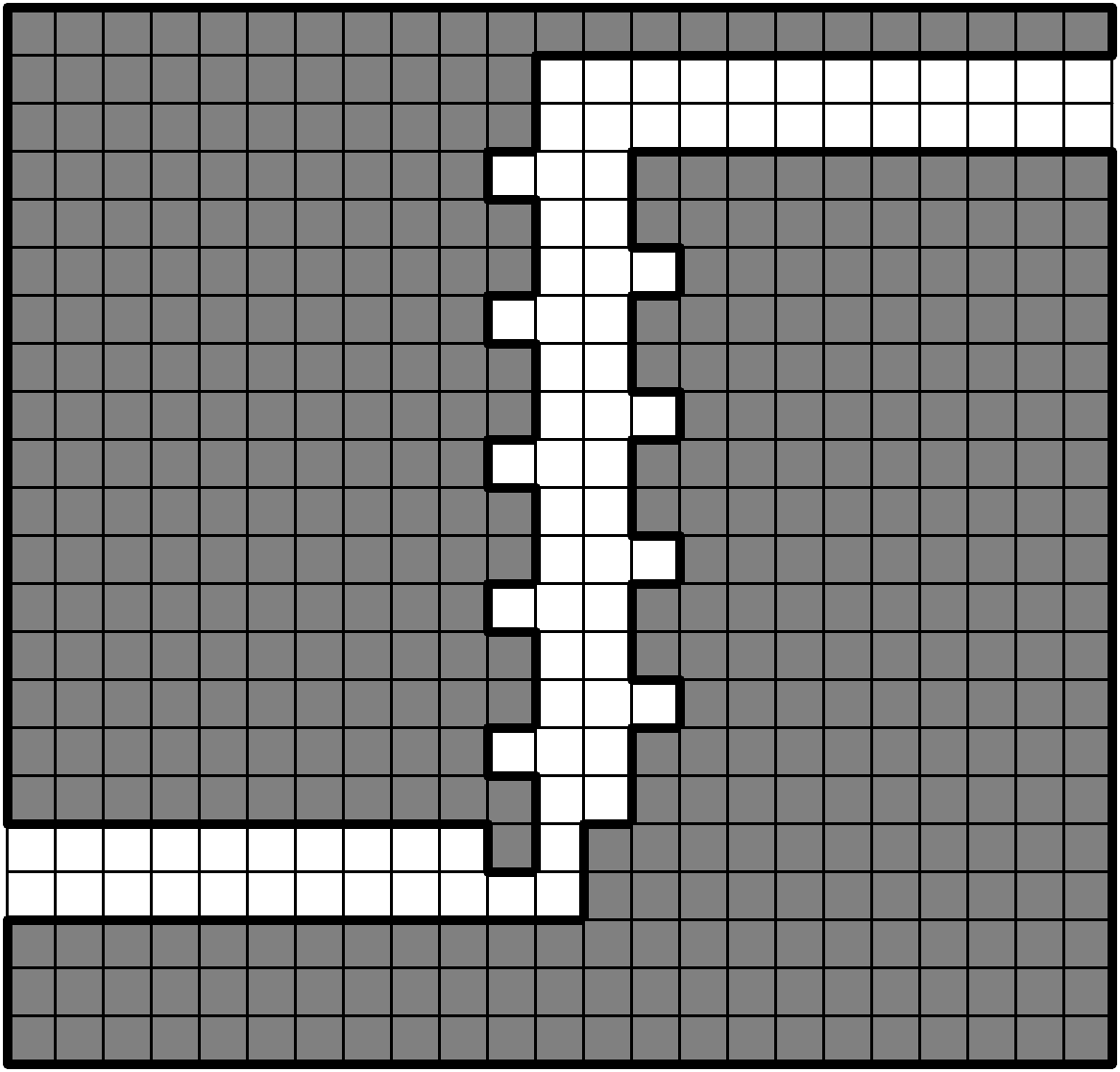}
        \caption{Unfilled}
    \end{subfigure}
    \begin{subfigure}[b]{0.45\textwidth}
        \centering
        \includegraphics[width=150pt]{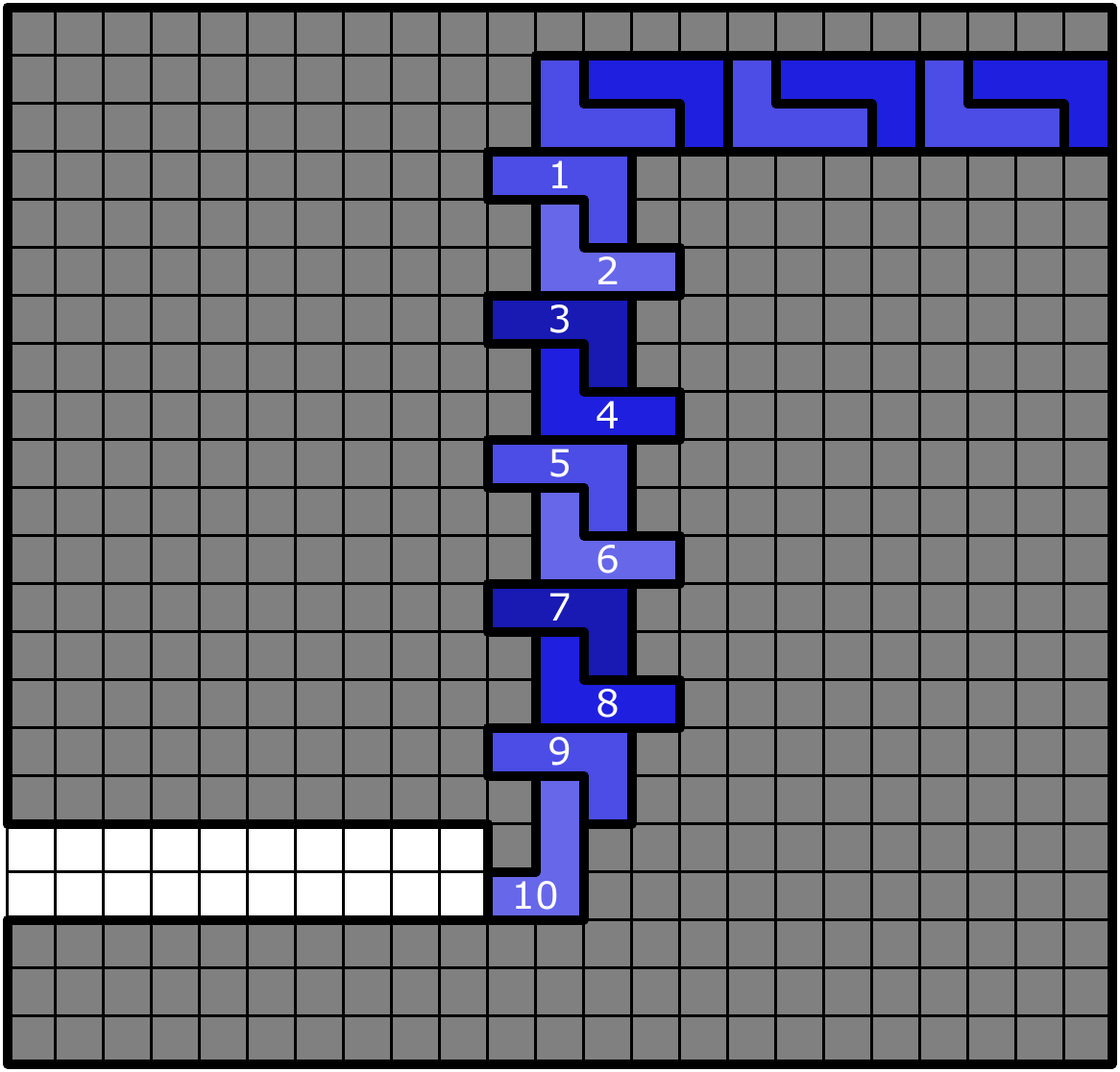}
        \caption{Filled}
    \end{subfigure}
    \caption{A closer look at a part of the climbable vertical structure, including the piece ordering of one of the segments of the climbable vertical structure.}
    \label{fig:j_cvs_fill}
\end{figure}

Thus, we need to be able to tile the main part of the construction with $\JJ$ pieces exactly, with the additional restriction that we need to be able to maneuver all the pieces into place through Tetris and SRS rules.

\subsection{Gadgets}

Now, we introduce the gadgets used in the main part of the construction. Like with the $\II$-tris gadgets, we will call the entrances \textbf{ports}, which will be indicated as empty squares with dots in their centers. We will need $0$-or-$4$ (corresponding to vertices that must have indegree either $0$ or $4$), $3$-in-$4$ (corresponding to vertices that must have indegree $3$), horizontal line and U-turn gadgets (corresponding to edges between vertices), along with a special gadget called a parity fixer that helps ``align'' the coordinates of the ports between gadgets modulo $4$ in case the coordinates of the ports modulo $4$ are not the same.

Like with the $\II$-tris gadgets, we want the gadgets to have the following properties:

\begin{itemize}
    \item Any traversal from a port with a higher $y$-coordinate to a lower $y$-coordinate, or between ports with the same $y$-coordinate, is possible.
    \item The possible ways to tile the gadgets either correspond to a setting of the edges around a vertex (where an edge is pointed into a vertex if the port is filled by a piece in the tiling and out of a vertex if the port is not filled by a piece in the tiling), in the case of the $0$-or-$4$ gadgets and the $3$-in-$4$ gadget, or the setting of an edge, in the case of the other gadgets.
    \item For each possible tiling, each individual gadget can be filled with $\JJ$ pieces under SRS, where if a gadget has an up port, the $\JJ$ pieces come from the up port, and otherwise the $\JJ$ pieces come from a left or right port. (In other words, the $\JJ$ pieces are able to move to where they need to go in the tiling, particularly as the gadget is partially filled with $\JJ$ pieces.)
\end{itemize}

\subsubsection{Horizontal Line and U-Turn gadgets}

The horizontal line (HL) and the two versions of the U-turn (UL, UR) gadgets (one left-down-right version and one right-down-left version), along with their ports, are shown in Figure~\ref{fig:j_hl_and_u}. The HL gadget and the vertical part of the UL and UR gadget can be extended in length by multiples of $4$ as necessary. None of the gadgets change the coordinates of the ports modulo $4$.

Traversals between ports are clearly possible in the HL gadget assuming the $\JJ$ piece is in the default or $180^\circ$ orientation (where the piece is two squares tall), and through some rotations and one wall kick in the bottom corner of either U-turn gadget, any $\JJ$ piece can traverse from the up port to the down port while ending up in either the default orientation or the $180^\circ$ orientation at the down port. The possible tilings, along with an order in which the $\JJ$ pieces can be placed to ensure that the $\JJ$ pieces tile the gadget correctly, are shown in Figures~\ref{fig:j_horline_tilings} and \ref{fig:j_uturn_tilings}.

\begin{figure}[!ht]
    \centering
    \begin{subfigure}[b]{0.32\textwidth}
        \centering
        \includegraphics[width=120pt]{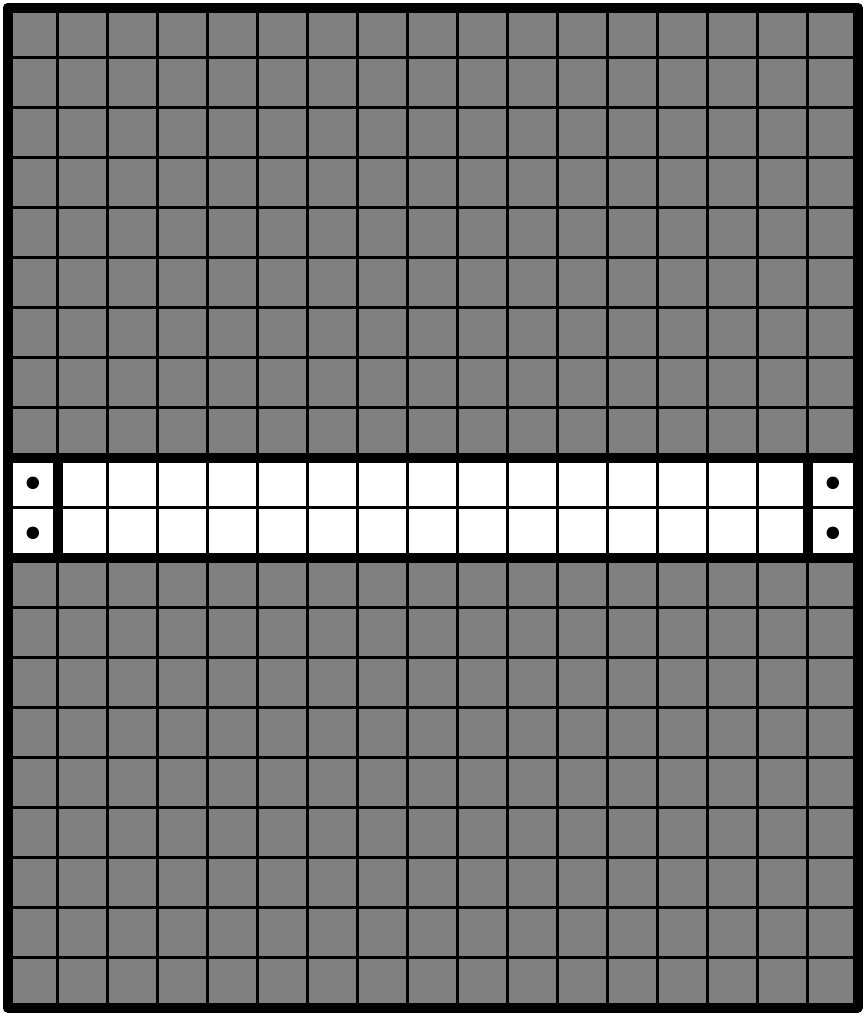}
        \caption{Horizontal Line (HL)}
    \end{subfigure}
    \begin{subfigure}[b]{0.32\textwidth}
        \centering
        \includegraphics[width=120pt]{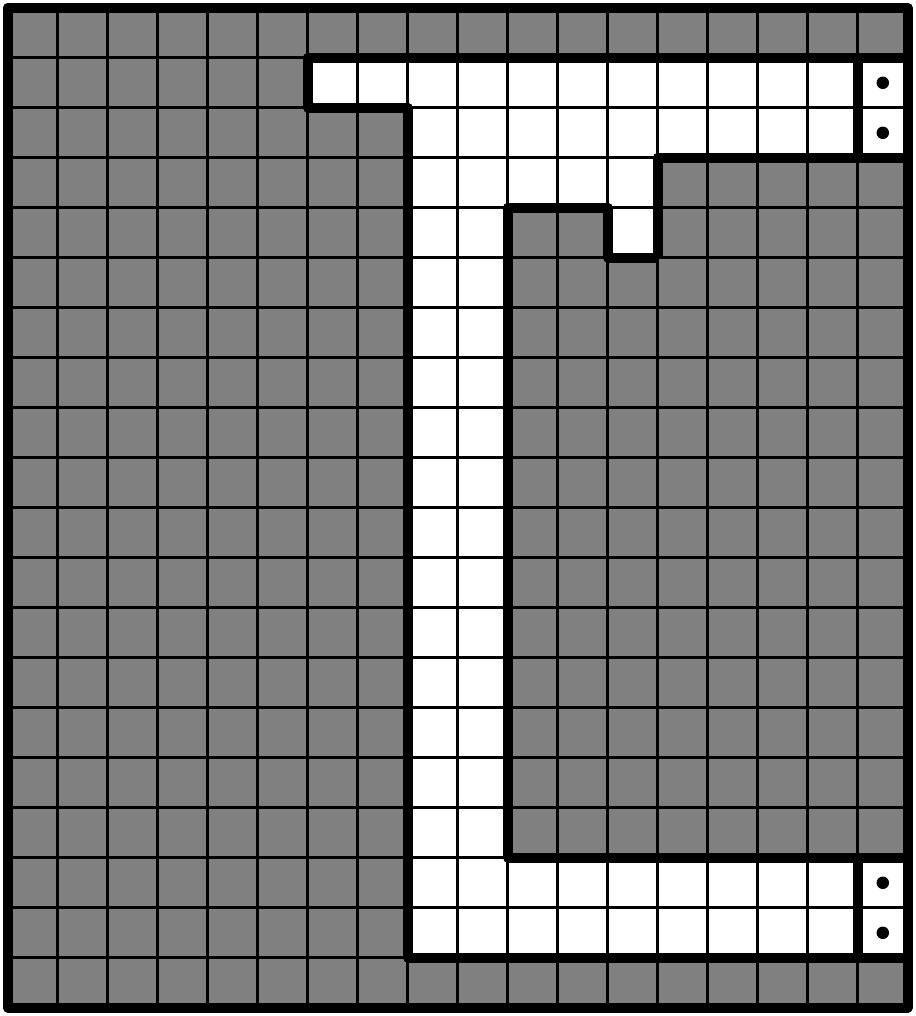}
        \caption{U-Turn (UL; left-down-right)}
    \end{subfigure}
    \begin{subfigure}[b]{0.32\textwidth}
        \centering
        \includegraphics[width=120pt]{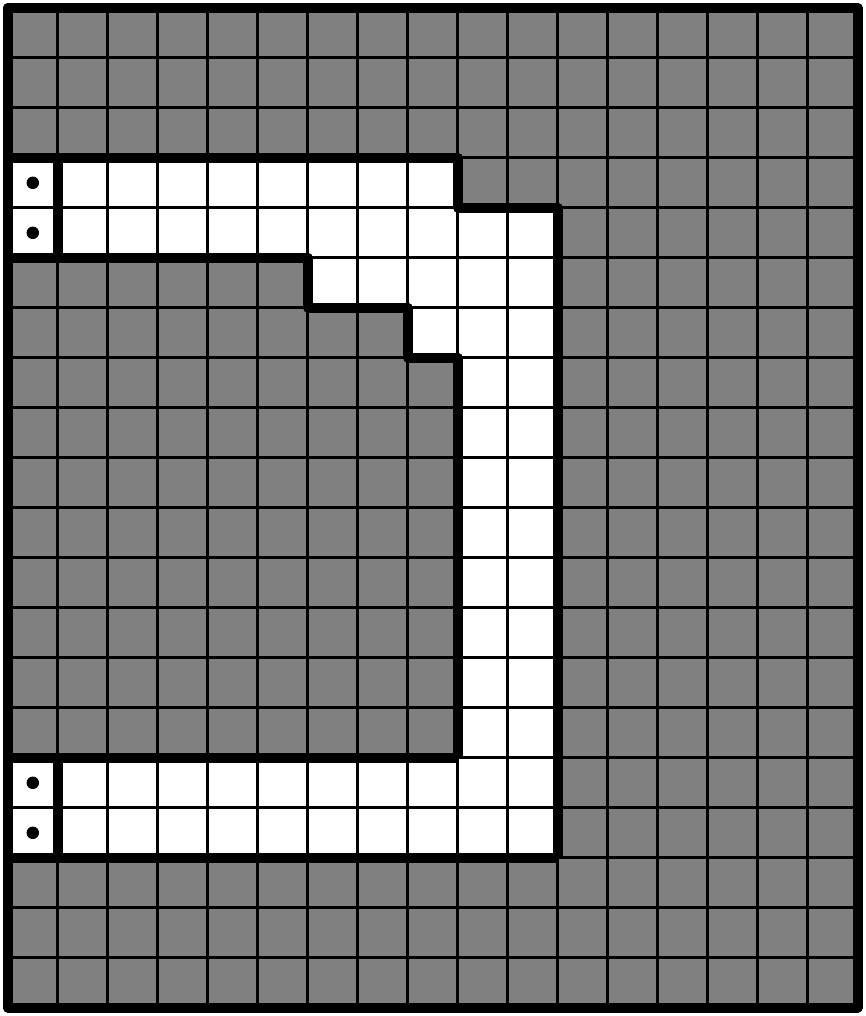}
        \caption{U-Turn (UR; right-down-left)}
    \end{subfigure}
    \caption{}
    \label{fig:j_hl_and_u}
\end{figure}

Because $\JJ$ pieces need to enter the main part of the gadget, we create a special \emph{entry corner (EC)} gadget, as shown in Figure~\ref{fig:j_entrycorner}. Due to the structure of the long tunnel, the upper-left part of the gadget must be tiled a specific way so that the majority of the gadget functions as a normal UL gadget. Using the necessary rotations and wall kicks, the traversal from the upper part of the gadget to either of the two ports is possible with the $\JJ$ piece ending up in either the default orientation or the $180^\circ$ orientation. The only possible tilings, along with an order in which the $\JJ$ pieces can be placed to ensure that the $\JJ$ pieces tile the gadget correctly, are shown in Figure~\ref{fig:j_entrycorner_tilings}.

\begin{figure}[!ht]
    \centering
    \includegraphics[width=150pt]{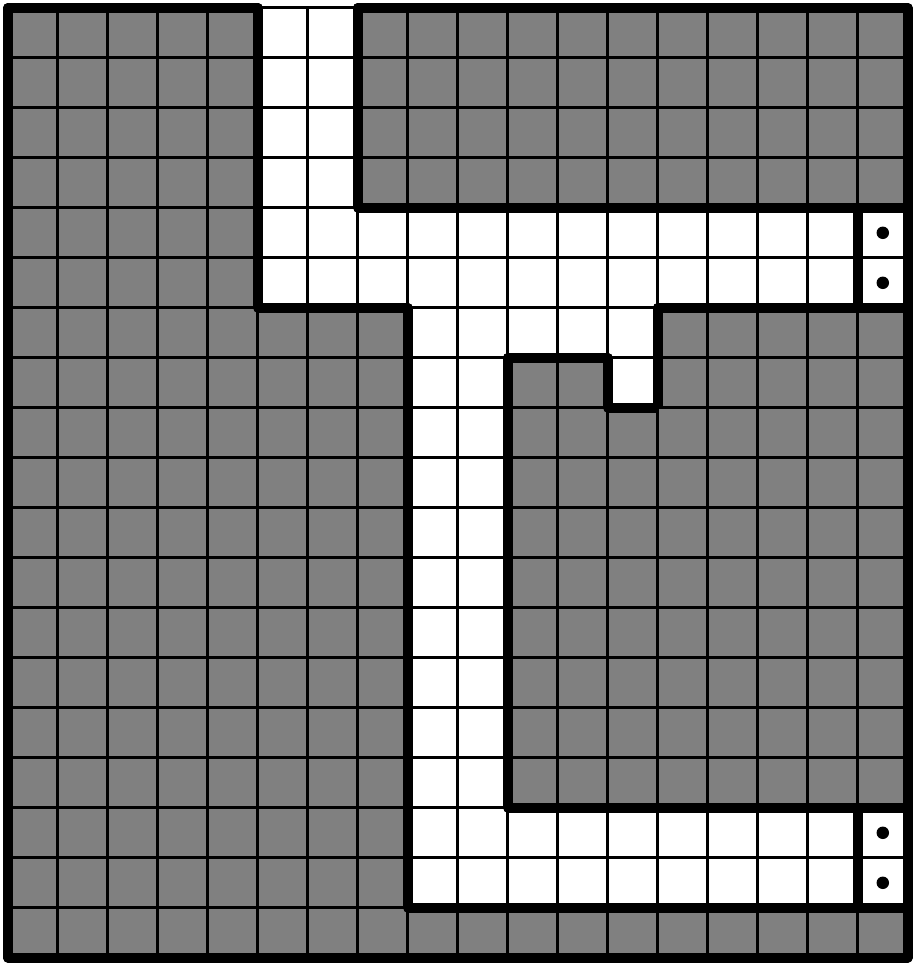}
    \caption{The entry corner (EC) gadget.}
    \label{fig:j_entrycorner}
\end{figure}

\subsubsection{$0$-or-$4$ and $3$-in-$4$ gadgets}

The $0$-or-$4$ and $3$-in-$4$ gadgets, along with their ports, are shown in Figure~\ref{fig:j_03mod4}. The ports in the gadgets do not agree modulo $4$, hence the requirement of the parity fixer gadgets.

Traversals from the upper ports to the lower ports, along with the filling of the gadgets, are possible through rotations and some wall kicks around the bends and the center area of the gadgets.

The only possible tilings, along with an order in which the $\JJ$ pieces can be placed to ensure that the $\JJ$ pieces tile the gadget correctly, are shown in Figures~\ref{fig:j_0mod4_tilings} and \ref{fig:j_3mod4_tilings}.

\begin{figure}[!ht]
    \centering
    \begin{subfigure}[b]{0.49\textwidth}
        \centering
        \includegraphics[width=150pt]{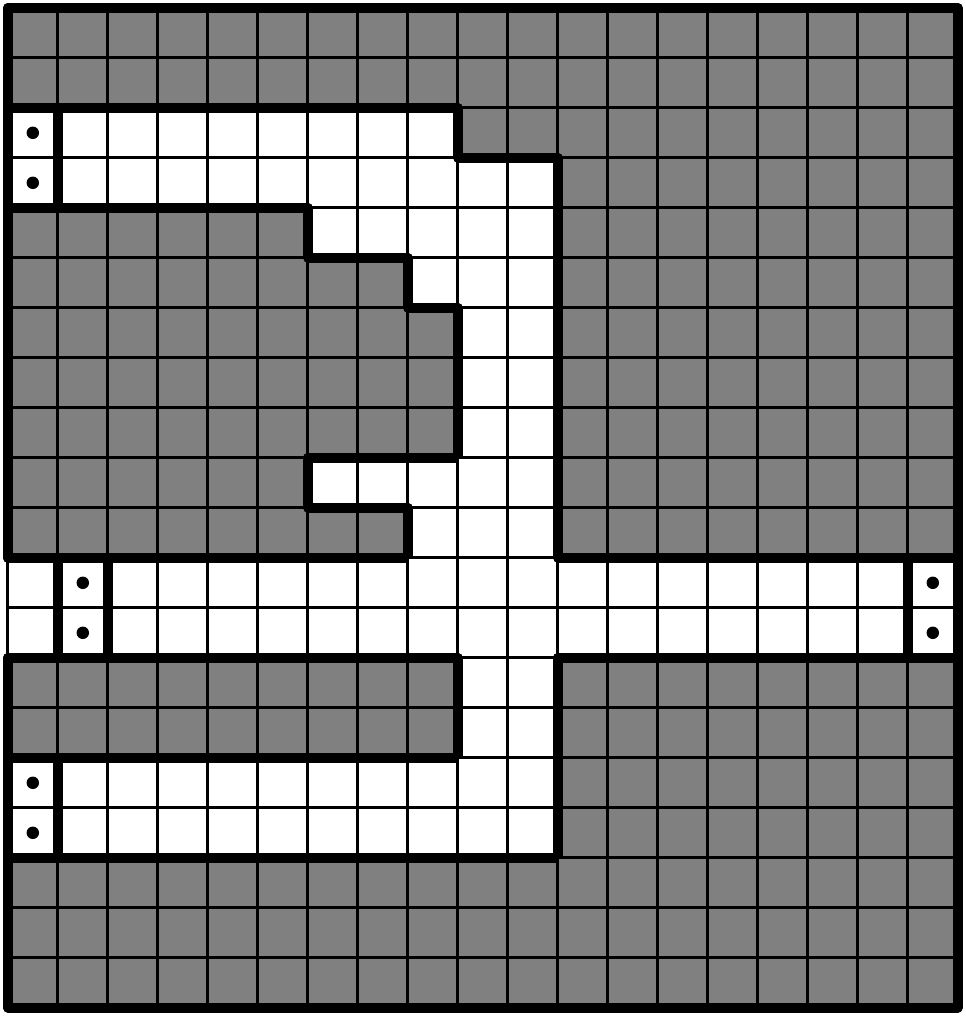}
        \caption{$0$-or-$4$}
    \end{subfigure}
    \begin{subfigure}[b]{0.49\textwidth}
        \centering
        \includegraphics[width=100pt]{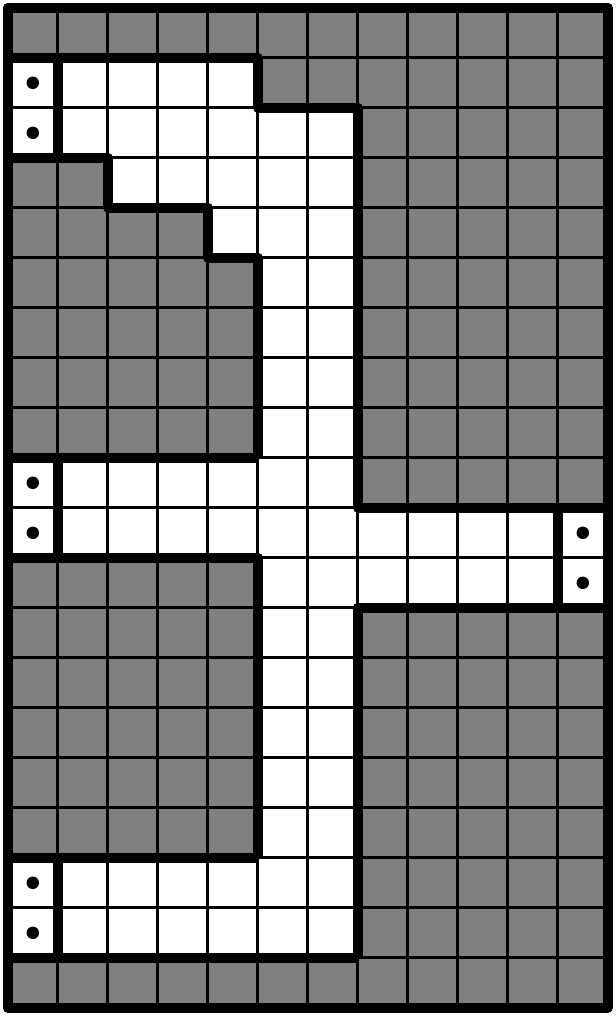}
        \caption{$3$-in-$4$}
    \end{subfigure}
    \caption{The $0$-or-$4$ and $3$-in-$4$ gadgets.}
    \label{fig:j_03mod4}
\end{figure}

\subsubsection{Parity-Fixer gadgets}

The two parity-fixer (PF) gadgets, along with their ports, are shown in Figure~\ref{fig:j_parityfixer}. The upper parity fixer shifts domino ports by $(+1, +2)$ modulo $4$, and the lower parity fixer shifts domino ports by $(-1, +1)$ modulo $4$ (here, $+x$ is to the right and $+y$ is upwards). Together, these parity fixers cover all possible $(x, y)$ combinations modulo $4$.

Traversals from the upper ports to the lower ports, along with the filling of the gadgets, are possible through rotations and some wall kicks around the bends of the gadgets.

Because these gadgets will be attached to the other gadgets, which have specific constraints on how they are tiled, each parity fixer has only two possible tilings, each one corresponding to covering one domino port but not the other. The possible tilings, along with an order in which the $\JJ$ pieces can be placed to ensure that the $\JJ$ pieces tile the gadget correctly, are shown in Figure~\ref{fig:j_parityfixer_tilings}.

\begin{figure}[!ht]
    \centering
    \includegraphics[width=100pt]{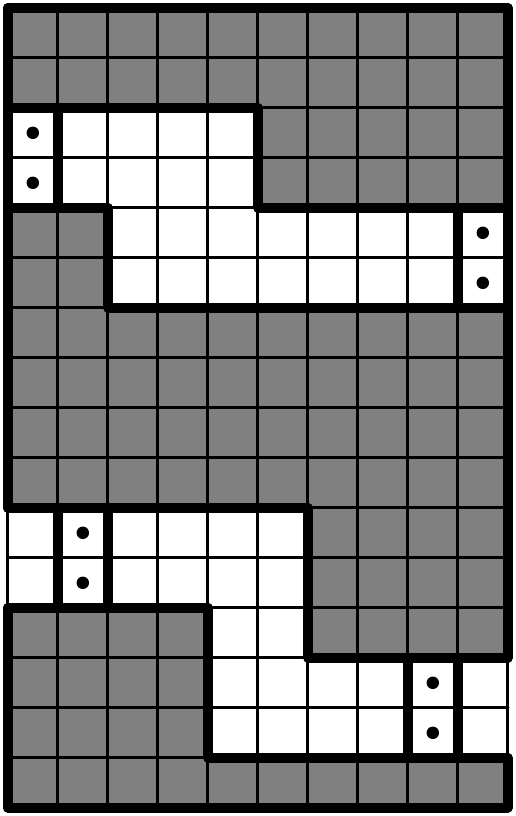}
    \caption{The two parity-fixer (PF) gadgets.}
    \label{fig:j_parityfixer}
\end{figure}

\subsection{Putting it all together}

Now we describe the main part of the construction. Take the Planar Biconnected $\{\{0, 4\}\text{-in-4},\text{3-in-4}\}$ Graph Orientation instance and compute an embedding of the graph on a grid using the linear-time BICONN algorithm in \cite{biedl1998better}. The computed embedding also has some special properties, namely that if we add an artificial ``root vertex'' to the top-left ``bend'' of the topmost edge, then all vertices are on different $y$-coordinates, and every point (vertex or along an edge) can be reached via a ``downward'' (i.e., monotone decreasing $y$-coordinate) path from the root vertex. We will call the latter property \emph{downward reachable}.

From here, we embed the grid graph drawing on the Tetris board with sufficient spacing around vertices, place an entry corner gadget at the root vertex, and place the $0$-or-$4$ and $3$-in-$4$ gadgets at their corresponding vertices (the former for vertices that must have indegree either $0$ or $4$, the latter at vertices that must have indegree $3$). Lastly, use horizontal lines, both U-turns, and both parity fixers as necessary to connect the $0$-or-$4$ and $3$-in-$4$ gadgets (these play the role of edges and transmit information about the orientations of the gadgets).

\subsection{Correctness}

As we have now described all of the parts of the construction, we now prove correctness of the construction; in other words, the whole construction can be cleared with $\JJ$ pieces under SRS if and only if the Planar Biconnected $\{\{0, 4\}\text{-in-4},\text{3-in-4}\}$ Graph Orientation instance has a solution. From the discussion at the beginning of Section~\ref{sec:jtrisclearing}, the whole construction can be cleared with $\JJ$ pieces under SRS if and only if the main part of the construction can be filled with $\JJ$ pieces under SRS.

First, if the main part of the construction can be filled with $\JJ$ pieces under SRS, then we necessarily get a tiling of the main part of the construction with $\JJ$ pieces. From there, we can read off the orientations of the gadgets to get the orientation of the edges, which is a solution to the Planar Biconnected $\{\{0, 4\}\text{-in-4},\text{3-in-4}\}$ Graph Orientation instance.

Now, suppose we have a solution to the Planar Biconnected $\{\{0, 4\}\text{-in-4},\text{3-in-4}\}$ Graph Orientation instance. Using the solution, we can quickly (i.e., in polynomial time) get the orientations of all of the gadgets. From here, we want to show that we can maneuver and place the $\JJ$ pieces into their desired locations in the main part of the construction.

The key idea is to use the special properties of the graph embedding. The general process can be described as follows:

\begin{enumerate}
    \item Start by filling in the horizontal line and U-turn gadgets that are on or touching the bottommost rows of the main part of the construction (see the rightmost figure in Figure 2 of \cite{biedl1998better}).
    \item Then, starting from the lowest $0$-or-$4$ or $3$-in-$4$ gadget and working upwards, fill in any of the horizontal line, U-turn, or parity fixer gadgets corresponding to any edges to a ``lower'' (smaller $y$-coordinate) $0$-or-$4$ or $3$-in-$4$ gadget, and then fill in the $0$-or-$4$ or $3$-in-$4$ gadget itself.
    \item Once all of the $0$-or-$4$ or $3$-in-$4$ gadgets are filled in, only the entry corner and some gadgets (horizontal line, U-turn, parity fixers) around the entry corner remain. Fill in the gadgets around the entry corner before filling in the entry corner itself.
\end{enumerate}

In particular, because all vertices are on different $y$-coordinates, we do not need to break ties when considering which $0$-or-$4$ or $3$-in-$4$ gadget is ``lower'' than the other. In addition, because the embedding is downward reachable, this process is guaranteed to fill in all of the gadgets in the main part of the construction as long as the gadgets are filled in reverse BFS order (from the root vertex). Thus, given a valid solution to the Planar Biconnected $\{\{0, 4\}\text{-in-4},\text{3-in-4}\}$ Graph Orientation instance, we are able to determine a way to fill the main part of the construction with $\JJ$ pieces under SRS.

Therefore, the main part of the construction can be filled with $\JJ$ pieces under SRS if and only if the Planar Biconnected $\{\{0, 4\}\text{-in-4},\text{3-in-4}\}$ Graph Orientation instance has a solution, meaning that our construction works as intended. Thus, we have shown that Tetris clearing with SRS is NP-hard even if the player is only given $\JJ$ pieces, as desired.

By symmetry, Tetris clearing with SRS is NP-hard even if the player is only given $\LL$ pieces.

\section{$\TT$-tris Clearing}\label{sec:ttrisclearing}

\begin{theorem}
    Tetris clearing with SRS is NP-hard even if the type of pieces in the sequence given to the player is restricted to just $\TT$ pieces.
\end{theorem}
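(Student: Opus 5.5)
We reduce from Planar Biconnected $\{\{0, 4\}\text{-in-4},\text{1-in-4}\}$ Graph Orientation, which was shown NP-hard above, and follow the template of the $\JJ$-tris reduction closely. The overall board has the same shape as Figure~\ref{fig:j_structure}. A long tunnel, including a climbable vertical structure built from SRS $\TT$-spin kicks, leads from the spawn area to a main region that encodes the graph. A few extra empty cells near the top-right and bottom-right corners must be filled last, after the middle rows clear. We give exactly $M/4$ $\TT$ pieces, where $M$ is the number of empty cells. The first step is to design the tunnel so that three things hold. First, it has a unique $\TT$-tiling. Second, a $\TT$ piece can be maneuvered through it; this uses a spin sequence exploiting the kick tests $(\pm1,+1)$ or $(\pm1,\mp2)$ from Table~\ref{tab:kickdataMain} to climb. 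Third, locking any piece inside it prematurely seals off the main region and prevents row clears there. As in Subsection~\ref{subsec:i_genstructure}, these properties reduce clearing the board to the following: exactly tile the main region with $\TT$ pieces, in an order realizable under SRS.

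Next we build gadgets. There is a horizontal line, two U-turns, an entry corner, parity fixers aligning port coordinates modulo $4$ (and possibly also the checkerboard colour, since every $\TT$ covers 3 cells of one colour and 1 of the other), a $\{0,4\}$-in-$4$ vertex gadget, and a 1-in-4 vertex gadget. Each gadget must satisfy three conditions. Its tilings must correspond exactly to the allowed local orientations, where a port filled by the gadget's own tiling means the edge points inward. Traversals from higher ports to lower or equal-height ports must be possible. Each tiling must also have an explicit placement order that is reachable by SRS moves from the gadget's entry port. For each gadget I would enumerate its tilings by a local case analysis starting from forced corner cells. For the maneuvers, I would exhibit explicit T-spin sequences, as in the appendices for the other pieces.

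For assembly, we use the BICONN grid embedding of \cite{biedl1998better}. Put the entry corner at the artificial root and the vertex gadgets at the vertices. Route edges using lines, U-turns and parity fixers, with lengths adjusted to multiples of $4$. Correctness then follows as in the $\JJ$ section. In one direction, any full tiling of the main region yields an orientation. In the other, given an orientation, we fill the edge gadgets along the bottom rows first. We then process vertices in increasing $y$-order, filling the edges to lower vertices and then the vertex gadget, and finish with the entry corner. Distinct $y$-coordinates and downward reachability guarantee that no unfilled gadget becomes inaccessible.

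The main obstacle is the vertex gadgets, especially the 1-in-4 gadget. The $\TT$ piece's tiling freedom is much greater than that of $\II$, so ruling out spurious tilings requires careful parity and colouring arguments. At the same time, every legitimate tiling must remain fillable by SRS spins through narrow corridors. I would first try to force rigidity with checkerboard-imbalance constraints on each chamber. After that, I would check reachability by simulating the kick tables on each intermediate partial filling.
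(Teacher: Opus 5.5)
Your blueprint matches the paper's proof essentially point for point:
\begin{itemize}
\item the same source problem, Planar Biconnected $\{\{0,4\}\text{-in-4},\text{1-in-4}\}$ Graph Orientation;
\item a long tunnel containing a $\TT$-climbable vertical structure with a unique tiling, plus a few right-hand cells that can only be filled last;
\item exactly $M/4$ pieces and the same gadget list;
\item the BICONN embedding with the entry corner at the artificial root;
\item the same bottom-up filling order, justified by distinct $y$-coordinates and downward reachability.
\end{itemize}

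\textbf{The gap.} Your proposal stops exactly where the proof's content begins. In a gadget reduction, existence of the gadgets is what must be proved: a climbable tunnel with a unique $\TT$-tiling, parity fixers, and above all $\{0,4\}$-in-4 and 1-in-4 gadgets whose $\TT$-tilings are exactly the allowed local orientations, each tiling realizable under SRS with pieces entering from the upper ports. You list these as requirements and describe how you would search for them, but you exhibit none of them. This is not a formality. Your blueprint is piece-agnostic; one could write it verbatim for $\OO$, where the paper leaves the problem open and suspects clearing is in P. Whether a $\TT$ can climb through a structure that still tiles uniquely depends entirely on the particular kick tests. The paper settles this step by giving concrete gadgets with enumerated tilings and numbered placement orders. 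It also gives explicit $\TT$-spin maneuvers for climbing and tucking, namely rotations to and from the counterclockwise-rotated orientation.

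\textbf{Checkerboard colour.} It needs no separate fixer. A cell's colour is determined by its coordinates modulo $2$, so it is already aligned once port coordinates agree modulo $4$. Nor can colour counting force rigidity of a chamber. It gives only one global congruence per region: the number of cells of each colour must have the same parity as the number of pieces. The forced-cell case analysis you mention first is what is needed, and it is what the paper relies on.

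\textbf{Port residues.} $\TT$ edge gadgets need not preserve port residues even when lengths are multiples of $4$. The paper's $\TT$ U-turn shifts ports by $(+2,+2)$ modulo $4$. This is harmless only because its two parity fixers, shifting by $(0,+1)$ and $(-1,+1)$, together generate every residue shift.
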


Here, we will reduce from Planar Biconnected $\{\{0, 4\}\text{-in-4},\text{1-in-4}\}$ Graph Orientation. We first describe the general structure of the construction, which is similar to the general structure for the $\II$-tris and $\JJ$-tris reduction and is shown in Figure~\ref{fig:t_structure}. The main part of the construction, corresponding to the Planar Biconnected $\{\{0, 4\}\text{-in-4},\text{1-in-4}\}$ Graph Orientation instance, is indicated in green. There is a path of empty squares to the main part of the construction, which is located mostly within the leftmost 23 columns or the topmost 11 rows of the board, and which we will refer to as the ``long tunnel''; we will also refer to the vertical structure between columns $7$ and $22$ (from the left) as the ``climbable vertical structure''. There are 4 additional empty squares in the three rightmost columns of the board. If there are $M$ empty squares in the construction, we give the player $M/4$ $\TT$ pieces.

\begin{figure}[!ht]
    \centering
    \includegraphics[width=320pt]{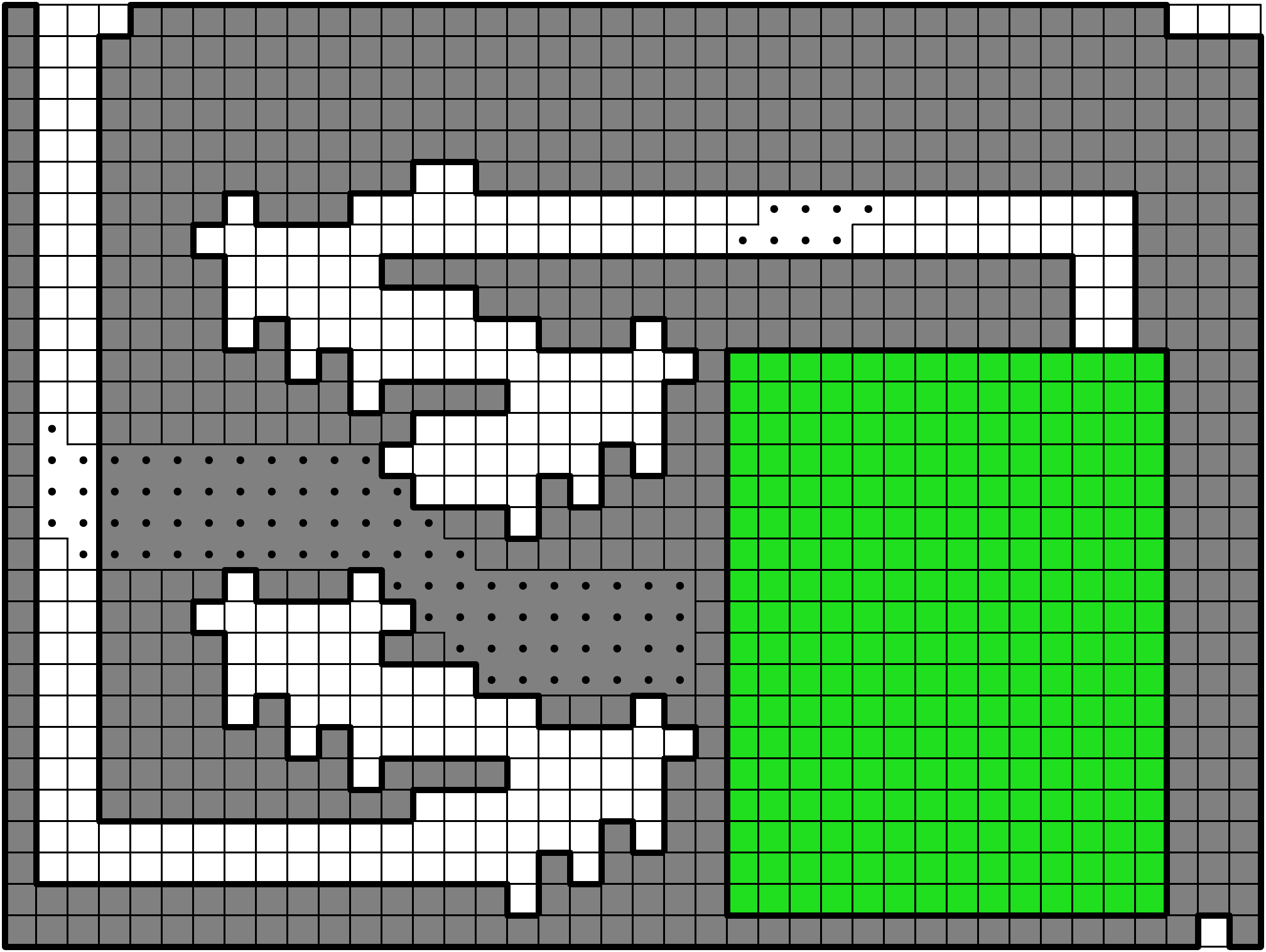}
    \caption{The general structure of the construction for Tetris clearing using only $\TT$ pieces. The main part of the construction is indicated in green.}
    \label{fig:t_structure}
\end{figure}

Similar to the $\JJ$-tris reduction, we note that the 4 empty squares in the three rightmost columns must be the last squares to be filled in, after the middle rows are cleared. In addition, there is exactly one way to fill in/tile the long tunnel with $\TT$ pieces, as shown in Figures~\ref{fig:t_structure_fill} and \ref{fig:t_cvs_fill}, and it is possible (under SRS) to get $\TT$ pieces through the long tunnel (even the climbable vertical structure) into the main part of the construction (see the maneuvers in Appendix~\ref{sec:t_maneuvers}). Locking any $\TT$ piece in place in the long tunnel before the main part of the construction is completely filled blocks off the main part of the construction, causing the board to become unclearable. Thus, no $\TT$ piece can be placed in the long tunnel before we are done filling in the main part of the construction. Lastly, before we place any $\TT$ pieces in the long tunnel, the long tunnel also prevents line clears in the rows corresponding to the main part of the construction.

\begin{figure}[!ht]
    \centering
    \includegraphics[width=320pt]{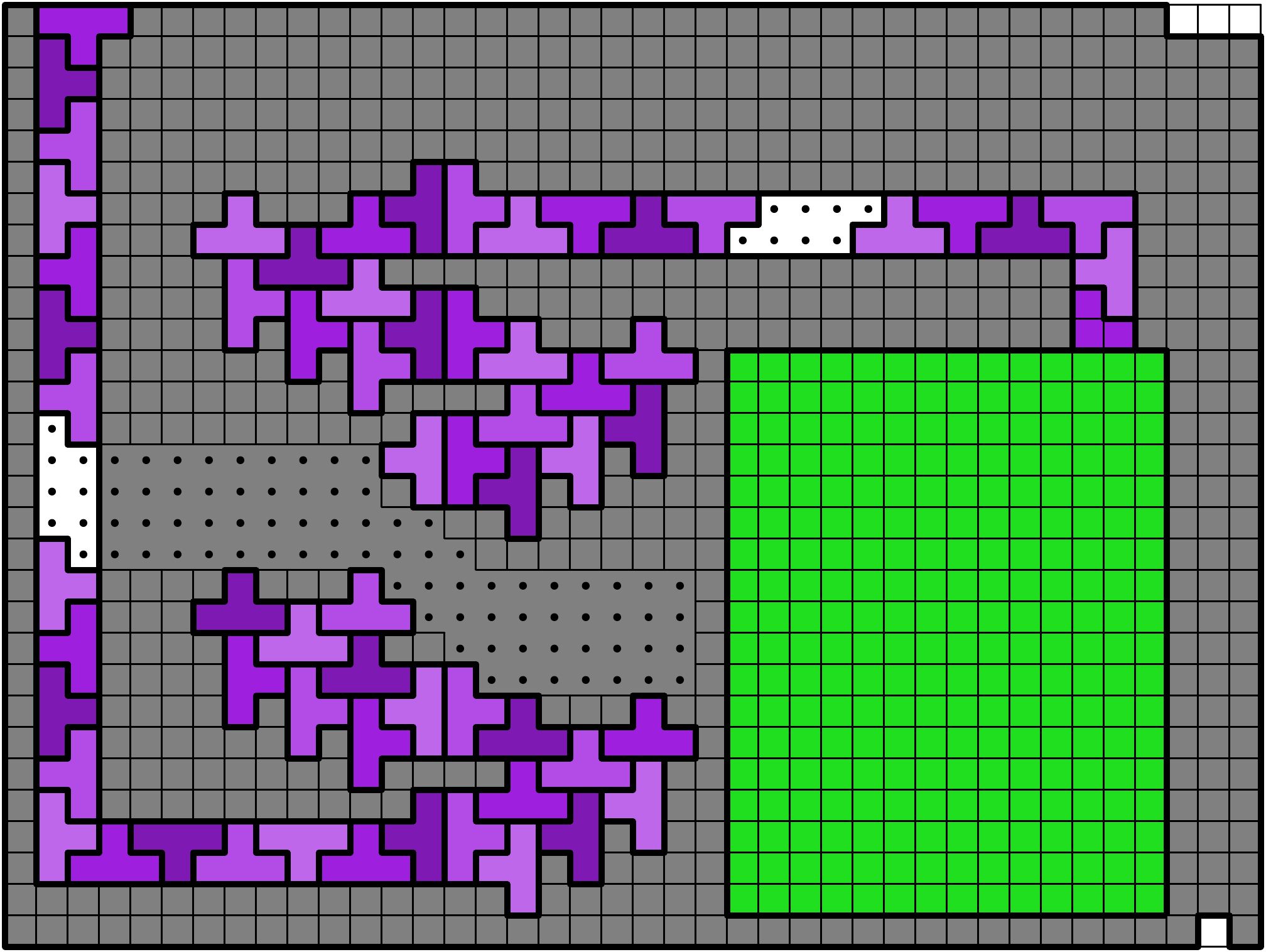}
    \caption{The only possible tiling of the climbable vertical structure with $\TT$ pieces.}
    \label{fig:t_structure_fill}
\end{figure}

\begin{figure}[!ht]
    \centering
    \begin{subfigure}[b]{0.45\textwidth}
        \centering
        \includegraphics[width=150pt]{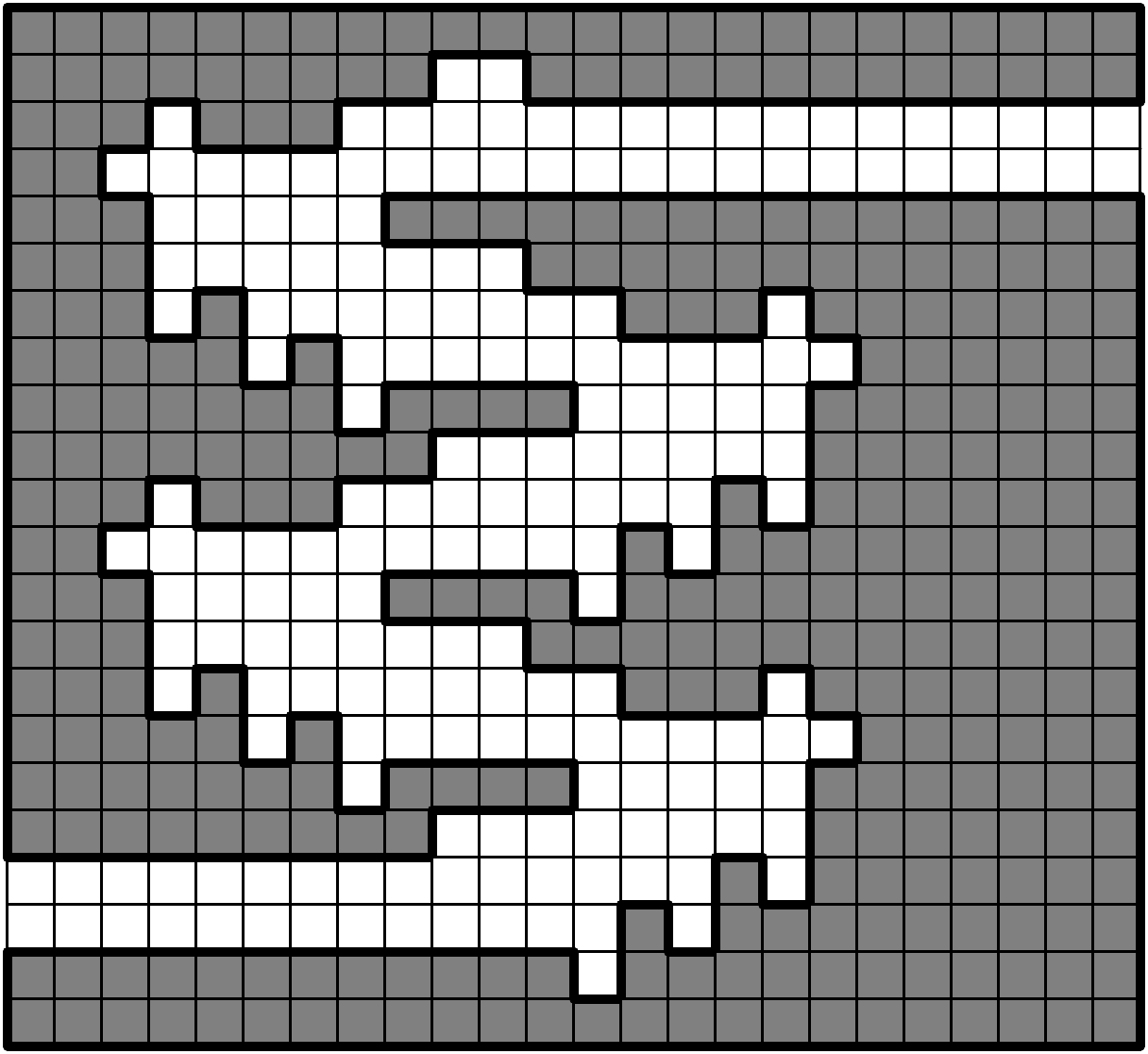}
        \caption{Unfilled}
    \end{subfigure}
    \begin{subfigure}[b]{0.45\textwidth}
        \centering
        \includegraphics[width=150pt]{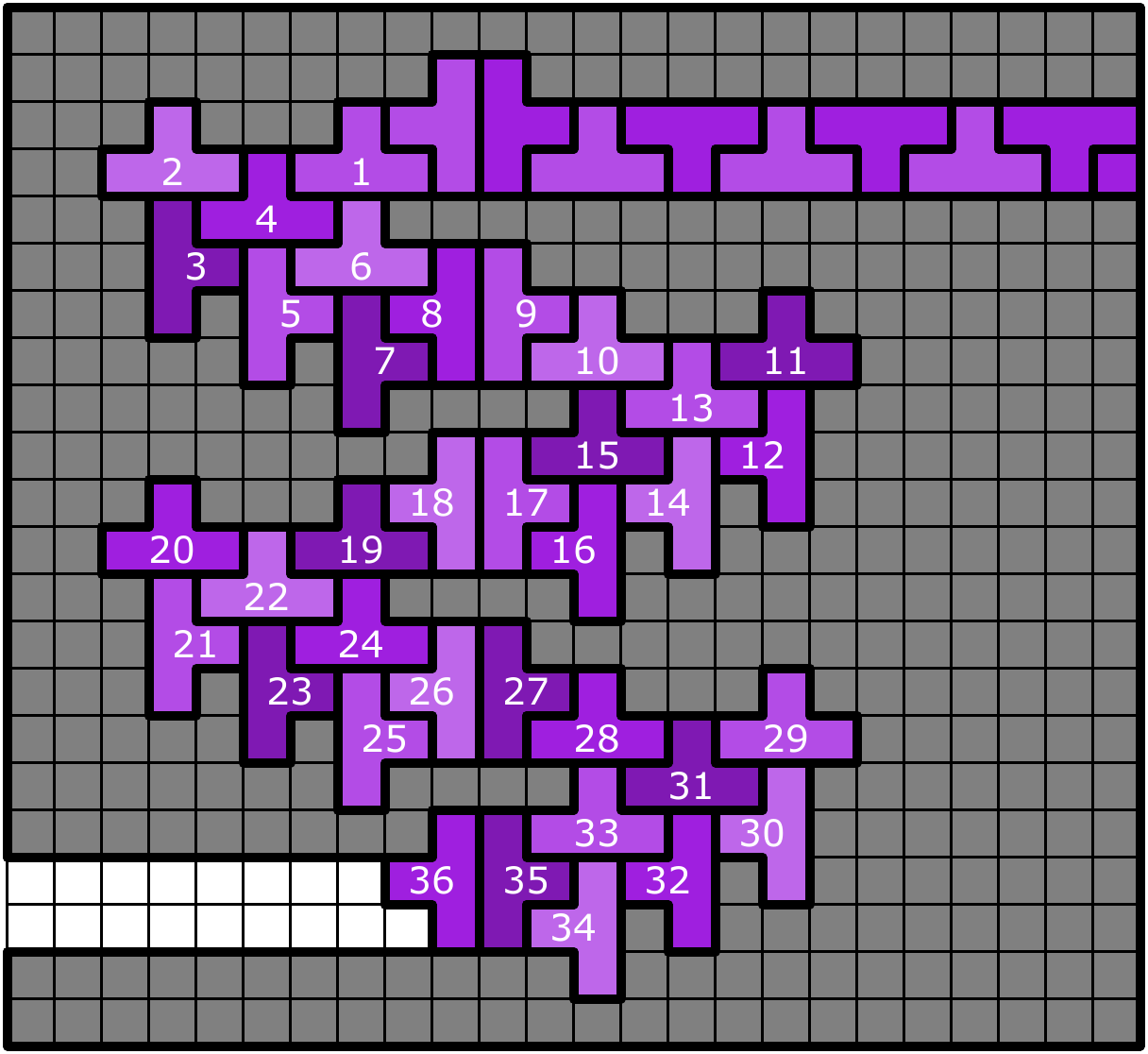}
        \caption{Filled}
    \end{subfigure}
    \caption{A closer look at a part of the climbable vertical structure, including the piece ordering of one of the segments of the climbable vertical structure.}
    \label{fig:t_cvs_fill}
\end{figure}

Thus, we need to be able to tile the main part of the construction with $\TT$ pieces exactly, with the additional restriction that we need to be able to maneuver all the pieces into place through Tetris and SRS rules.

\subsection{Gadgets}

Now, we introduce the gadgets used in the main part of the construction. Like with the $\JJ$-tris gadgets, we will call the entrances \textbf{ports}, which will be indicated by empty squares with dots in their centers. We will need $0$-or-$4$ (corresponding to vertices that must have indegree either $0$ or $4$), $1$-in-$4$ (corresponding to vertices that must have indegree $1$), horizontal line and U-turn gadgets (corresponding to edges between vertices), along with a special gadget called a parity fixer that helps ``align'' the coordinates of the ports between gadgets modulo $4$ in case the coordinates of the ports modulo $4$ are not the same.

Like with the $\JJ$-tris gadgets, we want the gadgets to have the following properties:

\begin{itemize}
    \item Any traversal from a port with a higher $y$-coordinate to a lower $y$-coordinate, or between ports with the same $y$-coordinate, is possible.
    \item The possible ways to tile the gadgets either correspond to a setting of the edges around a vertex (where an edge is pointed into a vertex if the port is filled by a piece in the tiling and out of a vertex if the port is not filled by a piece in the tiling), in the case of the $0$-or-$4$ gadgets and the $1$-in-$4$ gadget, or the setting of an edge, in the case of the other gadgets.
    \item For each possible tiling, each individual gadget can be filled with $\TT$ pieces under SRS, where if a gadget has an up port, the $\TT$ pieces come from the up port, and otherwise the $\TT$ pieces come from a left or right port. (In other words, the $\TT$ pieces are able to move to where they need to go in the tiling, particularly as the gadget is partially filled with $\TT$ pieces.)
\end{itemize}

\subsubsection{Horizontal Line and U-Turn gadgets}

The horizontal line (HL) and U-turn gadgets, along with their ports, are shown in Figure~\ref{fig:t_hl_and_u} (only the right-down-left version is shown; the left-down-right version is symmetric). The HL gadget and the vertical part of the U-turn gadget can be extended in length by multiples of $4$ as necessary. The HL gadget does not change the coordinates of the ports modulo $4$; however, the ports in the U-turn gadget are offset by $(+2, +2)$ modulo $4$.

Traversals between ports are clearly possible in the HL gadget assuming the $\TT$ piece is in the default or $180^\circ$ orientation (where the piece is two squares tall), and through some rotations (and some wall kick during the filling process), any $\TT$ piece can traverse from the up port to the down port while ending up in either the default orientation or the $180^\circ$ orientation at the down port. The possible tilings, along with an order in which the $\TT$ pieces can be placed to ensure that the $\TT$ pieces tile the gadget correctly, are shown in Figures~\ref{fig:t_horline_tilings} and \ref{fig:t_uturn_tilings}.

\begin{figure}[!ht]
    \centering
    \begin{subfigure}[b]{0.45\textwidth}
        \centering
        \includegraphics[width=150pt]{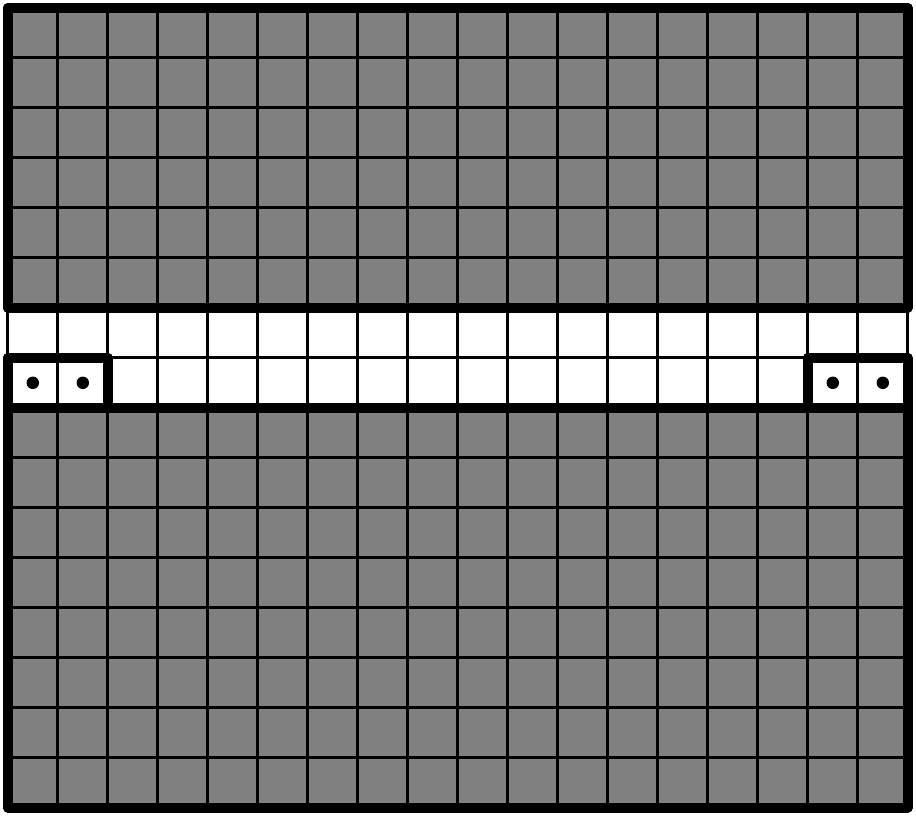}
        \caption{Horizontal Line (HL)}
    \end{subfigure}
    \begin{subfigure}[b]{0.45\textwidth}
        \centering
        \includegraphics[width=150pt]{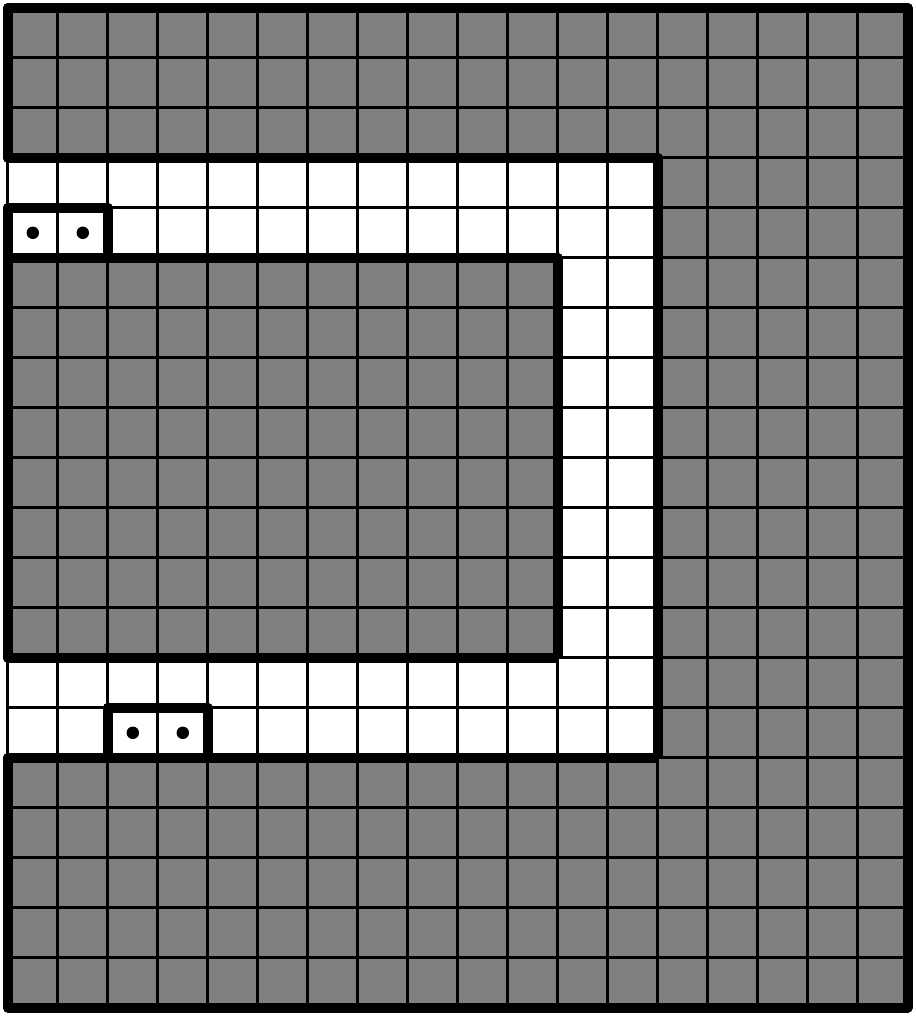}
        \caption{U-Turn}
    \end{subfigure}
    \caption{}
    \label{fig:t_hl_and_u}
\end{figure}

Because $\TT$ pieces need to enter the main part of the gadget, we create a special \emph{entry corner (EC)} gadget, as shown in Figure~\ref{fig:t_entrycorner}. Due to the structure of the long tunnel, the upper-right part of the gadget must be tiled a specific way so that the majority of the gadget functions as a normal U-turn gadget. Using the necessary rotations and wall kicks, the traversal from the upper part of the gadget to either of the two ports is possible with the $\TT$ piece ending up in either the default orientation or the $180^\circ$ orientation. The only possible tilings, along with an order in which the $\TT$ pieces can be placed to ensure that the $\TT$ pieces tile the gadget correctly, are shown in Figure~\ref{fig:t_entrycorner_tilings}.

\begin{figure}[!ht]
    \centering
    \includegraphics[width=150pt]{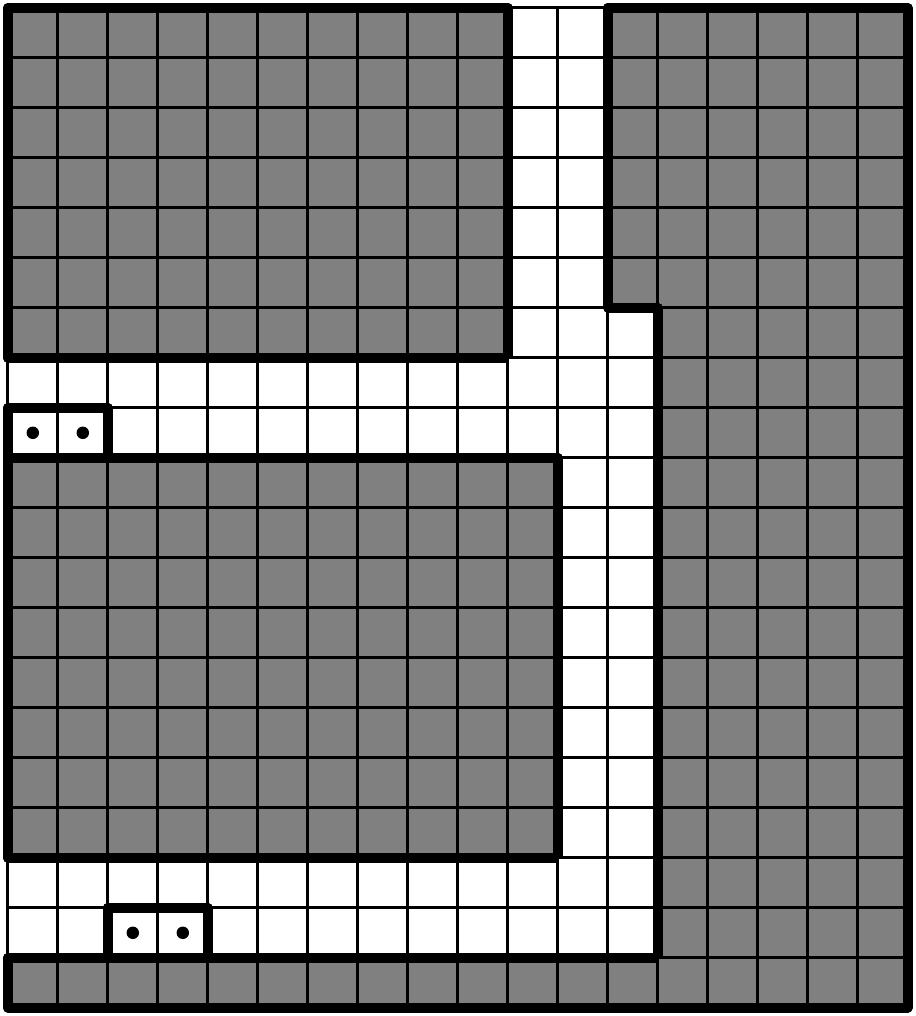}
    \caption{The entry corner (EC) gadget.}
    \label{fig:t_entrycorner}
\end{figure}

\subsubsection{$0$-or-$4$ and $1$-in-$4$ gadgets}

The $0$-or-$4$ and $1$-in-$4$ gadgets, along with their ports, are shown in Figure~\ref{fig:t_01mod4}. The ports in the gadgets do not agree modulo $4$, hence the requirement of the parity fixer gadgets.

Traversals from the upper ports to the lower ports, along with the filling of the gadgets, are possible through rotations and some wall kicks around the bends and the center area of the gadgets.

The only possible tilings, along with an order in which the $\TT$ pieces can be placed to ensure that the $\TT$ pieces tile the gadget correctly, are shown in Figures~\ref{fig:t_0mod4_tilings} and \ref{fig:t_1mod4_tilings}.

\begin{figure}[!ht]
    \centering
    \begin{subfigure}[b]{0.49\textwidth}
        \centering
        \includegraphics[width=150pt]{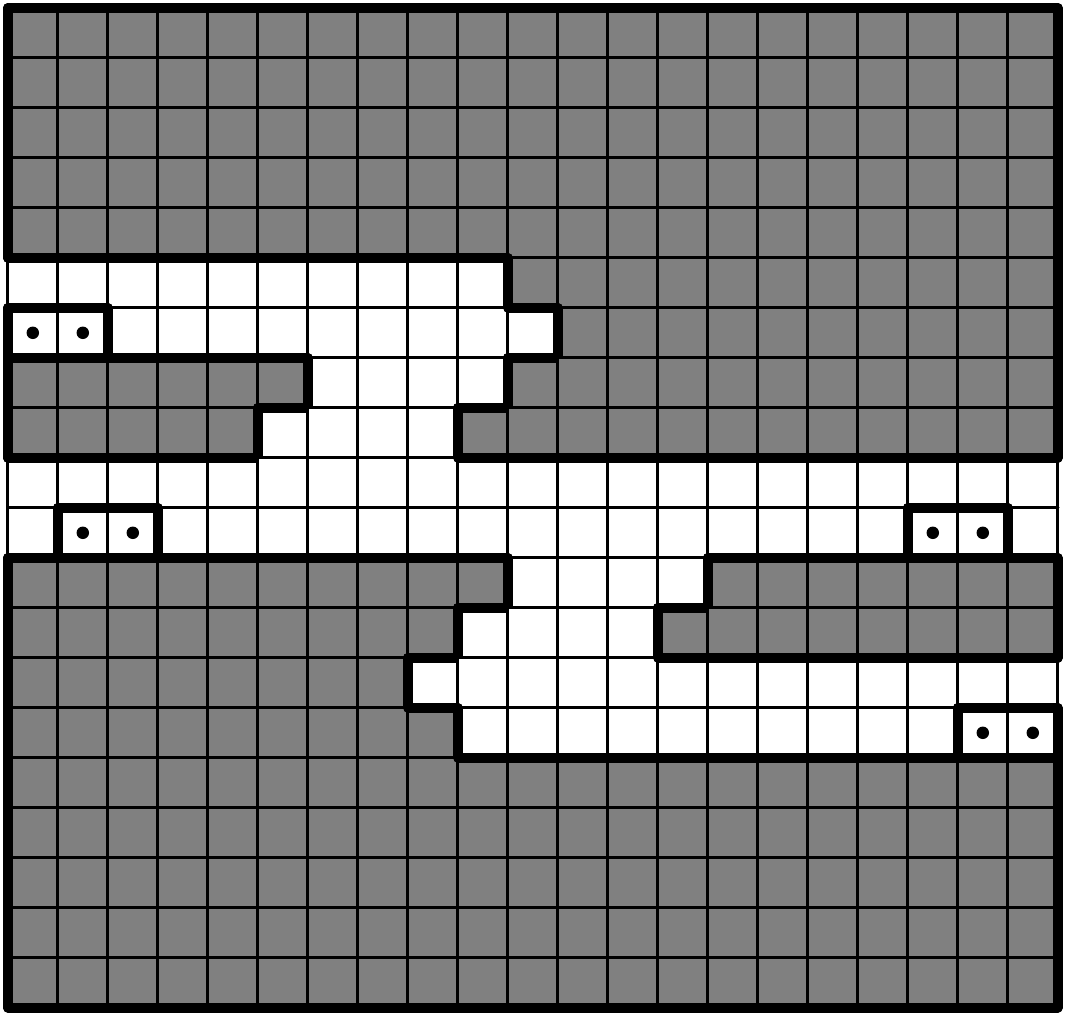}
        \caption{$0$-or-$4$}
    \end{subfigure}
    \begin{subfigure}[b]{0.49\textwidth}
        \centering
        \includegraphics[width=150pt]{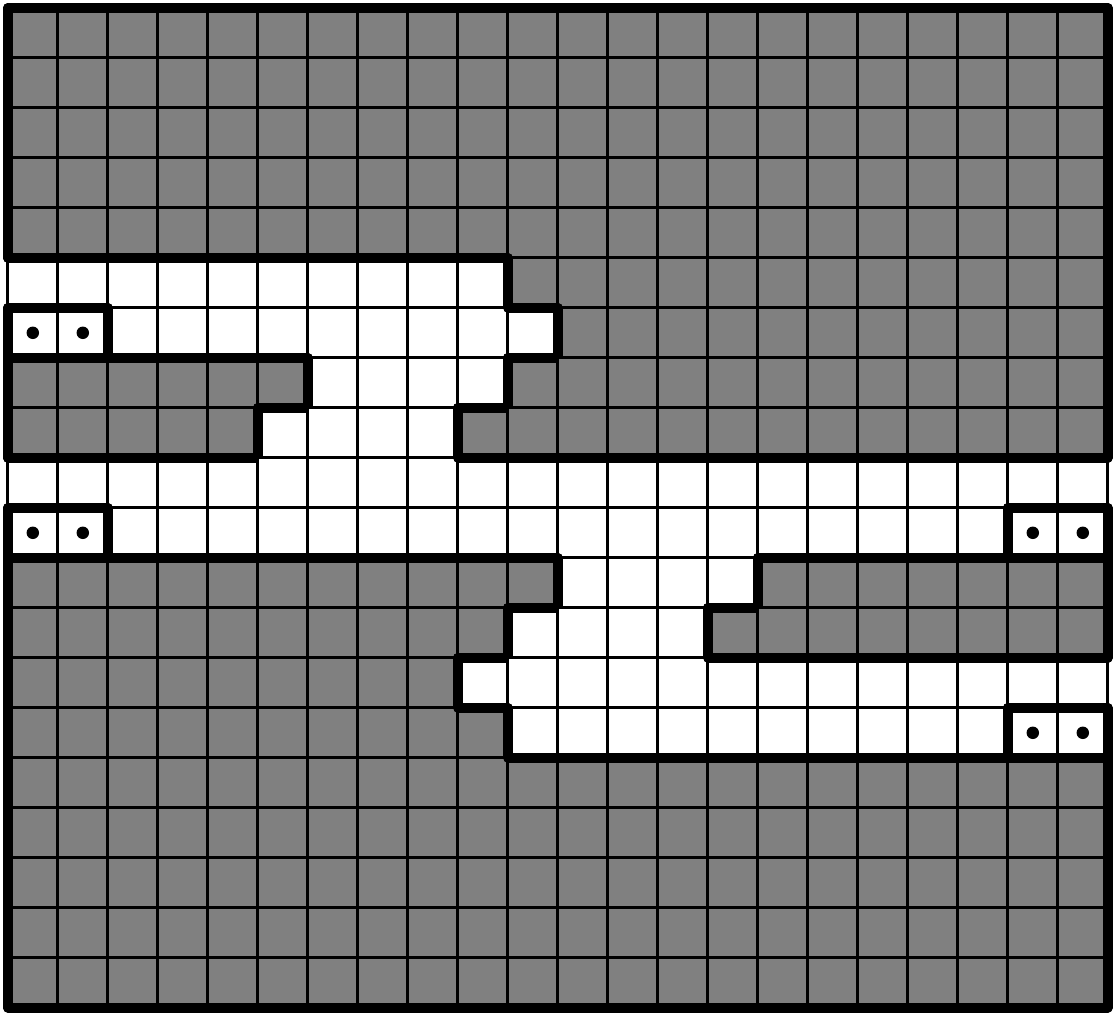}
        \caption{$1$-in-$4$}
    \end{subfigure}
    \caption{The $0$-or-$4$ and $1$-in-$4$ gadgets.}
    \label{fig:t_01mod4}
\end{figure}

\subsubsection{Parity-Fixer gadgets}

The two parity-fixer (PF) gadgets, along with their ports, are shown in Figure~\ref{fig:t_parityfixer}. The first parity fixer shifts domino ports by $(0, +1)$ modulo $4$, and the second parity fixer shifts domino ports by $(-1, +1)$ modulo $4$ (here, $+x$ is to the right and $+y$ is upwards). Together, these parity fixers cover all possible $(x, y)$ combinations modulo $4$.

Traversals from the upper ports to the lower ports, along with the filling of the gadgets, are possible through rotations and some wall kicks around the bends of the gadgets.

Because these gadgets will be attached to the other gadgets, which have specific constraints on how they are tiled, each parity fixer has only two possible tilings, each one corresponding to covering one domino port but not the other. The possible tilings, along with an order in which the $\TT$ pieces can be placed to ensure that the $\TT$ pieces tile the gadget correctly, are shown in Figure~\ref{fig:t_parityfixer_tilings}.

\begin{figure}[!ht]
    \centering
    \begin{subfigure}[b]{0.49\textwidth}
        \centering
        \includegraphics[width=60pt]{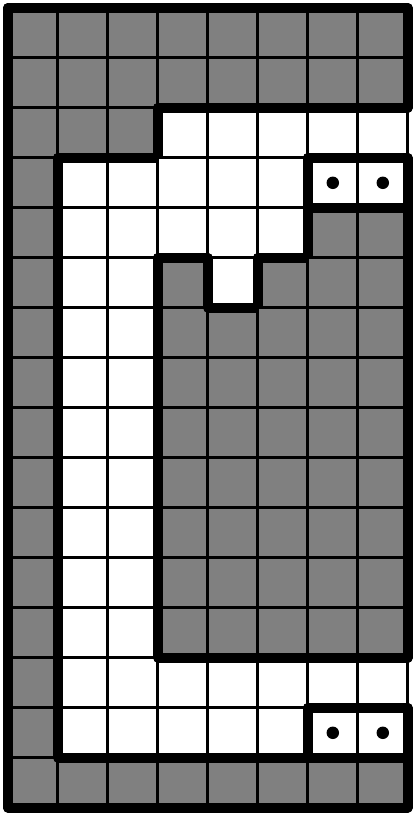}
    \end{subfigure}
    \begin{subfigure}[b]{0.49\textwidth}
        \centering
        \includegraphics[width=150pt]{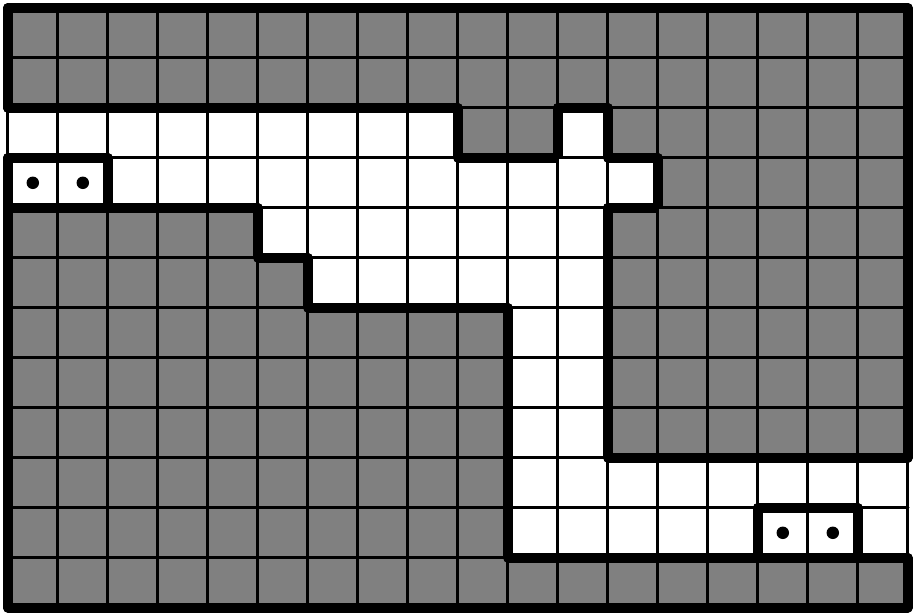}
    \end{subfigure}
    \caption{The two parity-fixer (PF) gadgets.}
    \label{fig:t_parityfixer}
\end{figure}

\subsection{Putting it all together}

Now we describe the main part of the construction, which is similar to the $\JJ$-tris construction. Take the Planar Biconnected $\{\{0, 4\}\text{-in-4},\text{1-in-4}\}$ Graph Orientation instance and compute an embedding of the graph on a grid using the linear-time BICONN algorithm in \cite{biedl1998better}. As before, if we add an artificial ``root vertex'' to the top-right ``bend'' of the topmost edge, then all vertices are on different $y$-coordinates, and the embedding is \emph{downward reachable}.

From here, we embed the grid graph drawing on the Tetris board with sufficient spacing around vertices, place an entry corner gadget at the root vertex, and place the $0$-or-$4$ and $1$-in-$4$ gadgets at their corresponding vertices (the former for vertices that must have indegree either $0$ or $4$, the latter at vertices that must have indegree $1$). Lastly, use horizontal lines, U-turns, and both parity fixers as necessary to connect the $0$-or-$4$ and $1$-in-$4$ gadgets (these play the role of edges and transmit information about the orientations of the gadgets).

\subsection{Correctness}

The correctness argument closely follows the correctness argument for $\JJ$-tris. In particular, if the main part of the construction can be filled with $\TT$ pieces under SRS, then we necessarily get a tiling of the main part of the construction with $\TT$ pieces, which gives us an orientation of the gadgets, which gives a solution to the Planar Biconnected $\{\{0, 4\}\text{-in-4},\text{1-in-4}\}$ Graph Orientation instance. In addition, if we have a solution to the Planar Biconnected $\{\{0, 4\}\text{-in-4},\text{1-in-4}\}$ Graph Orientation instance, then we can quickly (i.e., in polynomial time) get the orientations of all of the gadgets. From there, we can maneuver and place the $\TT$ pieces into their desired locations in the main part of the construction in a similar way to the $\JJ$-tris reduction, and thus obtain a way to fill the main part of the construction with $\TT$ pieces under SRS.

Therefore, the main part of the construction can be filled with $\TT$ pieces under SRS if and only if the Planar Biconnected $\{\{0, 4\}\text{-in-4},\text{1-in-4}\}$ Graph Orientation instance has a solution, meaning that our construction works as intended. Thus, we have shown that Tetris clearing with SRS is NP-hard even if the player is only given $\TT$ pieces, as desired.

\section{$\SS$-tris/$\ZZ$-tris Clearing}\label{sec:strisclearing}

\begin{theorem}
    Tetris clearing with SRS is NP-hard even if the type of pieces in the sequence given to the player is restricted to just $\SS$ pieces, or the type of pieces in the sequence given to the player is restricted to just $\ZZ$ pieces.
\end{theorem}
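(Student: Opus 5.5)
We focus on $\SS$ pieces; the $\ZZ$ case follows by mirroring the whole construction, since SRS kicks for $\SS$ and $\ZZ$ are mirror images of each other (Table~\ref{tab:kickdataMain}). The reduction is from Planar Monotone Rectilinear 3SAT, as announced in Section~\ref{sec:satgo}. The global frame is the same as for the $\II$, $\JJ$ and $\TT$ constructions. There is a board with a long tunnel, including a climbable vertical structure that $\SS$ pieces ascend using SRS kicks, which leads into a main region. There are also a few extra empty squares at the far right, which can only be filled after the middle rows clear. If the board has $M$ empty squares, the player receives exactly $M/4$ pieces.

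First I would establish, as in the earlier sections, the following facts about the long tunnel: it has a unique $\SS$-tiling; $\SS$ pieces can traverse it under SRS; locking any piece in it early blocks the main region; and it prevents line clears in the main rows. Together these reduce clearing to the question of whether the main region can be exactly tiled by $\SS$ pieces that are maneuverable into place.

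Next I would design the gadgets following the rectilinear layout. A variable gadget is a horizontal strip on the ``$x$-axis'' admitting exactly two tilings, true and false. Each tiling pushes a ``signal'', an extra covered or uncovered port square, along vertical wires. Wires go upward to positive clauses in one setting and downward to negative clauses in the other. Each wire should have two tilings, selected by which end-port it covers, with lengths padded to be compatible mod the $\SS$ period, using parity-fixer pieces as needed. A clause gadget has three input ports. It must be tileable whenever at least one input delivers the ``satisfied'' signal, possibly with slack absorbed by an internal chamber that can take either extra cells or none. It must be untileable when all three inputs are unsatisfied.

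Because the $\SS$ piece is rigid and chiral, a convenient wire is a diagonal staircase. Vertical transmission should be achievable by stacking $\SS$ pieces in vertical orientation with offset ports. For the upper (positive) half the entry must come from above. For negative clauses below the axis, the pieces must reach them through the variable row, which requires a path that remains open until the lower clauses are filled.

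The completeness direction gives a fill order. First fill the lowest negative clauses and their wires, then the variables, then the positive clauses from bottom to top. Only then fill the entry corner and the tunnel. The structure must guarantee that every gadget is reachable via a downward path from the entry at the time it is filled, mirroring the reverse-BFS argument used for $\JJ$-tris. Soundness follows by reading the variable assignment off the tiling, because each wire enforces consistency and each clause forbids the all-false configuration.

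The main obstacle is the gadget design together with the maneuverability verification. $\SS$ pieces cannot turn corners nearly as freely as $\II$ pieces, and many SRS kicks for $\SS$ are blocked in tight corridors. I would first try to find a small ``step'' motif through which an $\SS$ piece can descend or climb by repeated spin-kicks (tests 3 and 4). I would then build every gadget, including the climbable tunnel, out of that motif, and check tilings exhaustively by computer.
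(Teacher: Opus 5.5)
There is a genuine gap, and it sits in the step you treated as routine: under the framework you set up, an OR clause cannot be realized at all. You reduce clearing to exact tiling of the main region, with no line clears until the region is full and exactly $M/4$ pieces. You also let wires have exactly two tilings, distinguished by which one-cell end port they cover. Then the pieces of a clause gadget with interior cell set $C$ cover exactly $|C|+j$ cells, where $j$ is the number of its three ports left uncovered by the incoming wires. As the number of true literals ranges over $1,2,3$, $j$ ranges over three consecutive integers. These have distinct residues mod $4$, so at most one satisfies $|C|+j\equiv 0 \pmod 4$. Hence at least two satisfying input patterns are untileable, and completeness fails. Your ``internal chamber that can take either extra cells or none'' cannot repair this. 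Its area is fixed, and your own reduction requires every cell of the main region to be covered, so there is no slack anywhere. This is why exact-tiling reductions like the $\II$, $\JJ$, $\TT$ ones reduce from exact-count problems (1-in-3SAT, exact-indegree orientation): an OR source problem does not fit that paradigm with one-cell signals. Beyond this, no gadget is actually given. Two-tiling $\SS$ wires, parity fixers, and variable gadgets that stay traversable while being filled are the whole content of such a proof. Exact $\SS$-tilings are also extremely rigid (no rectangle admits one), so their existence cannot be presumed.

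The paper does not use tiling semantics for $\SS$ at all. Its reduction is based on accessibility, with line clears as the engine rather than something to suppress:
\begin{itemize}
\item Before any line clears, the initial $\SS$ piece in a variable gadget opens one of its two tunnels and necessarily blocks the other.
\item A clause gadget needs only one reachable entrance. Most of it can then be filled, and any leftover squares are deferred to the very end.
\item These fills complete rows whose clearing makes the bottom of the board accessible.
\item Only then can the climbable vertical structure be filled. In this construction it is a line-clear blocker reached from below, not an entry tunnel. Filling it lets the variable rows clear.
\item After that, each variable gadget collapses, so the previously blocked tunnel and its wires can be filled as well.
\end{itemize}
Soundness comes from three features together: the extra copy of the top two variable lines, the empty squares at the right, and the piece count $M/4$, which forbids pieces above the top row. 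These force the top two lines to clear last. So clearing is possible only if every clause was reachable under one fixed variable setting. OR semantics thus come from reachability rather than area counting, and leftover cells are absorbed after later line clears. That is exactly the slack your exact-tiling framework lacks. To salvage your route you would have to change the source to a planar exact-count problem, or use whole-piece signals with a compensating reservoir, and still construct and verify real $\SS$ gadgets.
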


We focus on $\SS$ pieces; the $\ZZ$ pieces case is symmetric.

Here, we will reduce from Planar Monotone Rectilinear 3SAT. To do so, we first build variable and clause gadgets (along with wires to connect them) and a ``climbable vertical structure''. These gadgets are shown in Figure \ref{fig:strisgadgets}.

\begin{figure}[!ht]
  \centering
  \begin{subfigure}[b]{0.45\textwidth}
    \centering
    \includegraphics[width=200pt]{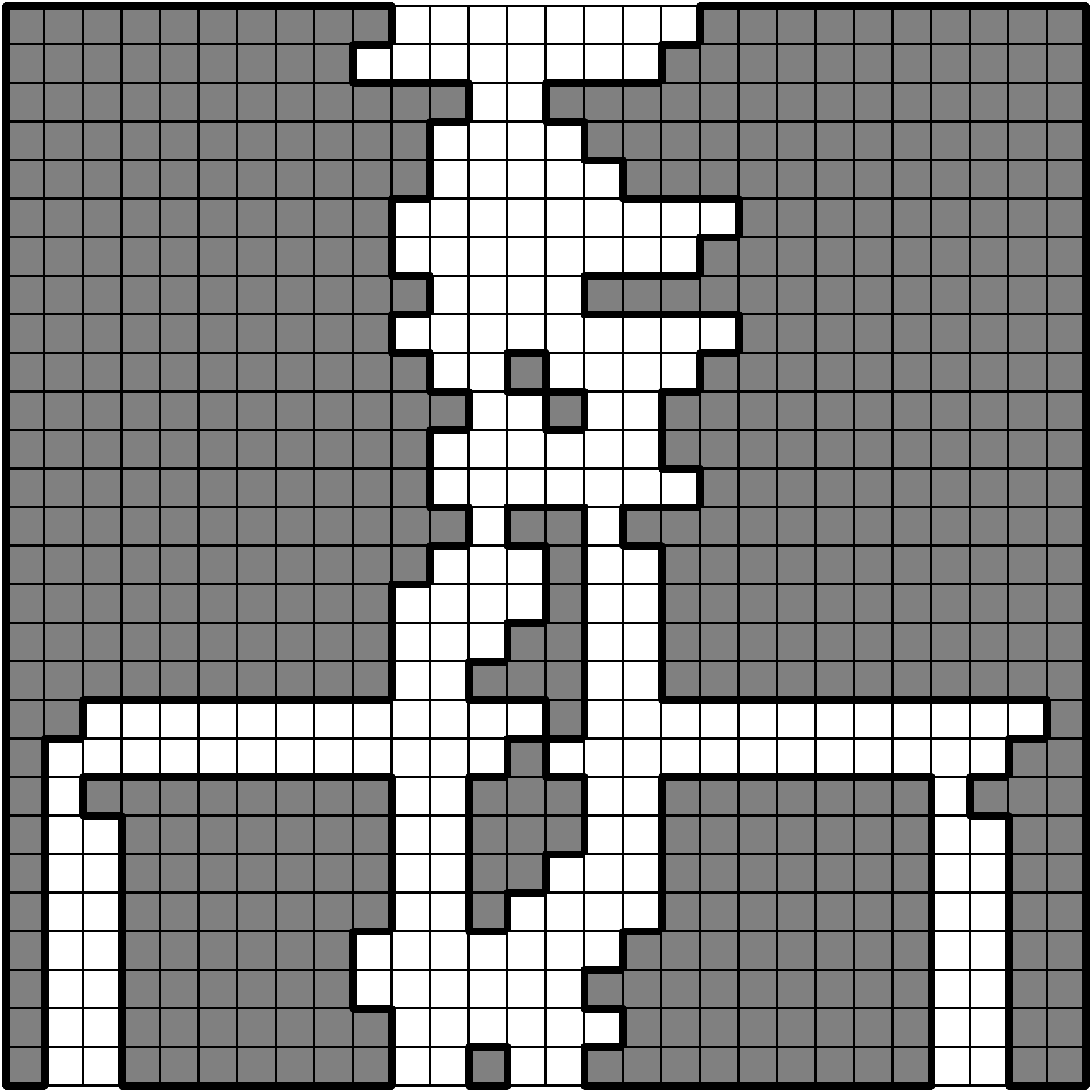}
    \caption{Variable gadget with two wires (one on the ``True'' side, one on the ``False'' side)}
  \end{subfigure}
  \begin{subfigure}[b]{0.45\textwidth}
    \centering
    \includegraphics[width=200pt]{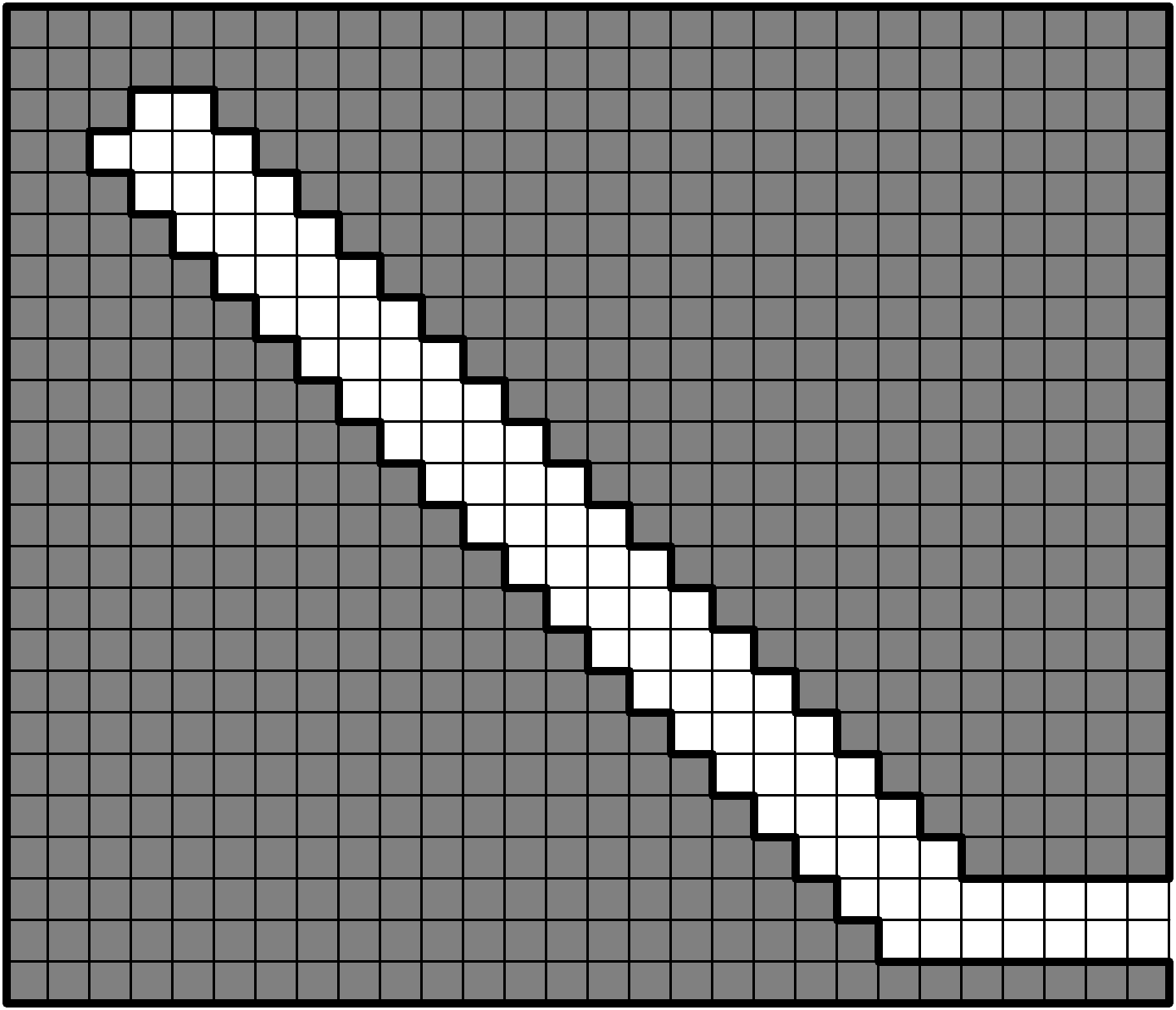}
    \caption{Climbable vertical structure}
  \end{subfigure}
  \begin{subfigure}[b]{0.9\textwidth}
    \centering
    \includegraphics[width=320pt]{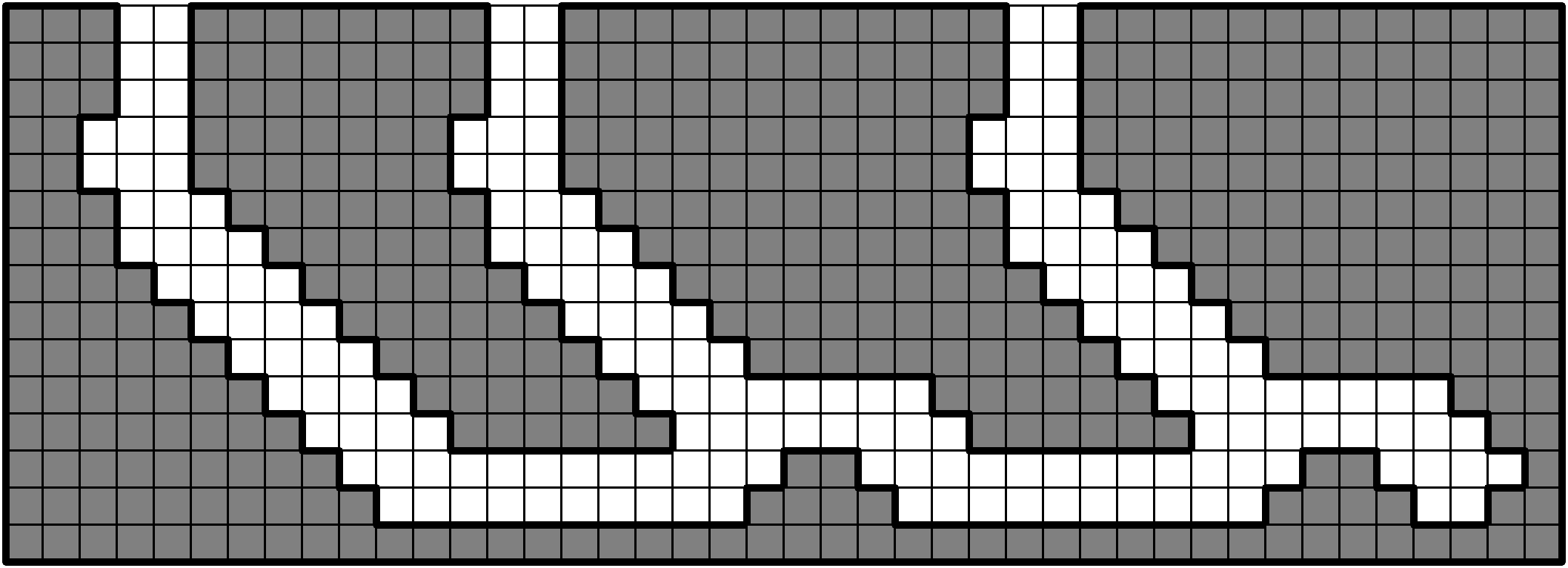}
    \caption{Clause gadget}
  \end{subfigure}
  \caption{Gadgets for $\SS$-tris.}
  \label{fig:strisgadgets}
\end{figure}

\begin{figure}[!ht]
  \centering
  \begin{subfigure}[b]{0.45\textwidth}
    \centering
    \includegraphics[width=200pt]{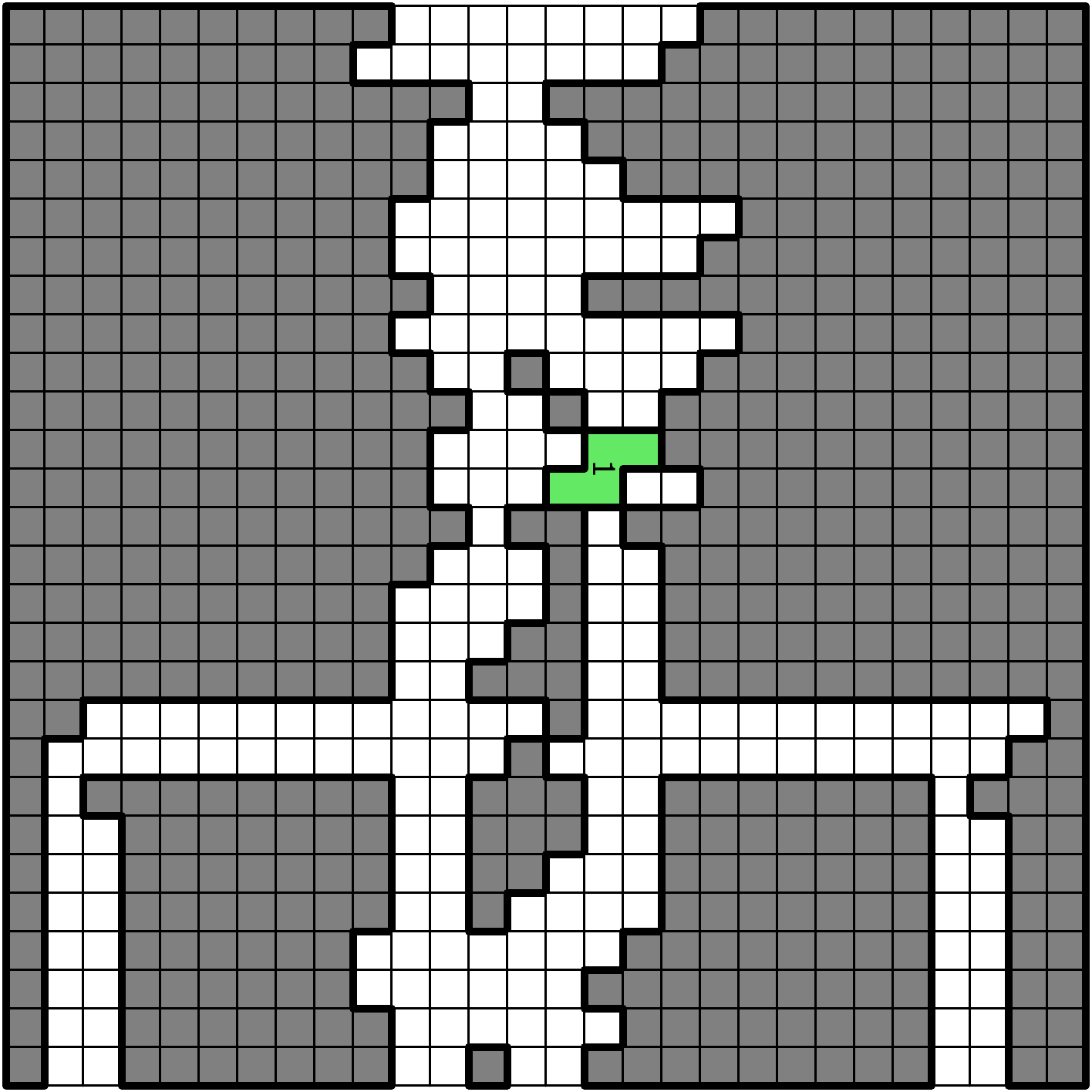}
    \caption{``True'' setting}
  \end{subfigure}
  \begin{subfigure}[b]{0.45\textwidth}
    \centering
    \includegraphics[width=200pt]{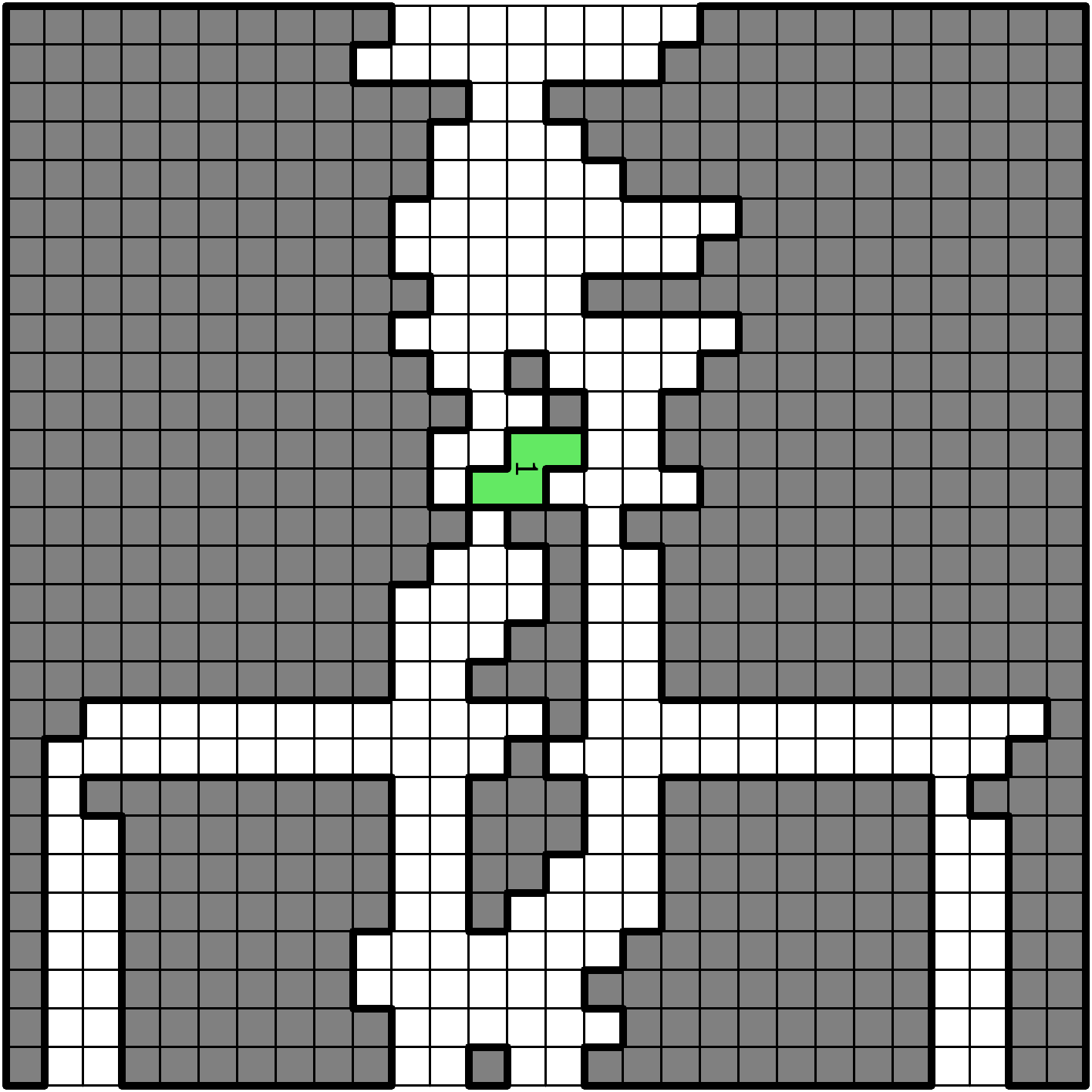}
    \caption{``False'' setting}
  \end{subfigure}
  \caption{$\SS$-tris variable settings.}
  \label{fig:strisvars}
\end{figure}

The idea behind the variable gadget is that there are two settings based on where an initial $\SS$ piece is placed, as shown in Figure \ref{fig:strisvars}, that allow $\SS$ pieces to traverse through one of the tunnels at the expense of the other tunnel being blocked. Before any line clears are possible, there are no possible piece placements that allow both tunnels to be traversable at the same time. Once lines are able to be cleared, the variable gadget can be collapsed into a form that allows for both tunnels to be traversed, as shown in Figure \ref{fig:s_var_filling}. Specific maneuvers described in Appendices \ref{subsec:s_maneuver4}, \ref{subsec:s_maneuver5}, and \ref{subsec:s_maneuver6} are used to fill and/or traverse through the variable gadget.

For the clause gadget, there are three entrances corresponding to the three connections. Once one of the entrances is accessible, most of the clause gadget is fillable other than maybe a few squares, as shown in Figure \ref{fig:s_cl_filling}. The squares that might not get filled do not affect our later arguments and can be filled in at the very end if necessary. However, the squares that can get filled once at least one of the entrances is accessible contribute to some lines getting cleared that help make other parts of the structure become accessible. Specific maneuvers described in Appendices \ref{subsec:s_maneuver1}, \ref{subsec:s_maneuver2}, and \ref{subsec:s_maneuver3} are used to fill and/or traverse through the clause gadget.

The climbable vertical structure gives us a way to prevent line clears from happening before the climbable vertical structure is made accessible, and can be filled from the bottom upwards (see Figure \ref{fig:s_cvs_tiling}). Specific maneuvers described in Appendices \ref{subsec:s_maneuver1} and \ref{subsec:s_maneuver2} are used to fill and/or traverse through the climbable vertical structure.

Now that we have our gadgets, we describe the general structure; an example is given in Figure \ref{fig:strisgenstructure}. Given an instance of Planar Monotone Rectilinear 3SAT, we can take the planar rectilinear arrangement, rotate the entire arrangement 90 degrees with the first variable segments rotating towards the top, rotate each individual clause segment 90 degrees, and rearrange the clause segments and wires such that the arrangement is still planar and rectilinear. Once we have this arrangement, we can insert our variable and clause gadgets where there are variable and clause segments, connect the gadgets with wires, and add some broken-up wires below the last variable gadget that leads to the climbable vertical structure that extends upwards and to the left of the rest of the structure and stops two lines below the top line. We also add one additional copy of the top two lines of the variable gadget above the topmost variable gadget and some empty squares in the rightmost few columns to prevent the top two lines from being cleared prematurely. For the piece sequence, if there are $M$ empty squares in the general structure, we give the player $M/4$ $\SS$ pieces so that no part of any $\SS$ piece can be placed above the topmost row in the general structure if the player wants to clear the board.

\begin{figure}[!ht]
  \centering
  \includegraphics[width=320pt]{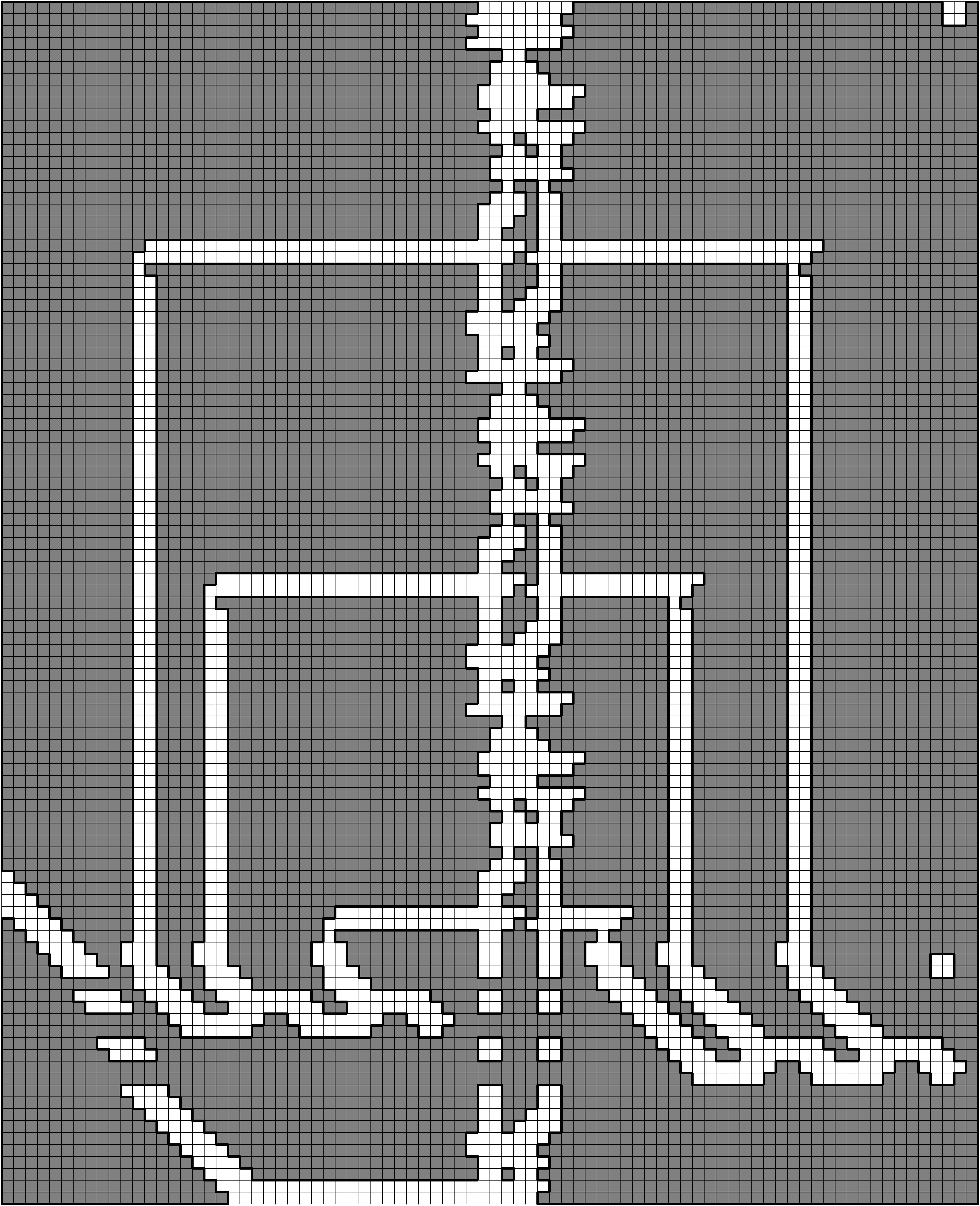}
  \caption{$\SS$-tris reduction general structure.}
  \label{fig:strisgenstructure}
\end{figure}

\subsection{Correctness}

Now we prove correctness of the construction; in other words, the whole construction can be cleared with $\SS$ pieces under SRS if and only if the Planar Monotone Rectilinear SAT instance has a solution.

First, if this construction can be cleared under SRS, then every clause must be accessible under some specific setting of the variable clauses. In particular, based on the way the structure is set up and the fact that no part of any $\SS$ piece can be placed above the topmost row of the structure, the top two lines must be cleared last (as the empty squares in the top two lines and the rightmost few columns line up with the other empty squares in the rightmost few columns), meaning that the only way for the climbable vertical structure to be accessible is if we clear the lines corresponding to the lines the clause gadgets can clear. This means that there must be a specific way we can set the variable clauses that leads to every clause being accessible. In other words, there is an assignment of variables such that each clause has at least one true literal, meaning that the Planar Monotone Rectilinear SAT instance has a solution.

Now, suppose we have a solution to the Planar Monotone Rectilinear SAT instance. Using the solution, we perform the following to clear the construction:

\begin{enumerate}
    \item Set the states of the variable gadgets (True, False) based on the solution.
    \item Fill all the clause gadgets in one of three ways as shown in Figure \ref{fig:s_cl_filling} (this is possible as the solution guarantees that all clause gadgets are accessible), causing some lines to be cleared and the bottom portion of the structure to be accessible.
    \item Fill the climbable vertical structure, now that the bottom portion of the structure is accessible. This allows for the lines containing the variable gadgets to be cleared.
    \item Starting from the topmost variable, fill and clear out the variable gadgets along with the relevant wires to the clause gadgets, as shown in Figure \ref{fig:s_var_filling} (this is now possible as the lines containing the variable gadgets can be cleared).
    \item Clear the 8 squares in the rightmost few columns and the remaining empty squares in the clause gadgets (if necessary).
\end{enumerate}

This method guarantees that we can clear the construction given a valid solution to the Planar Monotone Rectilinear SAT instance. Thus, if the Planar Monotone Rectilinear SAT instance has a solution, then the whole construction can be cleared with $\SS$ pieces under SRS.

As this construction can be created in polynomial time given a Planar Monotone Rectilinear SAT instance, and the whole construction can be cleared with $\SS$ pieces under SRS if and only if the Planar Monotone Rectilinear SAT instance has a solution, we have thus shown that Tetris clearing with SRS is NP-hard even if the player is only given $\SS$ pieces, as desired.

By symmetry, Tetris clearing with SRS is NP-hard even if the player is only given $\ZZ$ pieces.

\section{Survival}\label{sec:survival}

\begin{theorem}
    For any piece type $P\in \{\II, \JJ, \LL, \SS, \TT, \ZZ\}$, Tetris survival with SRS is NP-hard even if the type of pieces in the sequence given to the player is restricted to just $P$.
\end{theorem}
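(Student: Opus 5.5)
We reduce from the clearing constructions of Sections~\ref{sec:itrisclearing}--\ref{sec:strisclearing}. Each of those constructions has the property that clearing succeeds only if every empty square of the main part is exactly filled, and all pieces are used. The goal is a modified board where, with a slightly larger piece supply, the player survives exactly when the clearing instance is solvable. The natural device, standard from earlier Tetris survival reductions, is to place the clearing construction low on the board. Above it we put a tall ``reservoir'' region whose only purpose is to absorb the extra pieces, but which becomes available only after the lines of the main construction have cleared.

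Concretely, for each $P$ I would proceed as follows. First, take the clearing board $B$ from the relevant section, with $M/4$ pieces. Shrink the headroom above $B$ so that only a constant number of empty rows remain above its topmost filled row. These rows must be enough for pieces to spawn and enter the long tunnel (for $\II$, $\JJ$, $\LL$, $\TT$) or the variable-gadget entrance (for $\SS$, $\ZZ$), and no more. We then argue that without line clears, the player cannot place more than a bounded number of pieces outside the intended region before some piece locks at or above the spawn zone. The shape of that bounded region is determined by the top rows of $B$, which are mostly filled except the entrance and the reserved squares in the rightmost columns.

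Second, we append extra pieces: enough $P$ pieces, say $K$, to be placed into empty rows that only exist after $B$'s rows clear. Clearing $B$ drops the board so the top region becomes empty, and those rows become fillable by easy stacking. One option is to choose $K$ small, e.g.\ $K=0$, and instead make the total sequence exactly $M/4 + c$. The $c$ pieces overfill the available space unless lines clear, since the main part's empty squares are sealed from the top row except through the tunnel, which is blocked once any piece locks there prematurely (the observation already proved for clearing).

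Survival then forces the player to fill the main part exactly, which by the clearing correctness argument is possible iff the source instance (1-in-3SAT, the Graph Orientation variants, or Planar Monotone Rectilinear 3SAT) is satisfiable. Conversely, a satisfying assignment yields a clearing, after which the remaining $c$ pieces fit on an empty board.

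The main obstacle is the soundness direction: showing that a surviving player who does not solve the instance really dies. Non-solutions can still clear a few lines, as in the $\SS$ construction, where clause gadgets clear some lines, or via partial fills near the tunnel. The plan is to count empty cells accessible without clearing the long-tunnel rows, and to choose $c$ larger than the total slack those partial clears can create. This needs a per-piece check that the long tunnel (or climbable vertical structure) keeps all main-part rows unclearable until the main part is completely filled, which the earlier sections already assert. So the count reduces to the few cells in the top rows and the reserved right-column squares, and I expect a constant $c$ (on the order of 2--3 pieces) to suffice, with the survival board's top rows adjusted per piece type.
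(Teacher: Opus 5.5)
Your soundness direction has a genuine gap, and it breaks the constant-$c$ plan. You need that in a NO instance no row of the main part can ever clear. The earlier sections only show two weaker things: no main-part row clears \emph{before} the long tunnel is touched, and a premature tunnel piece makes it impossible to clear the \emph{whole} board. A player who has given up on solving can instead fill as much of the main part as possible. For example, in the $\II$ construction they can fill the entire upper part under any truth assignment and leave a violated clause gadget unfinished. They then fill the long tunnel from its far end, and every row whose main-part portion happens to be full clears.

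Each such clear shifts the board down and opens a full-width empty row at the top. That row is worth about $W/4$ pieces, with $W$ polynomial ($W=48C+16$ for $\II$). Once even a few such rows are free, the player can plausibly keep completing and clearing them indefinitely; for $\II$, $W\equiv 0\pmod 4$, so a free row can simply be laid with horizontal pieces. Hence no choice of $c$ separates YES from NO instances. For $\SS$/$\ZZ$ the problem is built into the construction: clause gadgets clear lines individually, whether or not the formula is satisfiable.

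The missing idea is to make survival independent of line clears, turning it into an exact packing count. The paper replaces the reserved squares with a checkerboard-like pattern in the rightmost two columns. Every row then has a permanently unfillable cell, and any piece placed there loses. The paper also gives one \emph{fewer} piece than the clearing instance. Surviving then means filling every other empty cell with no line clears. This forces an exact, reachable tiling of the main part before the tunnel is filled, i.e., a solution; the same modification works for $\JJ$, $\LL$, $\TT$.

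For $\SS$/$\ZZ$, whose clearing construction needs line clears for access, the paper also removes the climbable vertical structure. It attaches to each clause gadget an empty $(2C_1+2)\times(2C_2+2)$ rectangle, with $C_1C_2$ exceeding the number of all other empty cells, and gives $mC_1C_2$ pieces. By counting, every rectangle, and hence every clause gadget, must be reached, which requires a satisfying assignment.
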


We first discuss the $\II$-tris case. Much of the proof remains the same as the proof for Tetris clearing with SRS and with just $\II$ pieces; however, the main modification we make is to the rightmost two columns of the construction, as shown in Figure~\ref{fig:structure_survival}. Here, we make the rightmost two columns have a checkerboard-like pattern to prevent any $\II$ pieces from being placed in the rightmost two columns (otherwise the player will lose) and to prevent any lines from being cleared. In addition, compared to the Tetris clearing construction, we give the player one less $\II$ piece, so that the player can fill the entire board except for the empty squares in the rightmost two columns and can survive if and only if the corresponding 1-in-3SAT instance has a solution. With this modification, we get NP-hardness for Tetris survival with SRS even if the player is only given $\II$ pieces.

\begin{figure}[!ht]
    \centering
    \includegraphics[width=320pt]{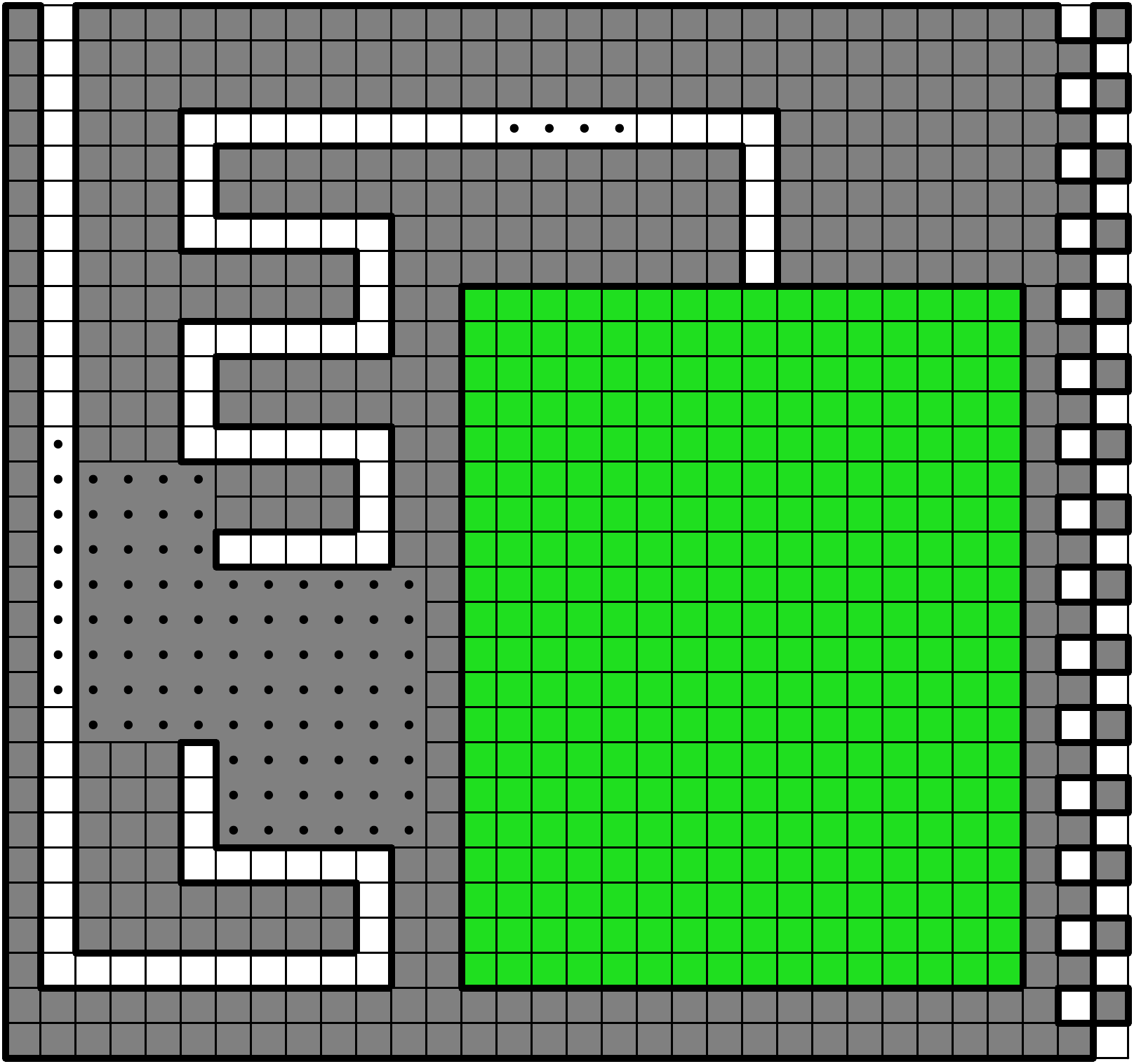}
    \caption{The general structure of the construction for Tetris survival using only $\II$ pieces. The main part of the construction is indicated in green.}
    \label{fig:structure_survival}
\end{figure}

A similar approach (modifying the rightmost two columns) yields NP-hardness for Tetris survival with SRS even if the player is only given pieces of type $P$ for $P\in \{\JJ, \LL, \TT\}$.

For $\SS$, in addition to modifying the rightmost two columns, we need to make additional modifications, as the general structure depends on line clears to access all parts of the structure, which we want to prevent in our reductions to Tetris survival. Here, we remove the pieces of the climbable vertical structure, and modify each clause gadget as shown in Figure~\ref{fig:s_clause_survival} such that each clause gadget is attached to a $(2C_1+2)\times (2C_2+2)$ rectangle of empty squares, where $C_1 = \Theta(C_2)$ and $C_2$ is sufficiently large that the number of empty squares not part of any of these rectangles is smaller than $C_1C_2$ (such $C_1$, $C_2$ exist as the number of such empty squares only grows linearly with $C_1+C_2$), lengthening any ``wires'' as necessary (i.e., any hallways of length or width 2 not part of a variable or clause gadget or the climbable vertical structure) so that the rectangles do not intersect each other or any existing gadgets. For the piece sequence, we give the player $mC_1C_2$ $\SS$ pieces in the input sequence (where $m$ is the number of clauses in the Planar Monotone Rectilinear SAT instance).

For correctness, note that because the number of empty squares not part of any of the rectangles attached to clause gadgets is smaller than $C_1C_2$, the only way for the player to be able to fit $mC_1C_2$ $\SS$ pieces without losing is if all $m$ $(2C_1+2)\times (2C_2+2)$ rectangles attached to clause gadgets contain at least $1$ $\SS$ piece, meaning that all $m$ rectangles, and thus all $m$ clause gadgets, must be accessible. Furthermore, if all $m$ clause gadgets are accessible, then each $(2C_1+2)\times (2C_2+2)$ rectangle can easily fit $C_1C_2$ $\SS$ pieces, meaning that all $mC_1C_2$ $\SS$ pieces can be placed without losing. Thus, we get NP-hardness for Tetris survival with SRS even if the player is only given $\SS$ pieces. We also get NP-hardness for $\ZZ$ by symmetry.

\begin{figure}[!ht]
    \centering
    \includegraphics[width=240pt]{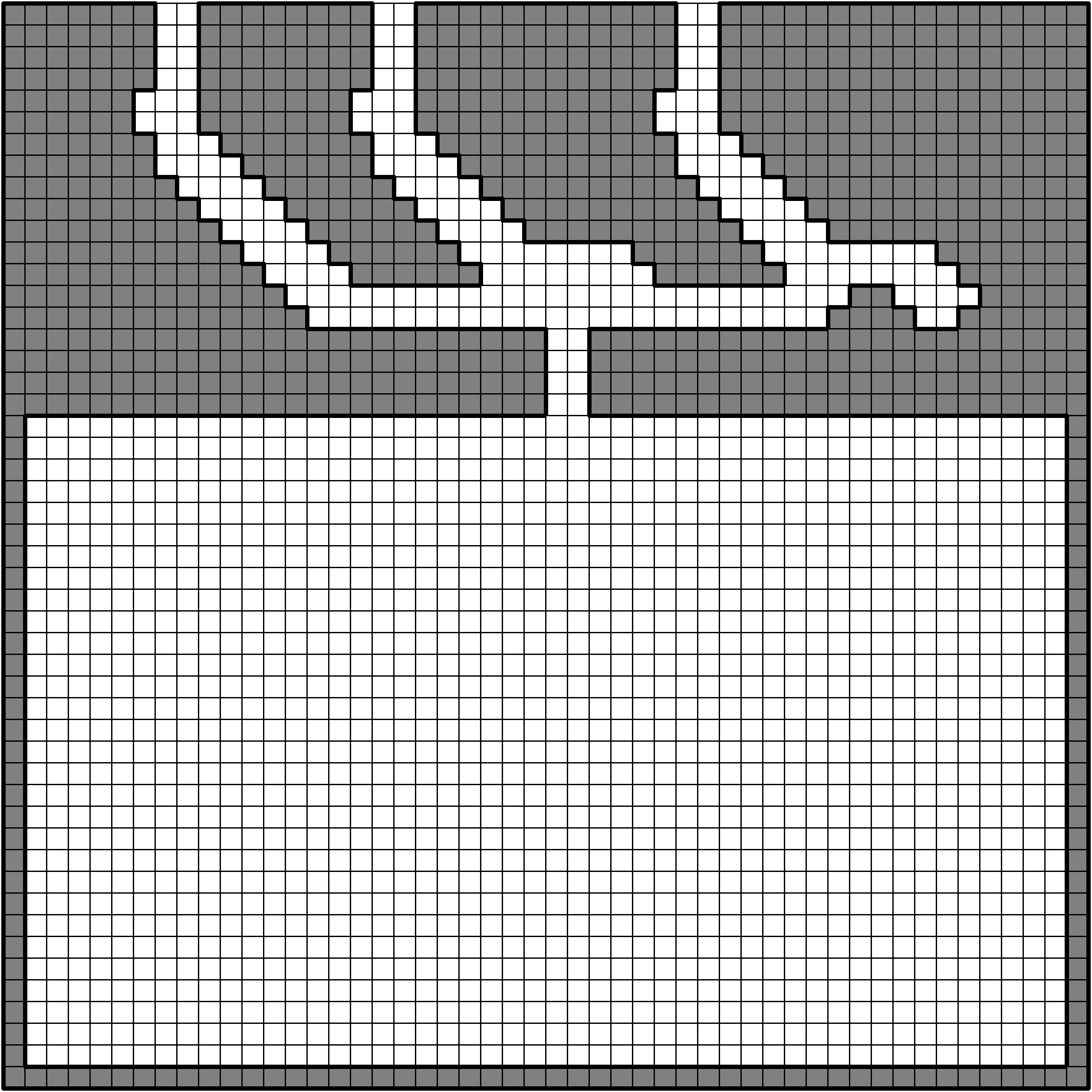}
    \caption{The (modified) clause gadget for Tetris survival with only $\SS$ pieces.}
    \label{fig:s_clause_survival}
\end{figure}

\section{Open Problems}\label{sec:openprobs}

Some of the open problems in \cite{TotalTetris_JIP} and \cite{mithg2024tetris} still remain open. For rotatable dominoes, one problem that remains open is whether Tetris clearing and survival remain solvable in polynomial time if we assume a different rotation model, such as a version of SRS for dominoes. In our polynomial-time algorithms for Tetris clearing and survival with rotatable dominoes, we assume that the rotation center only moves monotonically downwards. While the algorithm might be able to be adapted to different rotation systems that allow for the rotation center to move upwards, the details are much more tricky, and it is also possible that the problem is NP-hard.

For tetromino piece types, the main open problems of interest are the computational complexity of Tetris clearing and survival for $\OO$ pieces. For clearing, while $\OO$ pieces cannot rotate, thus simplifying the problem (as there are no spins involving $\OO$ pieces) and leading us to believe that this is plausibly in P, it is still not straightforward to prove due to the row-clearing mechanic. For survival, we have evidence that suggests that the problem is NP-hard, as $2\times 2$ square packing in orthogonally convex grid polygons is NP-hard \cite{abrahamsen2026hardness}, but there are still a few tricky details to work out, particularly in ensuring that all pieces can be successfully placed under Tetris rules.

Another open problem is whether or not Tetris clearing and survival with SRS and with just one piece type is $\#$P-hard or ASP-complete.\footnote{Recall that an NP search problem is \defn{ASP-complete} \cite{Yato-Seta-2003} if there is a parsimonious reduction from all NP search problems (including a polynomial-time bijection between solutions)} In particular, it is unlikely that our construction is $c$-monious\footnote{A reduction is \defn{$c$-monious} if it blows up the number of solutions by exactly a multiplicative factor of~$c$.} for any $c$ as each gadget type may not have the same number of ways to fill them depending on the tiling. Does there exist a different construction that is $c$-monious for some $c$?

Lastly, most of our results analyze Tetris clearing and survival with a finite input sequence. How does the complexity of Tetris clearing and survival change if the player is instead given an infinite number of pieces of a given type?

\section*{Acknowledgments}

This paper was initiated during open problem solving in the MIT class on Algorithmic Lower Bounds: Fun with Hardness Proofs (6.5440) taught by Erik Demaine in Fall 2023. We thank the other participants of that class --- specifically
Eugeniya Artemova,
Cecilia Chen,
Lily Chung,
Jenny Diomidova,
Holden Hall,
Jonathan Li,
Jayson Lynch,
Frederick Stock, and
Ethan Zhou
--- for helpful discussions and providing an inspiring atmosphere.

\bibliographystyle{alpha}
\bibliography{biblio-tetone}

\appendix
\clearpage

\section{$\II$ Piece Rotations Figure}

\begin{figure}[!ht]
    \centering
    \includegraphics[width=320pt]{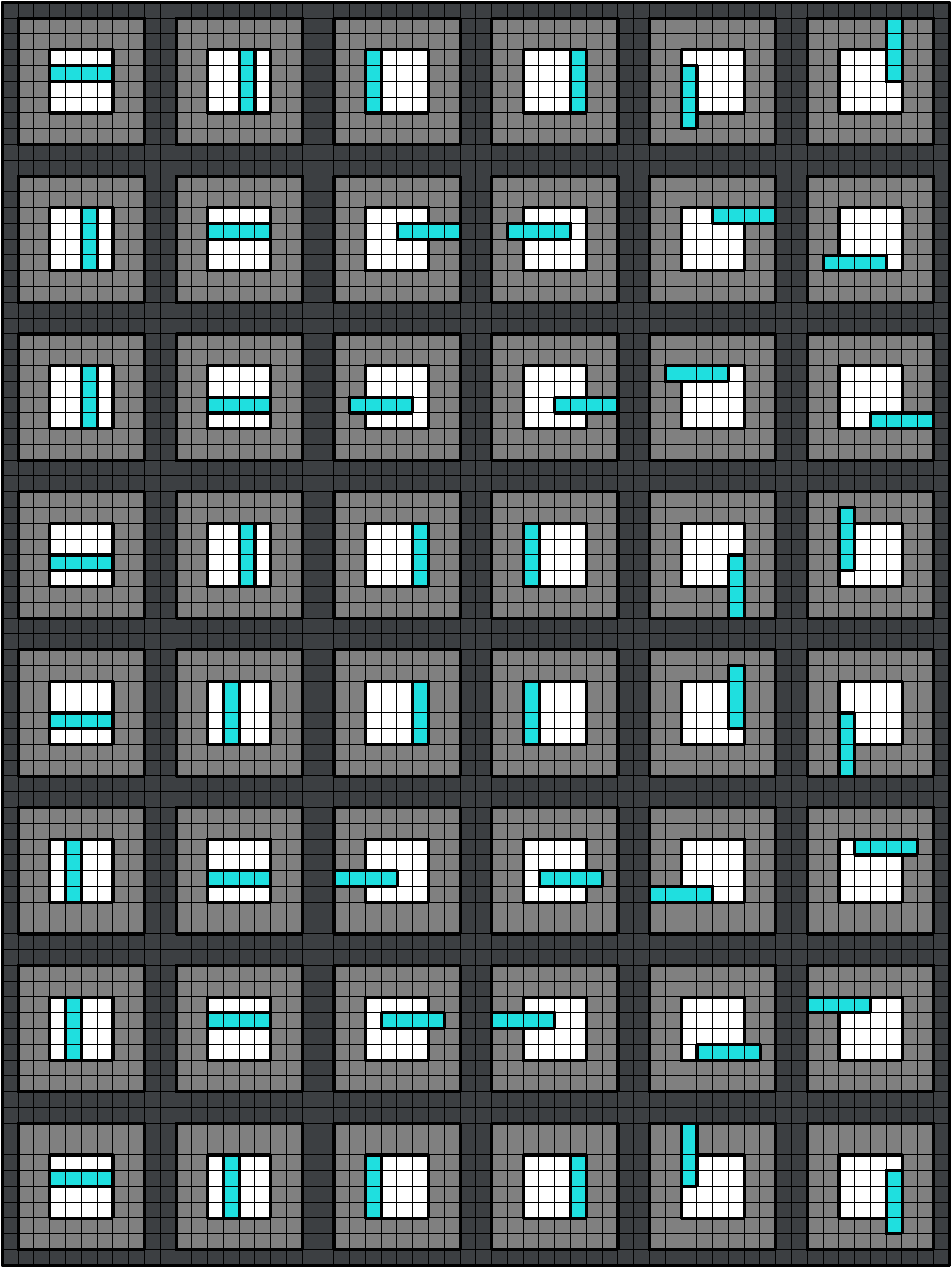}
    \caption{All tests for all rotations of $\II$ pieces. From top to bottom, the rotations are: $0\to R$, $R\to 0$, $R\to 2$, $2\to R$, $2\to L$, $L\to 2$, $L\to 0$, $0\to L$. $0$ indicates the default orientation, and $R$, $2$, and $L$ indicate the orientation reached from a $90^\circ$, $180^\circ$, and $270^\circ$ rotation clockwise (respectively) from the default orientation.}
    \label{fig:allrots}
\end{figure}

\section{Specific $\II$-tris Maneuvers}\label{sec:i_maneuvers}

\subsection{Maneuver 1}\label{subsec:i_maneuver1}

This maneuver, indicated in Figure~\ref{fig:i_maneuver1}, is used to get $\II$ pieces through the long tunnel and into the main part of the construction (through the EC gadget). This maneuver also roughly illustrates how $\II$ pieces can "turn corners". The $\II$ piece can be in either horizontal orientation (the default orientation or the $180^\circ$-rotated orientation) in (a), and in either horizontal orientation in (h), although whichever orientation the $\II$ piece is in in (a) will influence the orientation the $\II$ piece is in in (h).

\begin{figure}[!ht]
  \begin{subfigure}[b]{0.245\textwidth}
    \centering
    \includegraphics[width=80pt]{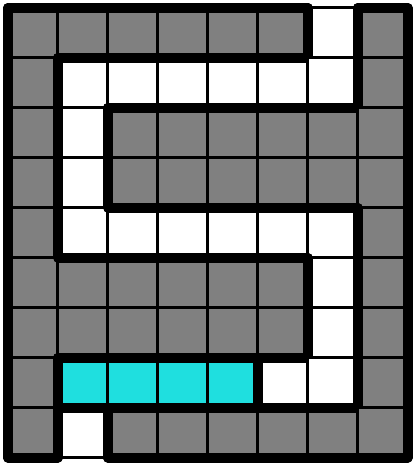}
    \caption{}
  \end{subfigure}
  \begin{subfigure}[b]{0.245\textwidth}
    \centering
    \includegraphics[width=80pt]{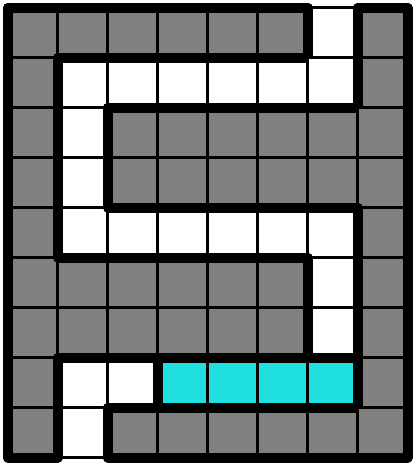}
    \caption{}
  \end{subfigure}
  \begin{subfigure}[b]{0.245\textwidth}
    \centering
    \includegraphics[width=80pt]{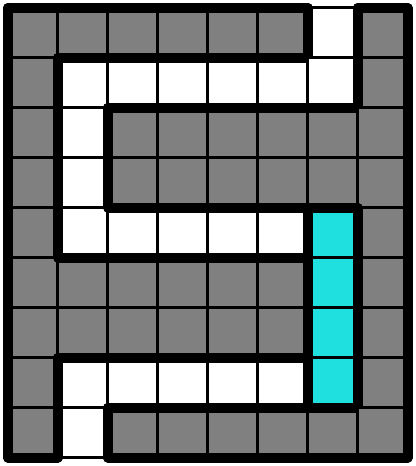}
    \caption{}
  \end{subfigure}
  \begin{subfigure}[b]{0.245\textwidth}
    \centering
    \includegraphics[width=80pt]{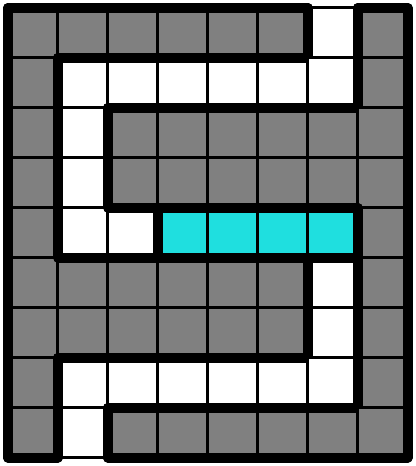}
    \caption{}
  \end{subfigure}
  \begin{subfigure}[b]{0.245\textwidth}
    \centering
    \includegraphics[width=80pt]{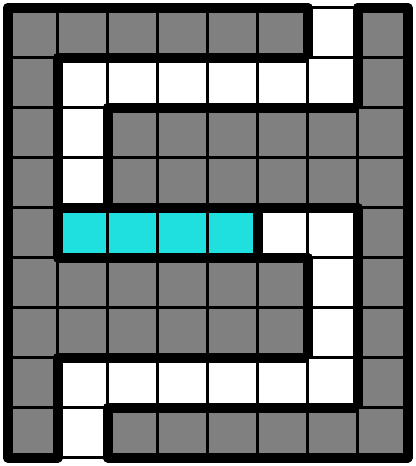}
    \caption{}
  \end{subfigure}
  \begin{subfigure}[b]{0.245\textwidth}
    \centering
    \includegraphics[width=80pt]{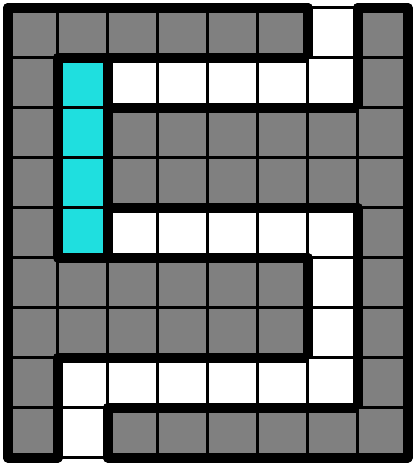}
    \caption{}
  \end{subfigure}
  \begin{subfigure}[b]{0.245\textwidth}
    \centering
    \includegraphics[width=80pt]{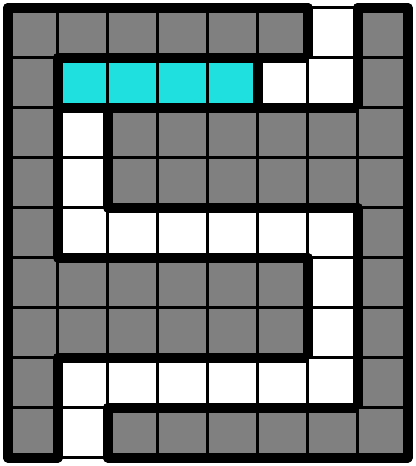}
    \caption{}
  \end{subfigure}
  \begin{subfigure}[b]{0.245\textwidth}
    \centering
    \includegraphics[width=80pt]{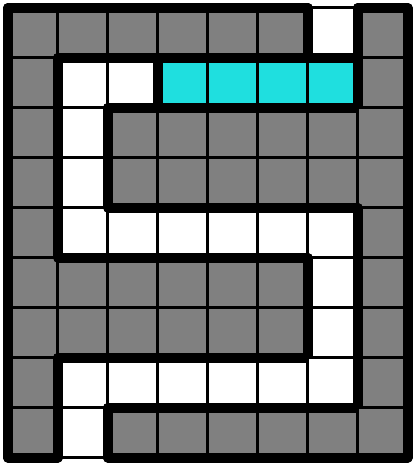}
    \caption{}
  \end{subfigure}
  \caption{Maneuver 1. This is reversible; i.e., both going from (a) to (h) and going from (h) to (a) are possible.}
  \label{fig:i_maneuver1}
\end{figure}

\subsection{Maneuver 2}\label{subsec:i_maneuver2}

This maneuver, indicated in Figure~\ref{fig:i_maneuver2}, or a similar maneuver (reflected about a vertical line) is used in case the $\II$ piece is in the incorrect orientation (i.e., the non-default horizontal orientation) when placing either piece 48 or piece 60 in a duplicator gadget.

\begin{figure}[!ht]
  \begin{subfigure}[b]{0.325\textwidth}
    \centering
    \includegraphics[width=100pt]{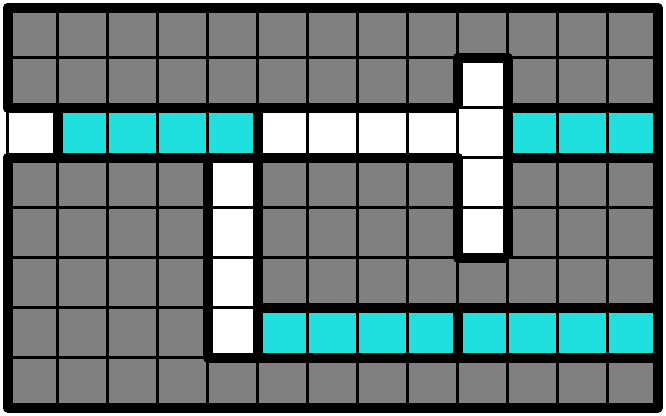}
    \caption{Piece is in $2$ orientation}
  \end{subfigure}
  \begin{subfigure}[b]{0.325\textwidth}
    \centering
    \includegraphics[width=100pt]{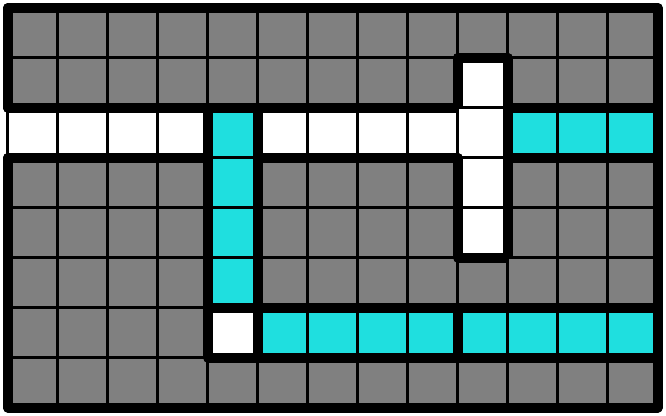}
    \caption{Piece is in $R$ orientation}
  \end{subfigure}
  \begin{subfigure}[b]{0.325\textwidth}
    \centering
    \includegraphics[width=100pt]{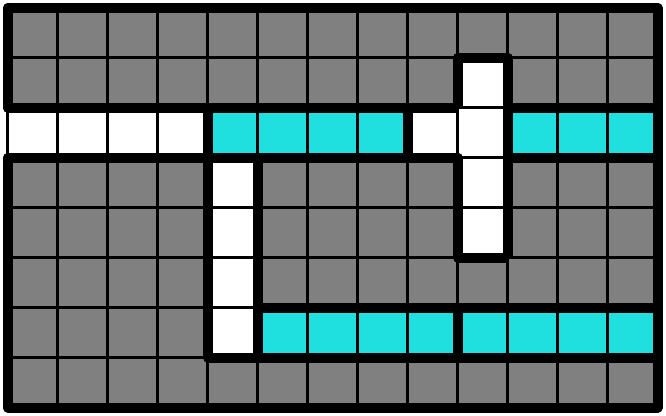}
    \caption{Piece is in $0$ orientation}
  \end{subfigure}
  \begin{subfigure}[b]{0.325\textwidth}
    \centering
    \includegraphics[width=100pt]{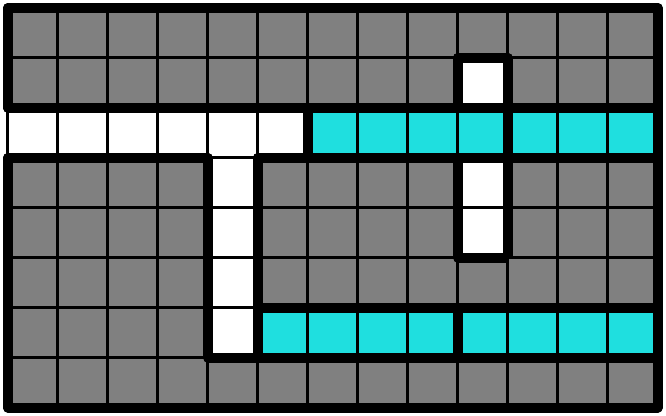}
    \caption{Piece is in $0$ orientation}
  \end{subfigure}
  \begin{subfigure}[b]{0.325\textwidth}
    \centering
    \includegraphics[width=100pt]{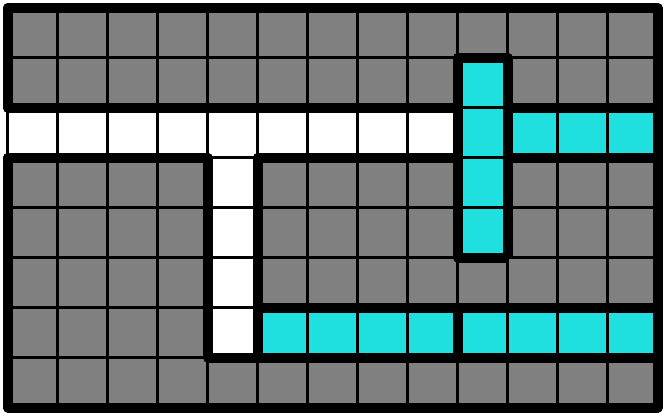}
    \caption{Piece is in $L$ or $R$ orientation}
  \end{subfigure}
  \caption{Maneuver 2. Captions indicate the orientation of the $\II$ piece. $0$ indicates the default orientation, and $R$, $2$, and $L$ indicate the orientation reached from a $90^\circ$, $180^\circ$, and $270^\circ$ rotation clockwise (respectively) from the default orientation.}
  \label{fig:i_maneuver2}
\end{figure}

\subsection{Maneuver 3}\label{subsec:i_maneuver3}

This maneuver, indicated in Figure~\ref{fig:i_maneuver3}, is used to get $\II$ pieces to the location indicated by piece number 15 in the first and second tilings of the negated-clause gadget.

\begin{figure}[!ht]
  \begin{subfigure}[b]{0.245\textwidth}
    \centering
    \includegraphics[width=80pt]{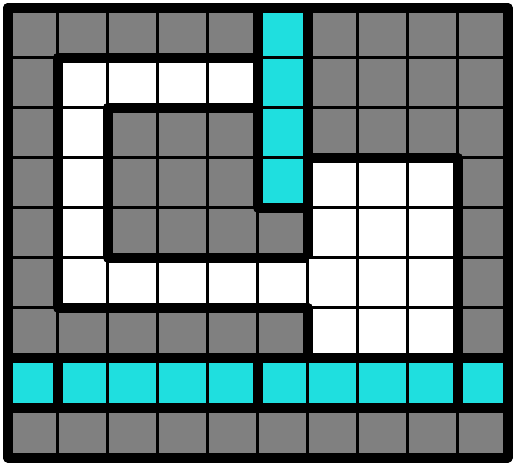}
    \caption{Piece is in $L$ or $R$ orientation}
  \end{subfigure}
  \begin{subfigure}[b]{0.245\textwidth}
    \centering
    \includegraphics[width=80pt]{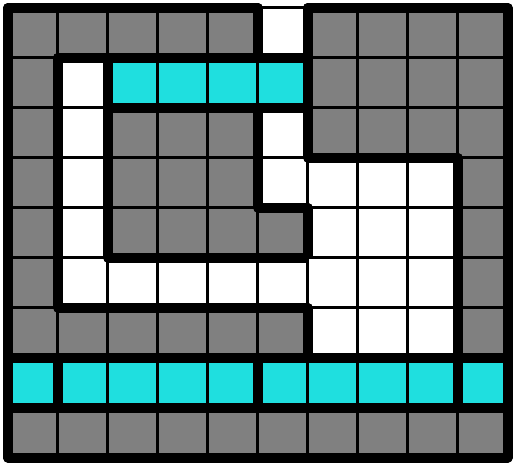}
    \caption{Piece is in $0$ orientation}
  \end{subfigure}
  \begin{subfigure}[b]{0.245\textwidth}
    \centering
    \includegraphics[width=80pt]{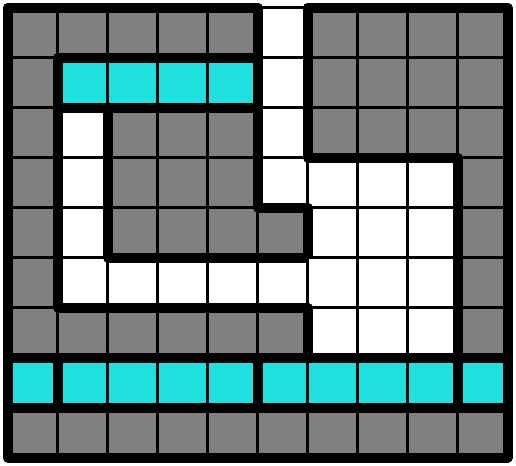}
    \caption{Piece is in $0$ orientation}
  \end{subfigure}
  \begin{subfigure}[b]{0.245\textwidth}
    \centering
    \includegraphics[width=80pt]{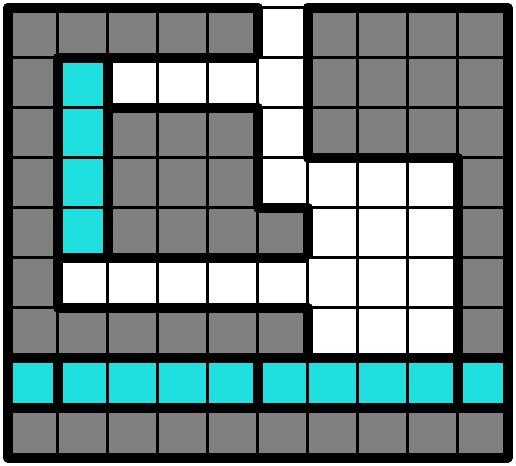}
    \caption{Piece is in $R$ orientation}
  \end{subfigure}
  \begin{subfigure}[b]{0.245\textwidth}
    \centering
    \includegraphics[width=80pt]{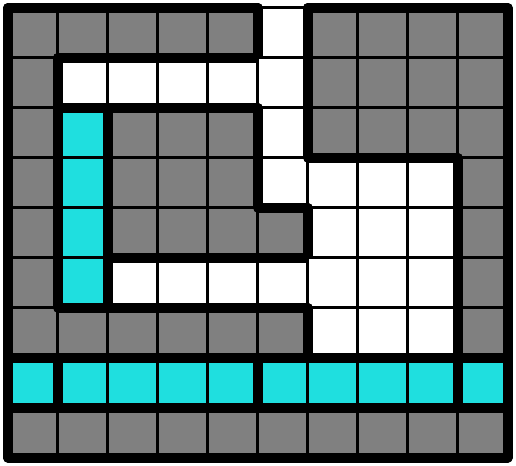}
    \caption{Piece is in $R$ orientation}
  \end{subfigure}
  \begin{subfigure}[b]{0.245\textwidth}
    \centering
    \includegraphics[width=80pt]{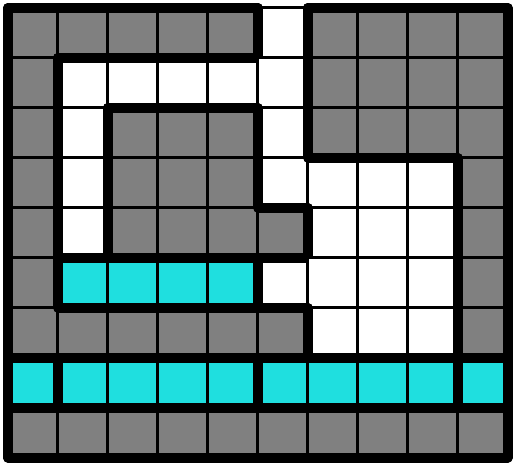}
    \caption{Piece is in $2$ orientation}
  \end{subfigure}
  \begin{subfigure}[b]{0.245\textwidth}
    \centering
    \includegraphics[width=80pt]{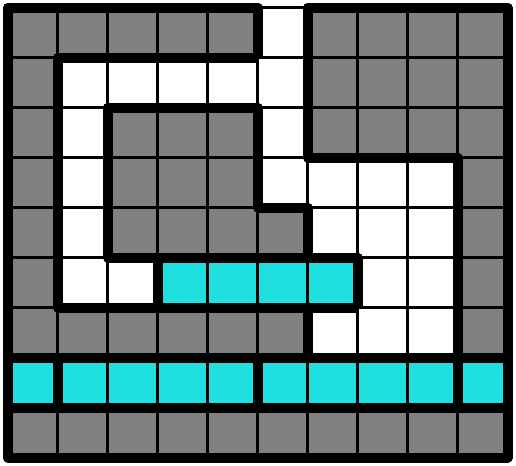}
    \caption{Piece is in $2$ orientation}
  \end{subfigure}
  \begin{subfigure}[b]{0.245\textwidth}
    \centering
    \includegraphics[width=80pt]{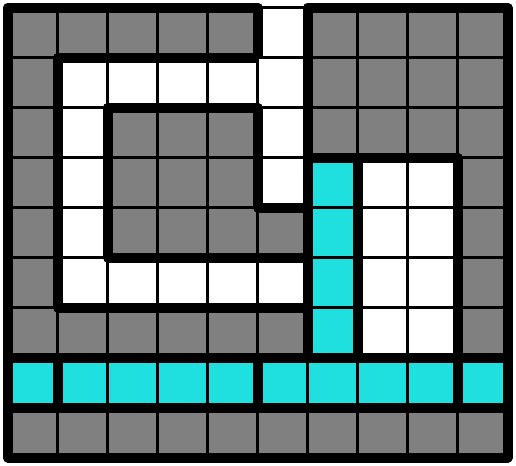}
    \caption{Piece is in $L$ or $R$ orientation}
  \end{subfigure}
  \caption{Maneuver 3. Captions indicate the orientation of the $\II$ piece. $0$ indicates the default orientation, and $R$, $2$, and $L$ indicate the orientation reached from a $90^\circ$, $180^\circ$, and $270^\circ$ rotation clockwise (respectively) from the default orientation.}
  \label{fig:i_maneuver3}
\end{figure}

\subsection{Maneuver 4}\label{subsec:i_maneuver4}

This maneuver, indicated in Figure~\ref{fig:i_maneuver4}, and similar maneuvers are used to get $\II$ pieces through the middle section (particularly the chambers) of a crossover gadget.

The figure shows the maneuver for the middle chamber, although step (e) requires one downward movement for the $\II$ piece. Steps (b) through (d) may be more rigid (either $R\to 0\to R$ or $L\to 2\to L$ is forced) depending on whether the chamber already has other pieces in it, but getting the $\II$ piece to the position indicated in (e) is possible regardless of the orientation ($L$ or $R$) in step (d). For the other chambers, the steps are similar.

\begin{figure}[!ht]
  \begin{subfigure}[b]{0.245\textwidth}
    \centering
    \includegraphics[width=100pt]{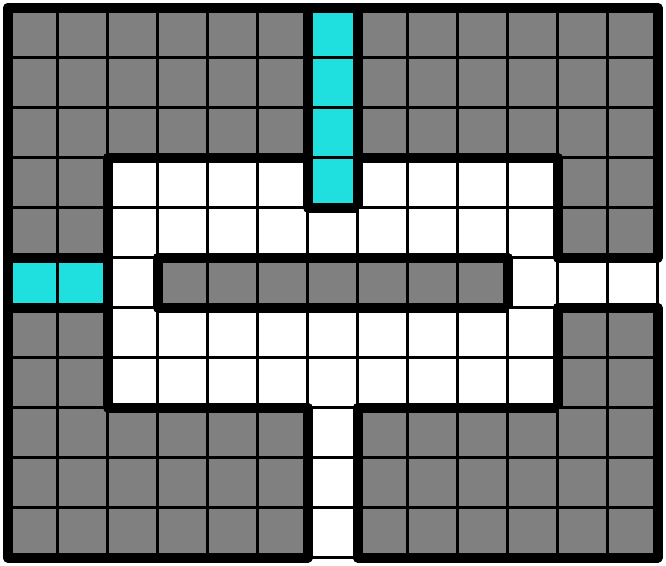}
    \caption{Piece is in $L$ or $R$ orientation}
  \end{subfigure}
  \begin{subfigure}[b]{0.245\textwidth}
    \centering
    \includegraphics[width=100pt]{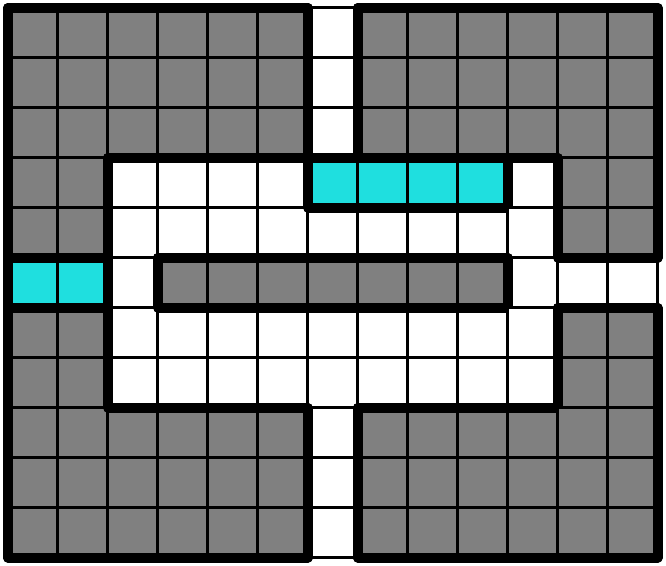}
    \caption{Possible intermediate step}
  \end{subfigure}
  \begin{subfigure}[b]{0.245\textwidth}
    \centering
    \includegraphics[width=100pt]{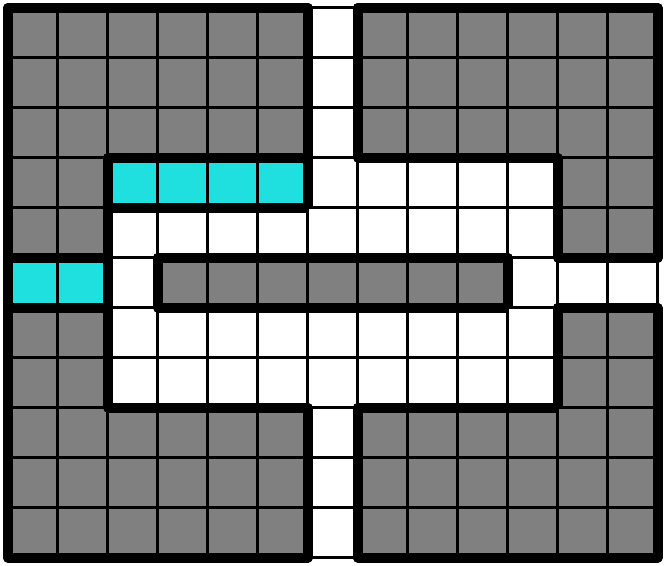}
    \caption{Piece is in $0$ orientation}
  \end{subfigure}
  \begin{subfigure}[b]{0.245\textwidth}
    \centering
    \includegraphics[width=100pt]{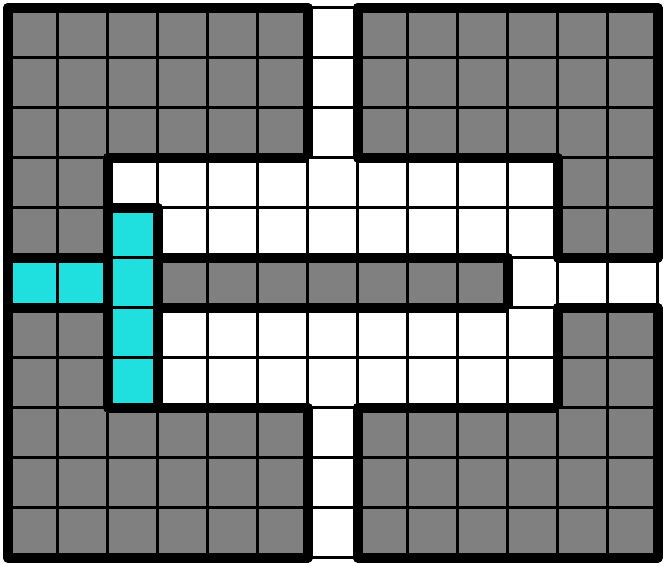}
    \caption{Piece is in $R$ orientation}
  \end{subfigure}
  \begin{subfigure}[b]{0.245\textwidth}
    \centering
    \includegraphics[width=100pt]{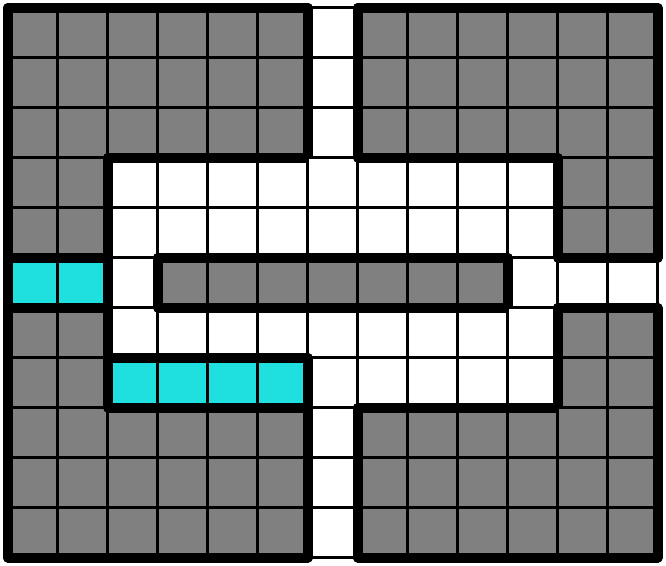}
    \caption{Piece is in $2$ orientation}
  \end{subfigure}
  \begin{subfigure}[b]{0.245\textwidth}
    \centering
    \includegraphics[width=100pt]{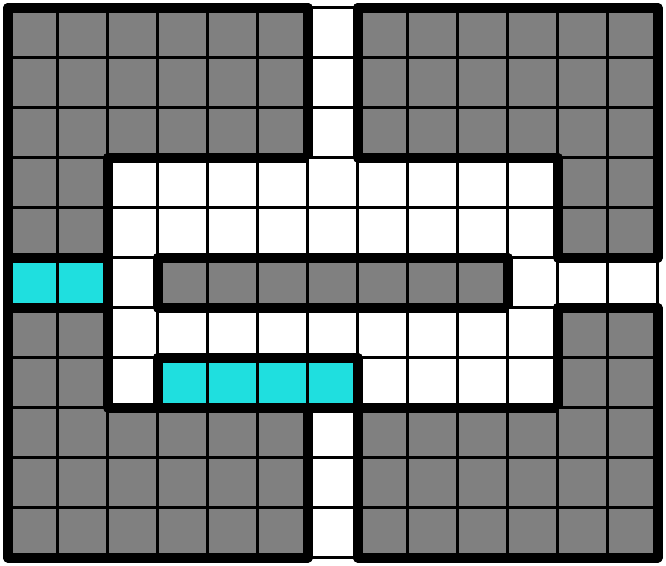}
    \caption{Piece is in $2$ orientation}
  \end{subfigure}
  \begin{subfigure}[b]{0.245\textwidth}
    \centering
    \includegraphics[width=100pt]{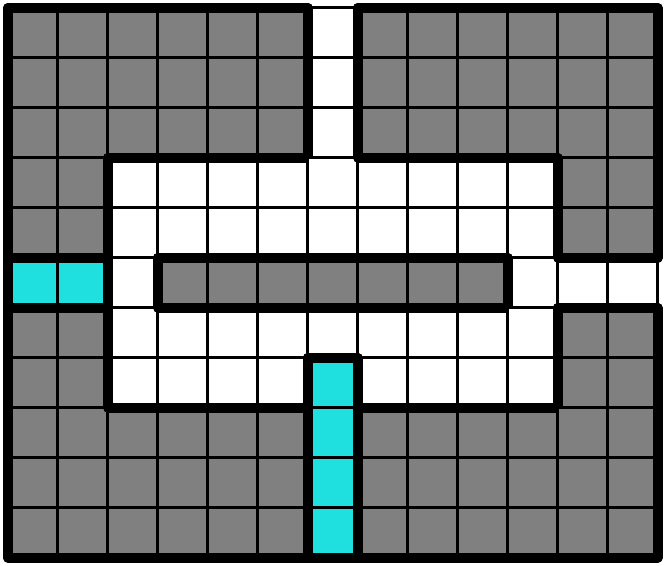}
    \caption{Piece is in $R$ orientation}
  \end{subfigure}
  \caption{Maneuver 4. Captions indicate the orientation of the $\II$ piece. $0$ indicates the default orientation, and $R$, $2$, and $L$ indicate the orientation reached from a $90^\circ$, $180^\circ$, and $270^\circ$ rotation clockwise (respectively) from the default orientation.}
  \label{fig:i_maneuver4}
\end{figure}

\pagebreak

\section{Specific $\JJ$-tris Maneuvers}\label{sec:j_maneuvers}

\subsection{Maneuver 1}\label{subsec:j_maneuver1}

This maneuver, indicated in Figure~\ref{fig:j_maneuver1}, is used to move $\JJ$ pieces upwards through the climbable vertical structure.

\begin{figure}[!ht]
  \centering
  \begin{subfigure}[b]{0.16\textwidth}
    \centering
    \includegraphics[width=70pt]{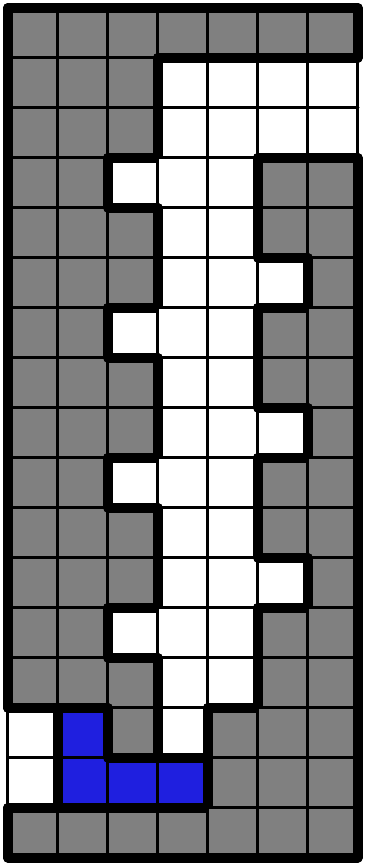}
    \caption{}
  \end{subfigure}
  \begin{subfigure}[b]{0.16\textwidth}
    \centering
    \includegraphics[width=70pt]{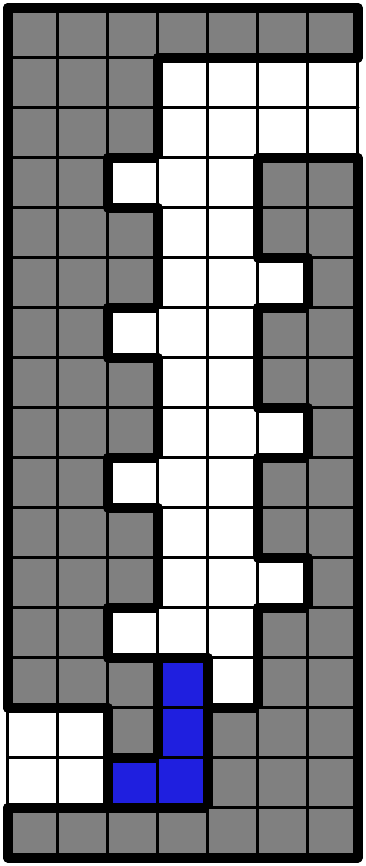}
    \caption{}
  \end{subfigure}
  \begin{subfigure}[b]{0.16\textwidth}
    \centering
    \includegraphics[width=70pt]{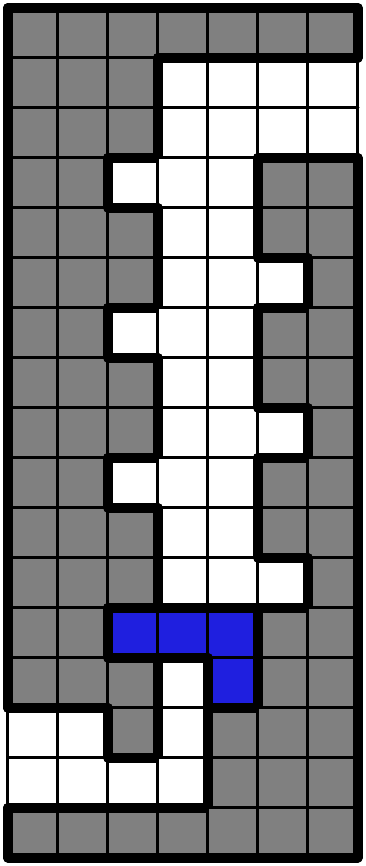}
    \caption{}
  \end{subfigure}
  \begin{subfigure}[b]{0.16\textwidth}
    \centering
    \includegraphics[width=70pt]{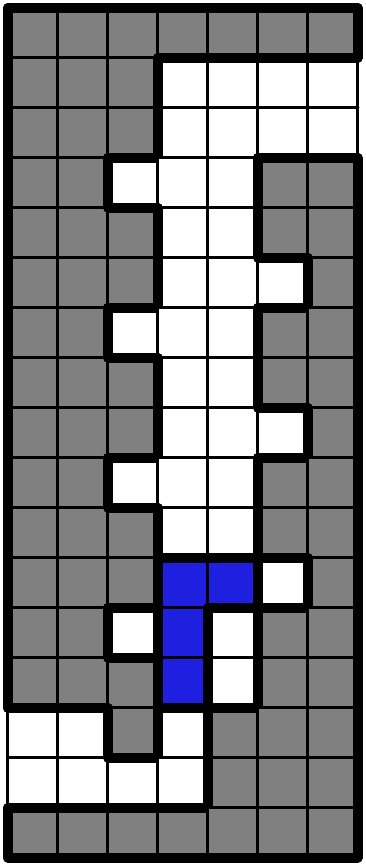}
    \caption{}
  \end{subfigure}
  \begin{subfigure}[b]{0.16\textwidth}
    \centering
    \includegraphics[width=70pt]{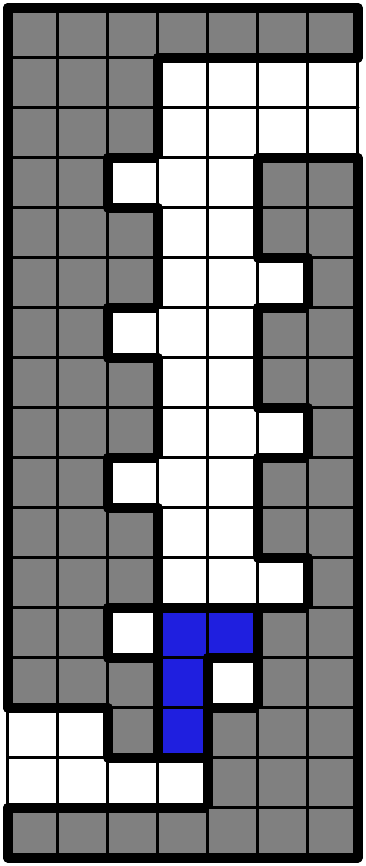}
    \caption{}
  \end{subfigure}
  \begin{subfigure}[b]{0.16\textwidth}
    \centering
    \includegraphics[width=70pt]{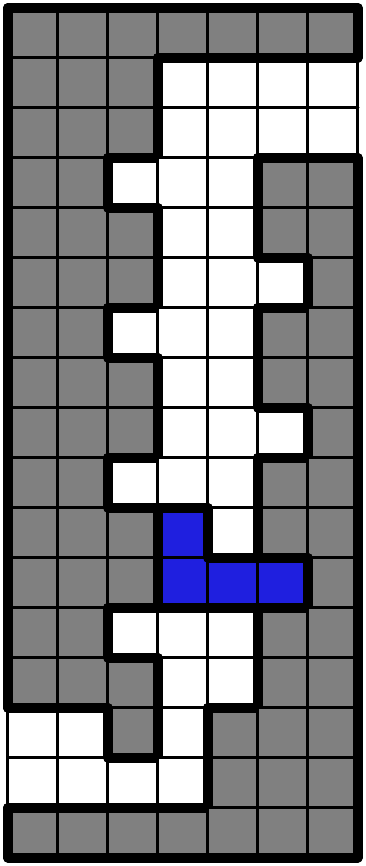}
    \caption{}
  \end{subfigure}
  \begin{subfigure}[b]{0.16\textwidth}
    \centering
    \includegraphics[width=70pt]{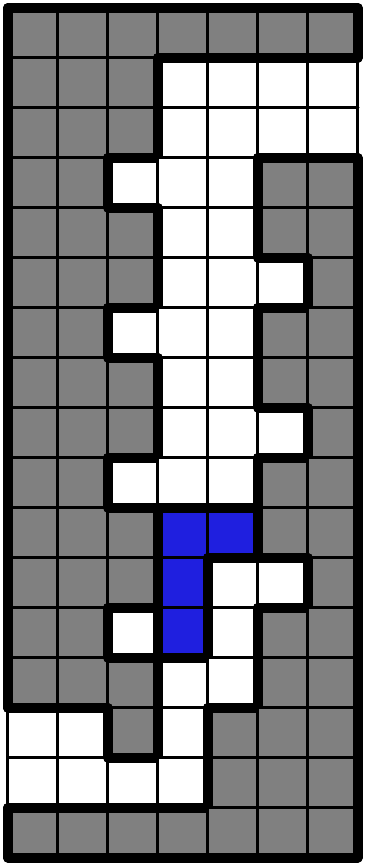}
    \caption{}
  \end{subfigure}
  \begin{subfigure}[b]{0.16\textwidth}
    \centering
    \includegraphics[width=70pt]{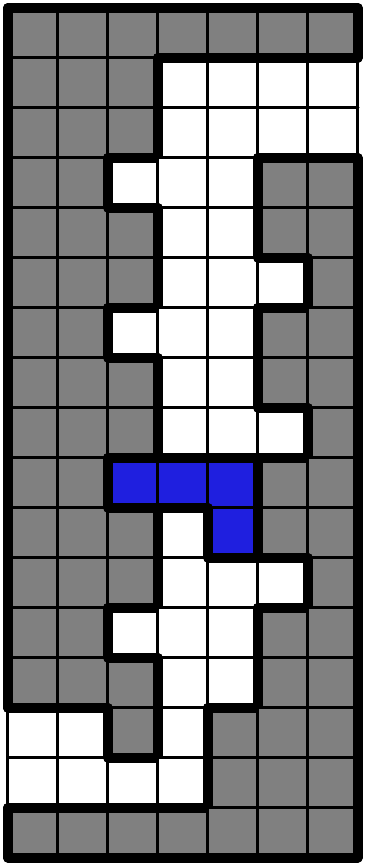}
    \caption{}
  \end{subfigure}
  \begin{subfigure}[b]{0.16\textwidth}
    \centering
    \includegraphics[width=70pt]{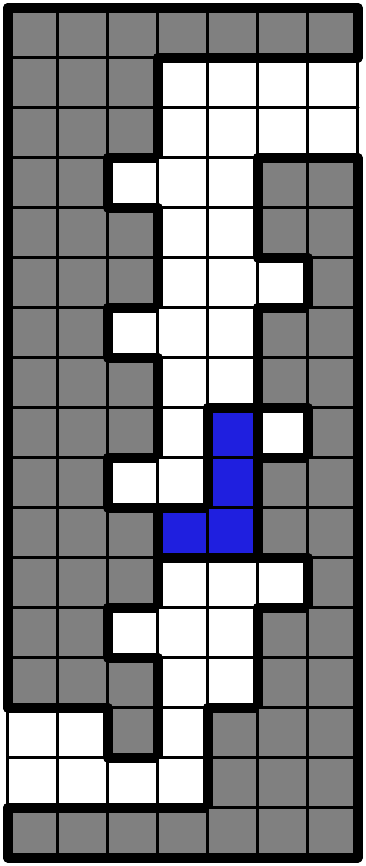}
    \caption{}
  \end{subfigure}
  \begin{subfigure}[b]{0.16\textwidth}
    \centering
    \includegraphics[width=70pt]{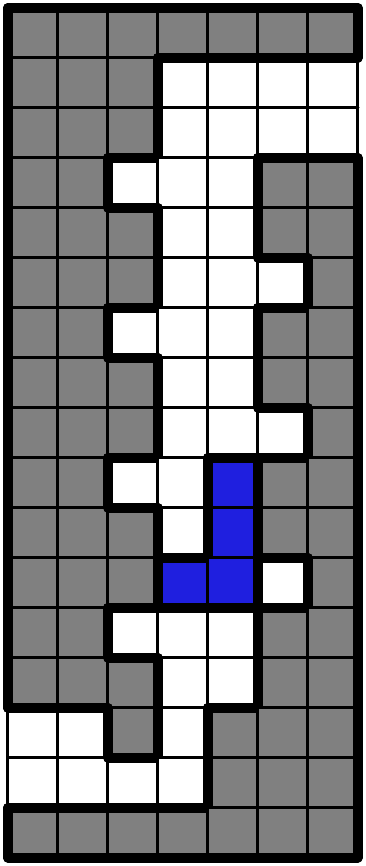}
    \caption{}
  \end{subfigure}
  \begin{subfigure}[b]{0.16\textwidth}
    \centering
    \includegraphics[width=70pt]{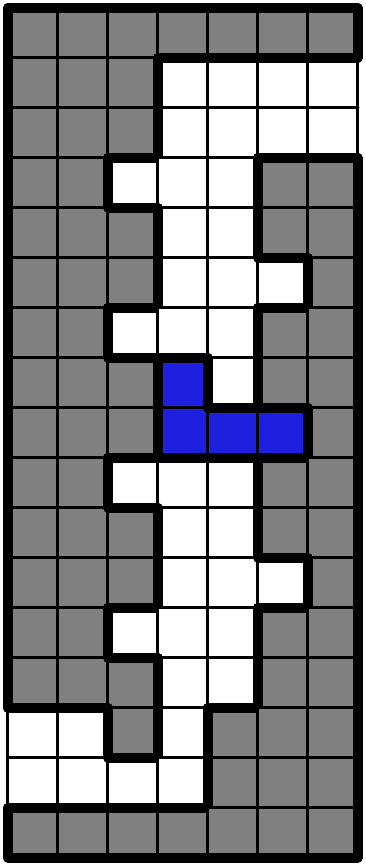}
    \caption{}
  \end{subfigure}
  \caption{Maneuver 1. (a) through (f) are used to get the $\JJ$ piece into the climbable vertical structure, while (f) through (k) are used to climb one segment of the climbable vertical structure.}
  \label{fig:j_maneuver1}
\end{figure}

\subsection{Maneuver 2}\label{subsec:j_maneuver2}

This maneuver, indicated in Figure~\ref{fig:j_maneuver2}, is used to turn corners in some of the gadgets.

\begin{figure}[!ht]
  \begin{subfigure}[b]{0.49\textwidth}
    \centering
    \includegraphics[width=80pt]{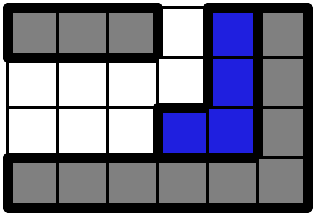}
    \caption{}
  \end{subfigure}
  \begin{subfigure}[b]{0.49\textwidth}
    \centering
    \includegraphics[width=80pt]{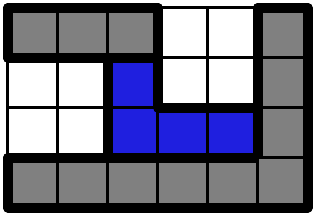}
    \caption{}
  \end{subfigure}
  \caption{Maneuver 2}
  \label{fig:j_maneuver2}
\end{figure}

\subsection{Maneuver 3}\label{subsec:j_maneuver3}

This maneuver, indicated in Figure~\ref{fig:j_maneuver3}, is used to tuck some of the $\JJ$ pieces into their positions in parity fixer gadgets and the entry corner gadget.

\begin{figure}[!ht]
  \begin{subfigure}[b]{0.49\textwidth}
    \centering
    \includegraphics[width=80pt]{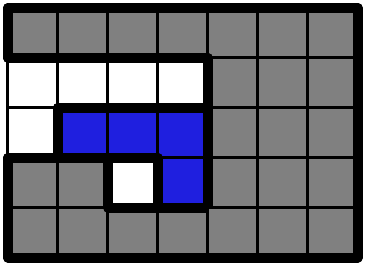}
    \caption{}
  \end{subfigure}
  \begin{subfigure}[b]{0.49\textwidth}
    \centering
    \includegraphics[width=80pt]{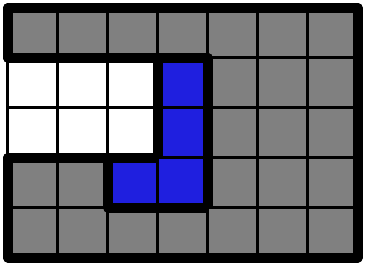}
    \caption{}
  \end{subfigure}
  \caption{Maneuver 3}
  \label{fig:j_maneuver3}
\end{figure}

\pagebreak

\section{Specific $\TT$-tris Maneuvers}\label{sec:t_maneuvers}

We use these maneuvers to push $\TT$ pieces upwards or tuck $\TT$ pieces into specific locations in the gadgets or the climbable vertical structure. These maneuvers are symmetric; we show them for rotations to and from the orientation reached by a $90^\circ$ counterclockwise rotation from the default orientation, and analogous maneuvers exist for the orientation reached by a $90^\circ$ clockwise rotation from the default orientation.

\begin{figure}[!ht]
  \begin{subfigure}[b]{0.49\textwidth}
    \centering
    \includegraphics[width=80pt]{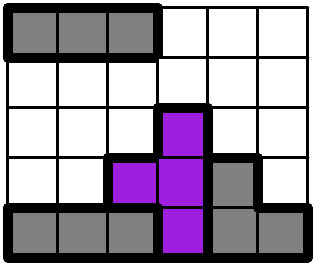}
    \caption{}
  \end{subfigure}
  \begin{subfigure}[b]{0.49\textwidth}
    \centering
    \includegraphics[width=80pt]{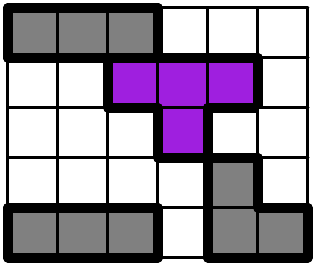}
    \caption{}
  \end{subfigure}
  \caption{Maneuver 1}
  \label{fig:t_maneuver1}
\end{figure}

\begin{figure}[!ht]
  \begin{subfigure}[b]{0.49\textwidth}
    \centering
    \includegraphics[width=80pt]{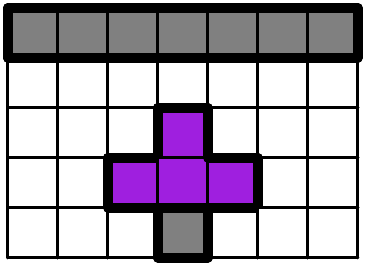}
    \caption{}
  \end{subfigure}
  \begin{subfigure}[b]{0.49\textwidth}
    \centering
    \includegraphics[width=80pt]{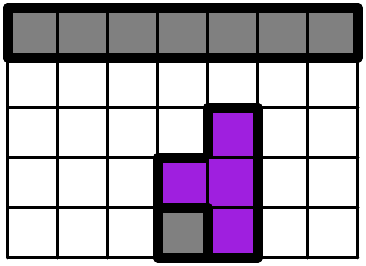}
    \caption{}
  \end{subfigure}
  \caption{Maneuver 2}
  \label{fig:t_maneuver2}
\end{figure}

\begin{figure}[!ht]
  \begin{subfigure}[b]{0.49\textwidth}
    \centering
    \includegraphics[width=80pt]{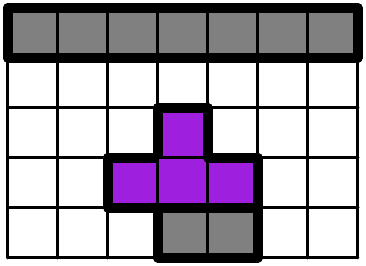}
    \caption{}
  \end{subfigure}
  \begin{subfigure}[b]{0.49\textwidth}
    \centering
    \includegraphics[width=80pt]{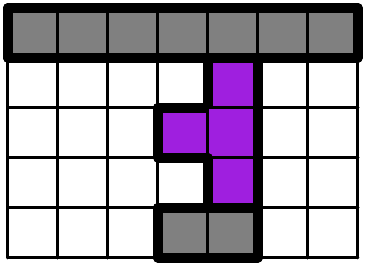}
    \caption{}
  \end{subfigure}
  \caption{Maneuver 3}
  \label{fig:t_maneuver3}
\end{figure}

\pagebreak

\section{Specific $\SS$-tris Maneuvers}\label{sec:s_maneuvers}

The square indicated with a dot is the center of the $\SS$ piece.

\subsection{Maneuver 1}\label{subsec:s_maneuver1}

This maneuver, indicated in Figure~\ref{fig:s_maneuver1}, is used to move $\SS$ pieces upwards through the clause gadgets and through the climbable vertical structure.

\begin{figure}[!ht]
  \begin{subfigure}[b]{0.49\textwidth}
    \centering
    \includegraphics[width=80pt]{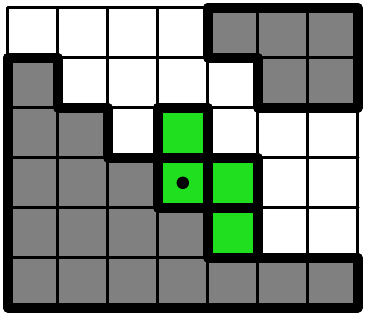}
    \caption{}
  \end{subfigure}
  \begin{subfigure}[b]{0.49\textwidth}
    \centering
    \includegraphics[width=80pt]{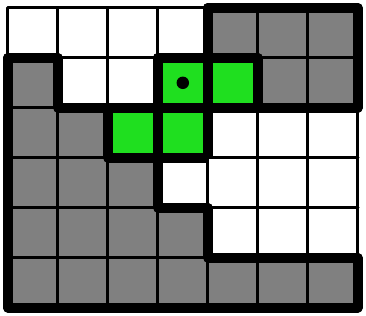}
    \caption{}
  \end{subfigure}
  \caption{Maneuver 1}
  \label{fig:s_maneuver1}
\end{figure}

\subsection{Maneuver 2}\label{subsec:s_maneuver2}

This maneuver, indicated in Figure~\ref{fig:s_maneuver2}, is used to fit some of the $\SS$ pieces into their positions in the clause gadgets and the climbable vertical structure.

\begin{figure}[!ht]
  \begin{subfigure}[b]{0.245\textwidth}
    \centering
    \includegraphics[width=80pt]{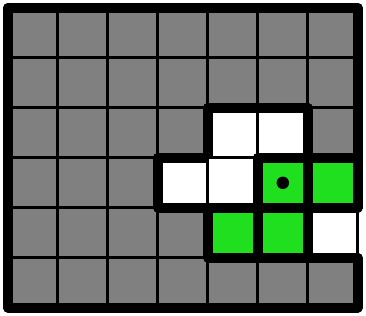}
    \caption{}
  \end{subfigure}
  \begin{subfigure}[b]{0.245\textwidth}
    \centering
    \includegraphics[width=80pt]{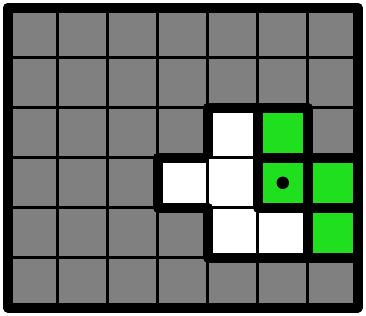}
    \caption{}
  \end{subfigure}
  \begin{subfigure}[b]{0.245\textwidth}
    \centering
    \includegraphics[width=80pt]{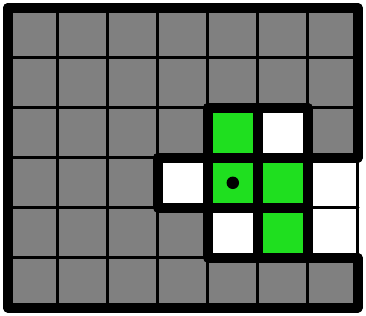}
    \caption{}
  \end{subfigure}
  \begin{subfigure}[b]{0.245\textwidth}
    \centering
    \includegraphics[width=80pt]{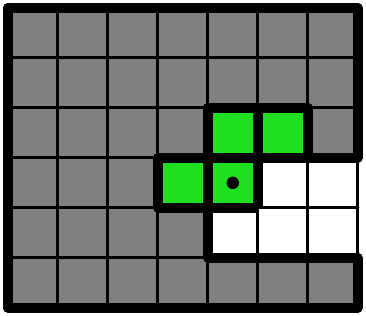}
    \caption{}
  \end{subfigure}
  \caption{Maneuver 2}
  \label{fig:s_maneuver2}
\end{figure}

\subsection{Maneuver 3}\label{subsec:s_maneuver3}

This maneuver, indicated in Figure~\ref{fig:s_maneuver3}, is used to move $\SS$ pieces from left to right in a clause gadget.

\begin{figure}[!ht]
  \begin{subfigure}[b]{0.33\textwidth}
    \centering
    \includegraphics[width=80pt]{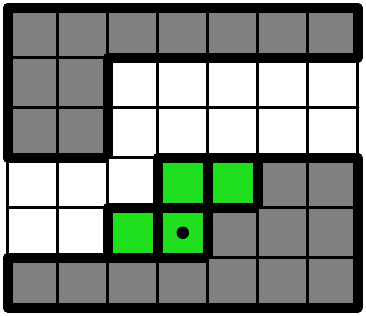}
    \caption{}
  \end{subfigure}
  \begin{subfigure}[b]{0.33\textwidth}
    \centering
    \includegraphics[width=80pt]{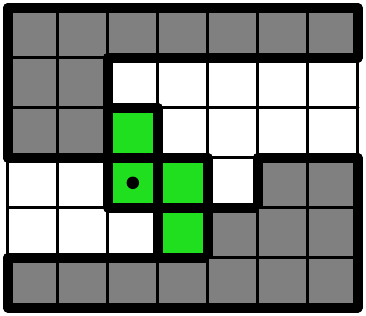}
    \caption{}
  \end{subfigure}
  \begin{subfigure}[b]{0.33\textwidth}
    \centering
    \includegraphics[width=80pt]{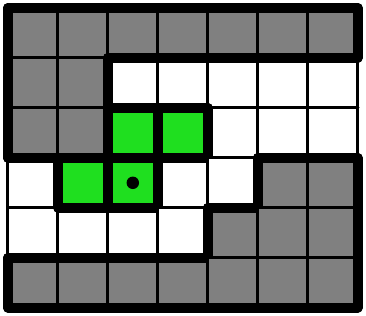}
    \caption{}
  \end{subfigure}
  \caption{Maneuver 3}
  \label{fig:s_maneuver3}
\end{figure}

\subsection{Maneuver 4}\label{subsec:s_maneuver4}

This maneuver, indicated in Figure~\ref{fig:s_maneuver4}, is used to fit $\SS$ pieces through the "False" tunnel of a variable gadget.

\begin{figure}[!ht]
  \begin{subfigure}[b]{0.33\textwidth}
    \centering
    \includegraphics[width=60pt]{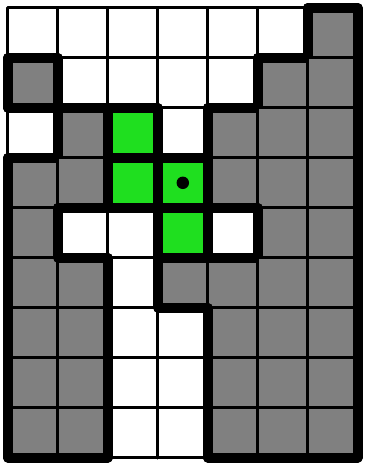}
    \caption{}
  \end{subfigure}
  \begin{subfigure}[b]{0.33\textwidth}
    \centering
    \includegraphics[width=60pt]{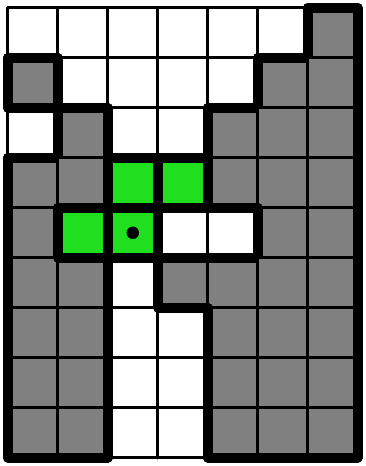}
    \caption{}
  \end{subfigure}
  \begin{subfigure}[b]{0.33\textwidth}
    \centering
    \includegraphics[width=60pt]{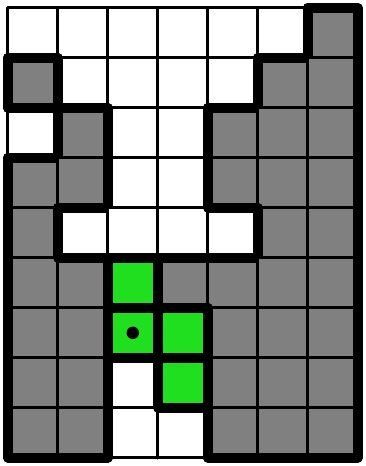}
    \caption{}
  \end{subfigure}
  \caption{Maneuver 4}
  \label{fig:s_maneuver4}
\end{figure}

\subsection{Maneuver 5}\label{subsec:s_maneuver5}

This maneuver, indicated in Figure~\ref{fig:s_maneuver5}, is used to fit $\SS$ piece number $9$ in Figure \ref{fig:s_var_filling}(b).

\begin{figure}[!ht]
  \begin{subfigure}[b]{0.33\textwidth}
    \centering
    \includegraphics[width=80pt]{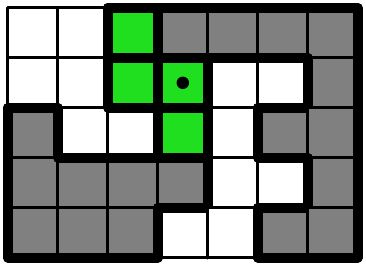}
    \caption{}
  \end{subfigure}
  \begin{subfigure}[b]{0.33\textwidth}
    \centering
    \includegraphics[width=80pt]{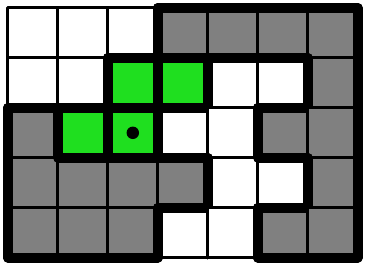}
    \caption{}
  \end{subfigure}
  \begin{subfigure}[b]{0.33\textwidth}
    \centering
    \includegraphics[width=80pt]{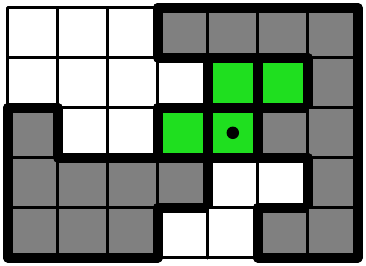}
    \caption{}
  \end{subfigure}
  \begin{subfigure}[b]{0.49\textwidth}
    \centering
    \includegraphics[width=80pt]{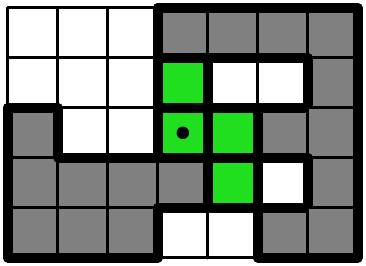}
    \caption{}
  \end{subfigure}
  \begin{subfigure}[b]{0.49\textwidth}
    \centering
    \includegraphics[width=80pt]{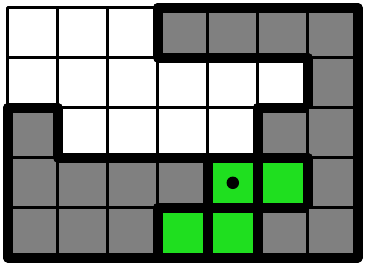}
    \caption{}
  \end{subfigure}
  \caption{Maneuver 5}
  \label{fig:s_maneuver5}
\end{figure}

\subsection{Maneuver 6}\label{subsec:s_maneuver6}

This maneuver, indicated in Figure~\ref{fig:s_maneuver6}, along with a variation of this maneuver (where the initial position of the $\SS$ piece is shifted 1 square to the left) is used to get $\SS$ pieces through certain gaps in variable gadgets.

\begin{figure}[!ht]
  \begin{subfigure}[b]{0.49\textwidth}
    \centering
    \includegraphics[width=80pt]{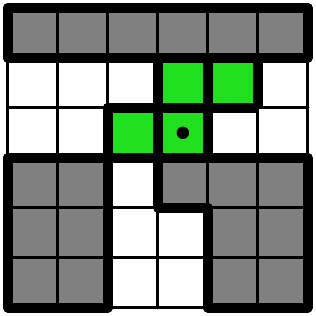}
    \caption{}
  \end{subfigure}
  \begin{subfigure}[b]{0.49\textwidth}
    \centering
    \includegraphics[width=80pt]{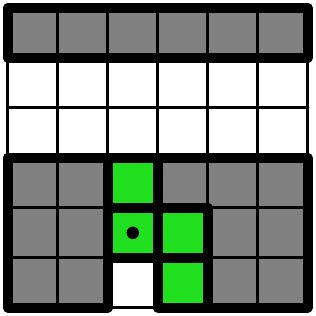}
    \caption{}
  \end{subfigure}
  \caption{Maneuver 6}
  \label{fig:s_maneuver6}
\end{figure}

\pagebreak

\section{$\II$-tris Gadget Tilings}

Numbers in $\II$ pieces denote placement order numbers; smaller numbers are filled first (so $1$ would denote the first $\II$ piece that is placed in a tiling).

\begin{figure}[!ht]
  \centering
  \begin{subfigure}[b]{0.49\textwidth}
    \centering
    \includegraphics[width=220pt]{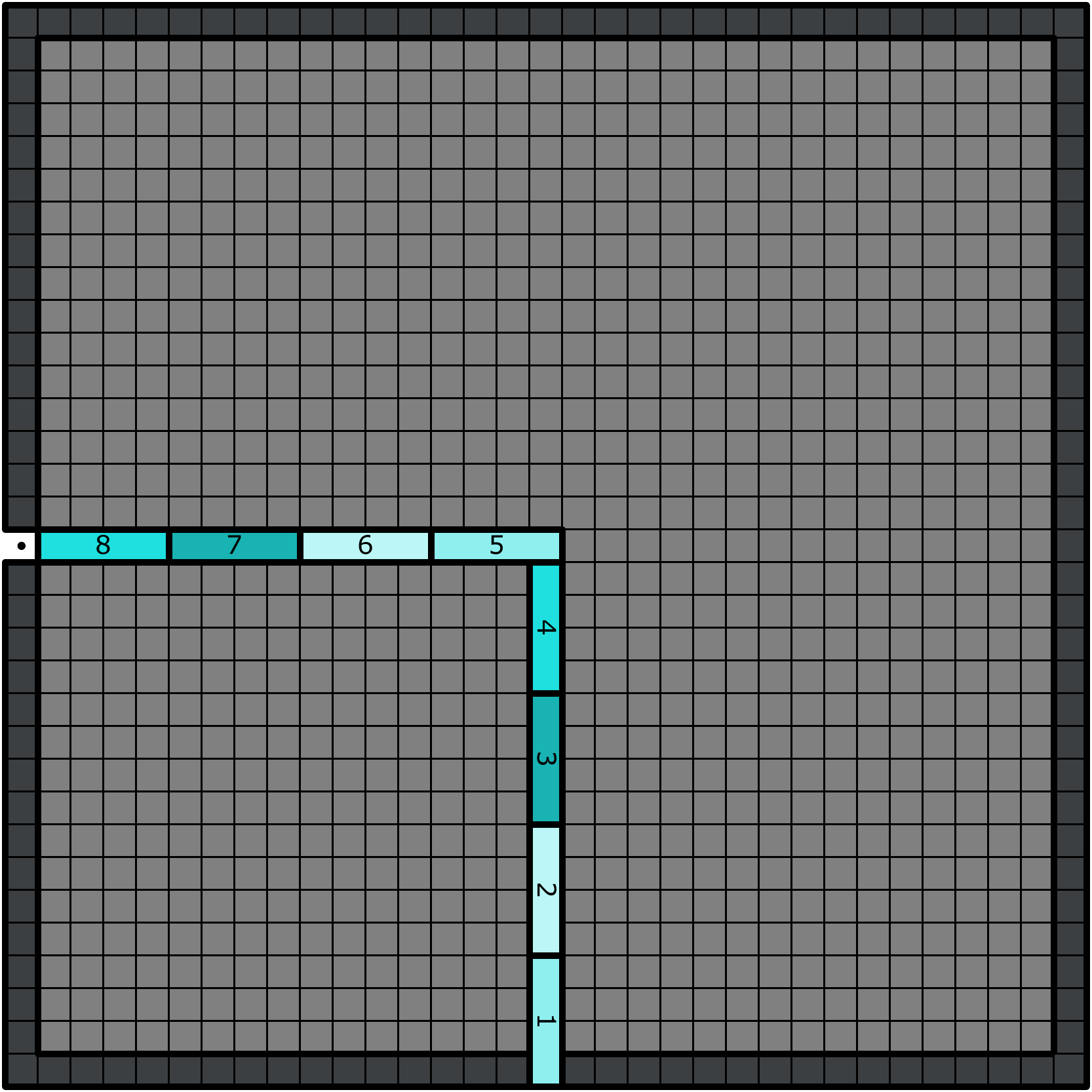}
    \caption{First tiling of an LC gadget}
  \end{subfigure}
  \begin{subfigure}[b]{0.49\textwidth}
    \centering
    \includegraphics[width=220pt]{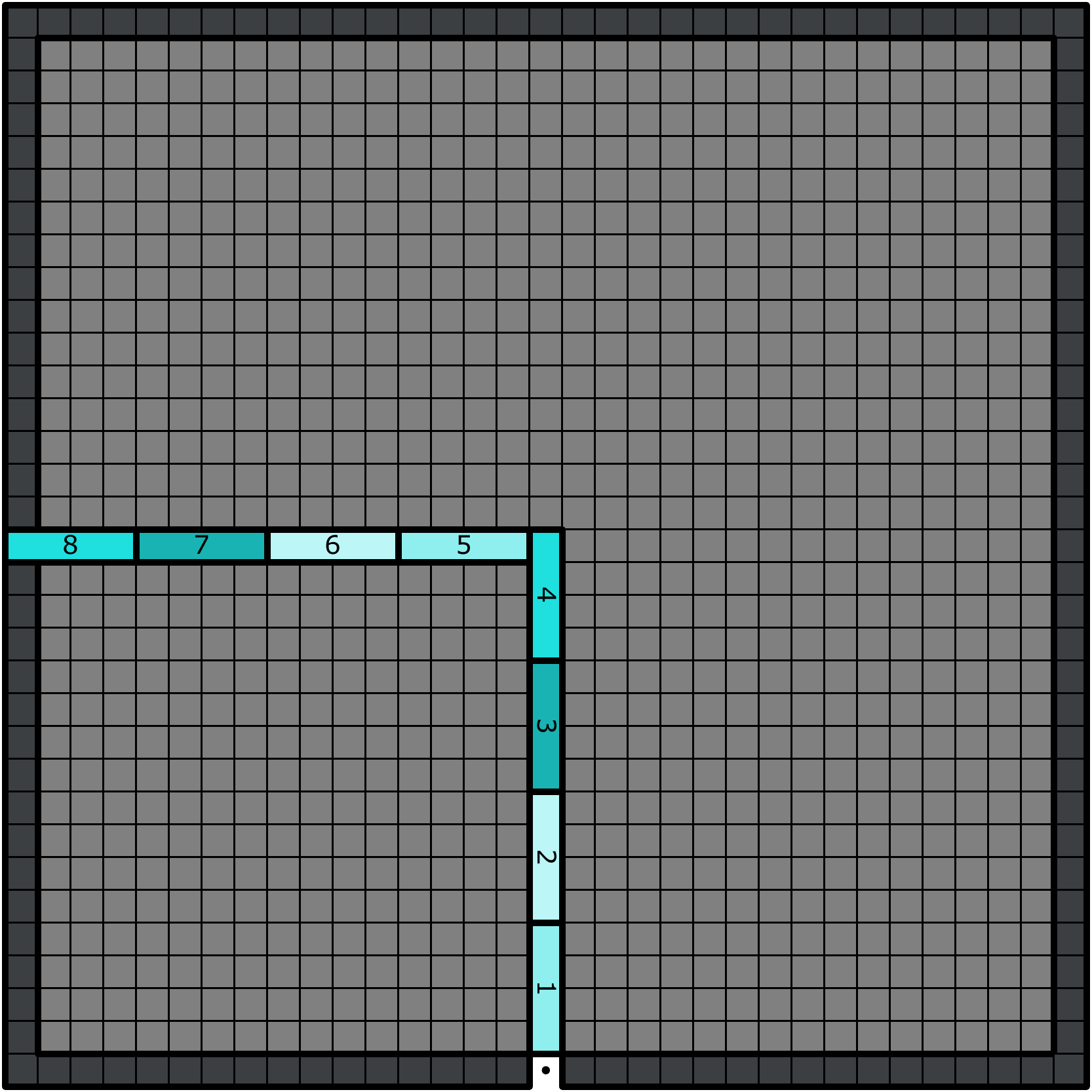}
    \caption{Second tiling of an LC gadget}
  \end{subfigure}
  \begin{subfigure}[b]{0.49\textwidth}
    \centering
    \includegraphics[width=220pt]{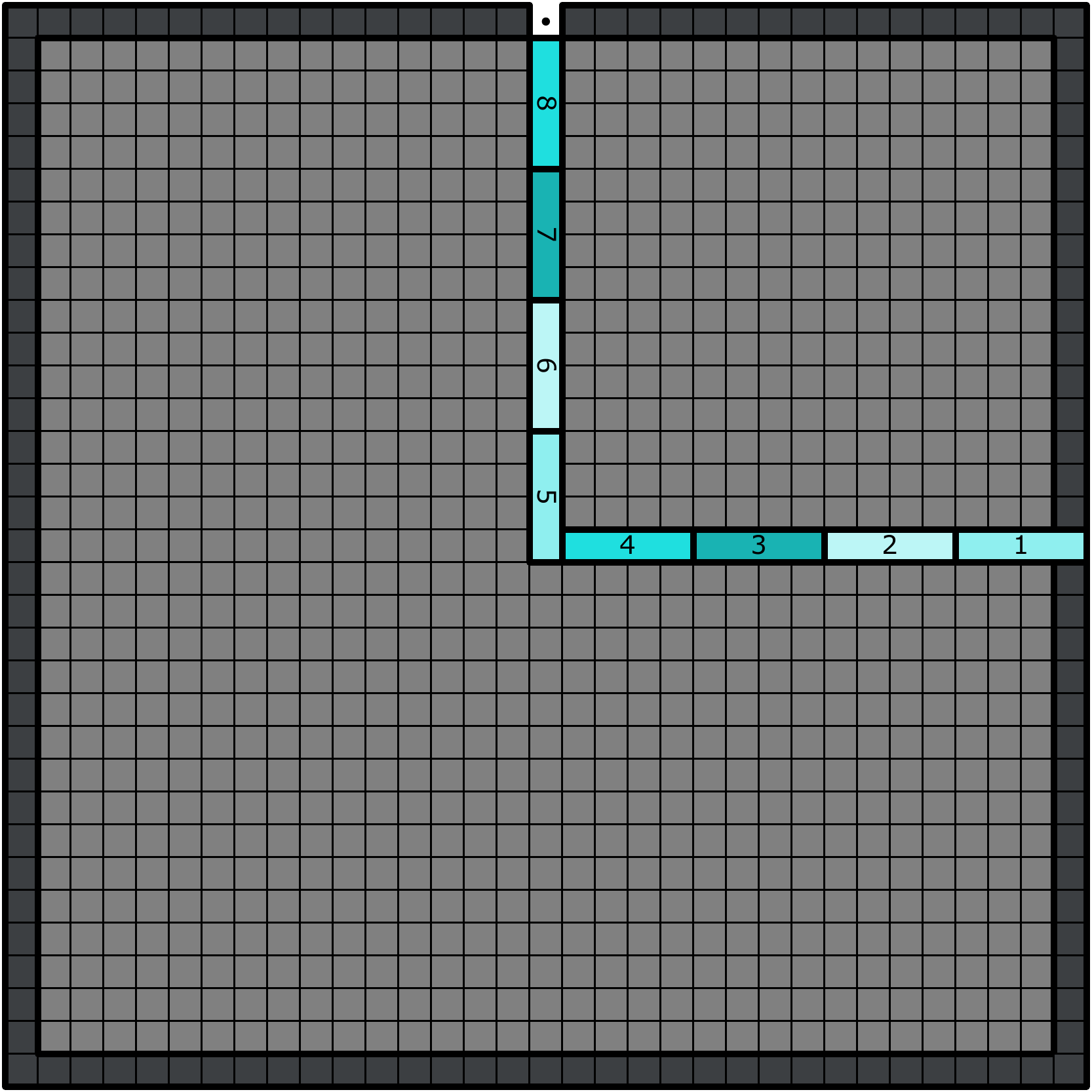}
    \caption{First tiling of a UC gadget}
  \end{subfigure}
  \begin{subfigure}[b]{0.49\textwidth}
    \centering
    \includegraphics[width=220pt]{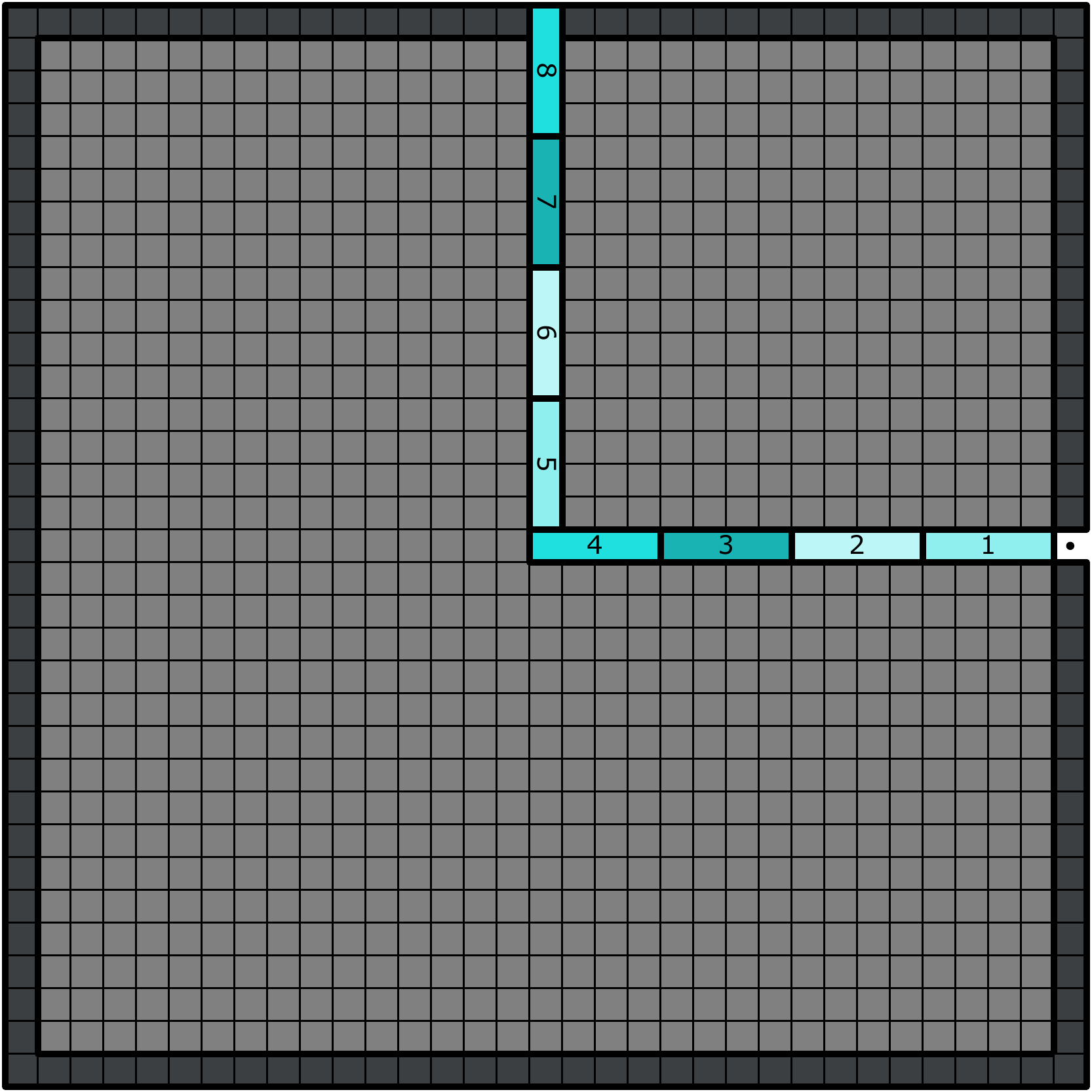}
    \caption{Second tiling of a UC gadget}
  \end{subfigure}
  \caption{Tilings + placement orders for corner gadgets}
  \label{fig:i_corner_tiling}
\end{figure}

\begin{figure}[!ht]
  \begin{subfigure}[b]{0.49\textwidth}
    \centering
    \includegraphics[width=220pt]{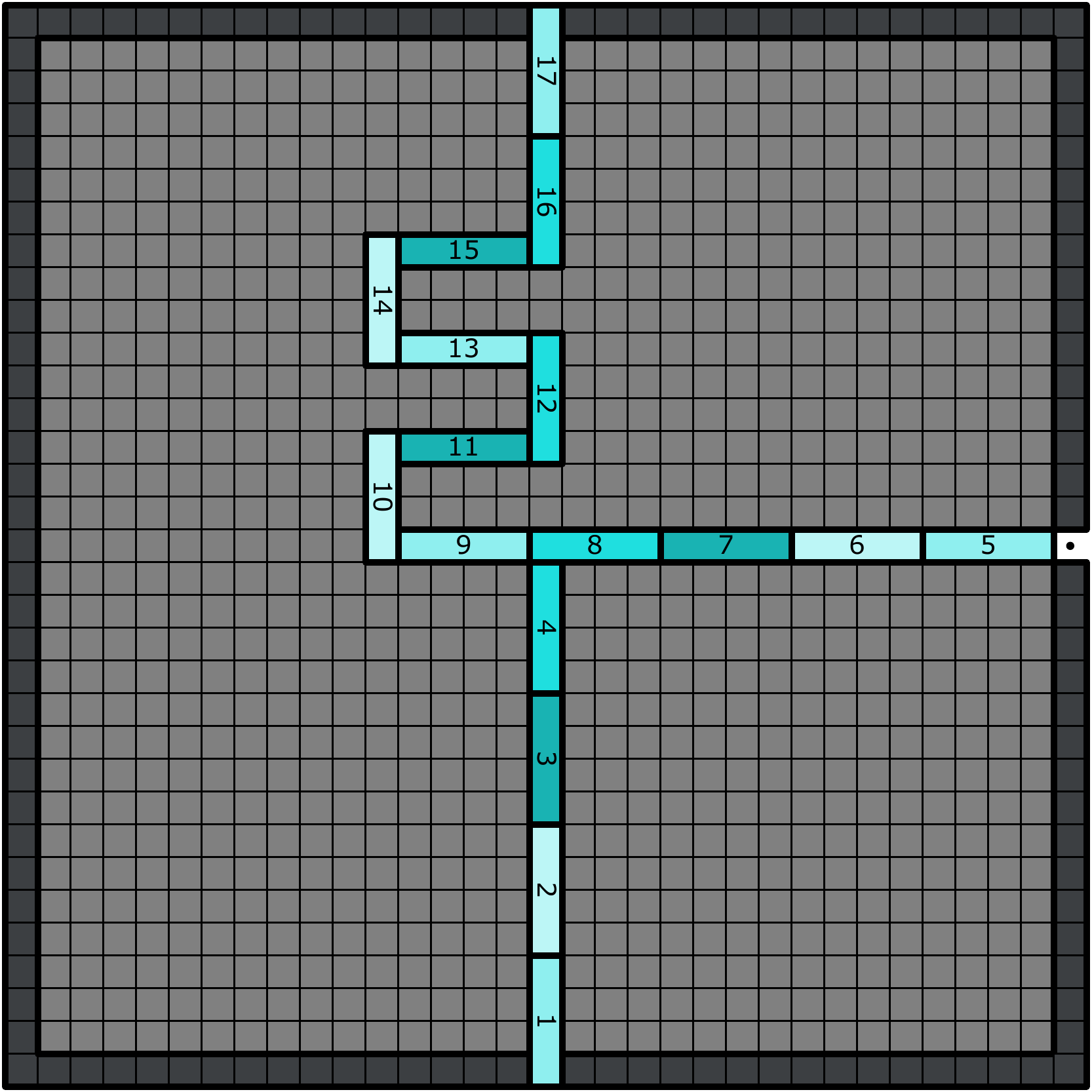}
    \caption{First tiling}
  \end{subfigure}
  \begin{subfigure}[b]{0.49\textwidth}
    \centering
    \includegraphics[width=220pt]{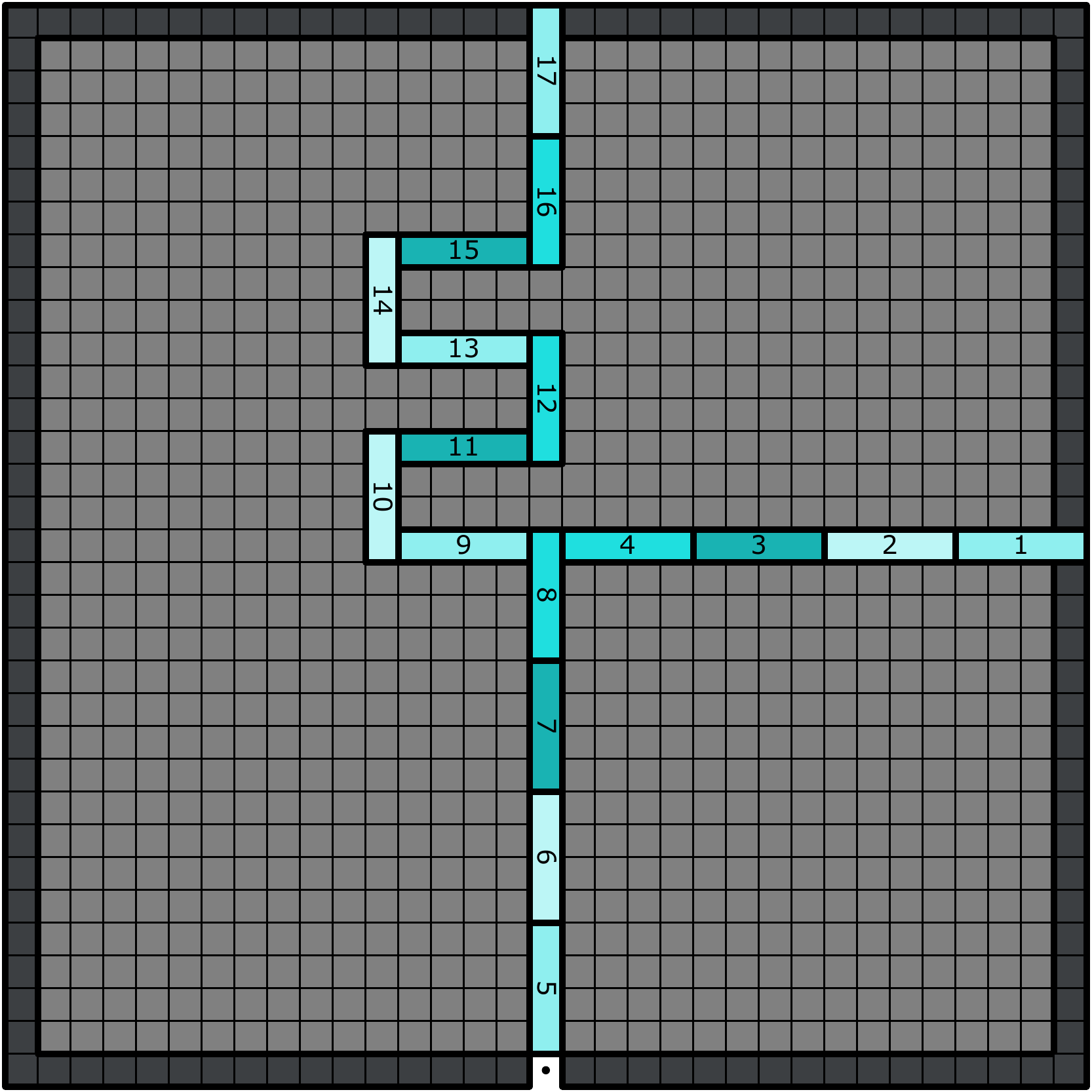}
    \caption{Second tiling}
  \end{subfigure}
  \caption{Tilings + placement orders for the entry corner gadget}
  \label{fig:i_entrycorner_tilings}
\end{figure}

\begin{figure}[!ht]
  \begin{subfigure}[b]{0.49\textwidth}
    \centering
    \includegraphics[width=220pt]{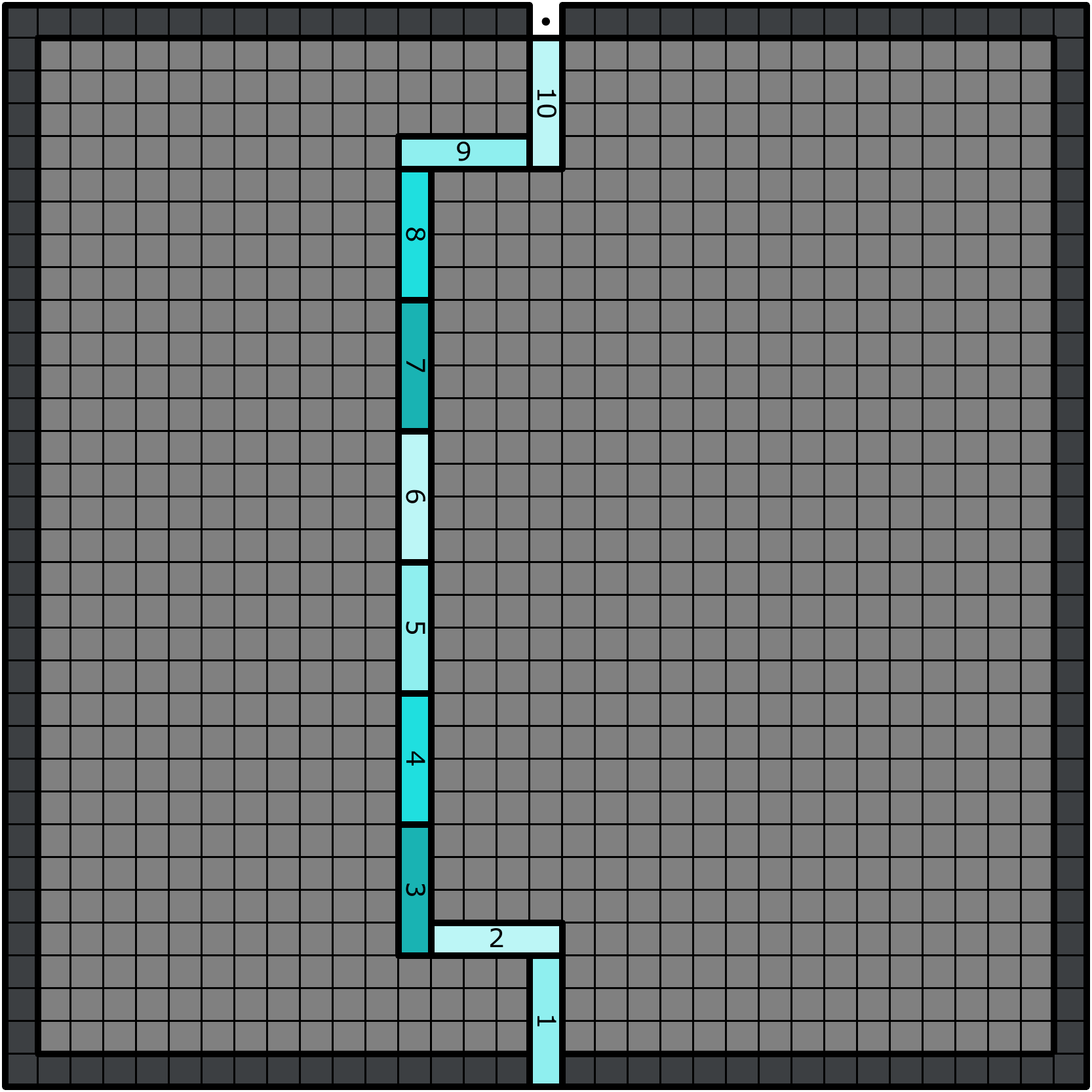}
    \caption{First tiling}
  \end{subfigure}
  \begin{subfigure}[b]{0.49\textwidth}
    \centering
    \includegraphics[width=220pt]{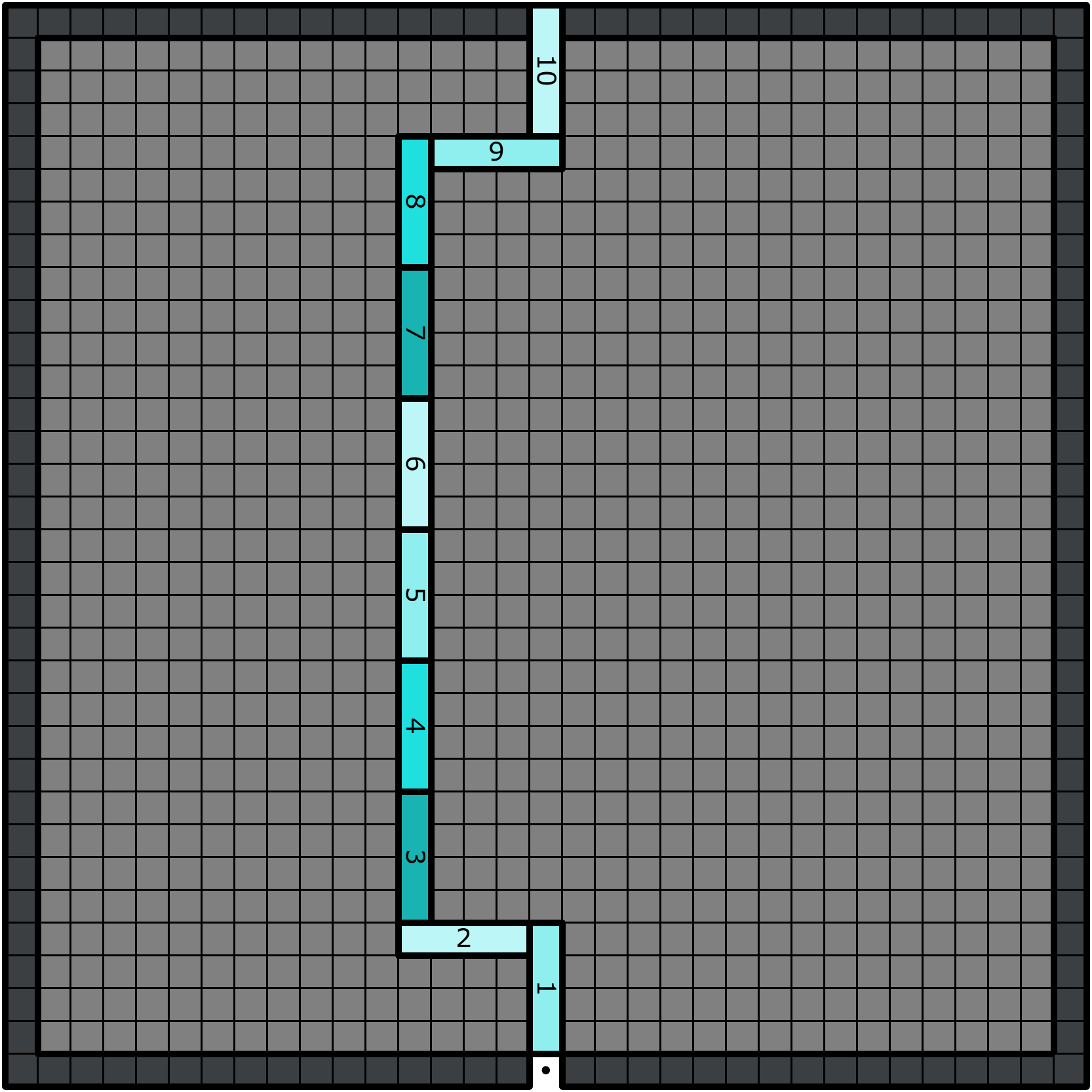}
    \caption{Second tiling}
  \end{subfigure}
  \caption{Tilings + placement orders for the vertical line gadget}
  \label{fig:i_vertline_tilings}
\end{figure}

\begin{figure}[!ht]
  \begin{subfigure}[b]{0.49\textwidth}
    \centering
    \includegraphics[width=220pt]{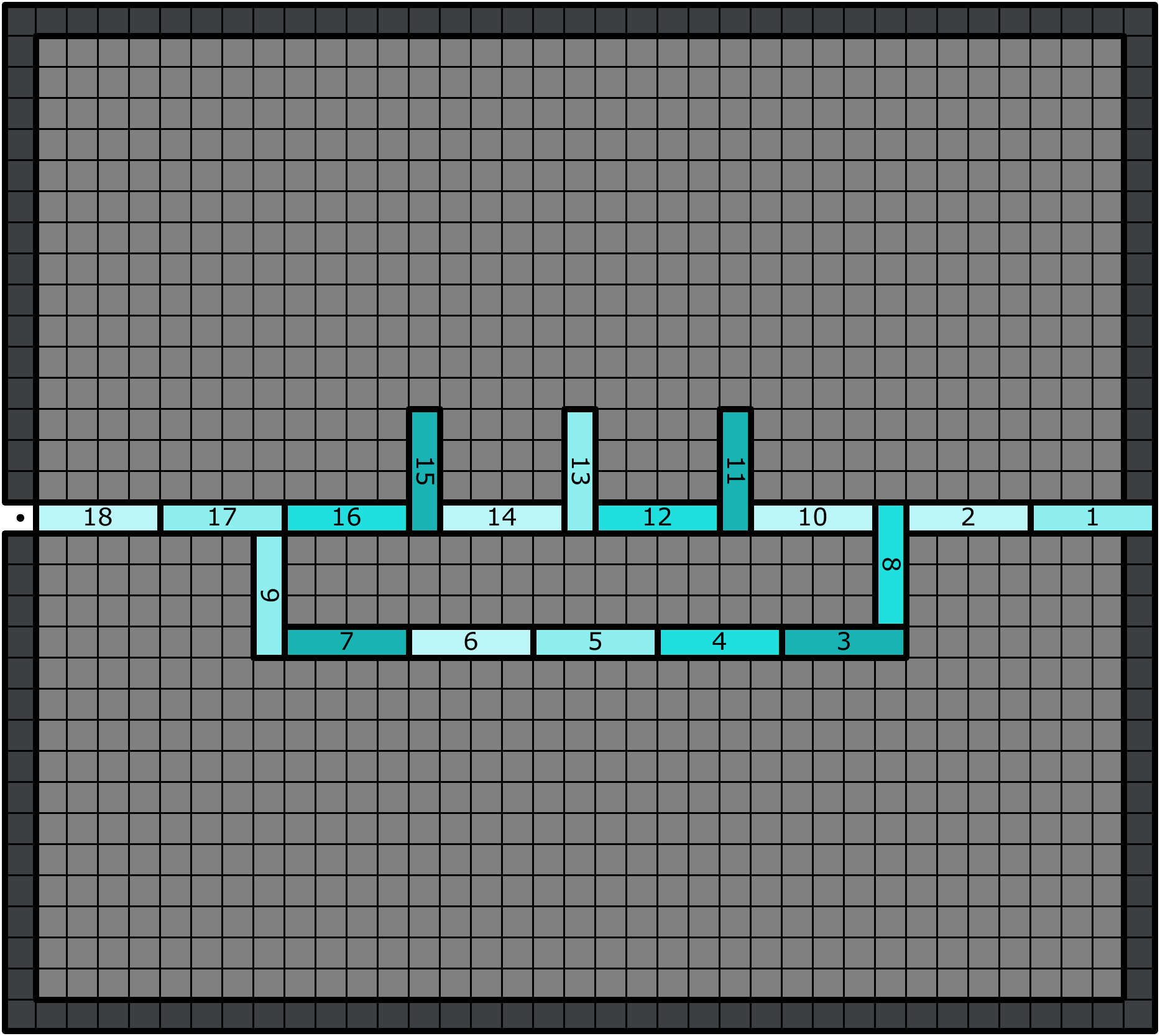}
    \caption{First tiling, left port}
  \end{subfigure}
  \begin{subfigure}[b]{0.49\textwidth}
    \centering
    \includegraphics[width=220pt]{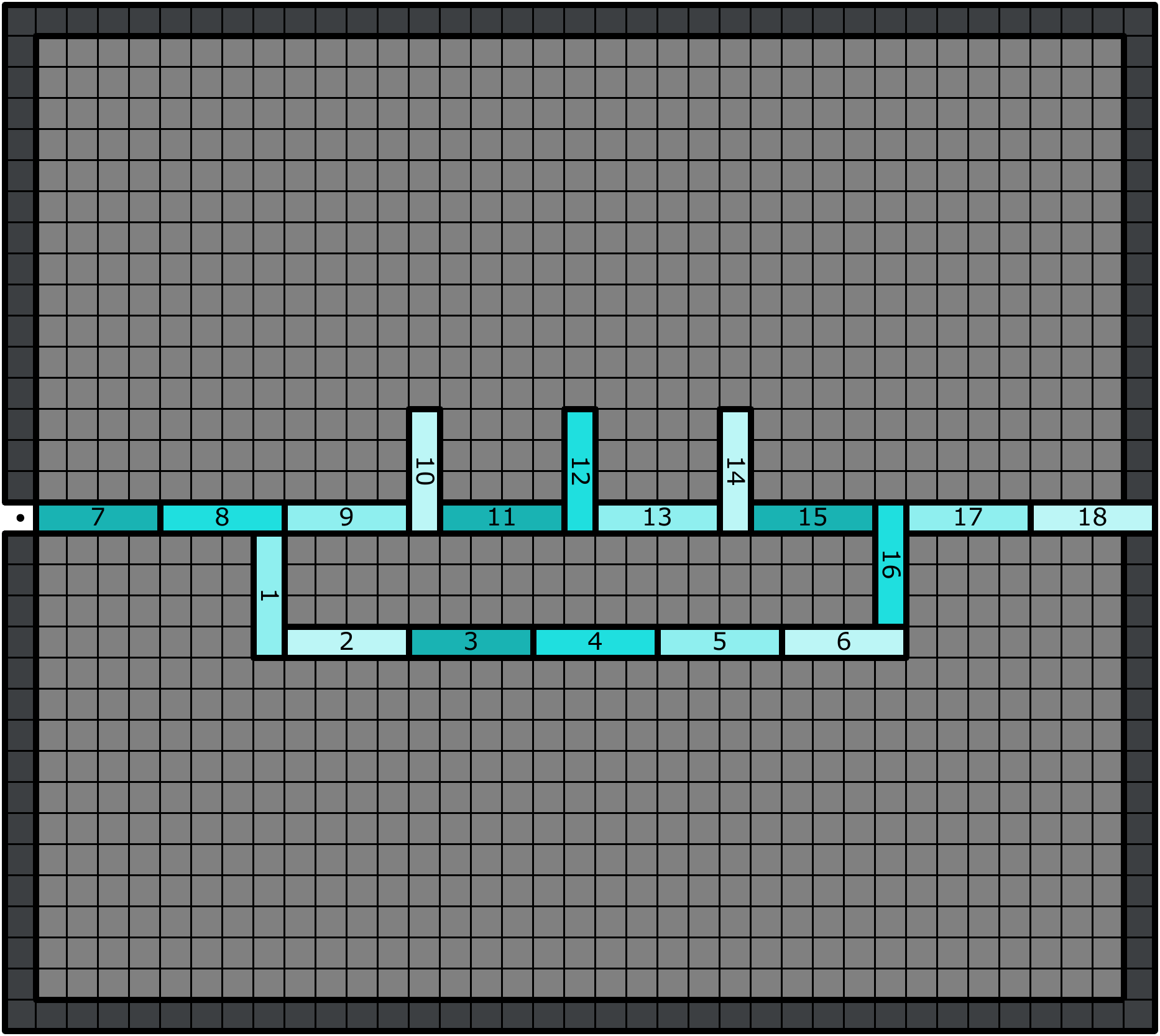}
    \caption{First tiling, right port}
  \end{subfigure}
  \begin{subfigure}[b]{0.49\textwidth}
    \centering
    \includegraphics[width=220pt]{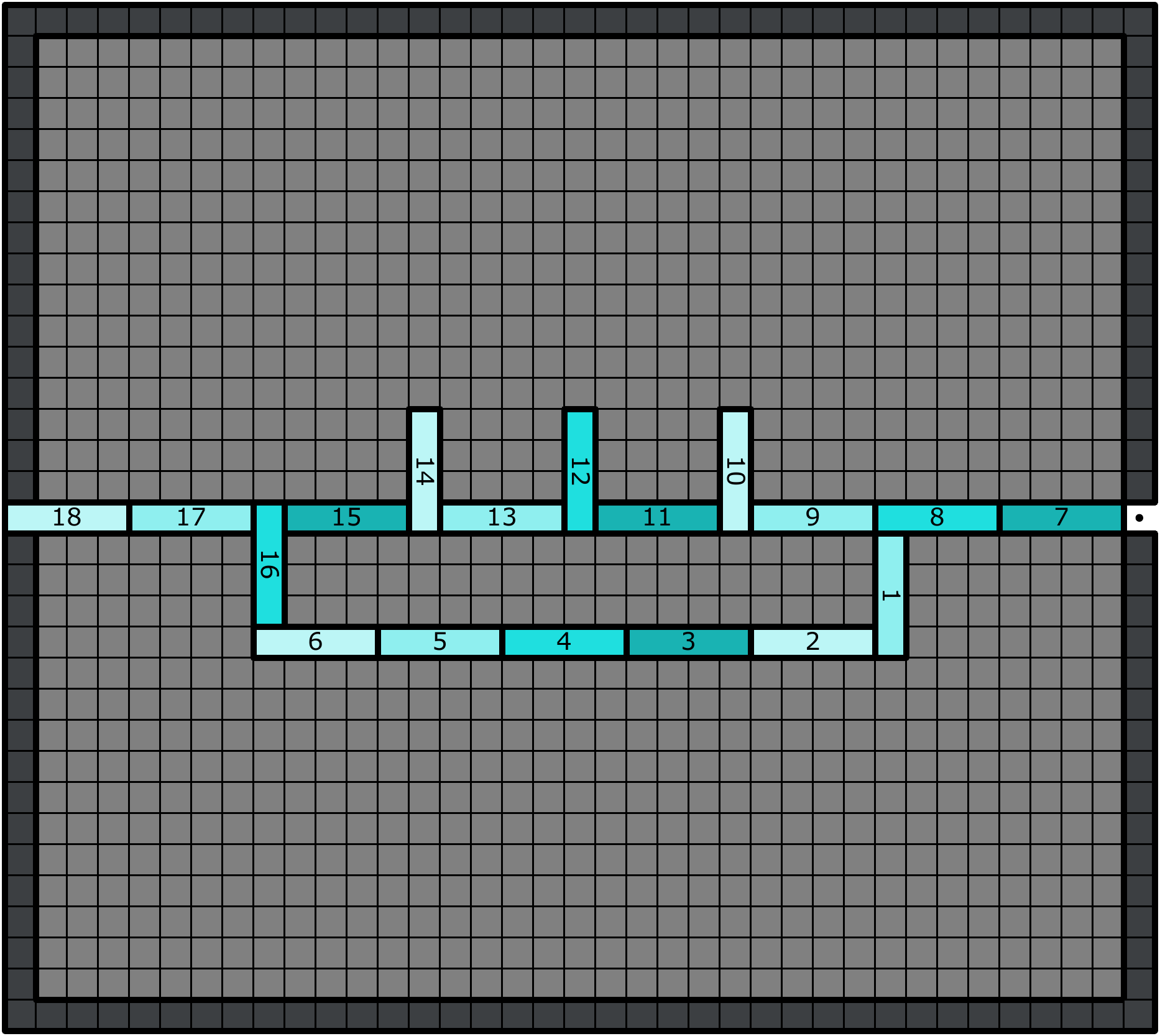}
    \caption{Second tiling, left port}
  \end{subfigure}
  \begin{subfigure}[b]{0.49\textwidth}
    \centering
    \includegraphics[width=220pt]{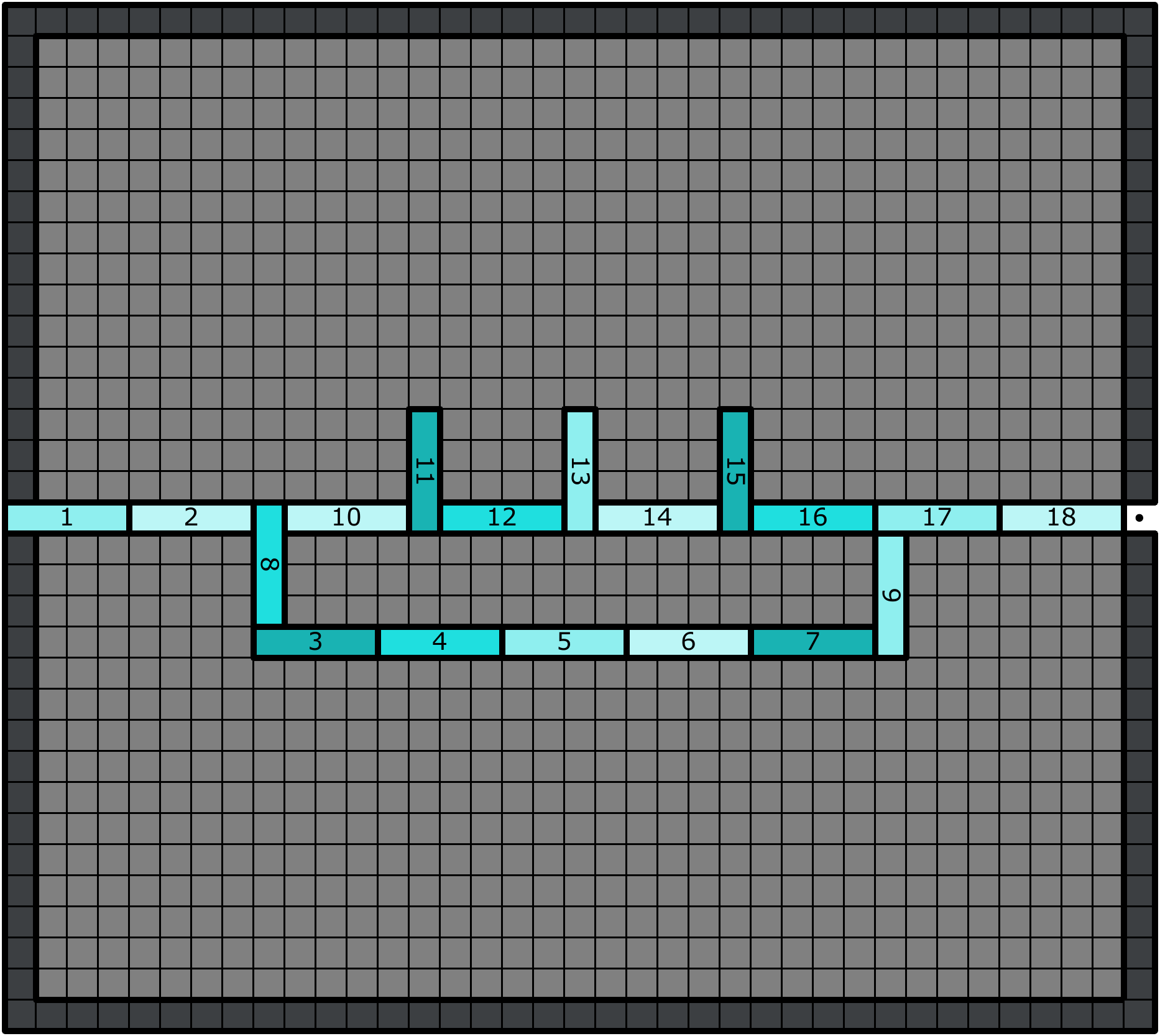}
    \caption{Second tiling, right port}
  \end{subfigure}
  \caption{Tilings + placement orders for the HL gadget}
  \label{fig:i_horline_tilings}
\end{figure}

\begin{figure}[!ht]
  \begin{subfigure}[b]{0.49\textwidth}
    \centering
    \includegraphics[width=220pt]{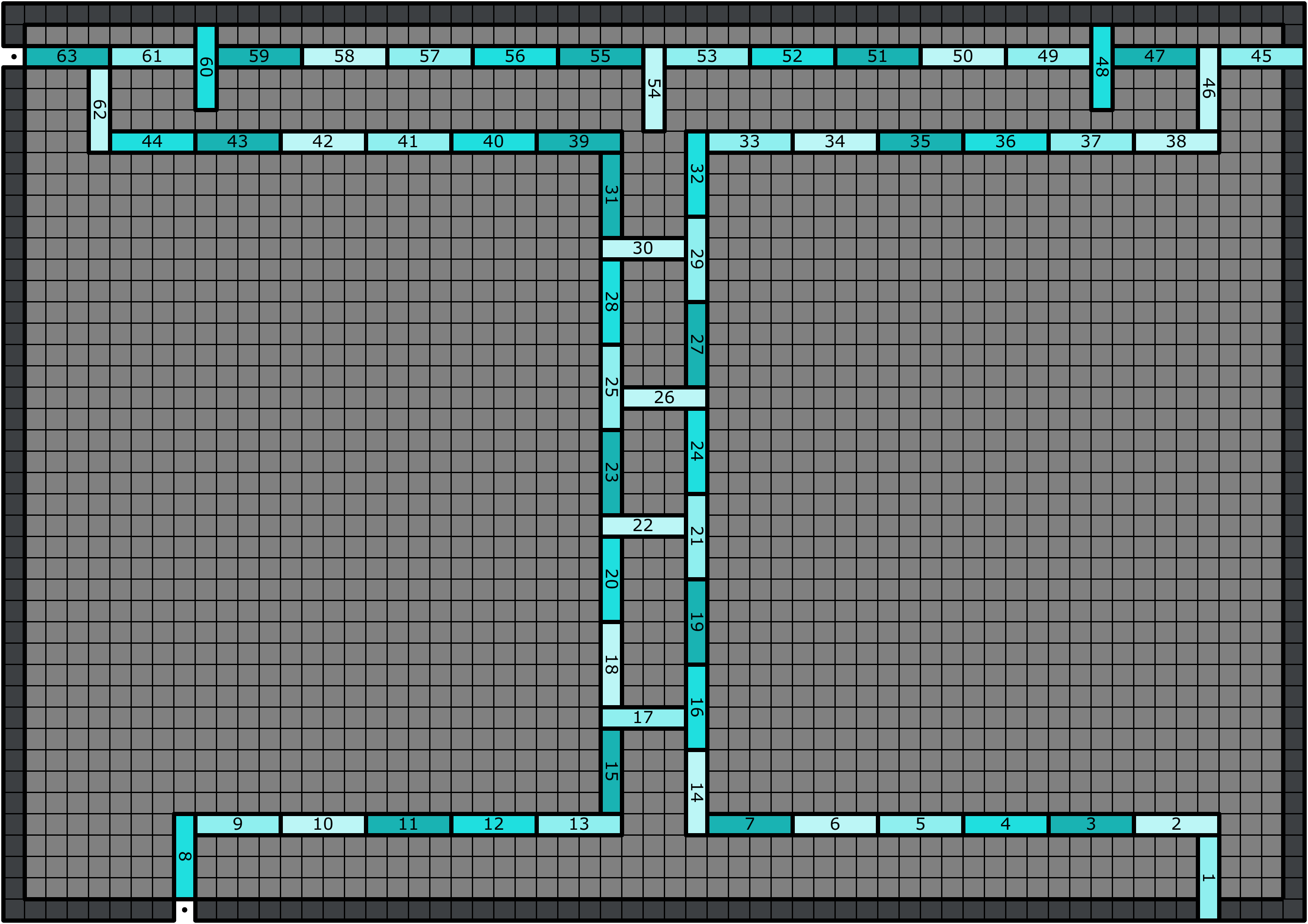}
    \caption{First tiling, left port}
  \end{subfigure}
  \begin{subfigure}[b]{0.49\textwidth}
    \centering
    \includegraphics[width=220pt]{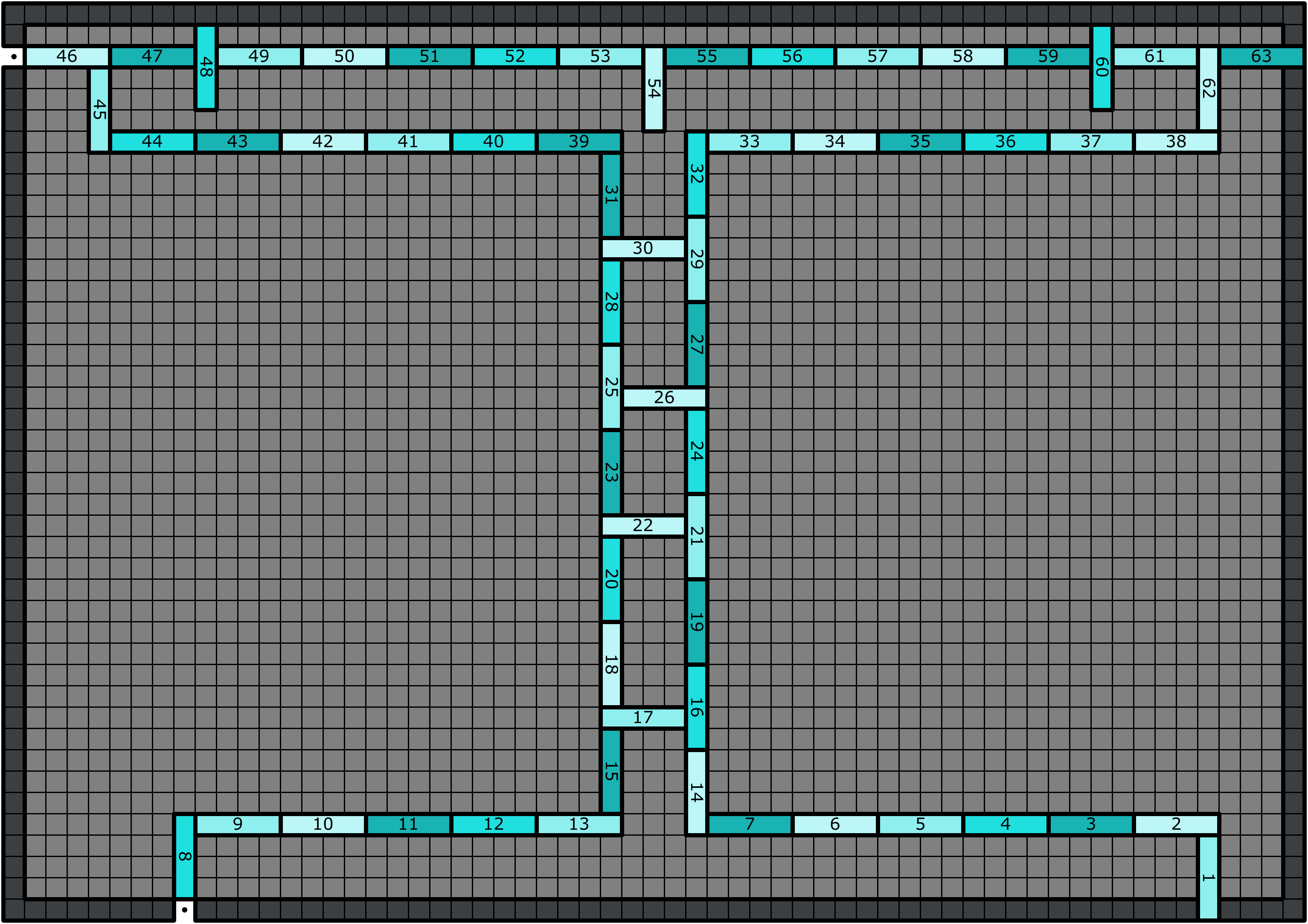}
    \caption{First tiling, right port}
  \end{subfigure}
  \begin{subfigure}[b]{0.49\textwidth}
    \centering
    \includegraphics[width=220pt]{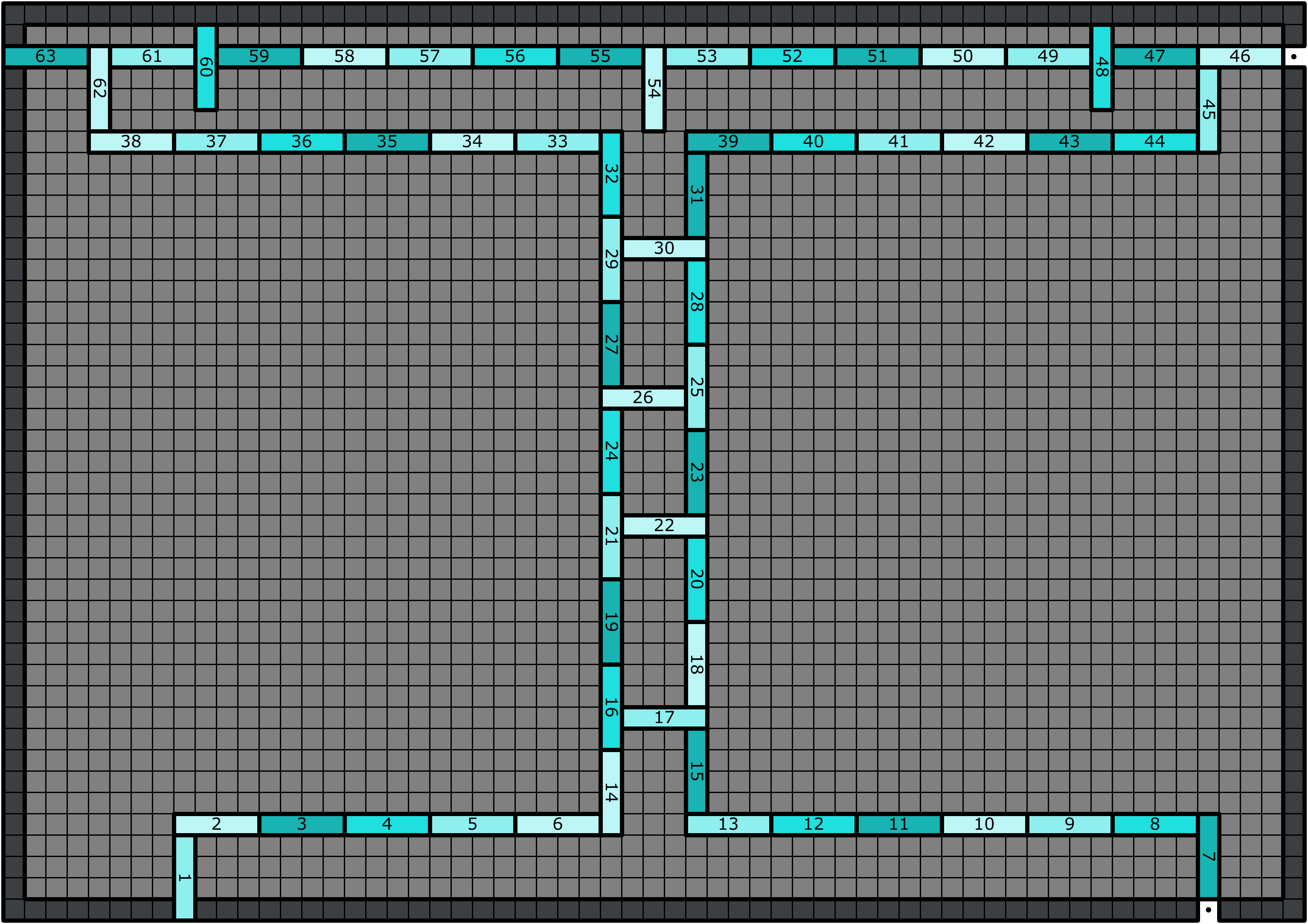}
    \caption{Second tiling, left port}
  \end{subfigure}
  \begin{subfigure}[b]{0.49\textwidth}
    \centering
    \includegraphics[width=220pt]{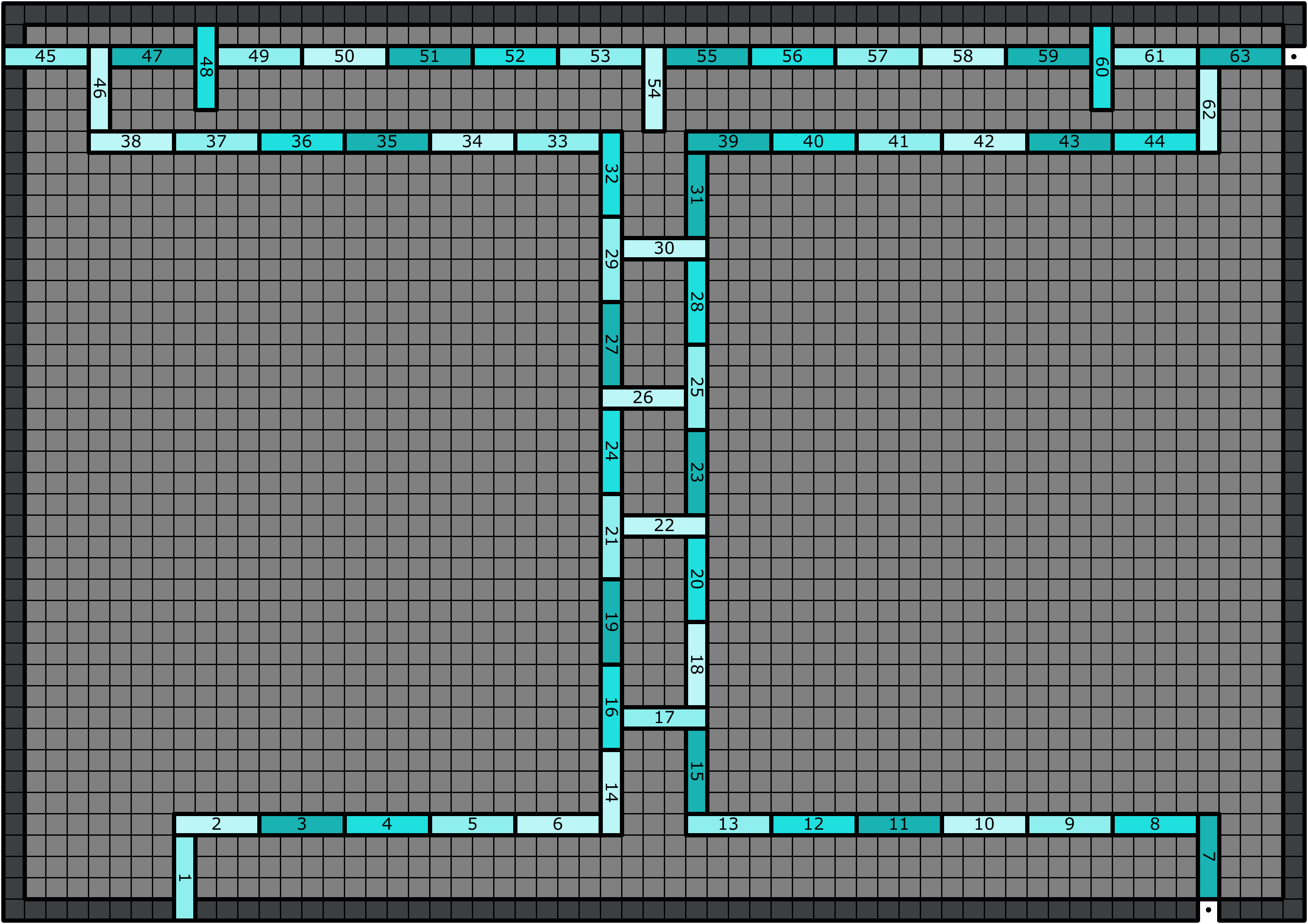}
    \caption{Second tiling, right port}
  \end{subfigure}
  \caption{Tilings + placement orders for the duplicator gadget}
  \label{fig:i_duplicator_tilings}
\end{figure}

\begin{figure}[!ht]
  \begin{subfigure}[b]{0.325\textwidth}
    \centering
    \includegraphics[width=140pt]{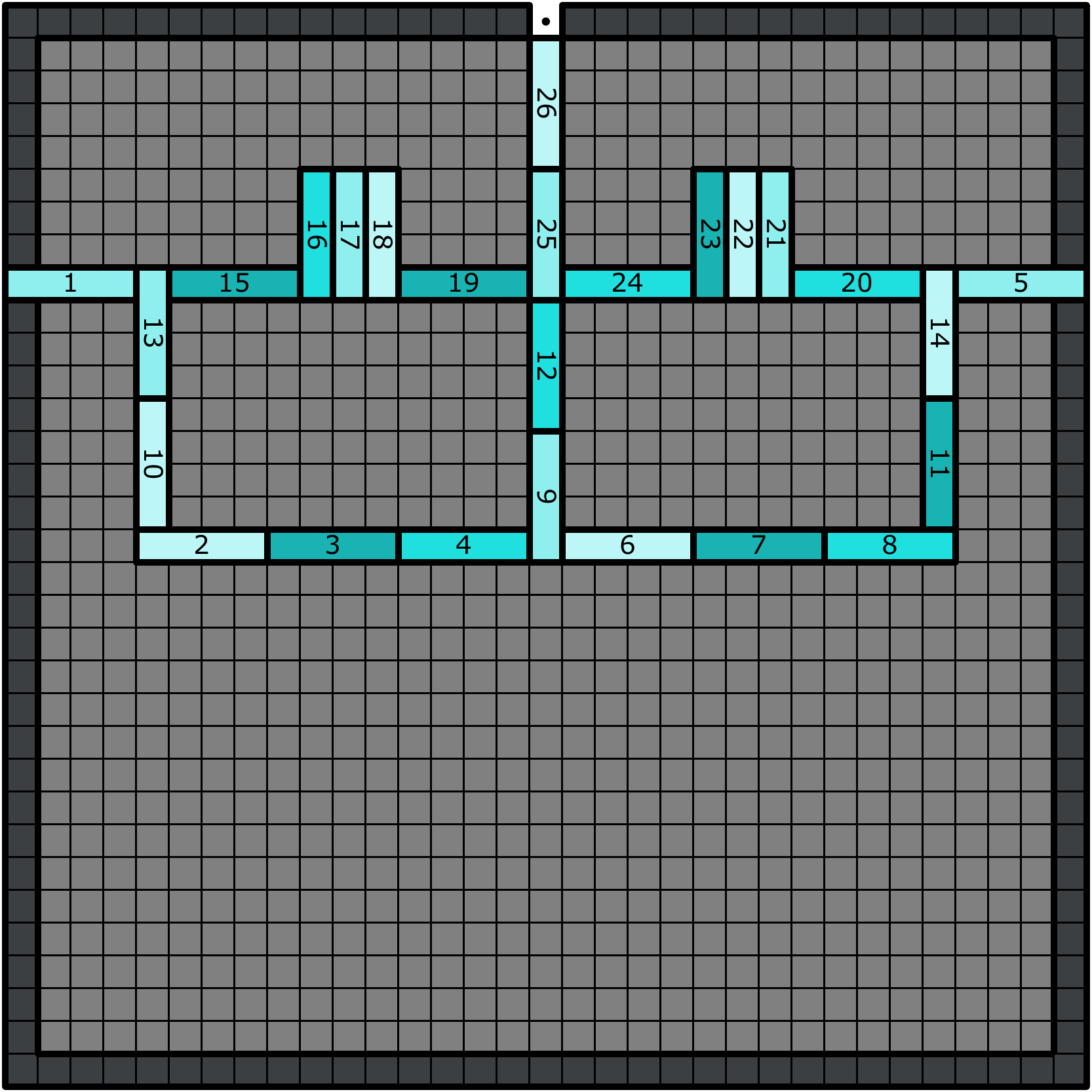}
    \caption{First tiling}
  \end{subfigure}
  \begin{subfigure}[b]{0.325\textwidth}
    \centering
    \includegraphics[width=140pt]{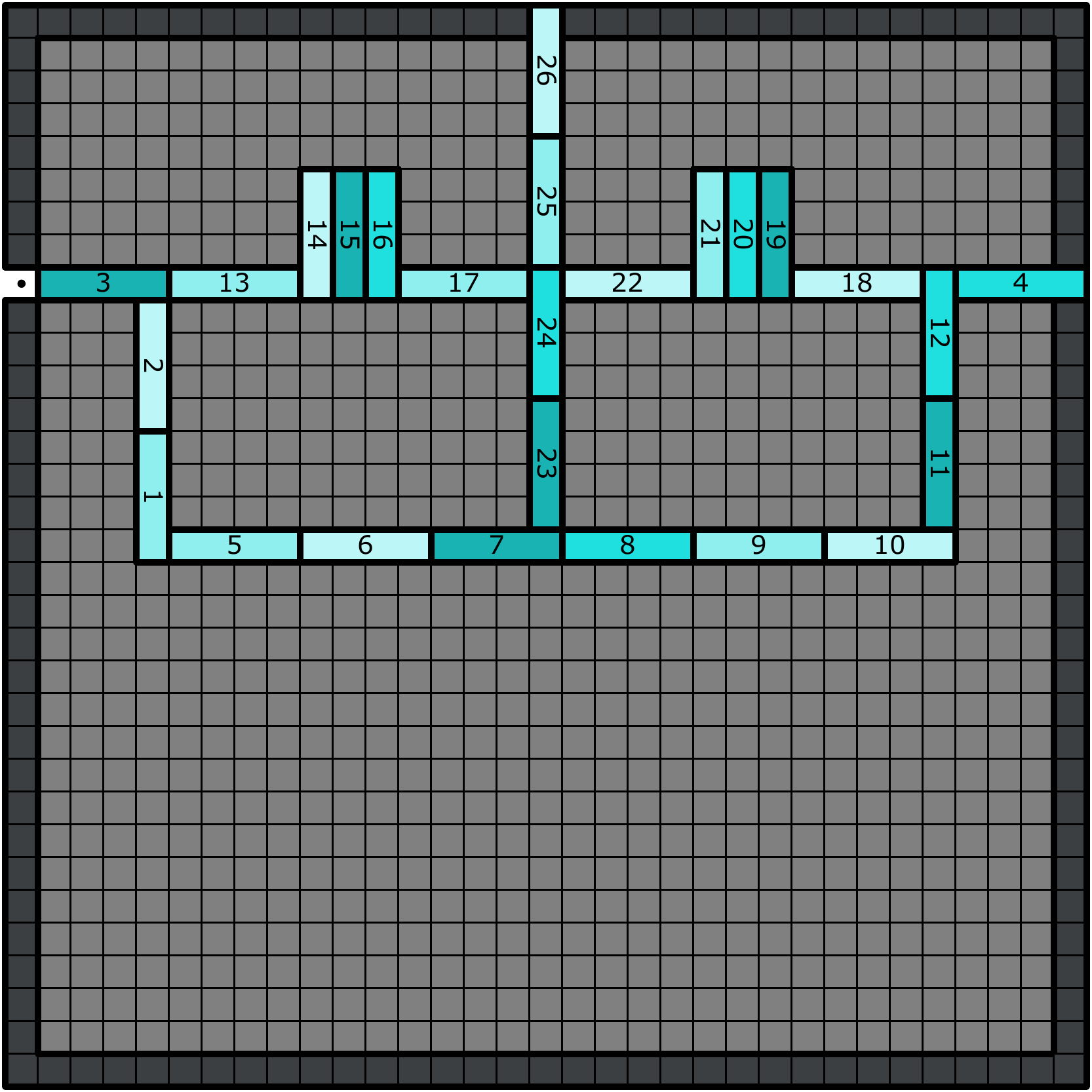}
    \caption{Second tiling}
  \end{subfigure}
  \begin{subfigure}[b]{0.325\textwidth}
    \centering
    \includegraphics[width=140pt]{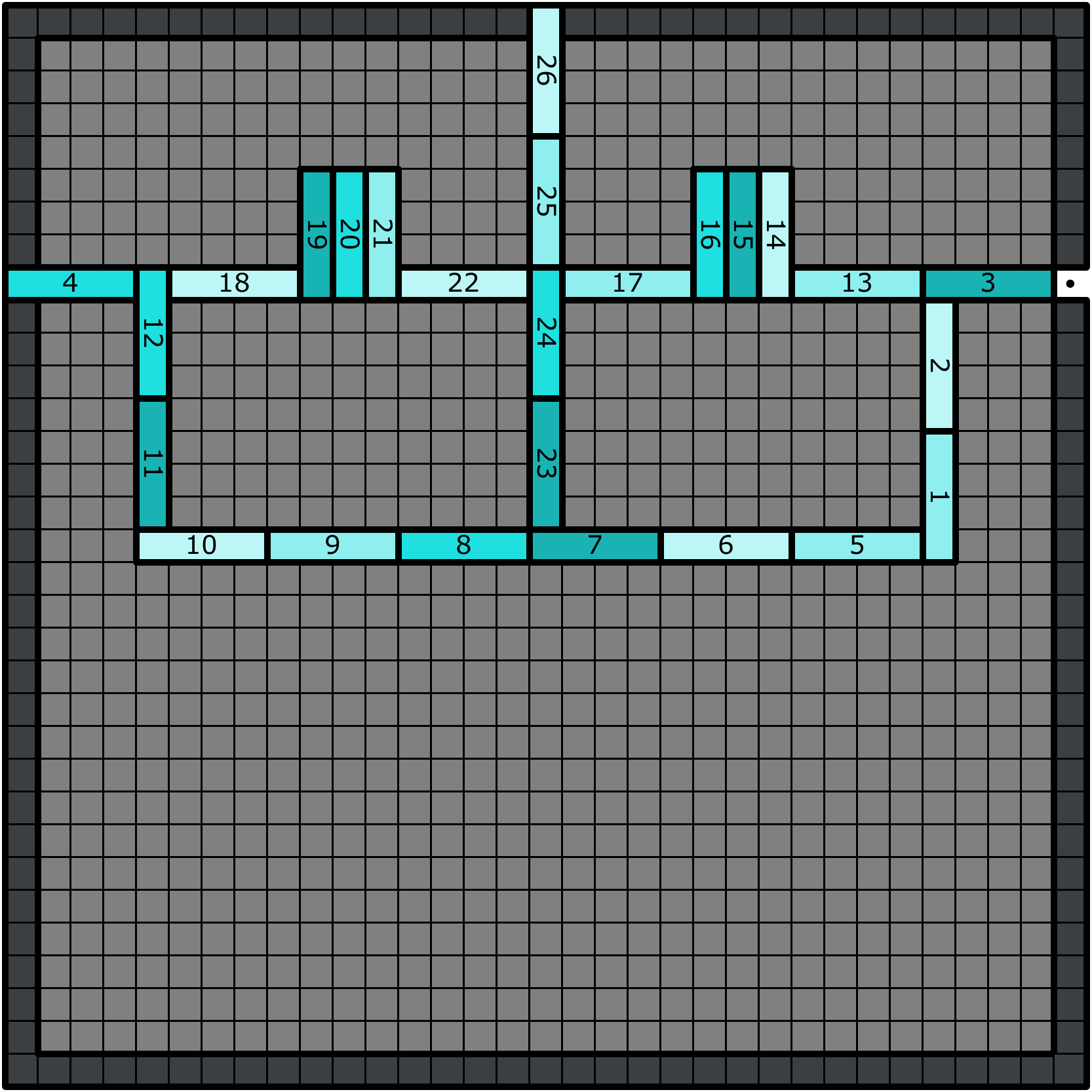}
    \caption{Third tiling}
  \end{subfigure}
  \caption{Tilings + placement orders for the clause gadget}
  \label{fig:i_clause_tilings}
\end{figure}

\begin{figure}[!ht]
  \begin{subfigure}[b]{0.325\textwidth}
    \centering
    \includegraphics[width=140pt]{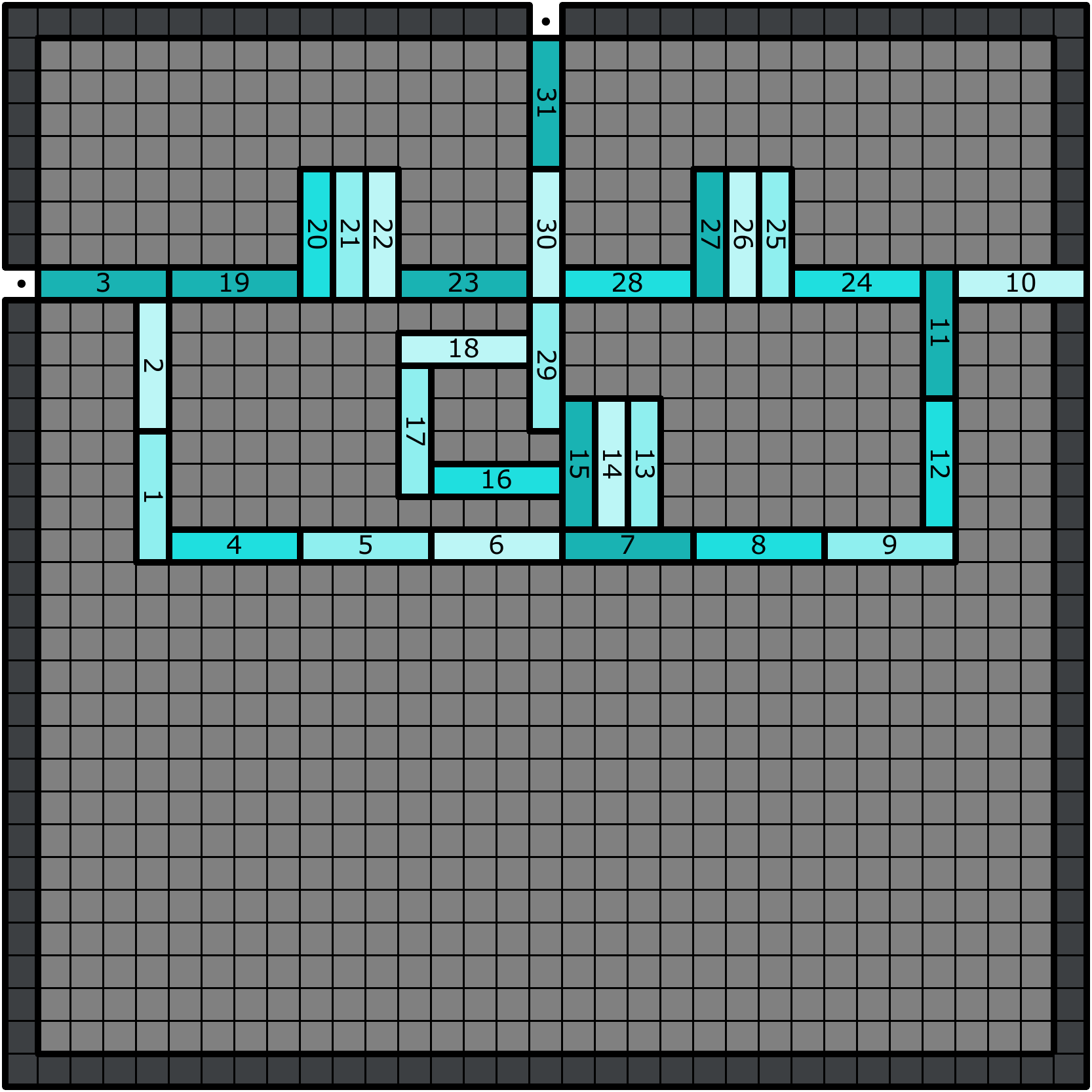}
    \caption{First tiling}
  \end{subfigure}
  \begin{subfigure}[b]{0.325\textwidth}
    \centering
    \includegraphics[width=140pt]{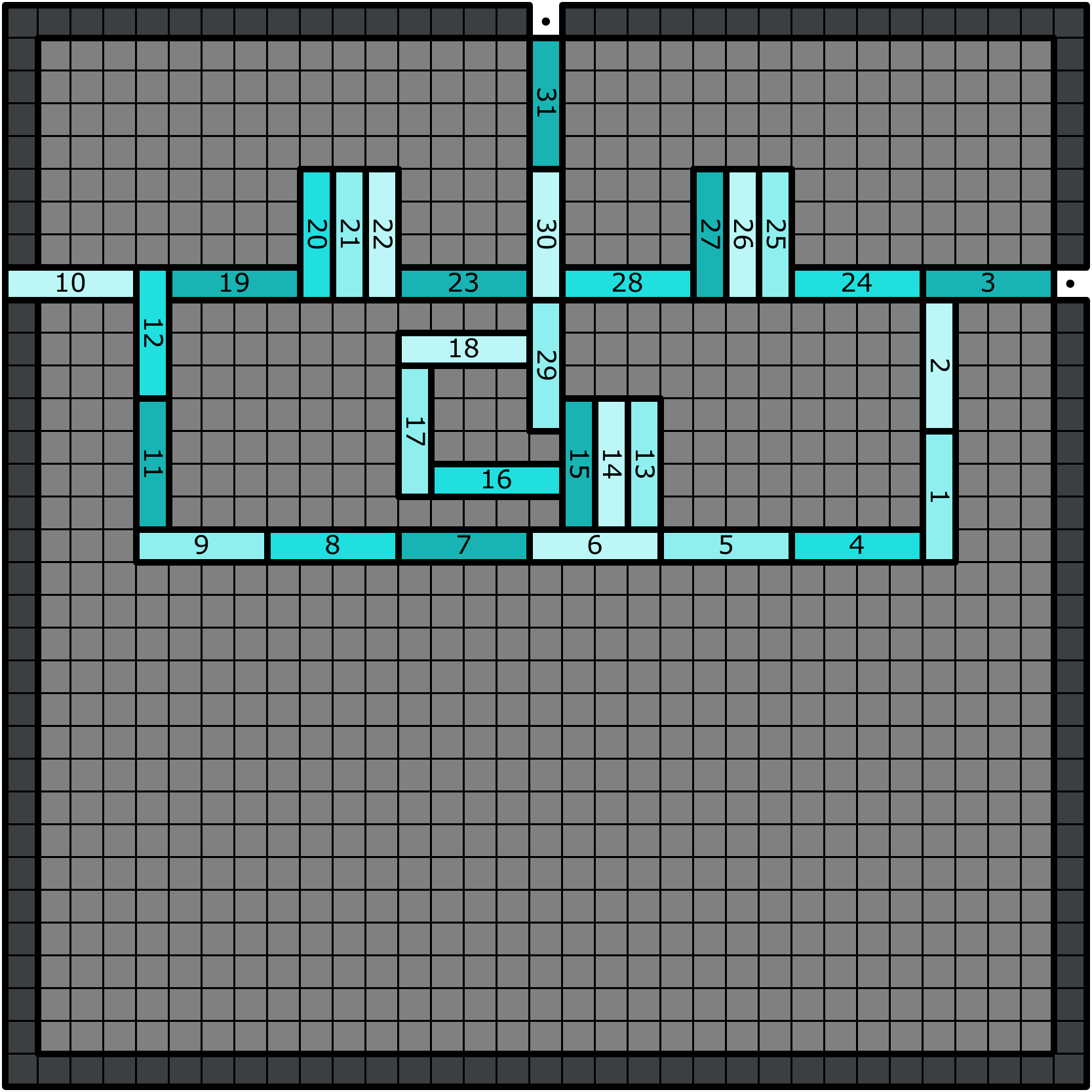}
    \caption{Second tiling}
  \end{subfigure}
  \begin{subfigure}[b]{0.325\textwidth}
    \centering
    \includegraphics[width=140pt]{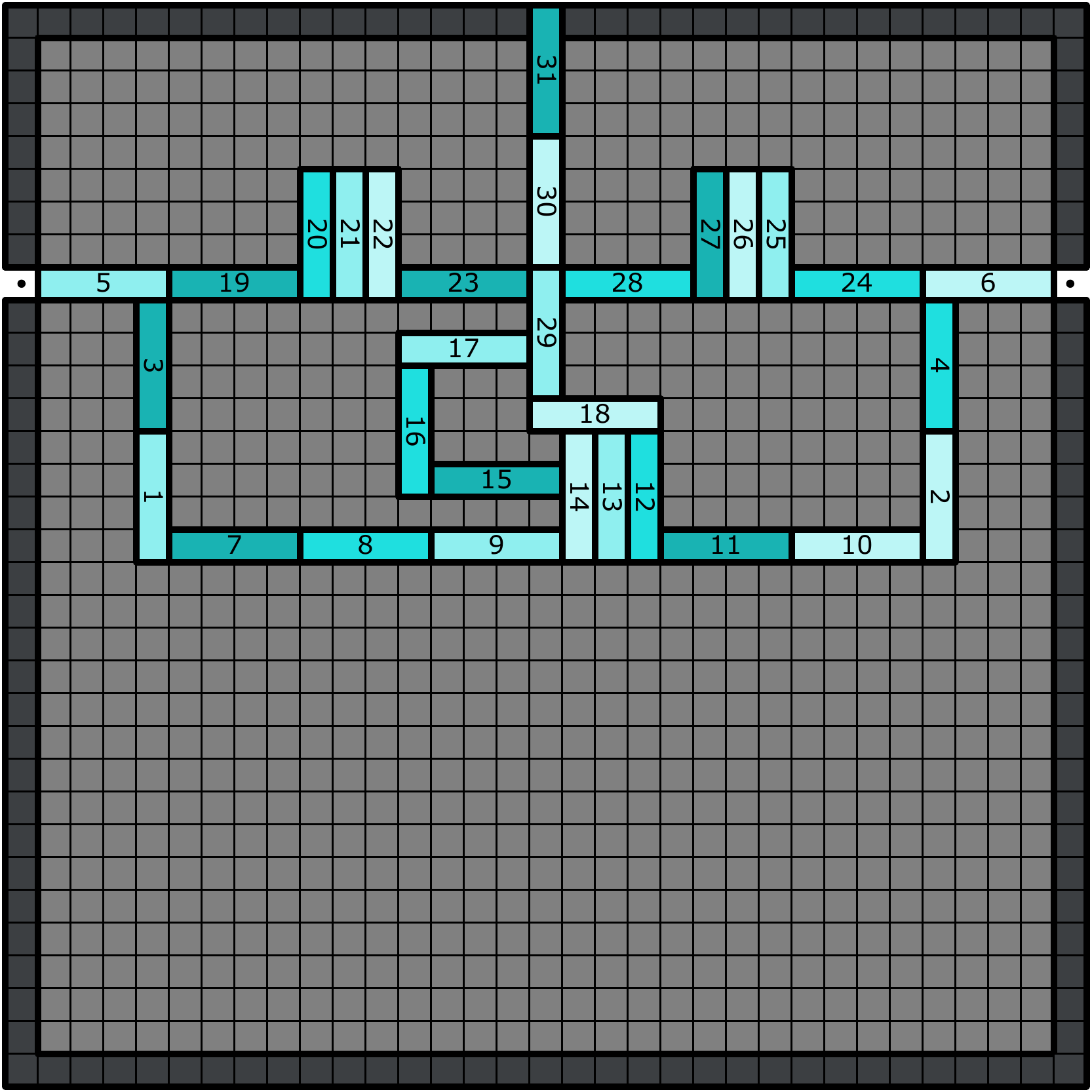}
    \caption{Third tiling}
  \end{subfigure}
  \caption{Tilings + placement orders for the negated-clause gadget}
  \label{fig:i_negclause_tilings}
\end{figure}

\begin{figure}[!ht]
  \begin{subfigure}[b]{0.49\textwidth}
    \centering
    \includegraphics[width=220pt]{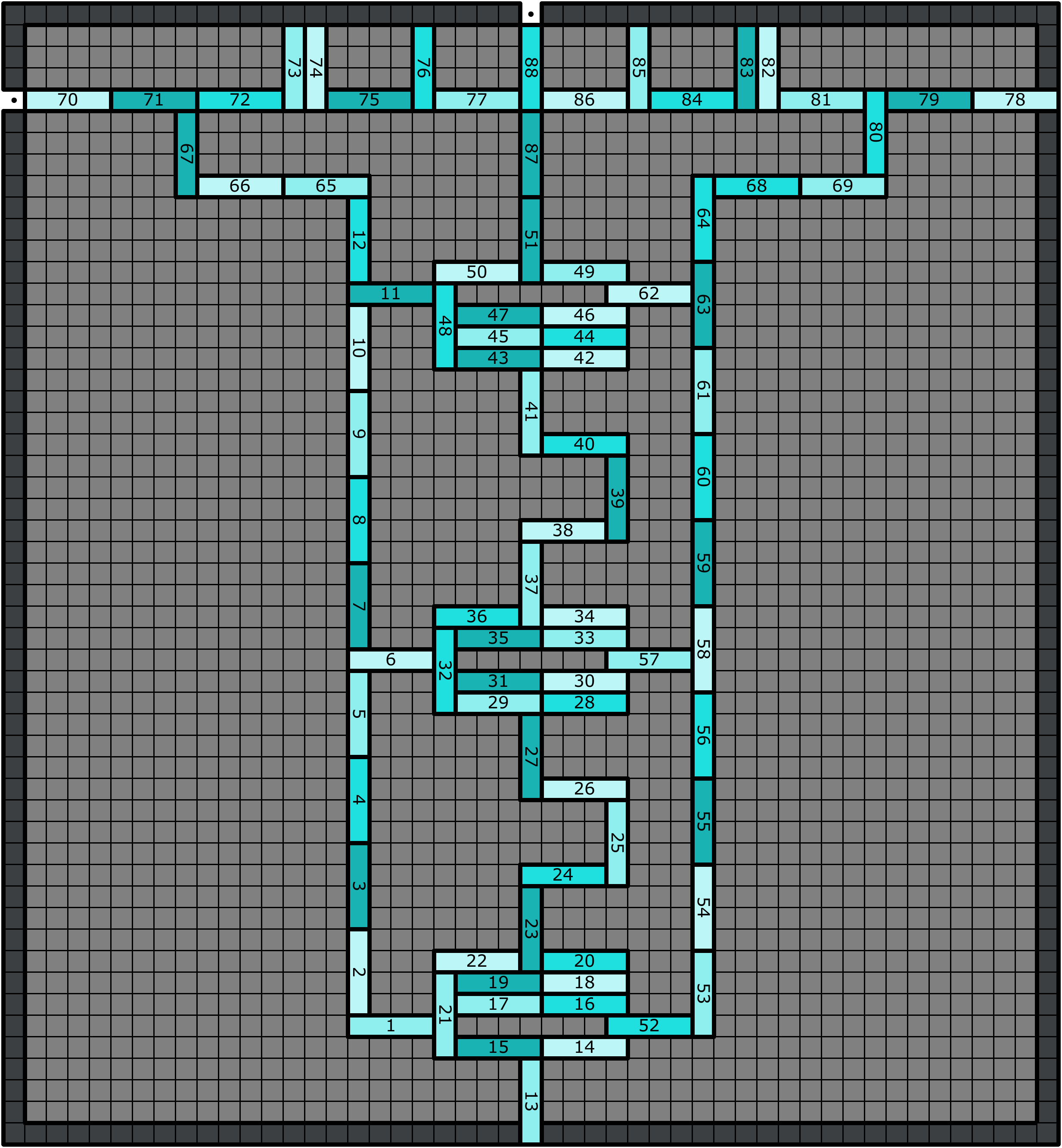}
    \caption{First tiling}
  \end{subfigure}
  \begin{subfigure}[b]{0.49\textwidth}
    \centering
    \includegraphics[width=220pt]{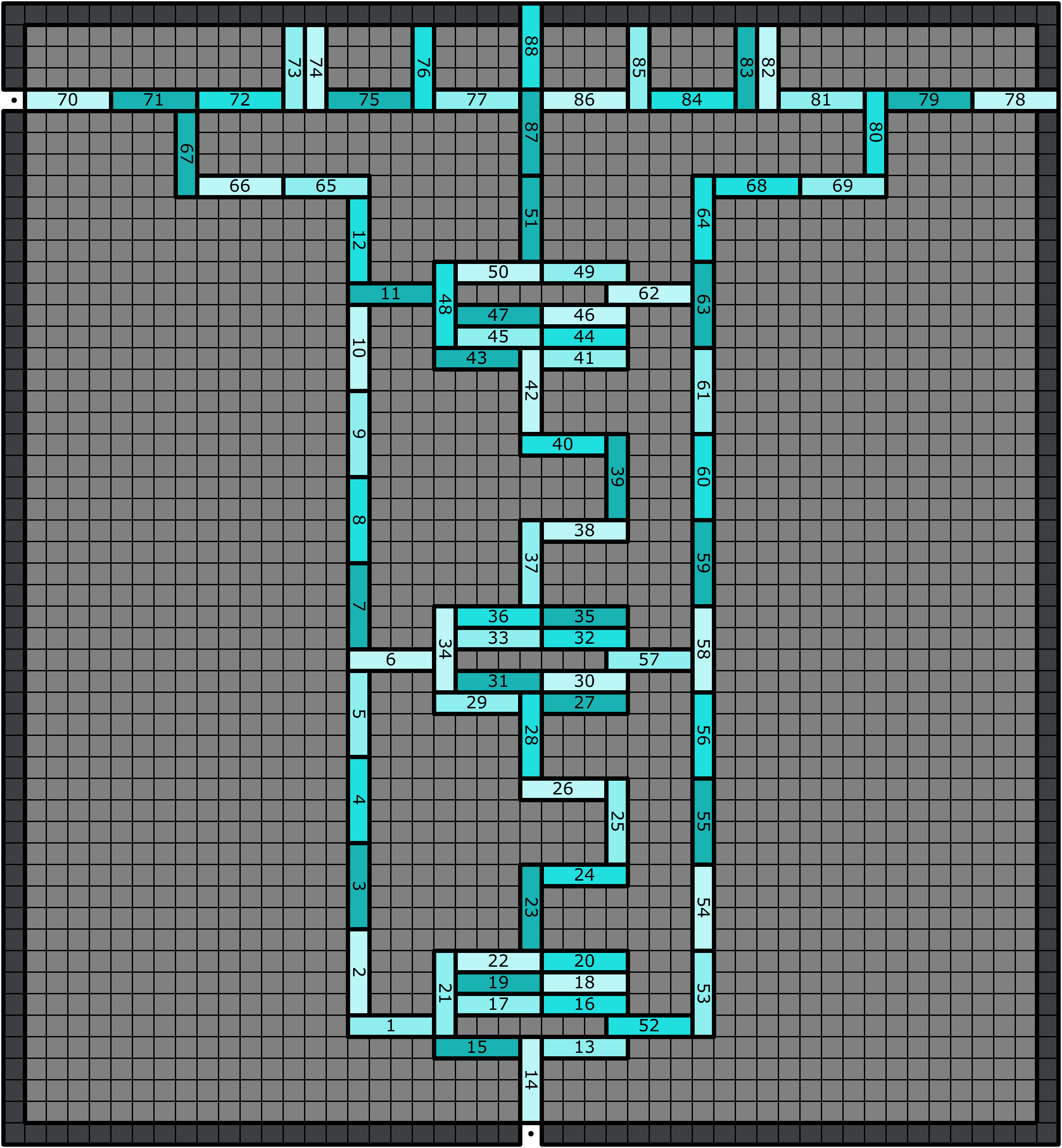}
    \caption{Second tiling}
  \end{subfigure}
  \begin{subfigure}[b]{0.49\textwidth}
    \centering
    \includegraphics[width=220pt]{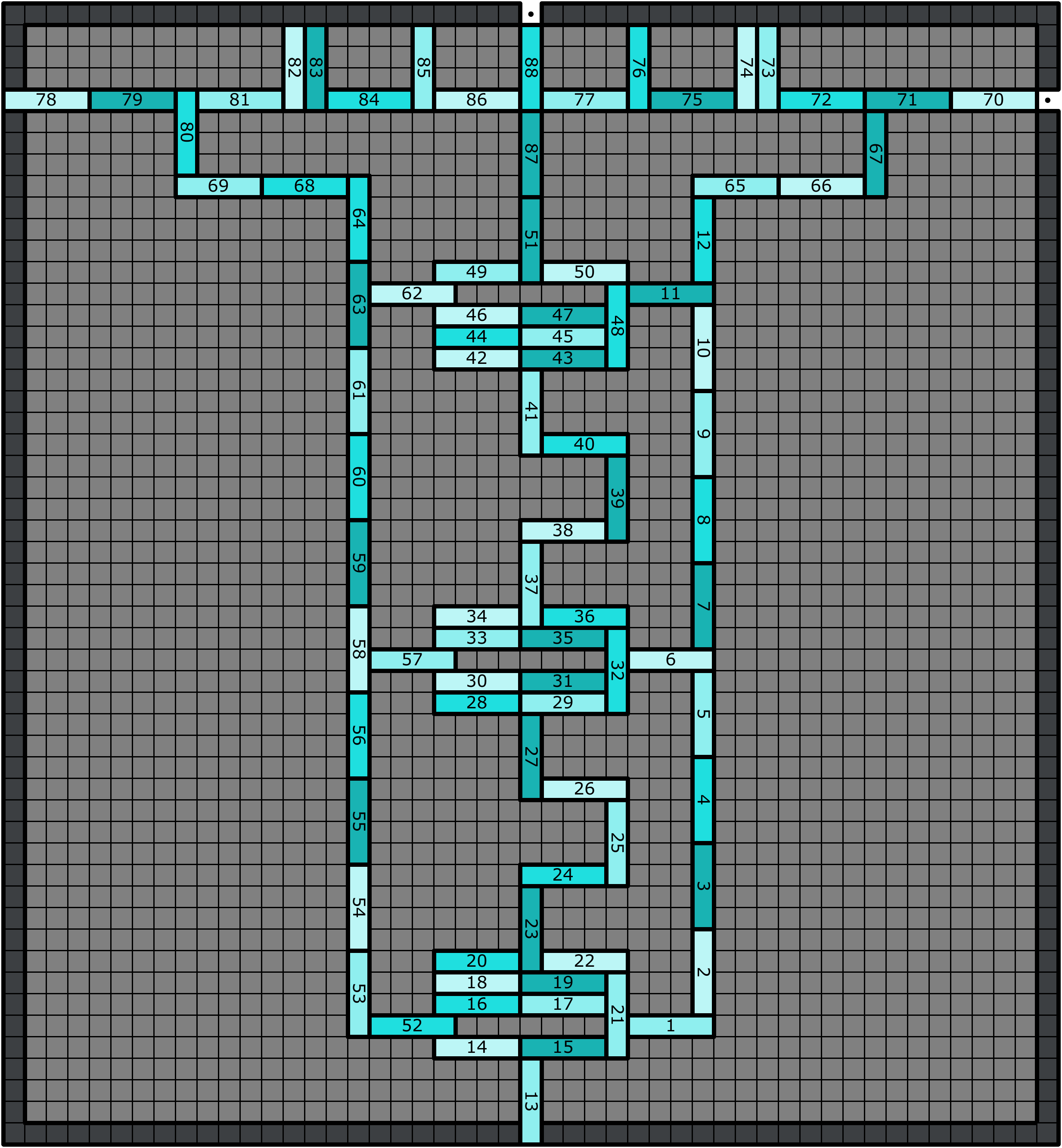}
    \caption{Third tiling}
  \end{subfigure}
  \begin{subfigure}[b]{0.49\textwidth}
    \centering
    \includegraphics[width=220pt]{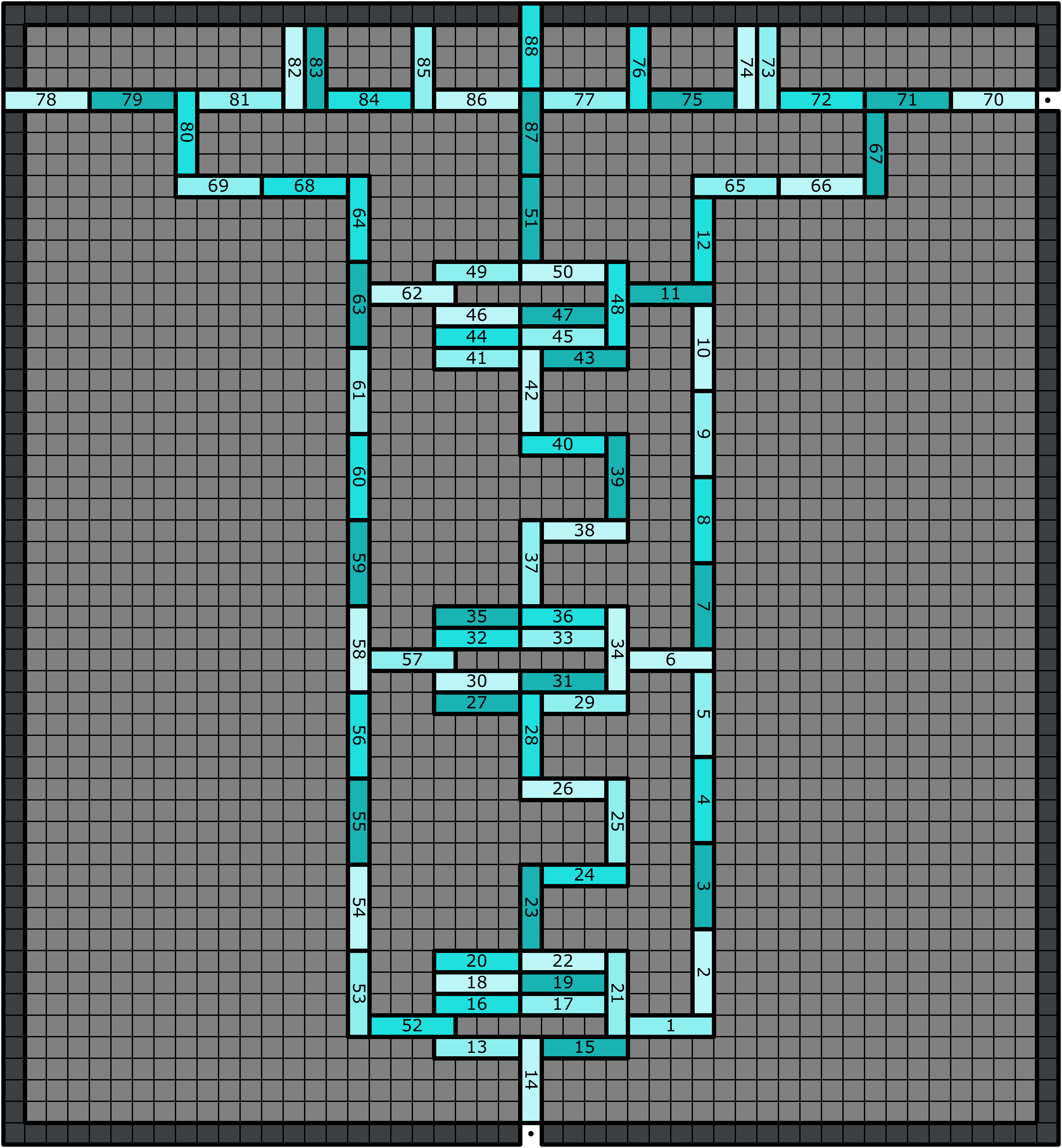}
    \caption{Fourth tiling}
  \end{subfigure}
  \caption{Tilings + placement orders for the crossover gadget}
  \label{fig:i_cross_tilings}
\end{figure}

\section{$\JJ$-tris Gadget Tilings/Fillings}

Like in $\II$-tris, numbers in pieces denote placement order, with smaller numbers placed first.

\begin{figure}[!ht]
  \centering
  \begin{subfigure}[b]{0.49\textwidth}
    \centering
    \includegraphics[width=200pt]{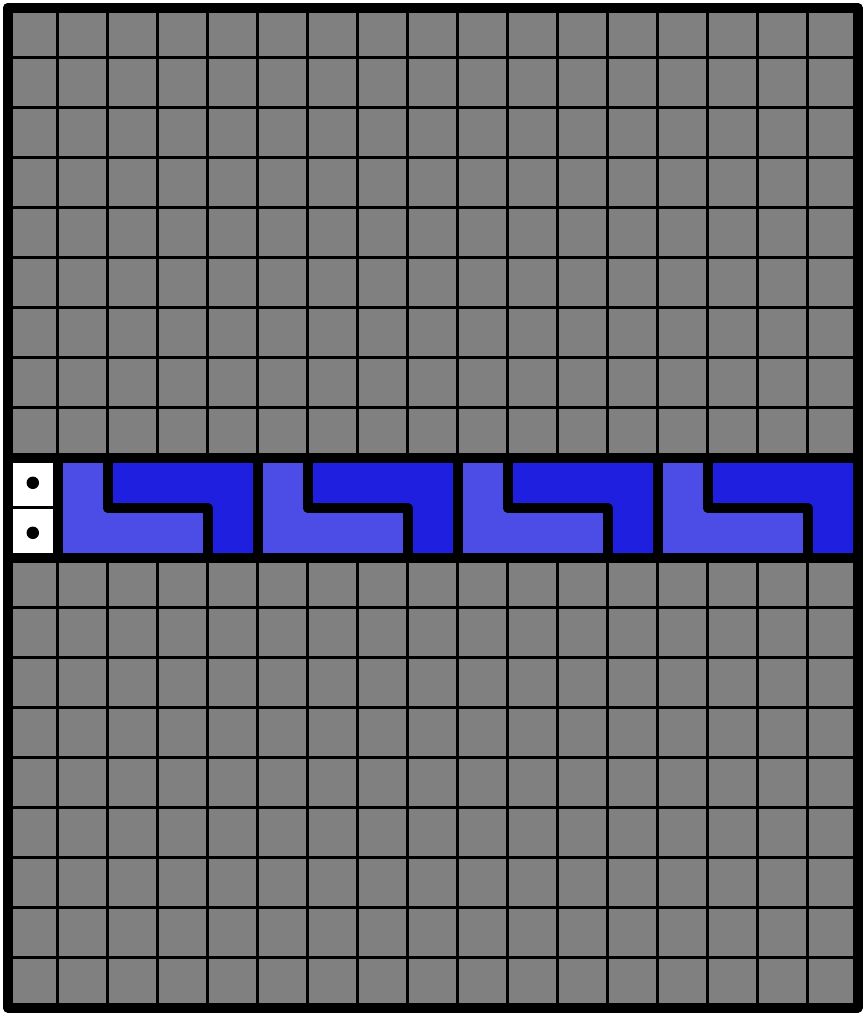}
    \caption{First tiling}
  \end{subfigure}
  \begin{subfigure}[b]{0.49\textwidth}
    \centering
    \includegraphics[width=200pt]{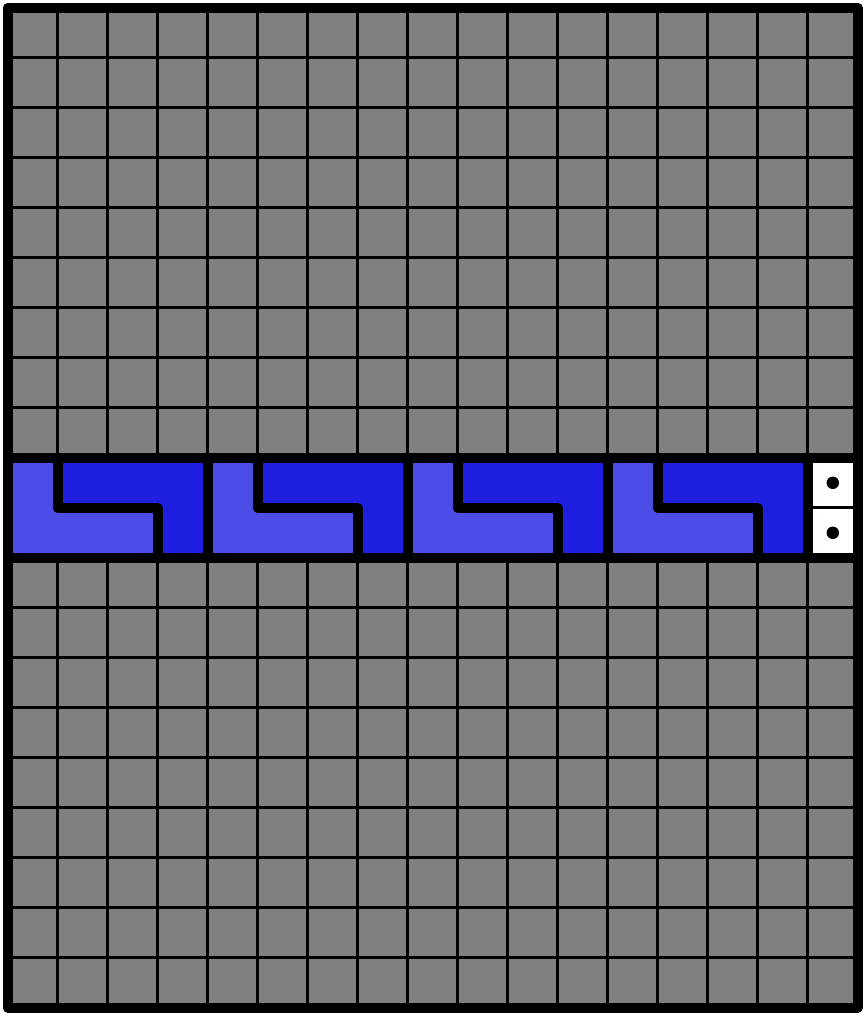}
    \caption{Second tiling}
  \end{subfigure}
  \caption{Tilings for an HL gadget}
  \label{fig:j_horline_tilings}
\end{figure}

\begin{figure}[!ht]
  \centering
  \begin{subfigure}[b]{0.49\textwidth}
    \centering
    \includegraphics[width=200pt]{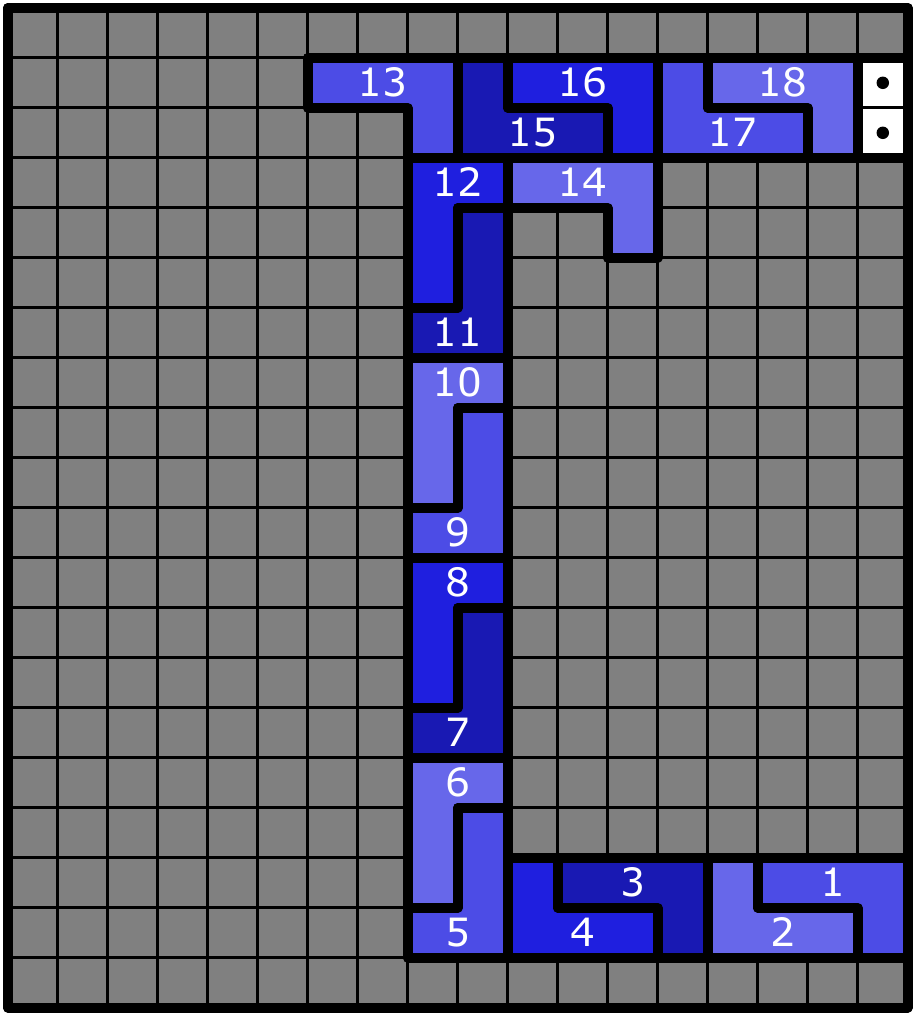}
    \caption{First tiling for a UL gadget}
  \end{subfigure}
  \begin{subfigure}[b]{0.49\textwidth}
    \centering
    \includegraphics[width=200pt]{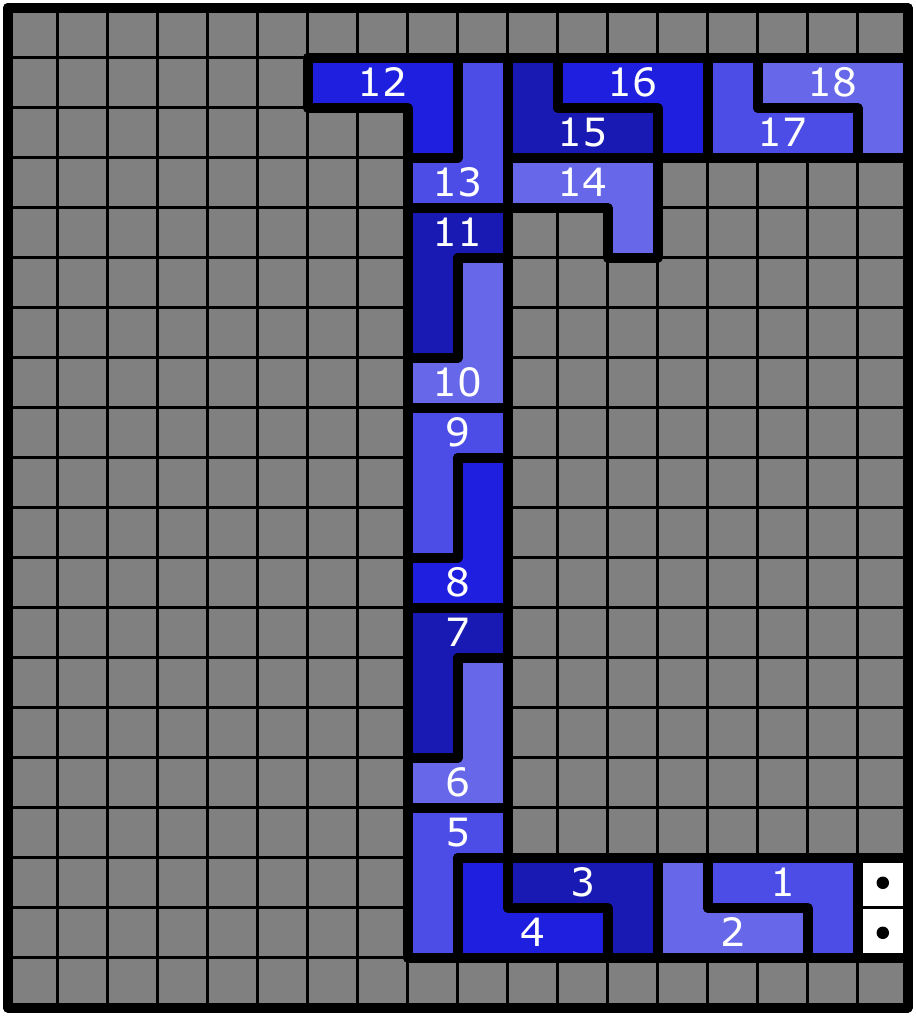}
    \caption{Second tiling for a UL gadget}
  \end{subfigure}
  \begin{subfigure}[b]{0.49\textwidth}
    \centering
    \includegraphics[width=200pt]{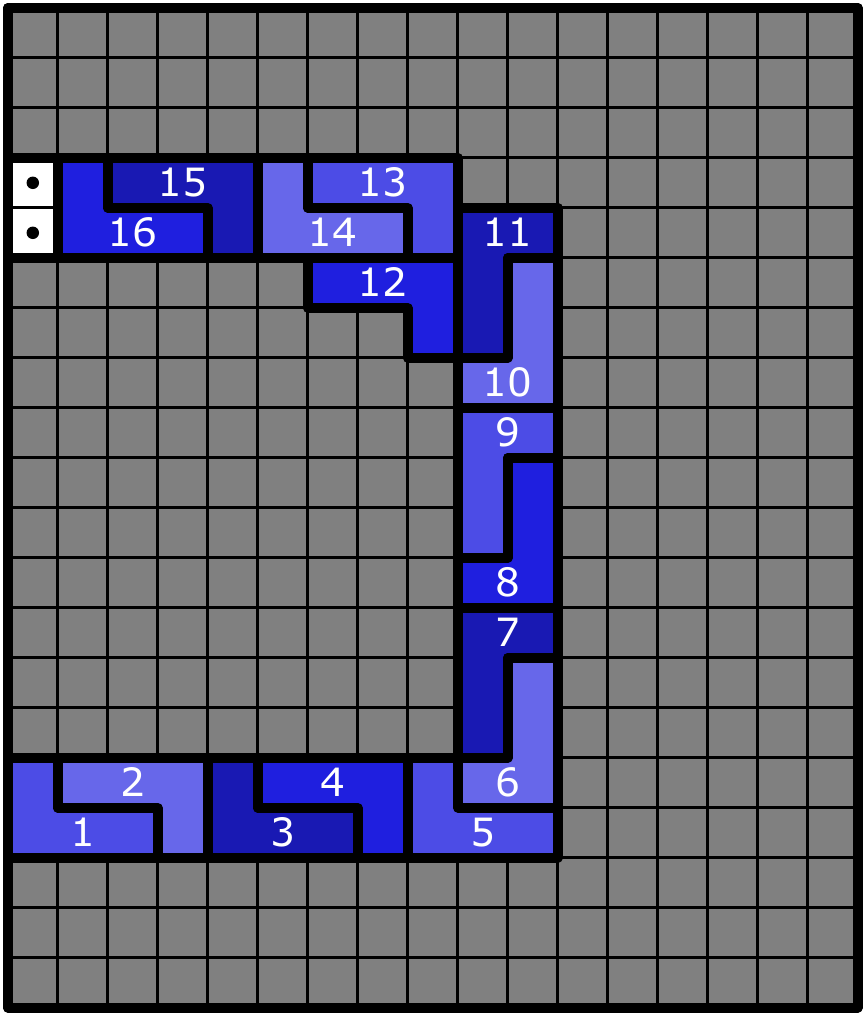}
    \caption{First tiling for a UR gadget}
  \end{subfigure}
  \begin{subfigure}[b]{0.49\textwidth}
    \centering
    \includegraphics[width=200pt]{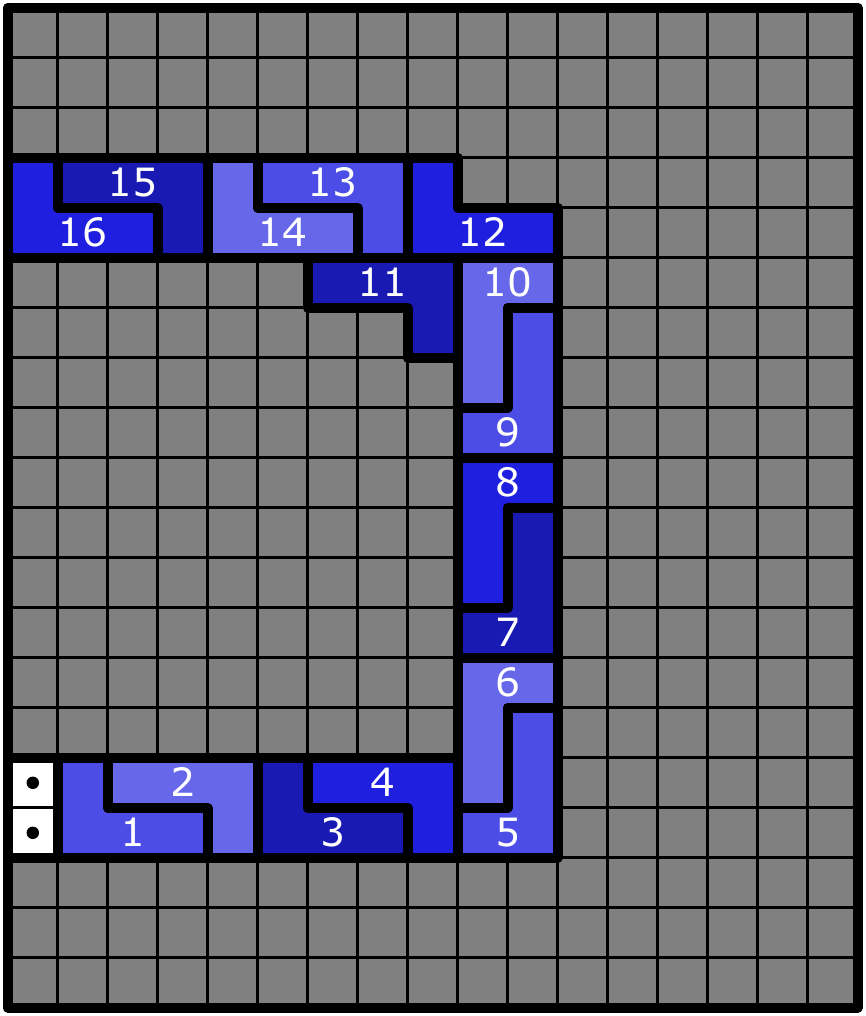}
    \caption{Second tiling for a UR gadget}
  \end{subfigure}
  \caption{Tilings + placement orders for U-turn gadgets}
  \label{fig:j_uturn_tilings}
\end{figure}

\begin{figure}[!ht]
  \centering
  \begin{subfigure}[b]{0.49\textwidth}
    \centering
    \includegraphics[width=200pt]{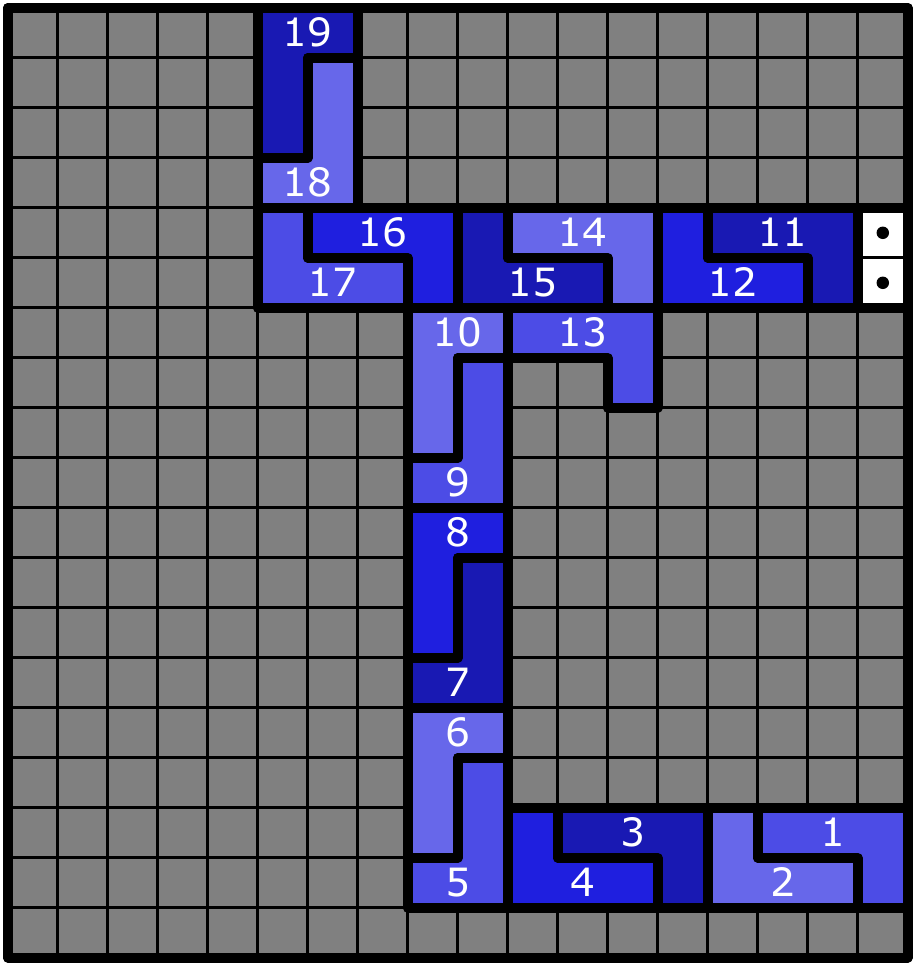}
    \caption{First tiling}
  \end{subfigure}
  \begin{subfigure}[b]{0.49\textwidth}
    \centering
    \includegraphics[width=200pt]{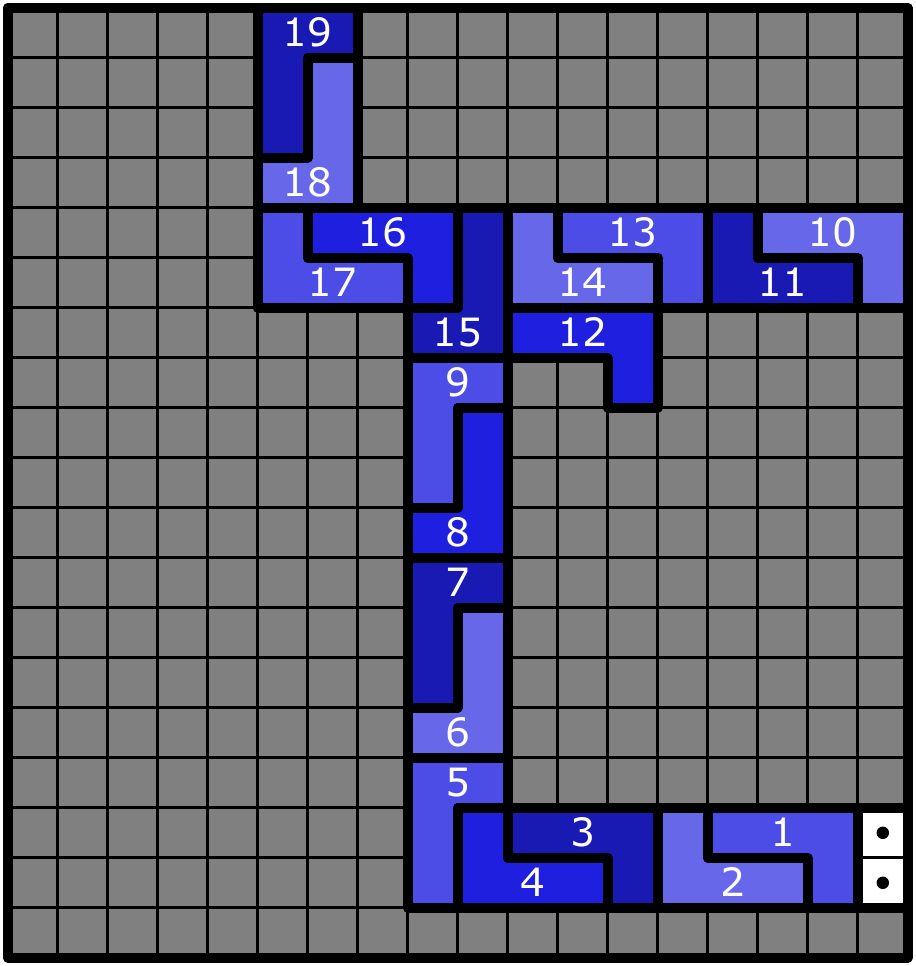}
    \caption{Second tiling}
  \end{subfigure}
  \caption{Tilings + placement orders for an EC gadget}
  \label{fig:j_entrycorner_tilings}
\end{figure}

\begin{figure}[!ht]
  \centering
  \begin{subfigure}[b]{0.49\textwidth}
    \centering
    \includegraphics[width=200pt]{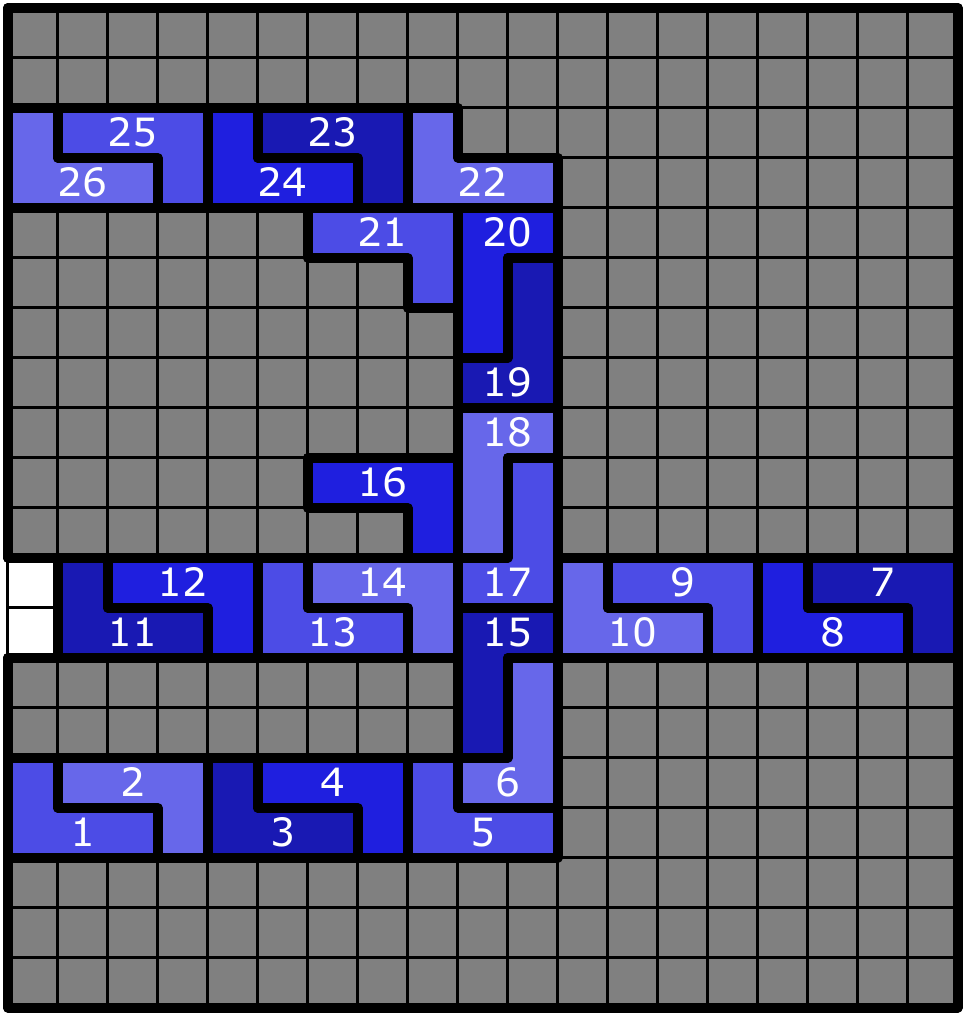}
    \caption{First tiling}
  \end{subfigure}
  \begin{subfigure}[b]{0.49\textwidth}
    \centering
    \includegraphics[width=200pt]{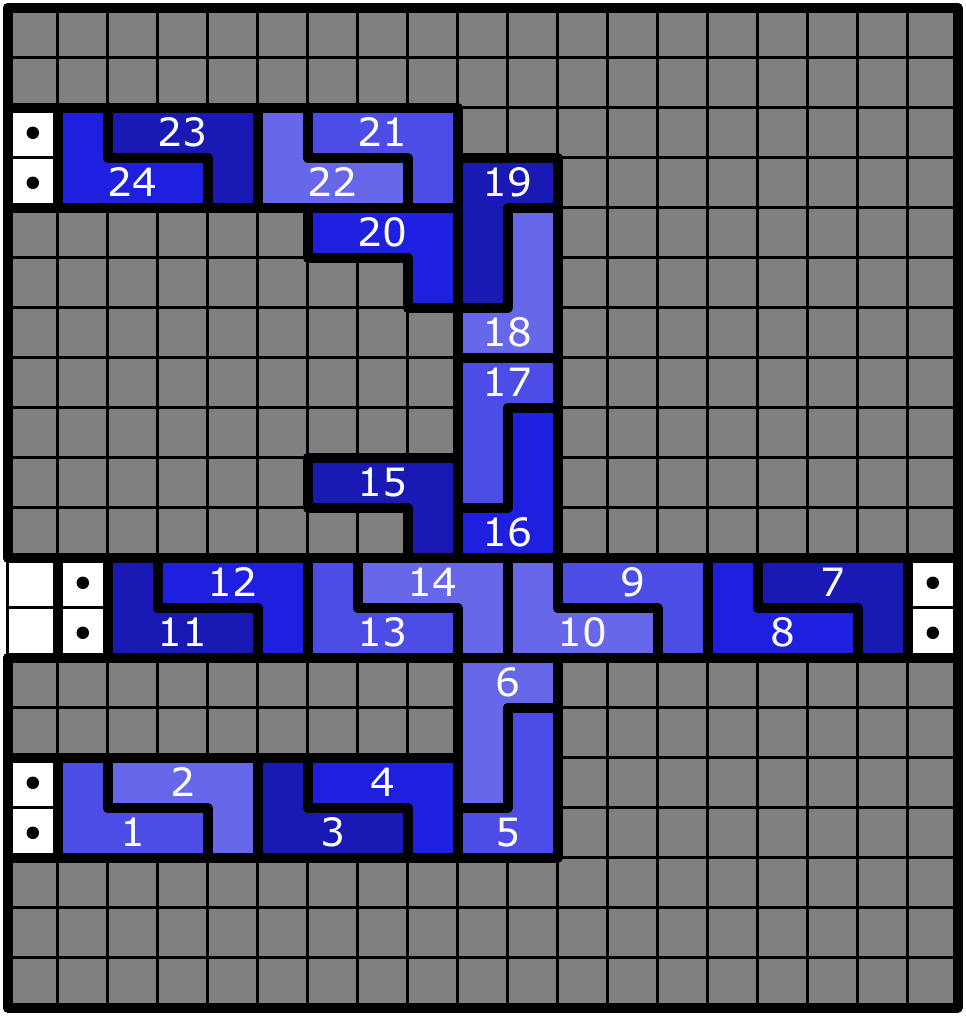}
    \caption{Second tiling}
  \end{subfigure}
  \caption{Tilings + placement orders for the $0$-or-$4$ gadget}
  \label{fig:j_0mod4_tilings}
\end{figure}

\begin{figure}[!ht]
  \centering
  \begin{subfigure}[b]{0.49\textwidth}
    \centering
    \includegraphics[width=150pt]{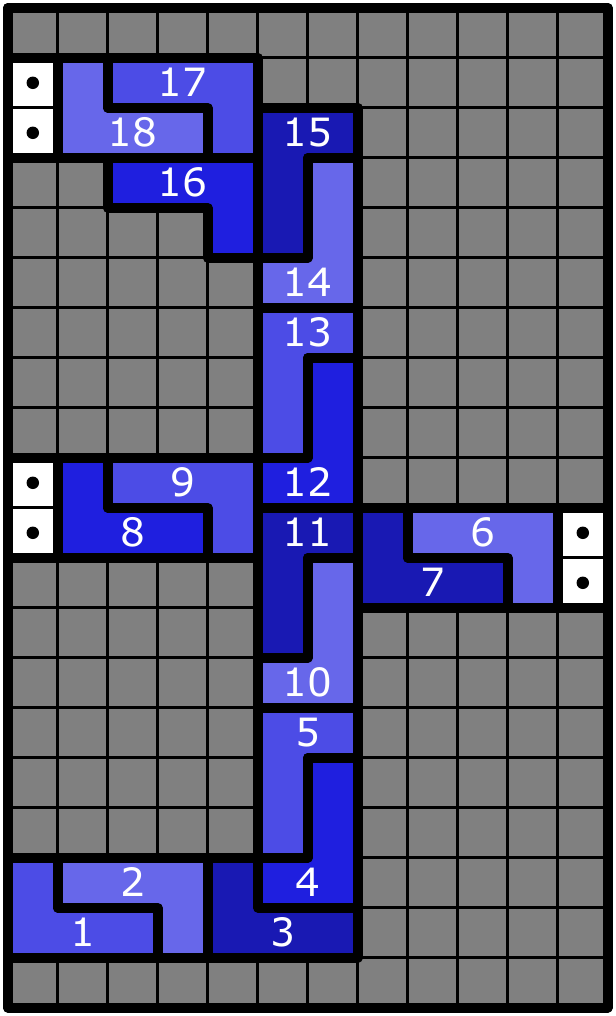}
    \caption{First tiling}
  \end{subfigure}
  \begin{subfigure}[b]{0.49\textwidth}
    \centering
    \includegraphics[width=150pt]{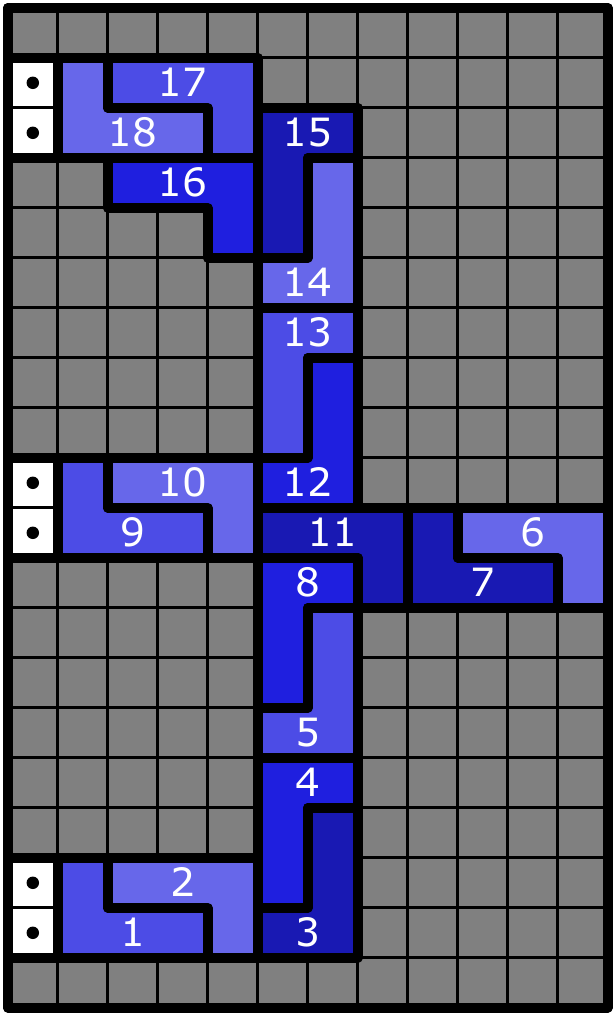}
    \caption{Second tiling}
  \end{subfigure}
  \begin{subfigure}[b]{0.49\textwidth}
    \centering
    \includegraphics[width=150pt]{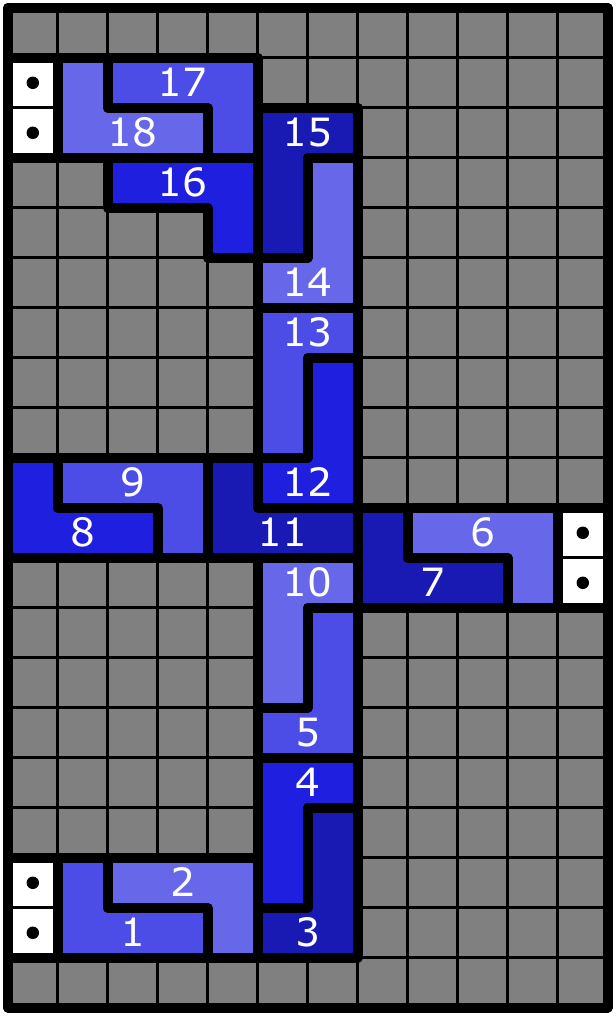}
    \caption{Third tiling}
  \end{subfigure}
  \begin{subfigure}[b]{0.49\textwidth}
    \centering
    \includegraphics[width=150pt]{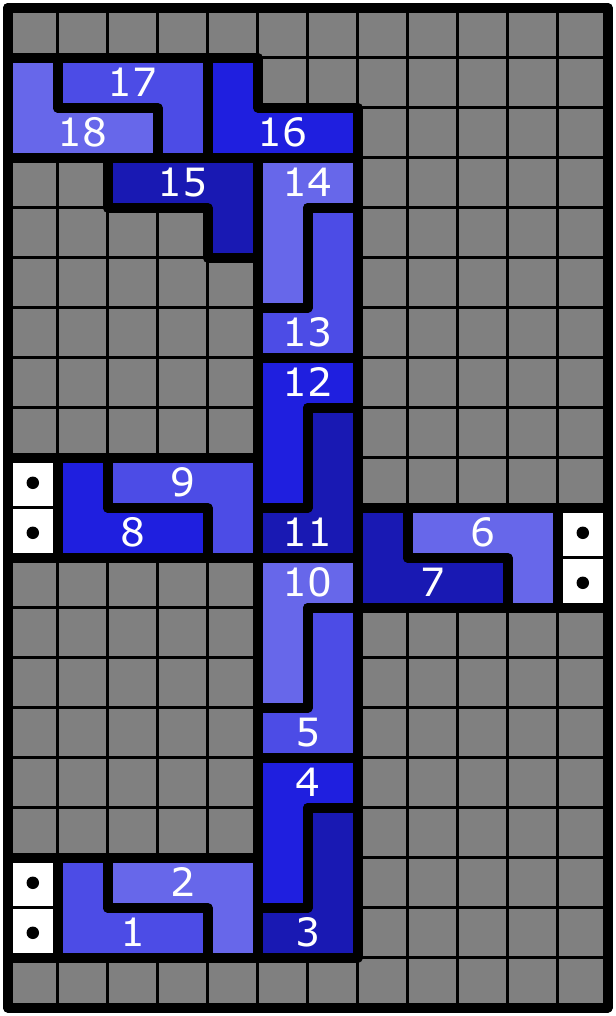}
    \caption{Fourth tiling}
  \end{subfigure}
  \caption{Tilings + placement orders for the $3$-in-$4$ gadget}
  \label{fig:j_3mod4_tilings}
\end{figure}

\begin{figure}[!ht]
  \centering
  \begin{subfigure}[b]{0.49\textwidth}
    \centering
    \includegraphics[width=120pt]{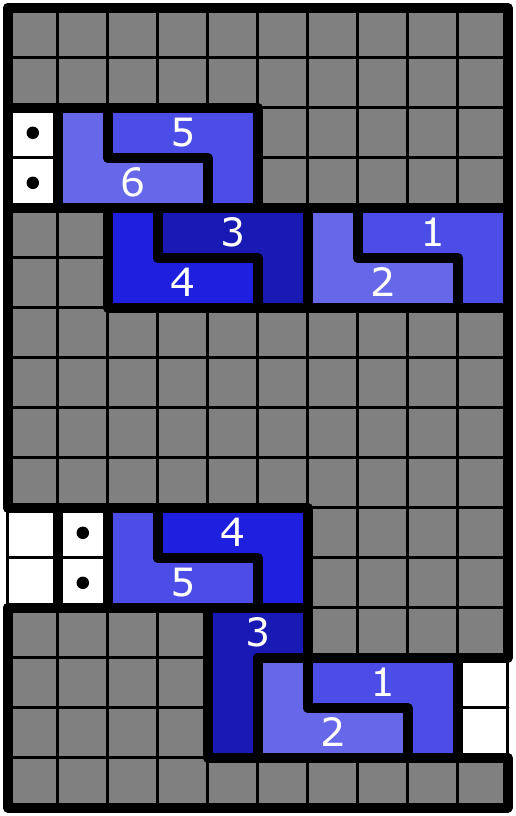}
    \caption{First tiling}
  \end{subfigure}
  \begin{subfigure}[b]{0.49\textwidth}
    \centering
    \includegraphics[width=120pt]{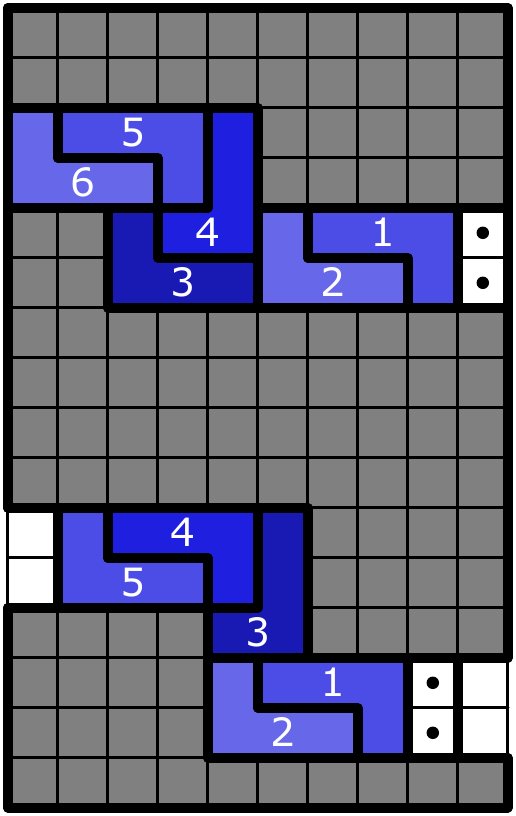}
    \caption{Second tiling}
  \end{subfigure}
  \caption{Tilings + placement orders for the PF gadgets}
  \label{fig:j_parityfixer_tilings}
\end{figure}

\section{$\TT$-tris Gadget Tilings/Fillings}

Like in $\II$-tris, numbers in pieces denote placement order, with smaller numbers placed first.

\begin{figure}[!ht]
  \centering
  \begin{subfigure}[b]{0.49\textwidth}
    \centering
    \includegraphics[width=200pt]{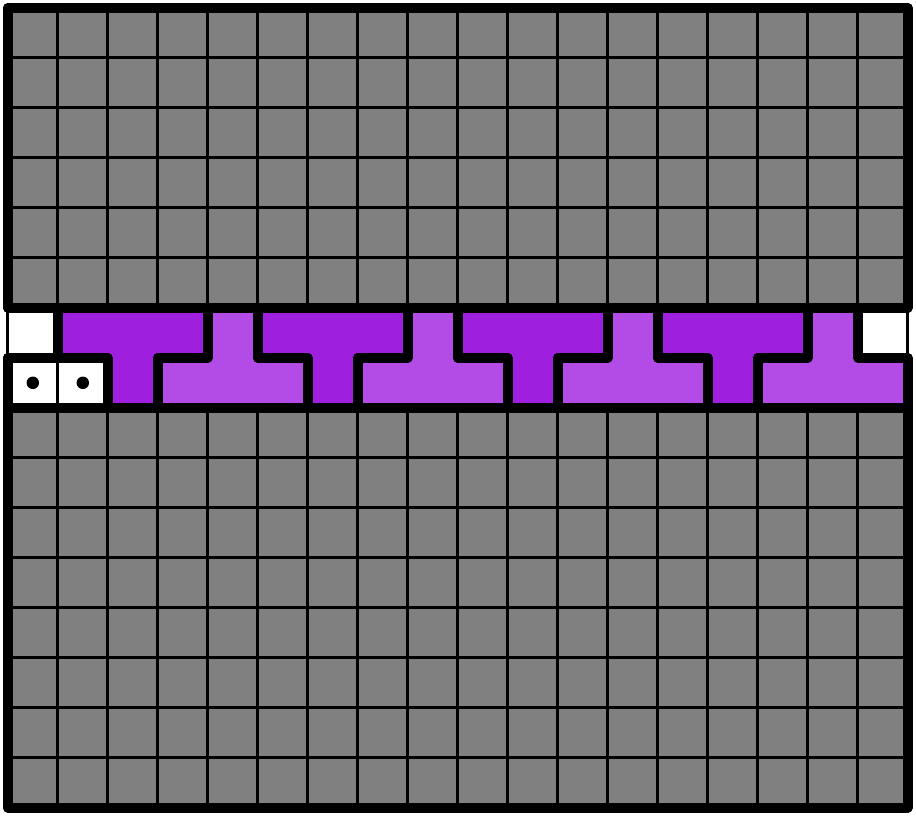}
    \caption{First tiling}
  \end{subfigure}
  \begin{subfigure}[b]{0.49\textwidth}
    \centering
    \includegraphics[width=200pt]{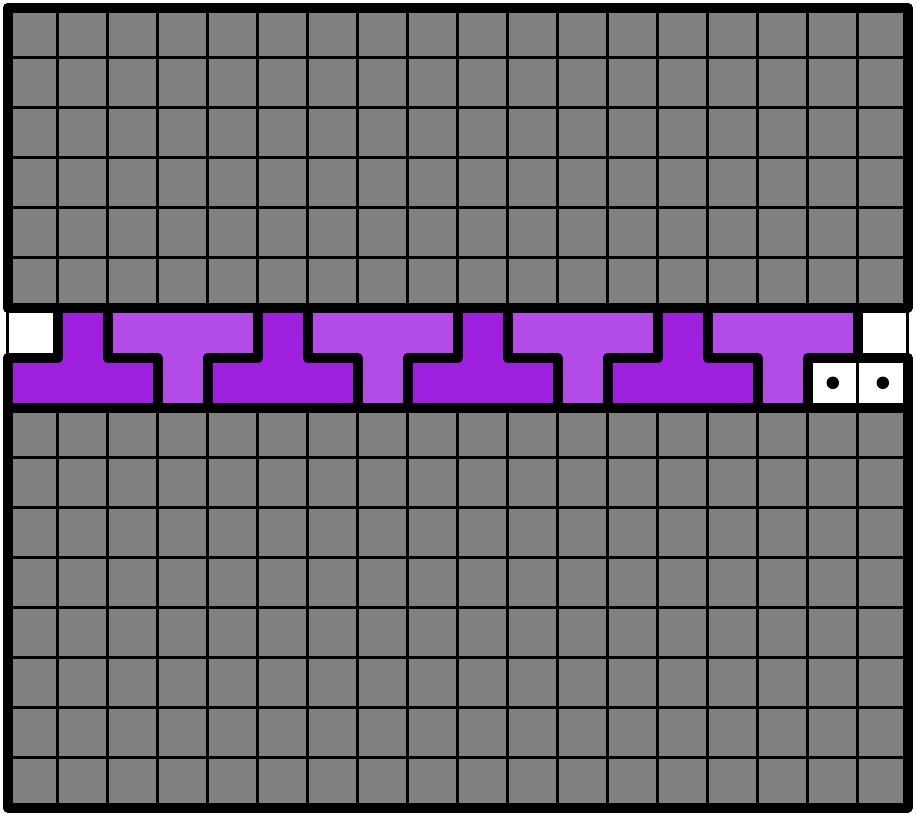}
    \caption{Second tiling}
  \end{subfigure}
  \caption{Tilings for an HL gadget}
  \label{fig:t_horline_tilings}
\end{figure}

\begin{figure}[!ht]
  \centering
  \begin{subfigure}[b]{0.49\textwidth}
    \centering
    \includegraphics[width=200pt]{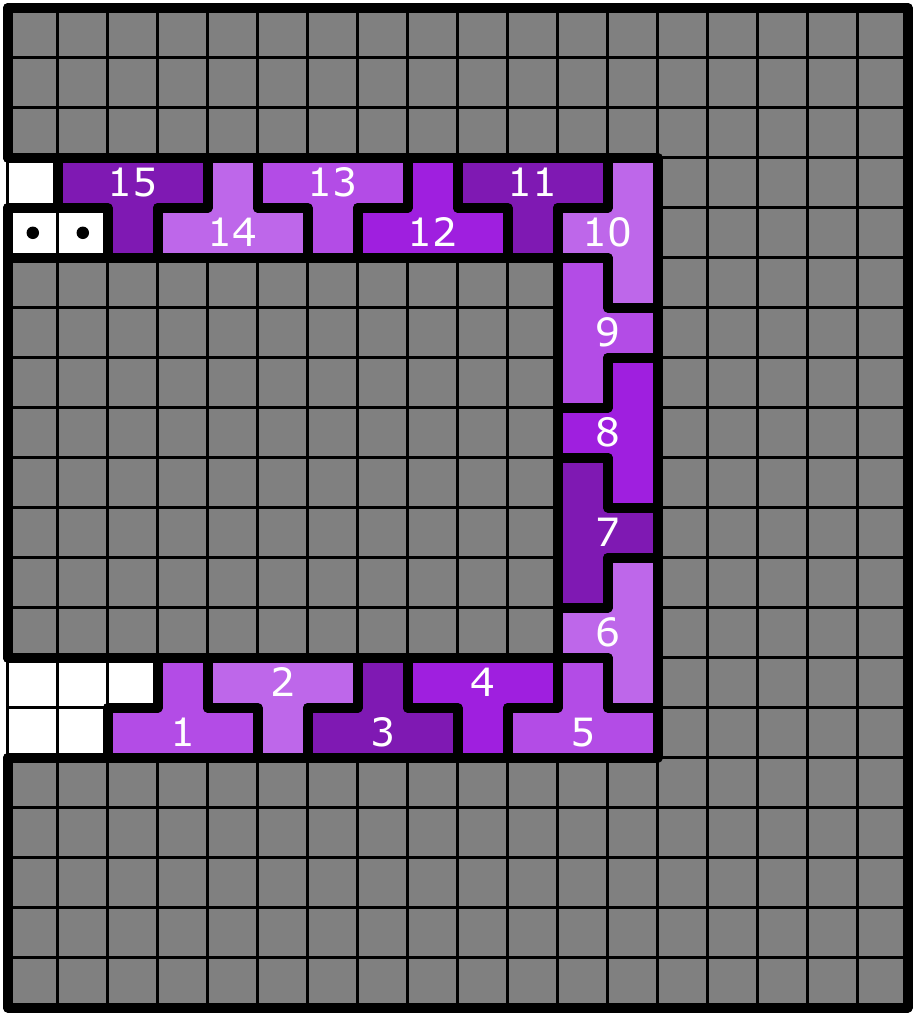}
    \caption{First tiling}
  \end{subfigure}
  \begin{subfigure}[b]{0.49\textwidth}
    \centering
    \includegraphics[width=200pt]{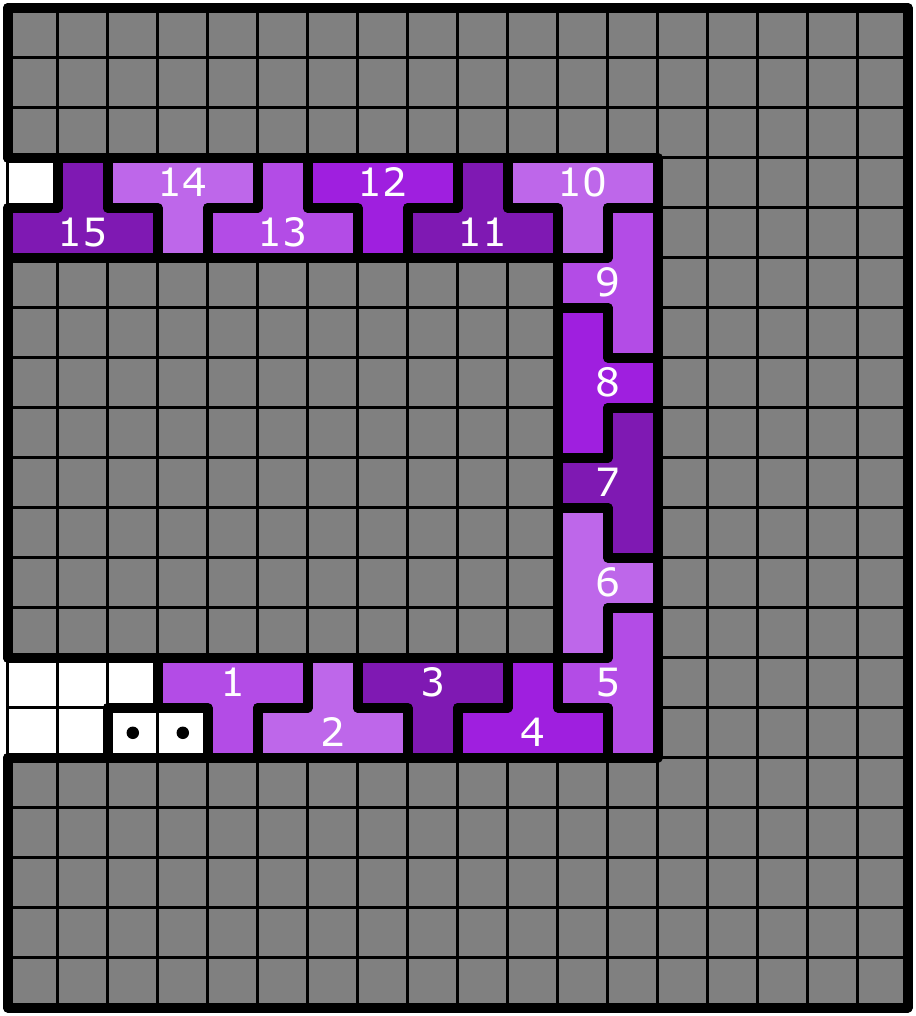}
    \caption{Second tiling}
  \end{subfigure}
  \caption{Tilings + placement orders for the U-turn gadget (the left-down-right case is symmetric)}
  \label{fig:t_uturn_tilings}
\end{figure}

\begin{figure}[!ht]
  \centering
  \begin{subfigure}[b]{0.49\textwidth}
    \centering
    \includegraphics[width=200pt]{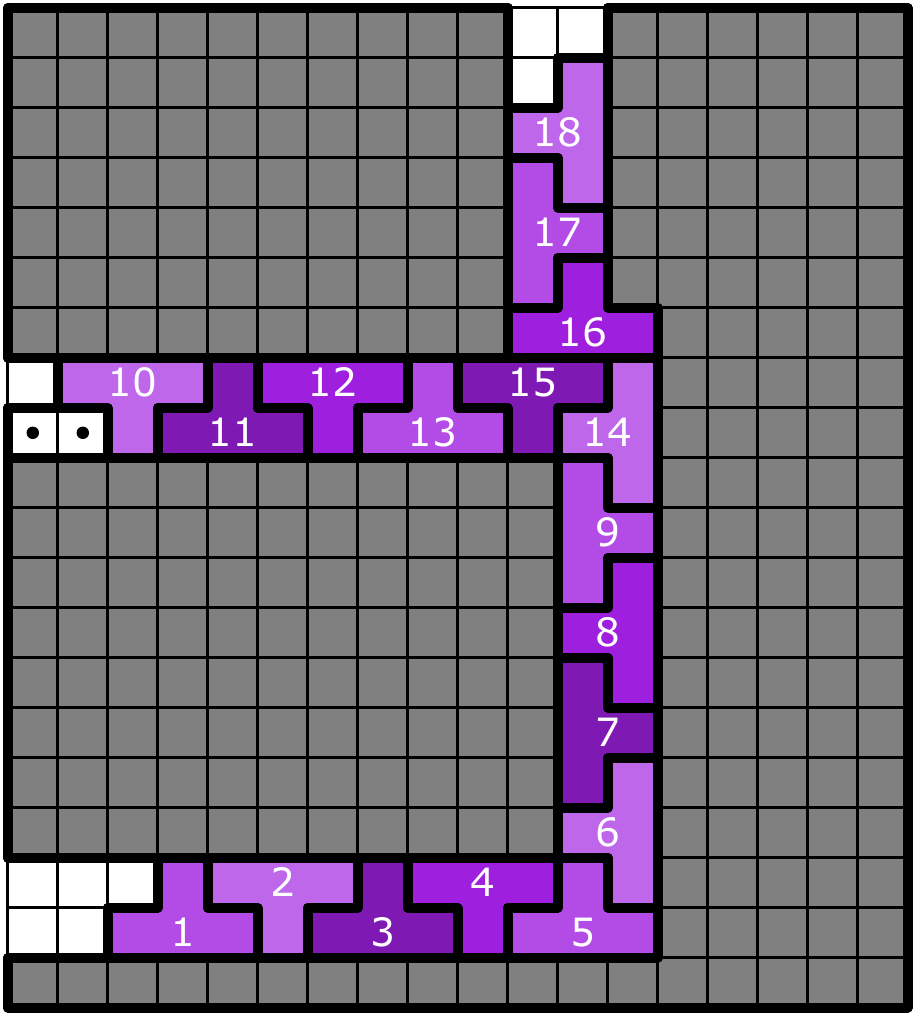}
    \caption{First tiling}
  \end{subfigure}
  \begin{subfigure}[b]{0.49\textwidth}
    \centering
    \includegraphics[width=200pt]{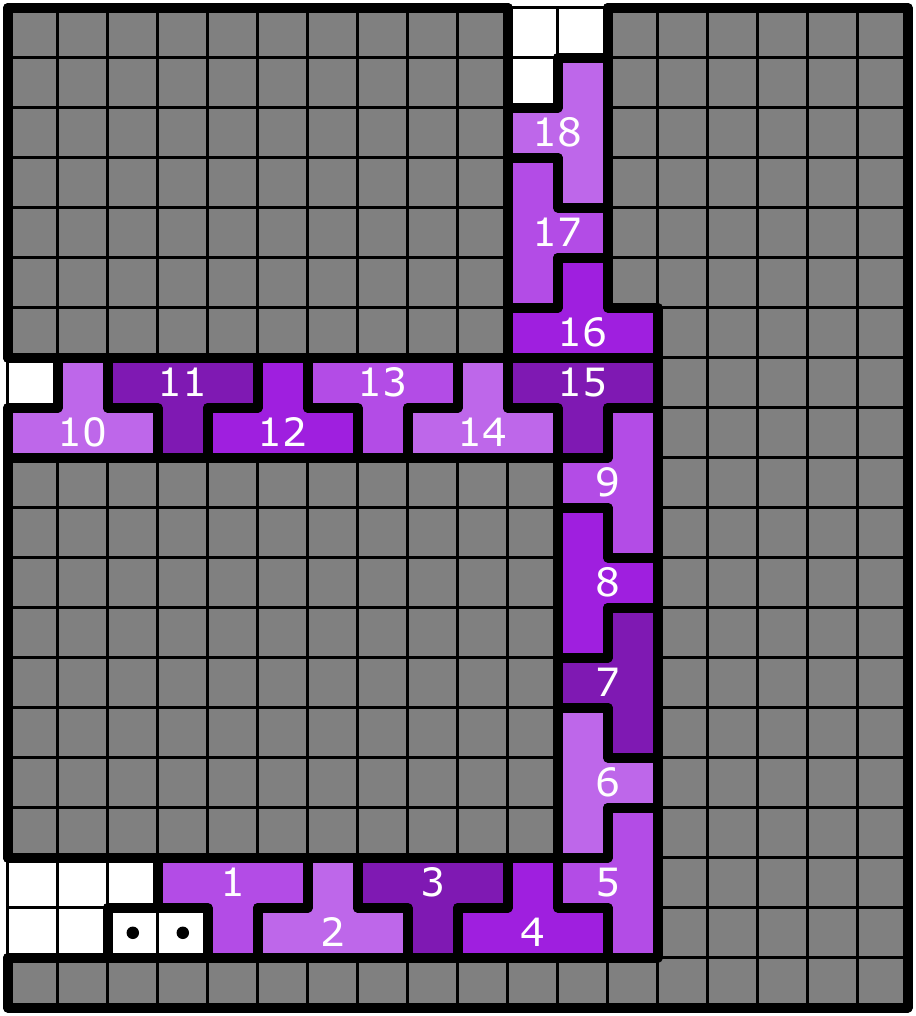}
    \caption{Second tiling}
  \end{subfigure}
  \caption{Tilings + placement orders for an EC gadget}
  \label{fig:t_entrycorner_tilings}
\end{figure}

\begin{figure}[!ht]
  \centering
  \begin{subfigure}[b]{0.49\textwidth}
    \centering
    \includegraphics[width=200pt]{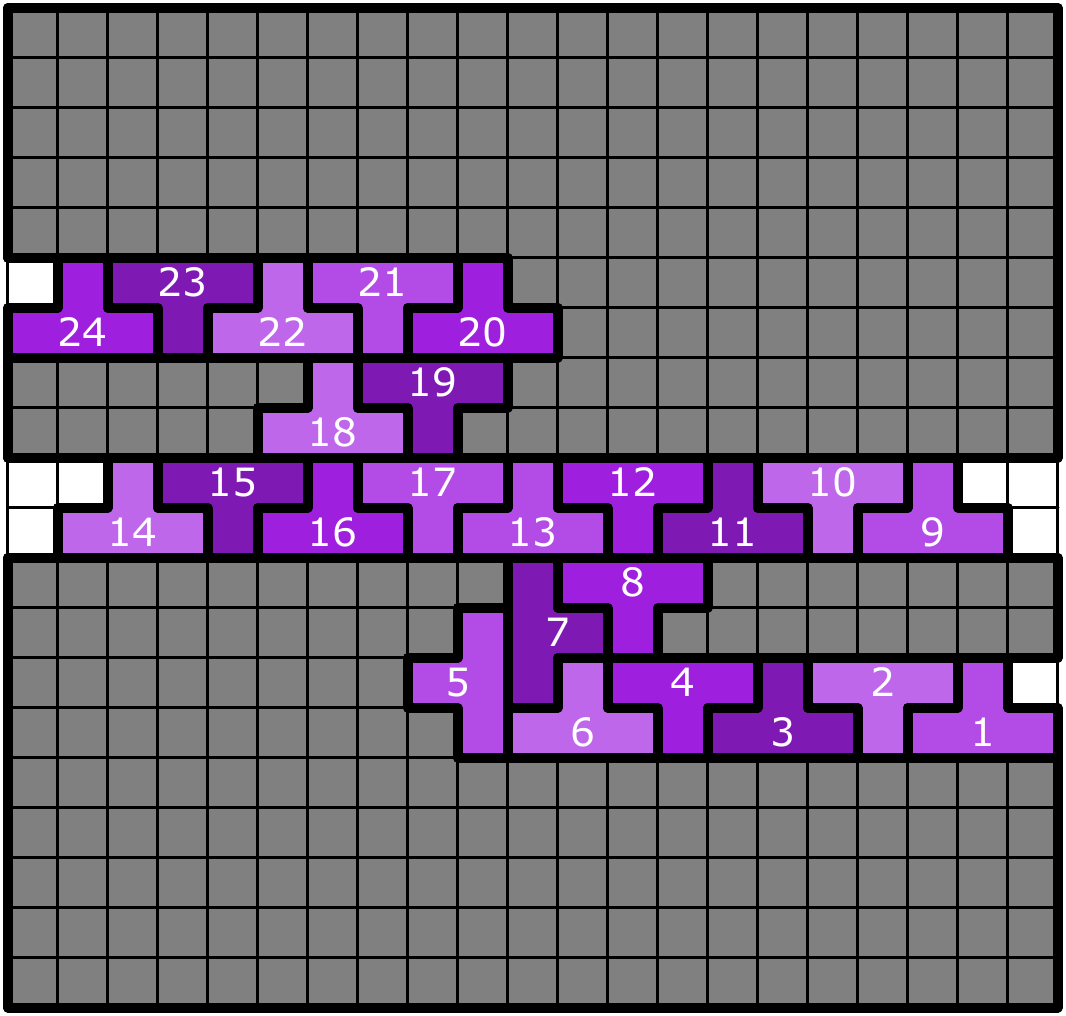}
    \caption{First tiling}
  \end{subfigure}
  \begin{subfigure}[b]{0.49\textwidth}
    \centering
    \includegraphics[width=200pt]{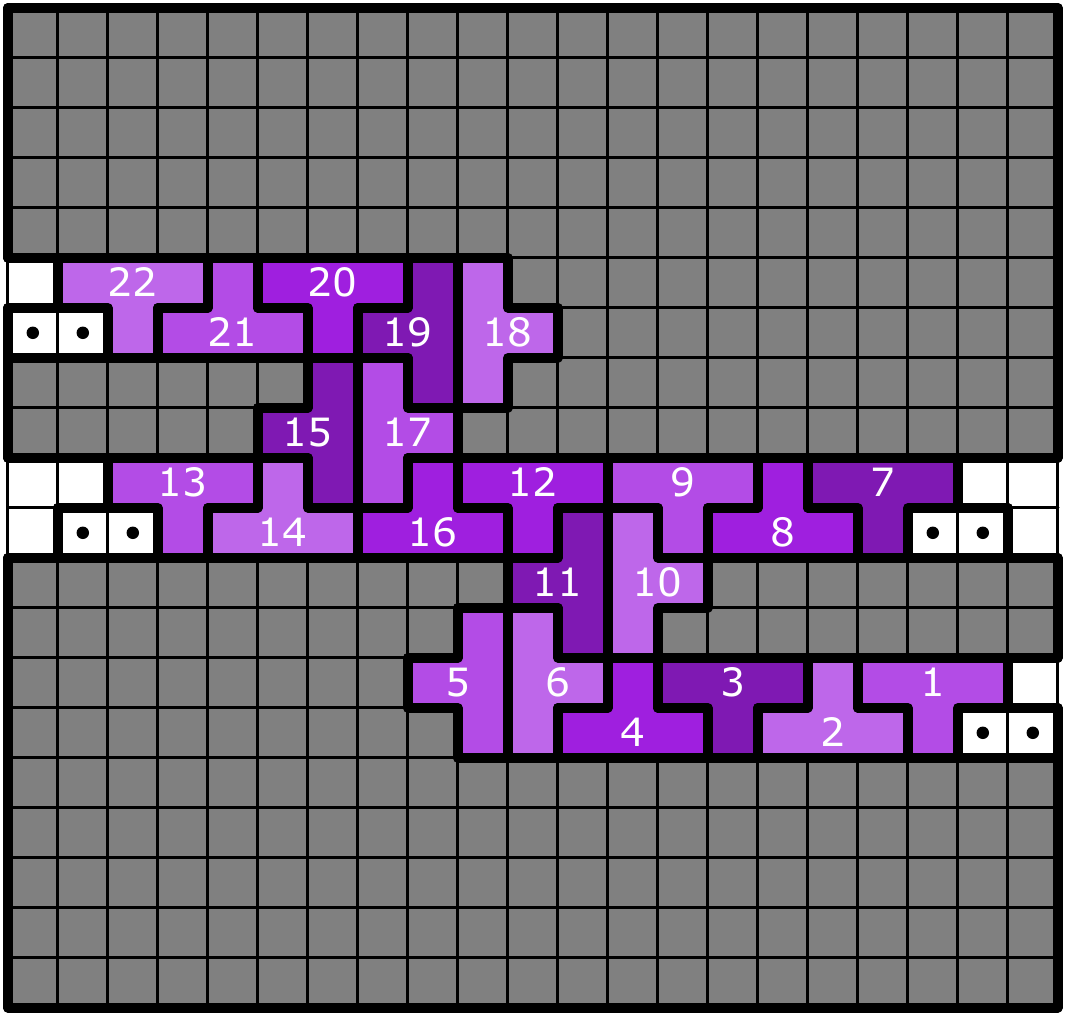}
    \caption{Second tiling}
  \end{subfigure}
  \caption{Tilings + placement orders for the $0$-or-$4$ gadget}
  \label{fig:t_0mod4_tilings}
\end{figure}

\begin{figure}[!ht]
  \centering
  \begin{subfigure}[b]{0.49\textwidth}
    \centering
    \includegraphics[width=200pt]{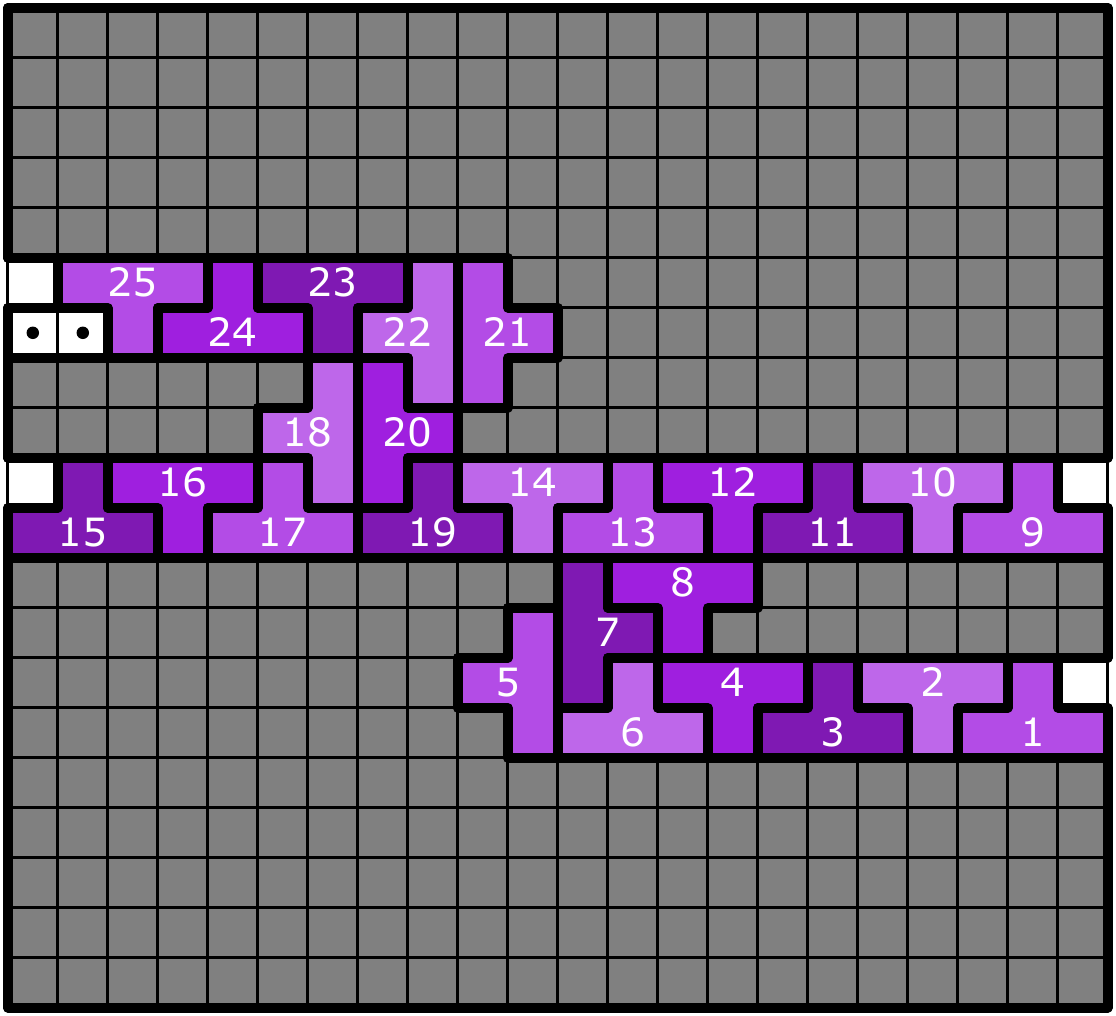}
    \caption{First tiling}
  \end{subfigure}
  \begin{subfigure}[b]{0.49\textwidth}
    \centering
    \includegraphics[width=200pt]{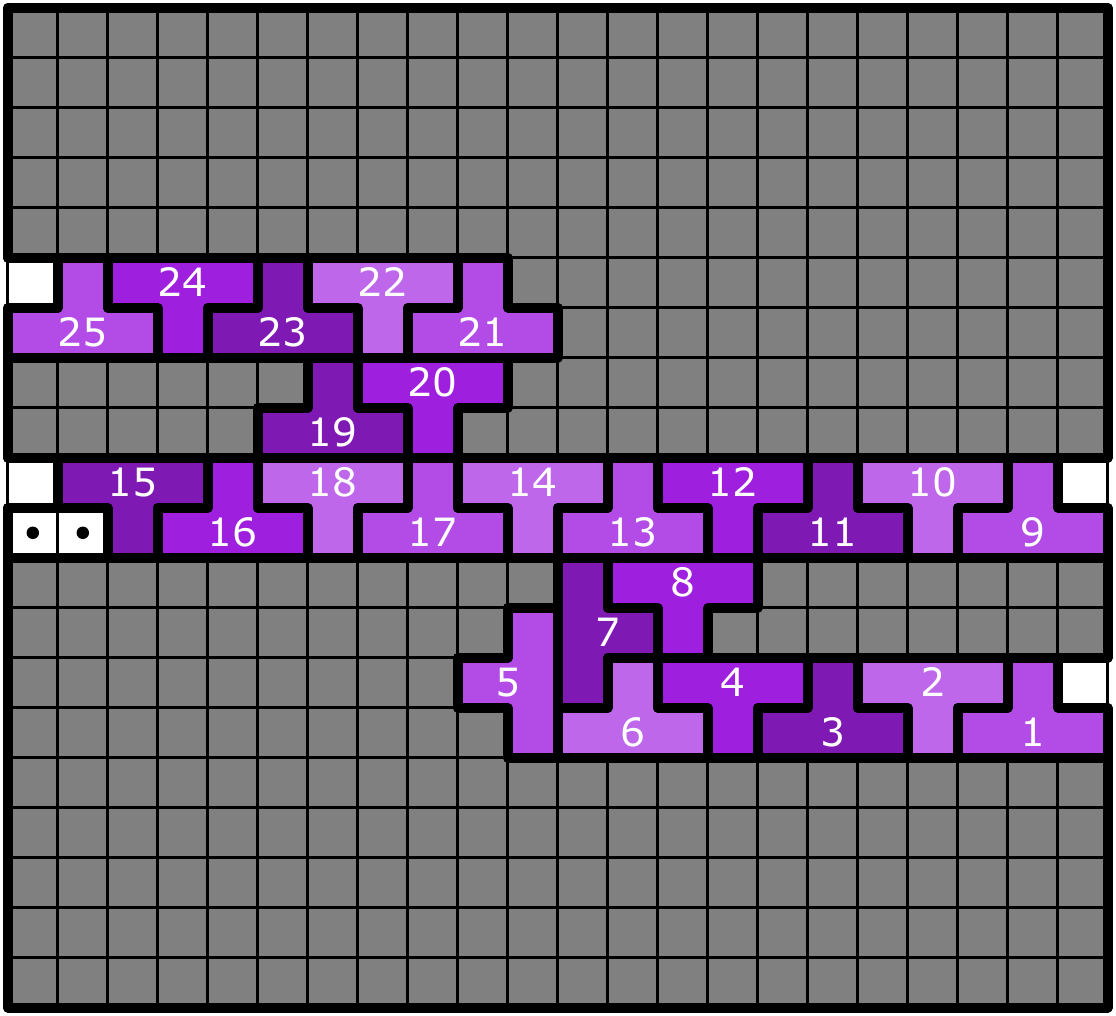}
    \caption{Second tiling}
  \end{subfigure}
  \begin{subfigure}[b]{0.49\textwidth}
    \centering
    \includegraphics[width=200pt]{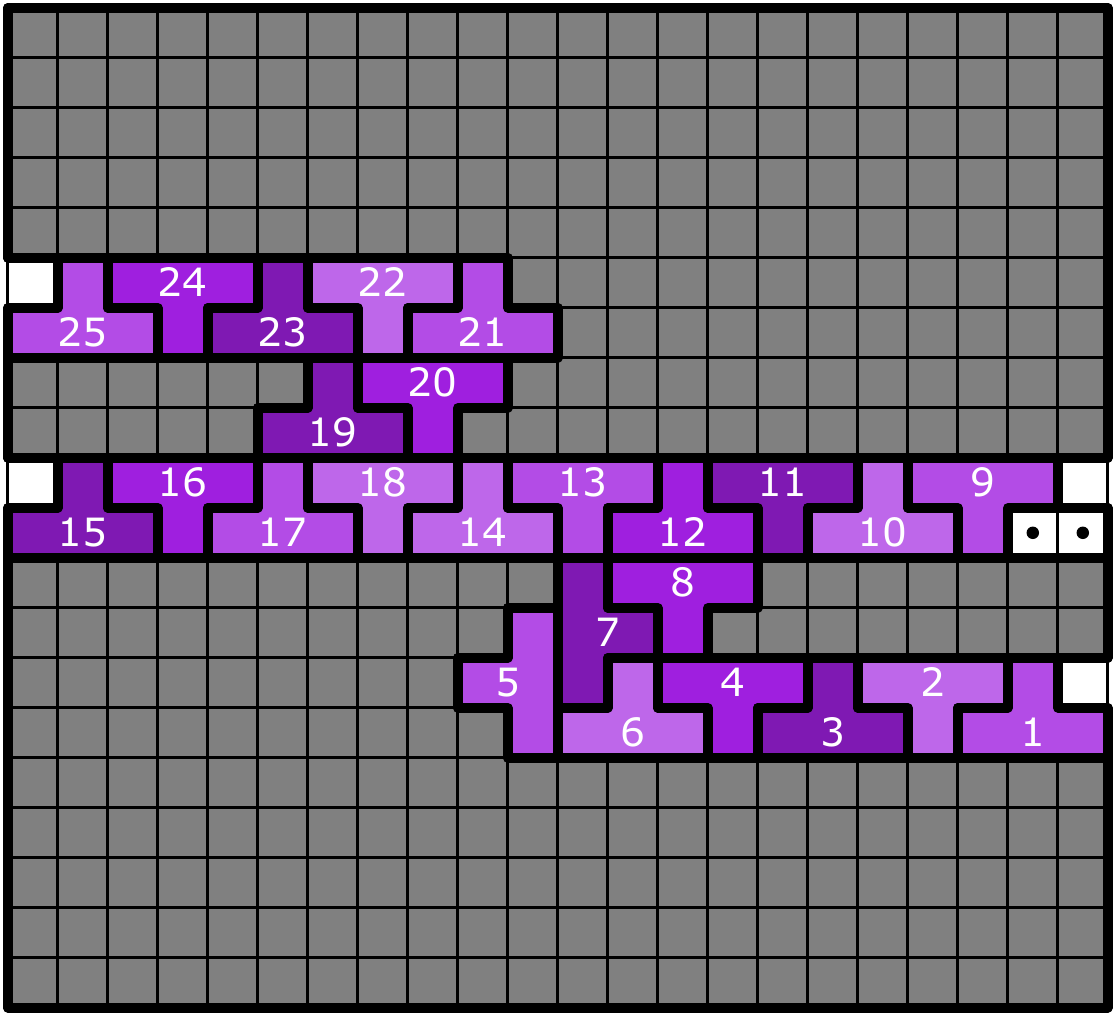}
    \caption{Third tiling}
  \end{subfigure}
  \begin{subfigure}[b]{0.49\textwidth}
    \centering
    \includegraphics[width=200pt]{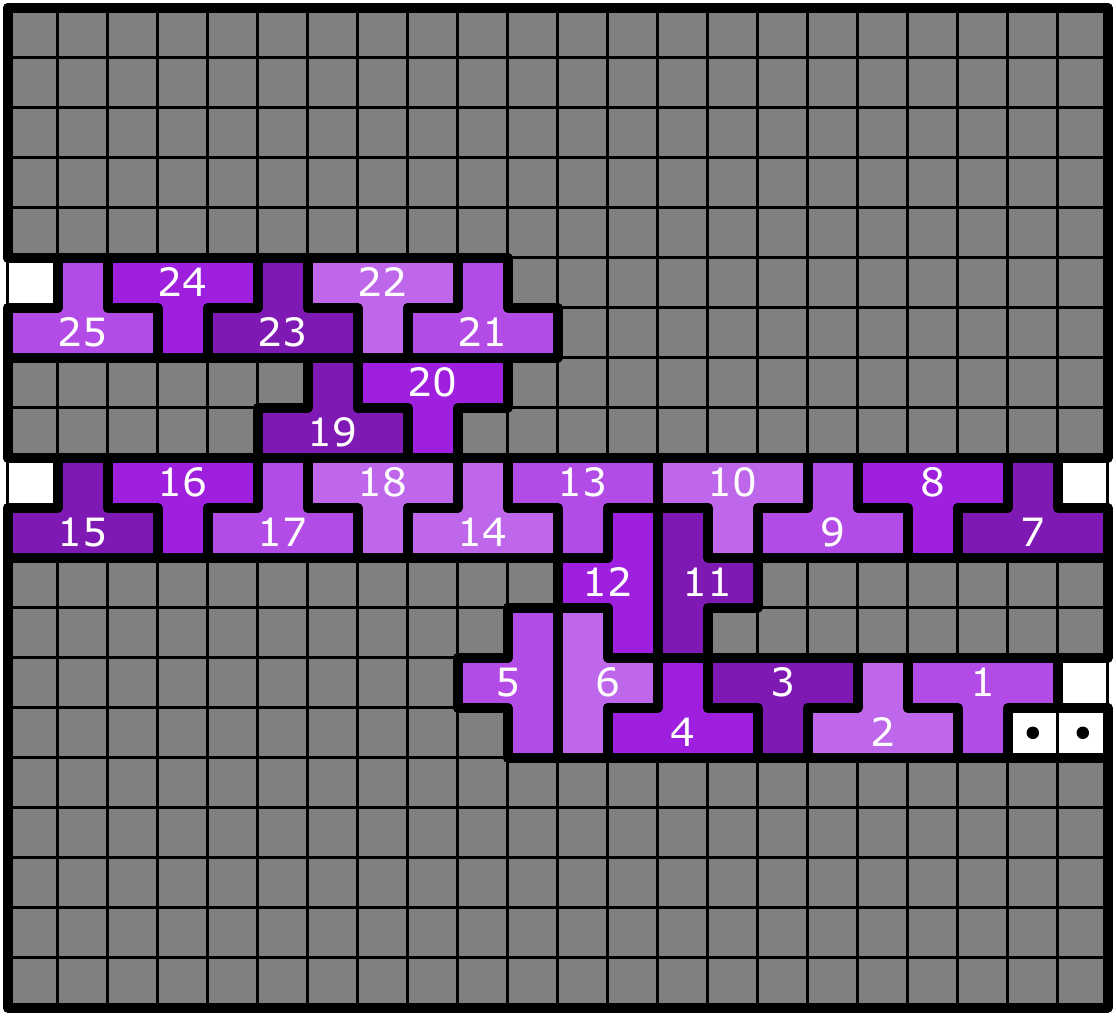}
    \caption{Fourth tiling}
  \end{subfigure}
  \caption{Tilings + placement orders for the $1$-in-$4$ gadget}
  \label{fig:t_1mod4_tilings}
\end{figure}

\begin{figure}[!ht]
  \centering
  \begin{subfigure}[b]{0.49\textwidth}
    \centering
    \includegraphics[width=100pt]{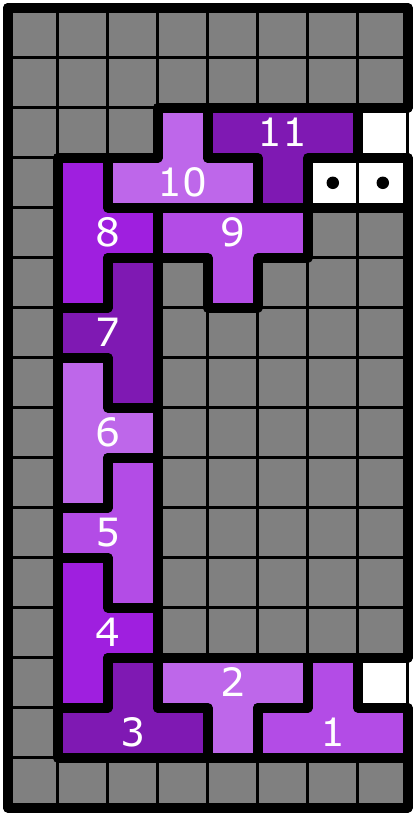}
    \caption{First tiling for first PF gadget}
  \end{subfigure}
  \begin{subfigure}[b]{0.49\textwidth}
    \centering
    \includegraphics[width=100pt]{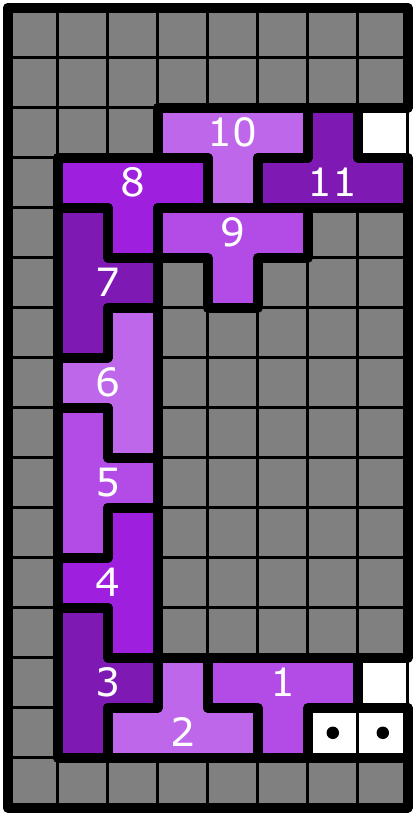}
    \caption{Second tiling for first PF gadget}
  \end{subfigure}
  \begin{subfigure}[b]{0.49\textwidth}
    \centering
    \includegraphics[width=220pt]{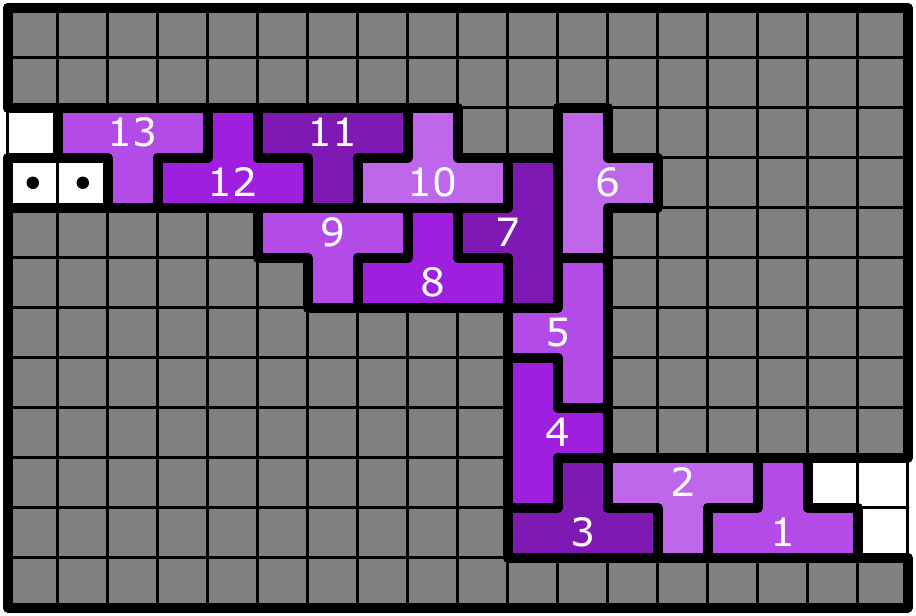}
    \caption{First tiling for second PF gadget}
  \end{subfigure}
  \begin{subfigure}[b]{0.49\textwidth}
    \centering
    \includegraphics[width=220pt]{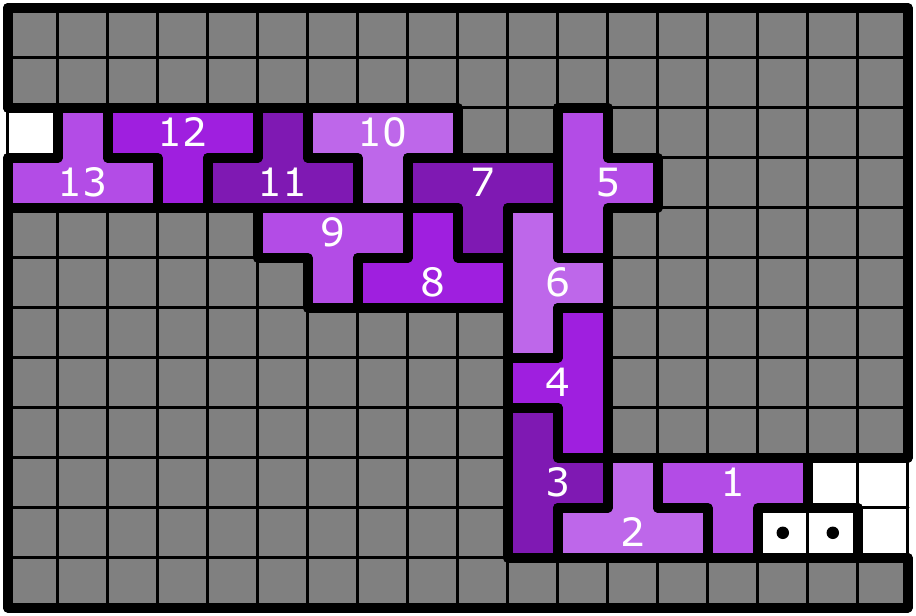}
    \caption{Second tiling for second PF gadget}
  \end{subfigure}
  \caption{Tilings + placement orders for the PF gadgets}
  \label{fig:t_parityfixer_tilings}
\end{figure}

\section{$\SS$-tris Gadget Tilings/Fillings}

Like in $\II$-tris, numbers in pieces denote placement order, with smaller numbers placed first.

\begin{figure}[!ht]
  \centering
  \begin{subfigure}[b]{0.49\textwidth}
    \centering
    \includegraphics[width=170pt]{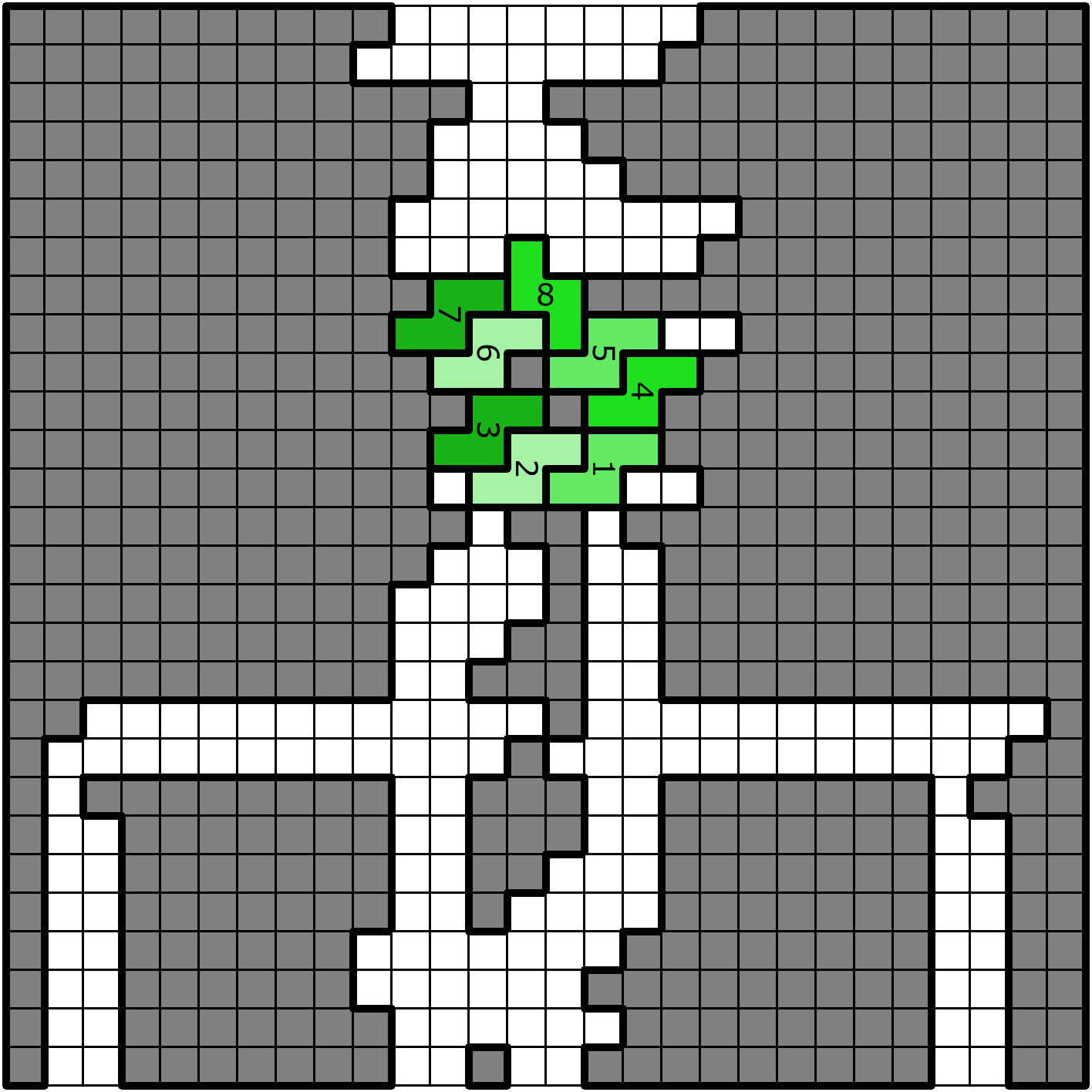}
    \caption{}
  \end{subfigure}
  \begin{subfigure}[b]{0.49\textwidth}
    \centering
    \includegraphics[width=170pt]{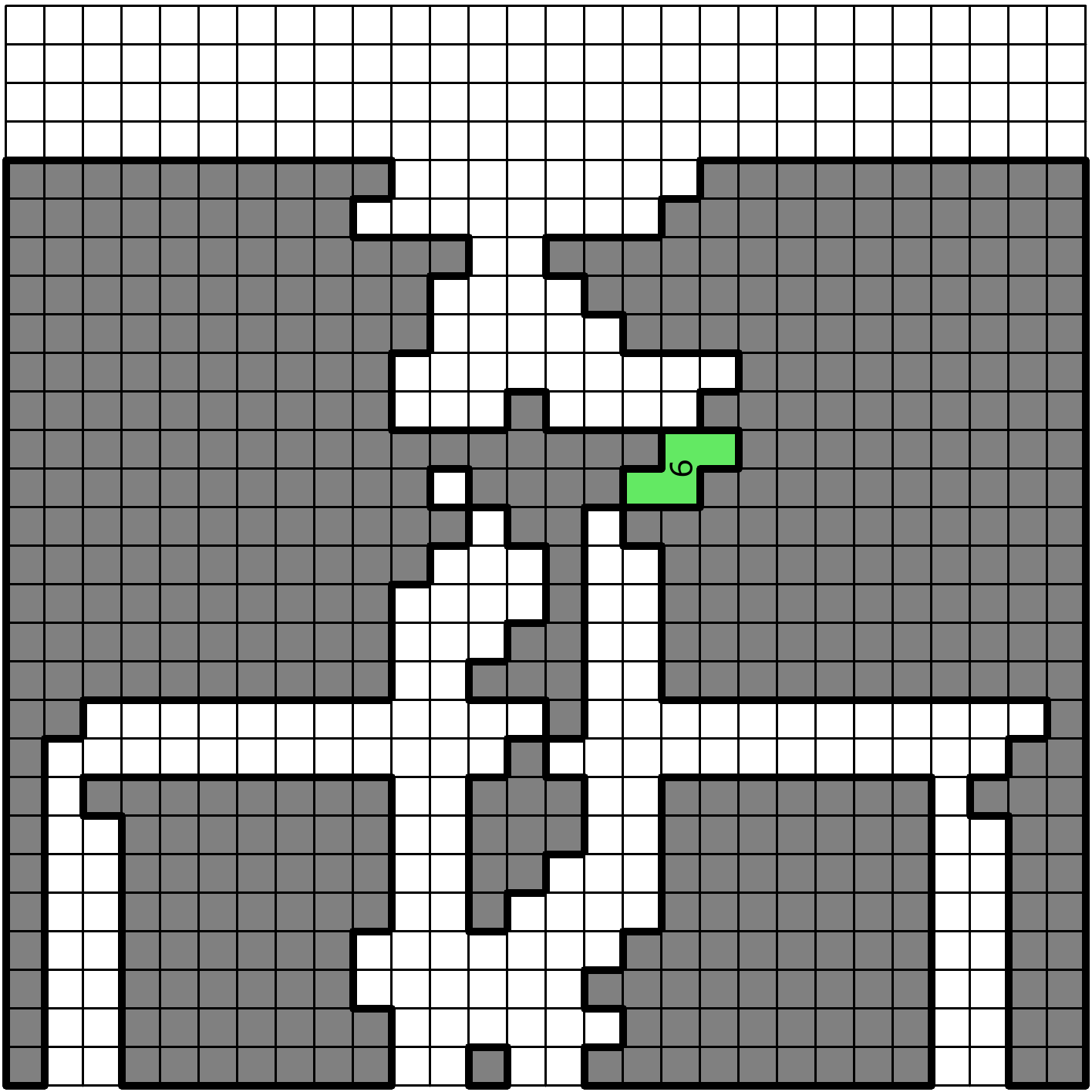}
    \caption{}
  \end{subfigure}
  \begin{subfigure}[b]{0.49\textwidth}
    \centering
    \includegraphics[width=170pt]{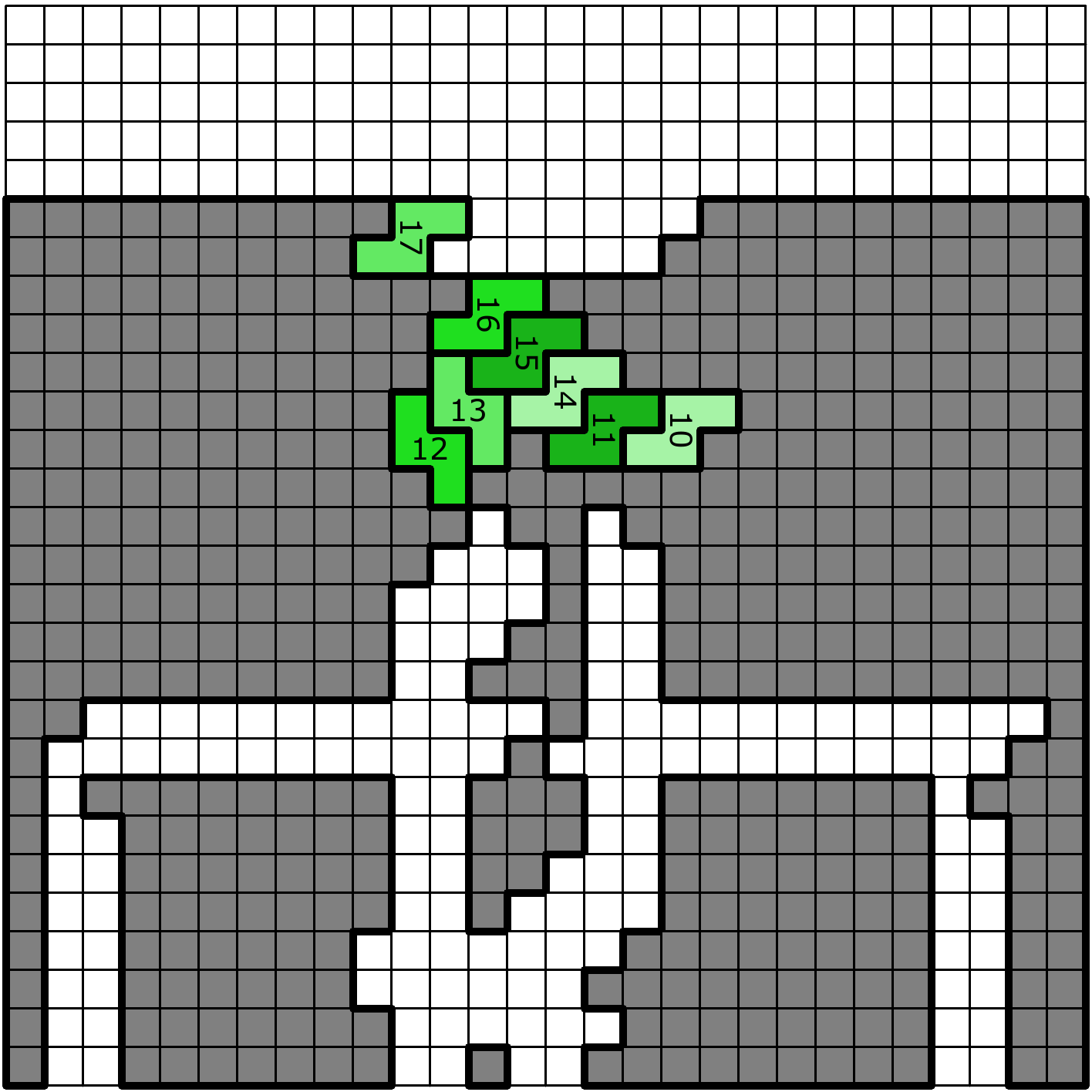}
    \caption{}
  \end{subfigure}
  \begin{subfigure}[b]{0.49\textwidth}
    \centering
    \includegraphics[width=170pt]{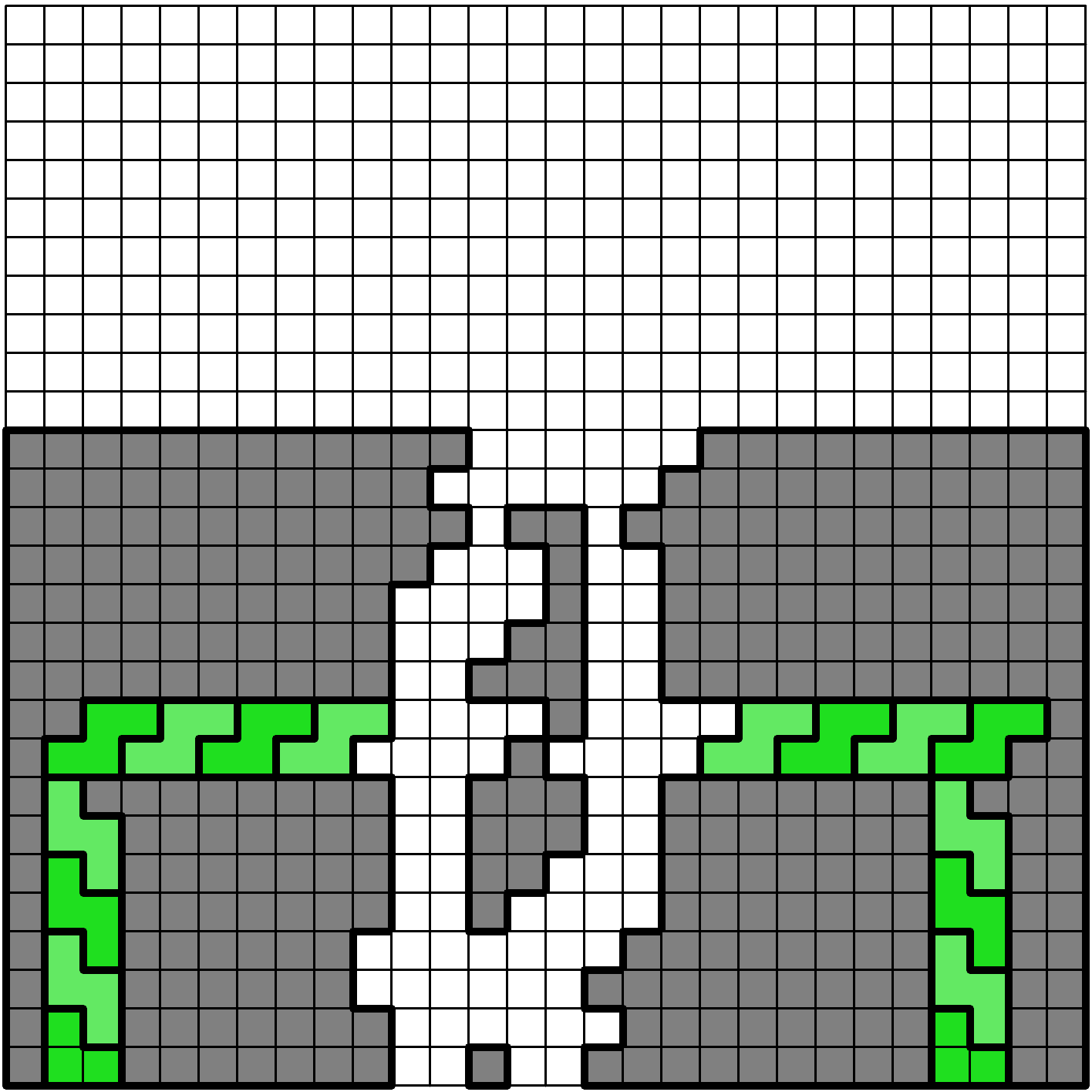}
    \caption{Wire-filling}
  \end{subfigure}
  \begin{subfigure}[b]{0.49\textwidth}
    \centering
    \includegraphics[width=170pt]{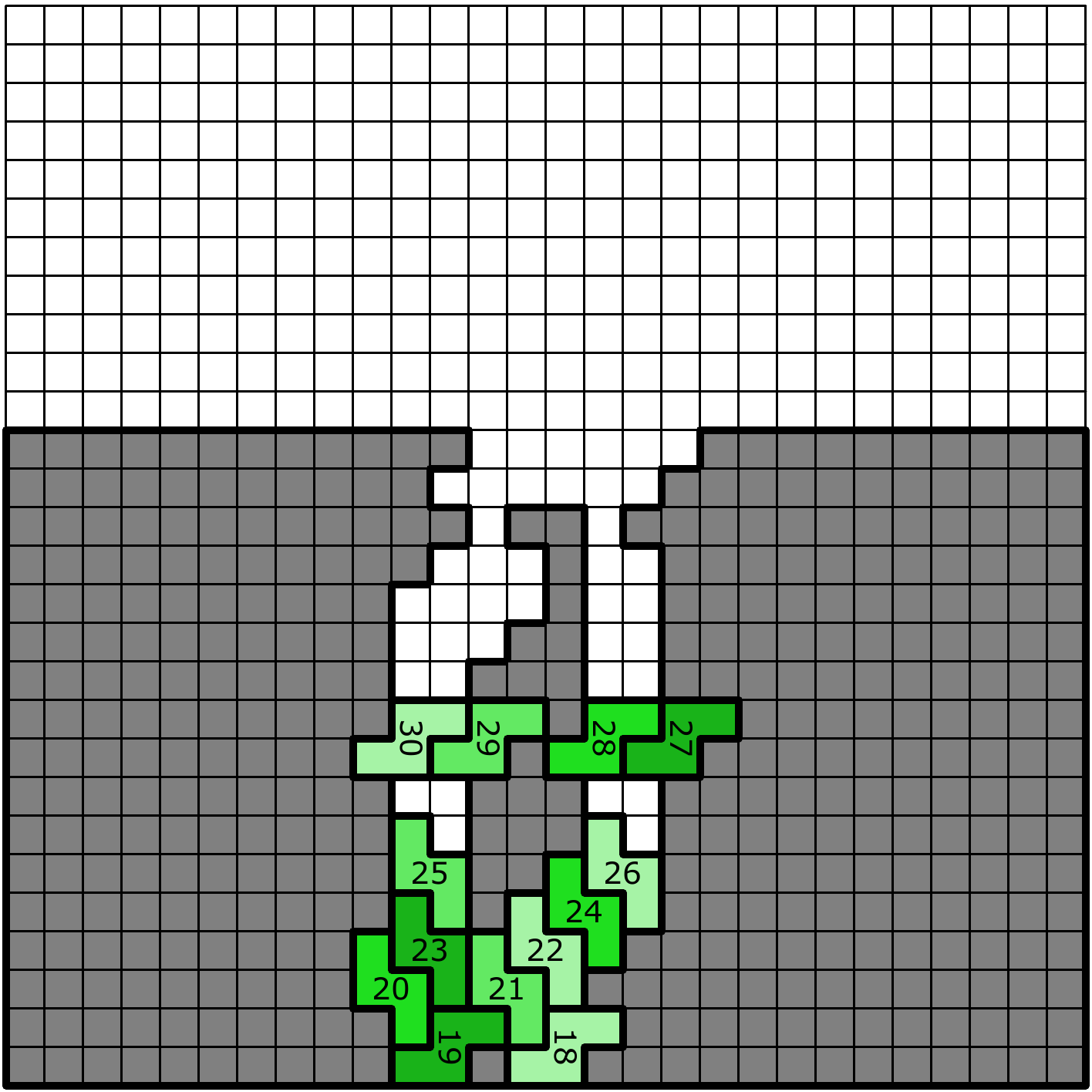}
    \caption{}
  \end{subfigure}
  \begin{subfigure}[b]{0.49\textwidth}
    \centering
    \includegraphics[width=170pt]{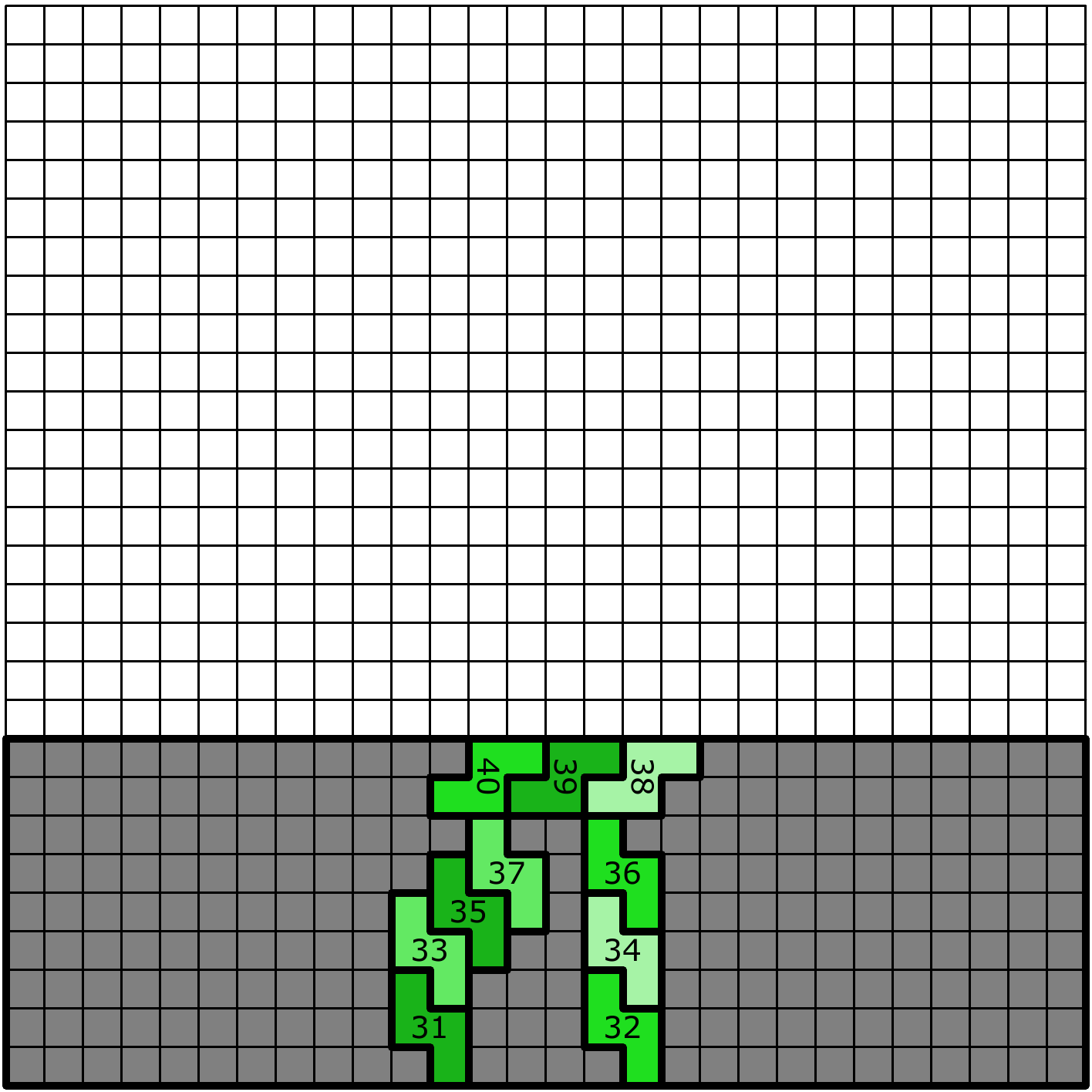}
    \caption{}
  \end{subfigure}
  \caption{Placement orders for variable gadgets, assuming the "True" setting (the "False" setting swaps "1" and "2" in (a)) and the variable appears once positively and once negatively (different numbers of appearances means additional pieces need to be placed in (d) and (e) to clear out the wires)}
  \label{fig:s_var_filling}
\end{figure}

\begin{figure}[!ht]
  \centering
  \begin{subfigure}[b]{0.9\textwidth}
    \centering
    \includegraphics[width=320pt]{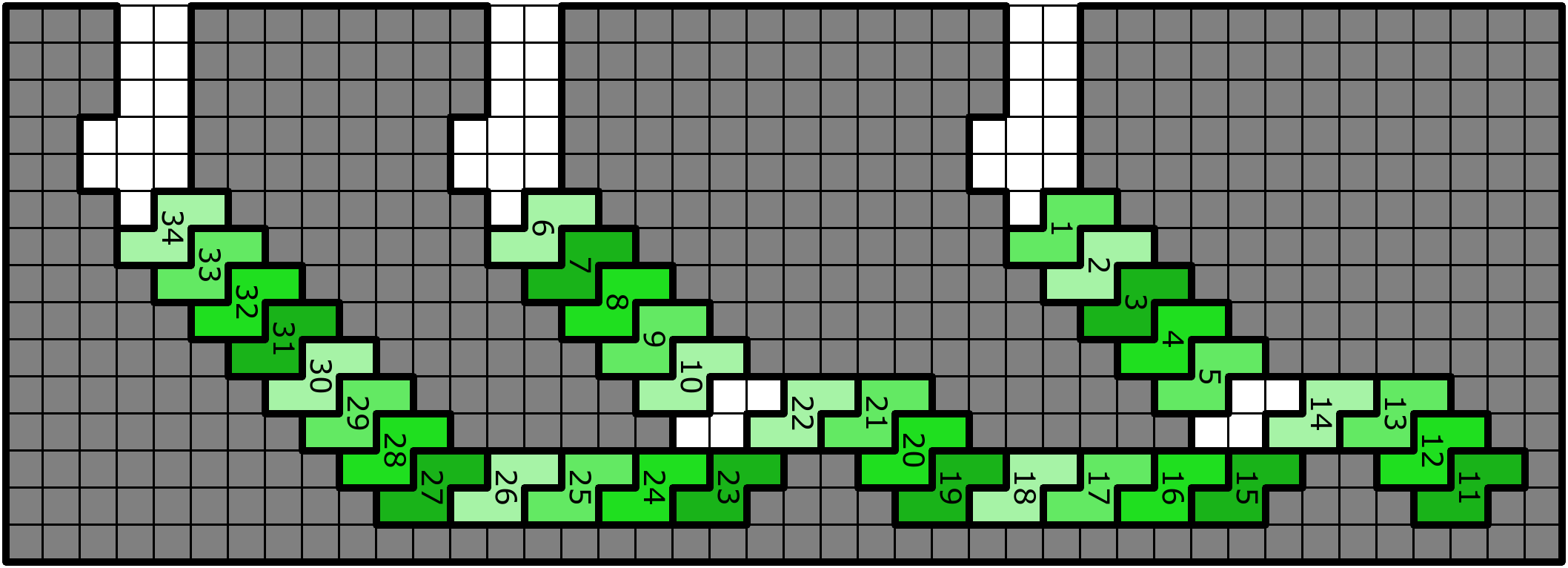}
    \caption{Filling through first entrance}
  \end{subfigure}
  \begin{subfigure}[b]{0.9\textwidth}
    \centering
    \includegraphics[width=320pt]{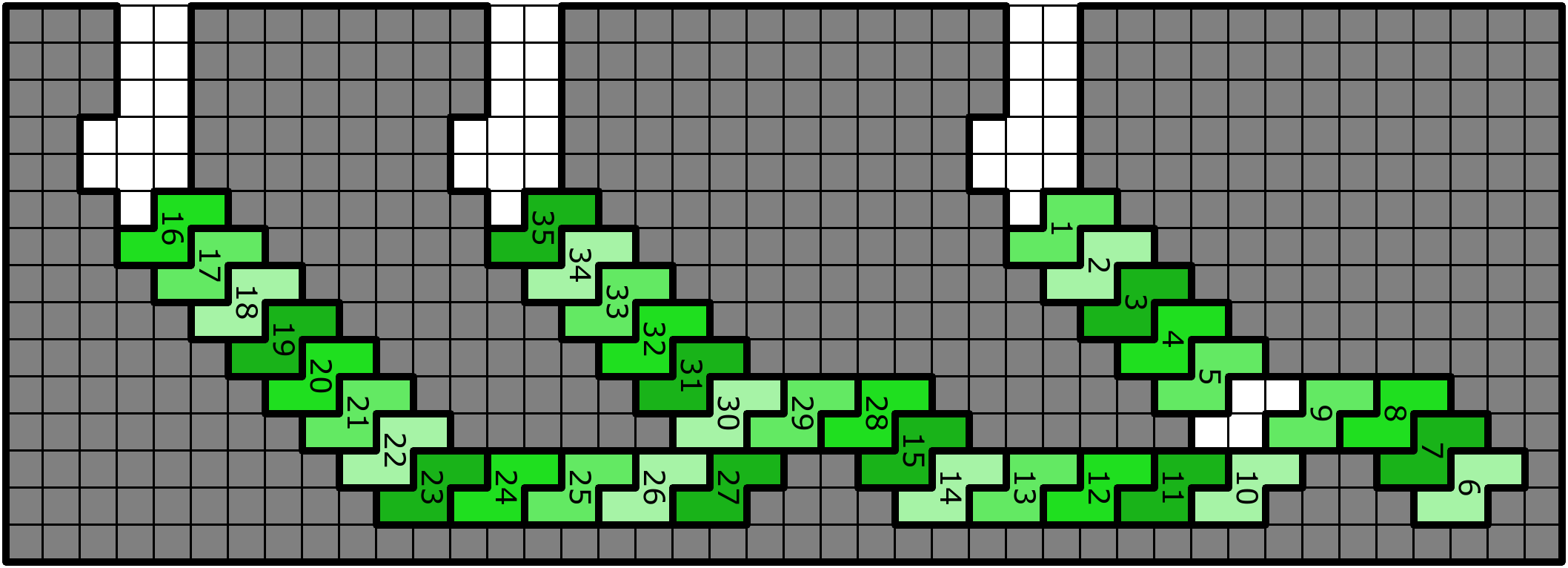}
    \caption{Filling through second entrance}
  \end{subfigure}
  \begin{subfigure}[b]{0.9\textwidth}
    \centering
    \includegraphics[width=320pt]{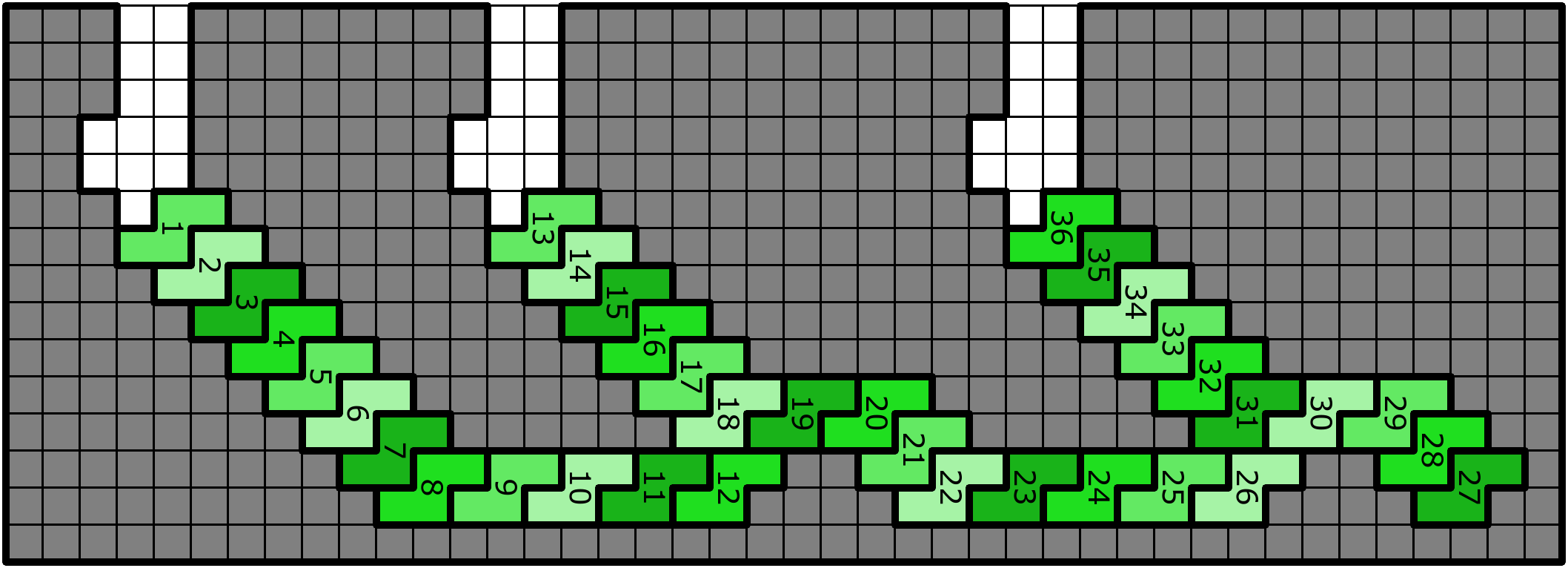}
    \caption{Filling through third entrance}
  \end{subfigure}
  \begin{subfigure}[b]{0.9\textwidth}
    \centering
    \includegraphics[width=320pt]{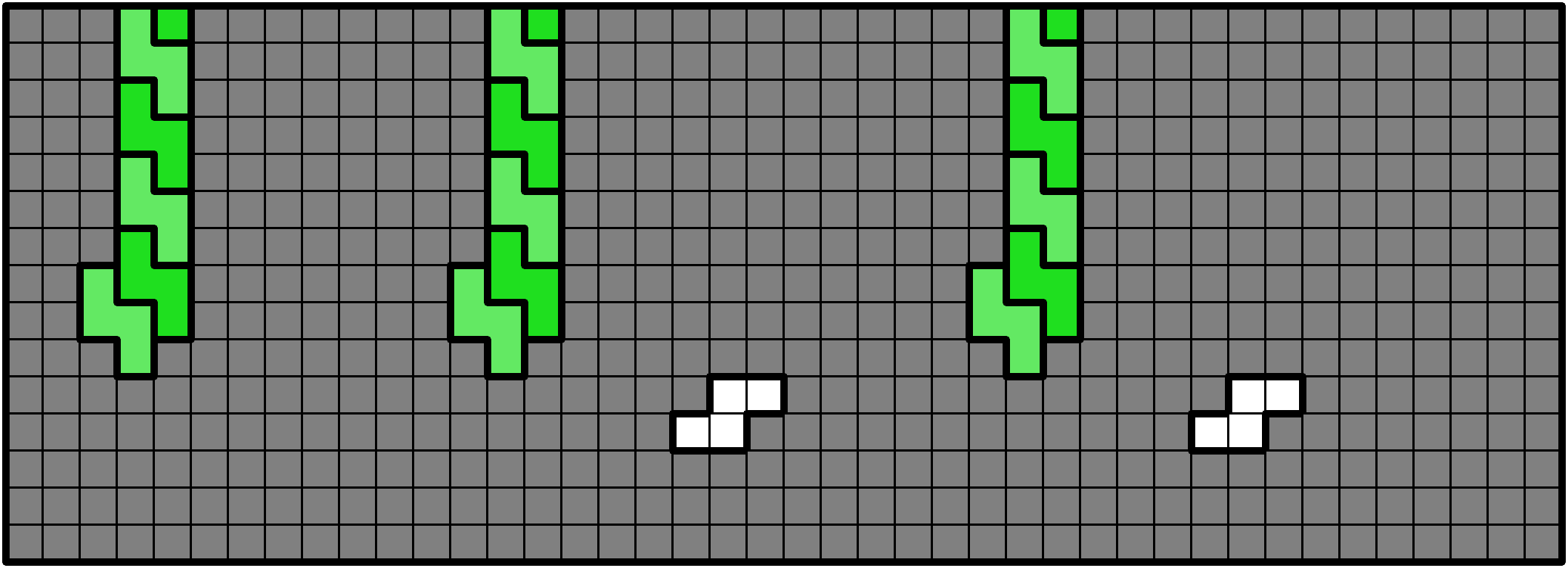}
    \caption{Wire-filling (after the climbable vertical structure is filled)}
  \end{subfigure}
  \caption{Fillings + placement orders for clause gadgets}
  \label{fig:s_cl_filling}
\end{figure}

\begin{figure}[!ht]
  \centering
  \includegraphics[width=320pt]{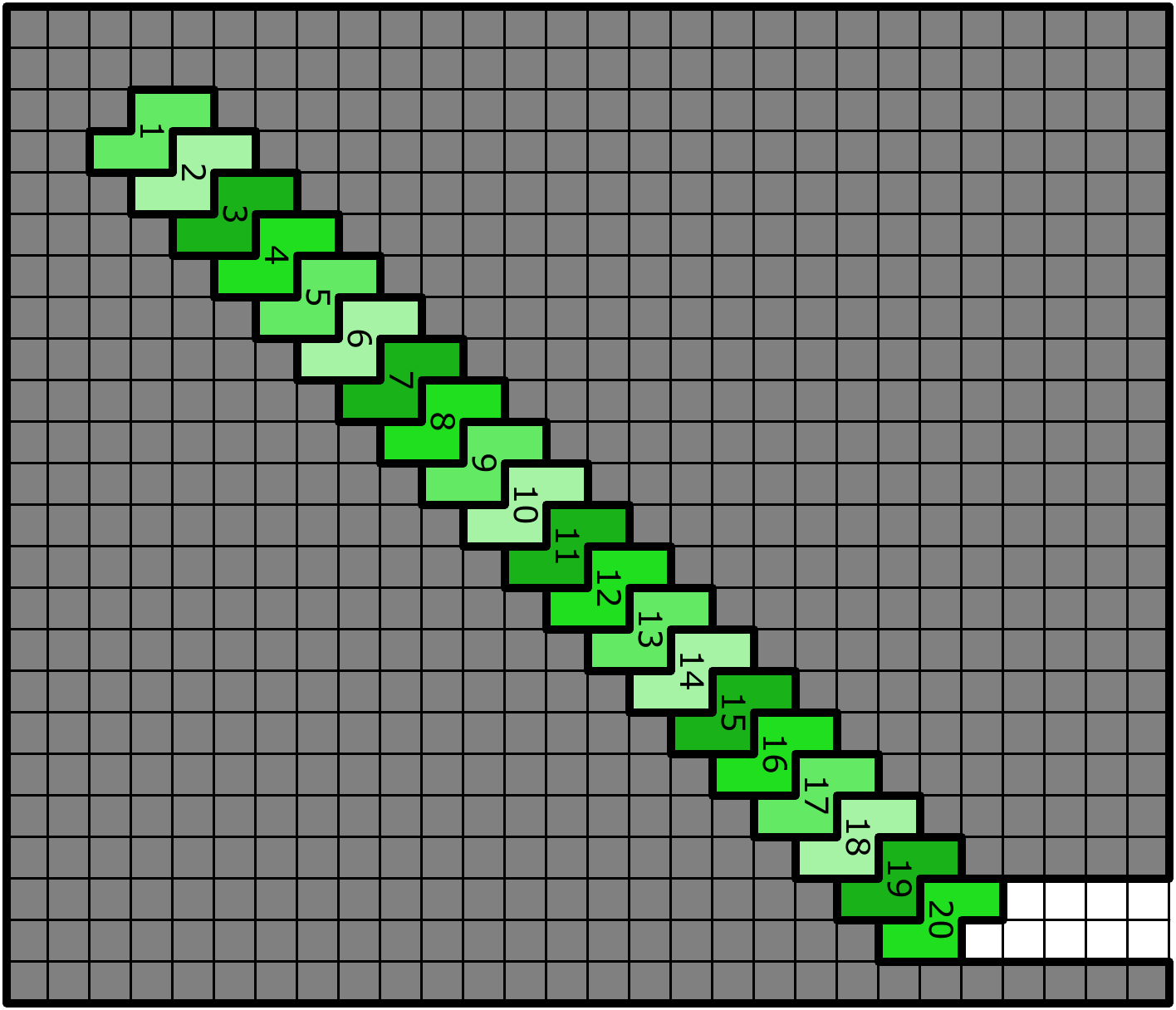}
  \caption{Tiling + placement order for climbable vertical structure}
  \label{fig:s_cvs_tiling}
\end{figure}


\end{document}